\tikzset{astate/.style={draw,circle,inner sep=0pt,minimum size=8mm}}
\tikzset{astateq/.style={draw,rectangle,inner sep=3pt,rounded corners}}
\tikzset{adistr/.style={draw,circle,fill,minimum size=1mm,inner sep=0mm}}
\newcolumntype{d}{D{.}{.}{4.0}}
\renewcommand{\comment}[1]{ }
\newcommand{\proc}[1]{\textnormal{\scshape#1}}
\newcommand{\progHeader}[1]{#1}
\newcommand{\ComputeAcceptingScc}{\proc{ComputeAccScc}}
\newcommand{\BuildBreakpoint}{\proc{BuildBP}}
\newcommand{\BuildSubset}{\proc{BuildSubset}}
\newcommand{\BuildProduct}{\proc{BuildProd}}
\newcommand{\ComputeScc}{\proc{ComputeScc}}
\newcommand{\ComputeMec}{\proc{ComputeMec}}
\newcommand{\IsAccepting}{\proc{IsAcc}}
\newcommand{\functionGenericArgument}{\,\cdot\,}
\newcommand{\calE}{\mathcal{E}}
\newcommand{\calT}{\mathcal{T}}
\newcommand{\sdToDet}{\mathcal{D}}
\newcommand{\bToSub}{\mathcal{S}}
\newcommand{\bToBp}{\mathcal{BP}}
\newcommand{\bToSd}{\mathcal{SD}}
\newcommand{\labelFunc}{L}
\newcommand{\pred}{\mathsf{pred}}
\newcommand{\transtyle}[1]{\mathrm{#1}}
\newcommand{\aut}{\mathcal{A}}
\newcommand{\paut}{\mathcal{P}}
\newcommand{\baut}{\mathcal{B}}
\newcommand{\astates}{Q}
\newcommand{\ainit}{I}
\newcommand{\amat}{\transtyle{T}}
\newcommand{\afinal}{\transtyle{F}}
\newcommand{\aacc}{\transtyle{A}}
\newcommand{\arej}{\transtyle{R}}
\newcommand{\abuechiset}[1][k]{\boldsymbol{\afinal}_{#1}}
\newcommand{\arabinsetacc}[1][k]{\boldsymbol{\aacc}_{#1}}
\newcommand{\arabinsetrej}[1][k]{\boldsymbol{\arej}_{#1}}
\newcommand{\arabinset}[1][k]{(\arabinsetacc[#1], \arabinsetrej[#1])}
\newcommand{\ACC}{\boldsymbol{\transtyle{ACC}}}
\newcommand{\aparityfun}{\pri \colon \amat \to \natIntK}
\newcommand{\autElements}{(\Sigma, \astates, \ainit, \amat, \arabinset)}
\newcommand{\bautElements}{(\Sigma, \astates, \ainit, \amat, \abuechiset)}
\newcommand{\pautElements}{(\Sigma, \astates, \ainit, \amat, \pri)}
\newcommand{\arun}{\rho}
\newcommand{\tran}[1]{\transtyle{tr}(#1)}
\newcommand{\bpsubscript}{\mathit{bp}}
\newcommand{\rbpaut}{\mathcal{BP}}
\newcommand{\bbpaut}{\mathcal{BP}}
\newcommand{\bpstates}{Q_{\bpsubscript}}
\newcommand{\bpinit}{q_{\bpsubscript}}
\newcommand{\bpmat}{\transtyle{T}_{\bpsubscript}}
\newcommand{\bpfinal}{\transtyle{F}_{\bpsubscript}}
\newcommand{\sssubscript}{\mathit{ss}}
\newcommand{\ssaut}{\mathcal{S}}
\newcommand{\ssstates}{Q_{\sssubscript}}
\newcommand{\ssinit}{q_{\sssubscript}}
\newcommand{\ssmat}{\transtyle{T}_{\sssubscript}}
\newcommand{\ssfinal}{\transtyle{F}_{\sssubscript}}
\newcommand{\daut}{\mathcal{D}}
\newcommand{\dstates}{Q_{d}}
\newcommand{\dinit}{q_{0}}
\newcommand{\dmat}{\transtyle{T}_{d}}
\newcommand{\sdped}{\mathit{sd}}
\newcommand{\sdaut}{\mathcal{SD}}
\newcommand{\sdstates}[1][\sdped]{\astates_{#1}}
\newcommand{\sdstatesini}{\sdstates[i]}
\newcommand{\sdstatesfin}{\sdstates[f]}
\newcommand{\sdinit}{q_{\sdped}}
\newcommand{\sdmat}[1][\sdped]{\transtyle{T}_{#1}}
\newcommand{\sdmatini}{\sdmat[i]}
\newcommand{\sdmattra}{\sdmat[t]}
\newcommand{\sdmatfin}{\sdmat[f]}
\newcommand{\sdfinal}{\transtyle{F}_{\sdped}}
\newcommand{\mc}{\mathcal{M}}
\newcommand{\mdp}{\mathcal{M}}
\newcommand{\pstates}{M}
\newcommand{\pinit}{\mu_{0}}
\newcommand{\pmat}{\transtyle{P}}
\newcommand{\ppath}{\xi}
\newcommand{\ppaths}{\mathit{Paths}}
\newcommand{\word}{\alpha}
\newcommand{\lang}{\mathcal{L}}
\newcommand{\insym}{\mathop{\mathrm{Inf}}}
\newcommand{\cyl}{\mathit{Cyl}}
\renewcommand{\phi}{\varphi}
\renewcommand{\epsilon}{\varepsilon}
\newcommand{\scc}{\mathtt{S}}
\newcommand{\scctran}{\pmat_{\scc}}
\newcommand{\mec}{\mathtt{C}}
\newcommand{\mectran}{\pmat_{\mec}}
\newcommand{\mecprimetran}{\pmat_{\mec'}}
\newcommand{\proj}[1]{\pi_{#1}}
\newcommand{\projaut}{\proj{\aut}}
\newcommand{\prodMC}{\times}
\newcommand{\prodMDP}{\times}
\newcommand{\prodAut}{\otimes}
\newcommand{\set}{\mathsf{set}}
\newcommand{\ind}{\mathsf{index}}
\newcommand{\children}{\mathsf{children}}
\newcommand{\Aset}{\mathit{Act}}
\newcommand{\en}{\mathit{en}}
\newcommand{\dist}{\mathit{Dist}}
\newcommand{\blank}{\mbox{\textvisiblespace}}
\newcommand{\bp}{\mathsf{bp}}
\newcommand{\buchi}{B\"{u}chi\xspace}
\newcommand{\host}{\mathsf{host}}
\newcommand{\pri}{\mathsf{pri}}
\newcommand{\os}{\mathsf{os}}
\newcommand{\q}{\overline{q}}
\newcommand{\reached}{\mathsf{rchd}}
\newcommand{\rename}{\mathsf{rename}}
\newcommand{\width}{\mathsf{width}}
\newcommand{\iscasmc}{\textsc{IscasMC}\xspace}
\newcommand{\prism}{\textsc{PRISM}\xspace}
\newcommand{\liquor}{\textsc{LiQuor}\xspace}
\newcommand{\ltlthreeba}{\textsc{LTL3BA}\xspace}
\newcommand{\spot}{\textsc{SPOT}\xspace}
\newcommand{\rabinizerthree}{\textsc{Rabinizer~3}\xspace}
\newcommand{\naturals}{\mathbb{N}}
\newcommand{\booleans}{\mathbb{B}}
\newcommand{\setcond}[2]{\{\, #1 \mid #2 \,\}}
\newcommand{\setnocond}[1]{\{#1\}}
\newcommand{\bigO}{\mathcal{O}}
\newcommand{\natIntK}[1][k]{[1..#1]}
\newcommand{\sched}{\upsilon}
\newcommand{\prob}{\mathfrak{P}}
\newcommand{\defeq}{\overset{\mathrm{def}}{=}}
\newcommand{\runs}{\mathop{\mathrm{Run}}}
\newcommand{\successor}[1]{\widehat{#1}}
\newcommand{\ltlP}{\mathbb{P}_{\!=?}}
\newcommand{\ltlPmin}{\mathbb{P}_{\!\mathrm{min}=?}}
\newcommand{\ltlPmax}{\mathbb{P}_{\!\mathrm{max}=?}}
\newcommand{\ltlU}{\mathbin{\mathbf{U}}}
\newcommand{\ltlF}{\mathbin{\mathbf{F}}}
\newcommand{\ltlG}{\mathbin{\mathbf{G}}}
\newcommand{\ltlGF}{\mathbin{\mathbf{GF}}}
\newcommand{\ltlFG}{\mathbin{\mathbf{FG}}}
\theoremstyle{plain}
\newtheorem{proposition}[theorem]{Proposition}
\newcommand{\neueZeile}{\newline}
\renewcommand{\neueZeile}{\\}
\begin{document}

\title{Lazy Probabilistic Model Checking without Determinisation}

\date{}
\author[1]{Ernst Moritz Hahn}
\author[1]{Guangyuan Li}
\author[2]{Sven Schewe}
\author[1]{Andrea Turrini}
\author[1]{Lijun Zhang}
\affil[1]{State Key Laboratory of Computer Science, Institute of Software, CAS}
\affil[2]{University of Liverpool}

\authorrunning{E.\,M. Hahn, G. Li, S. Schewe, A. Turrini, and L. Zhang}
\Copyright{Ernst Moritz Hahn, Guangyuan Li, Sven Schewe, Andrea Turrini, and Lijun Zhang}

\subjclass{G.3: Probability and Statistics; D.2.4: Software/Program Verification}
\keywords{Markov Decision Processes; Model Checking; PLTL; Determinisation}

\maketitle
\begin{abstract}
The bottleneck in the quantitative analysis of Markov chains and Markov decision processes against specifications given in LTL or as some form of nondeterministic \buchi automata is the inclusion of a determinisation step of the automaton under consideration.
In this paper, we show that full determinisation can be avoided:
subset and breakpoint constructions suffice.
We have implemented our approach---both explicit and symbolic versions---in a prototype tool.
Our experiments show that our prototype can compete with mature tools like \prism.
\end{abstract}

\section{Introduction}
Markov chains (MCs) and Markov decision processes (MDPs) are widely used to study systems that exhibit %probabilistic choices and
both, probabilistic and nondeterministic choices.
Properties of these systems are often specified by temporal logic formulas, such as the branching time logic PCTL~\cite{HaJo94}, the linear time logic PLTL~\cite{BiancoA95}, or their combination PCTL*~\cite{BiancoA95}.
While model checking is tractable for PCTL~\cite{BiancoA95}, it is more expensive for PLTL:
PSPACE-complete for Markov chains and 2EXPTIME-complete for MDPs~\cite{Courcoubetis+Yannakakis/95/Markov}.

In classical model checking, one checks whether a model $\mc$ satisfies an LTL formula $\phi$ by first constructing a nondeterministic \buchi automaton $\baut_{\neg\phi}$~\cite{DBLP:conf/lics/VardiW86}, which recognises the models of its negation $\neg\phi$.
The model checking problem then reduces to an emptiness test for the product $\mc \prodAut \baut_{\neg\phi}$.
The translation to \buchi automata may result in an exponential blow-up compared to the length of $\phi$.
However, this translation is mostly very efficient in practice, and highly optimised off-the-shelf tools like \ltlthreeba~\cite{BabiakKRS12} or \spot~\cite{DuretLutz11} are available.

The quantitative analysis of a probabilistic model $\mc$ against an LTL specification $\phi$ is more involved.
To compute the maximal probability $\prob^{\mc}(\phi)$ that $\phi$ is satisfied in $\mc$, the classic automata-based approach includes the determinisation of an intermediate \buchi automaton $\baut_{\phi}$.
If such a deterministic automaton $\aut$ is constructed for $\baut_{\phi}$, then determining the probability $\prob^{\mc}(\phi)$ reduces to solving an equation system for Markov chains, and a linear programming problem for MDPs~\cite{BiancoA95}, both in the product $\mc \prodAut \aut$.
Such a determinisation step usually exploits a variant of Safra's~\cite{Safra/88/Safra} determinisation construction, such as the techniques presented in~\cite{Piterman/07/Parity,Schewe/09/determinise}.

Kupferman, Piterman, and Vardi point out in \cite{KPV/06/Safraless} that ``Safra’s determinization construction has been notoriously resistant to efficient implementations.''
Even though analysing long LTL formulas would surely be useful as they allow for the description of more complex requirements on a system's behaviour, model checkers that employ determinisation to support LTL, such as \liquor~\cite{DBLP:conf/qest/CiesinskiB06} or \prism~\cite{KwiatkowskaNP11}, might fail to verify such properties.

In this paper we argue that applying the Safra determinisation step in full generality is only required in some cases, while simpler subset and breakpoint constructions often suffice.
Moreover, where full determinisation is required, it can be replaced by a combination of the simpler constructions, and it suffices to apply it locally on a small share of the places.

A subset construction is known to be sufficient to determinise finite automata, but it fails for \buchi automata.
Our first idea is to construct an under- and an over-approximation starting from the subset construction.
That is, we construct two (deterministic) subset automata $\ssaut^{u}$ and $\ssaut^{o}$ such that $\lang(\ssaut^{u}) \subseteq \lang(\baut_{\phi}) \subseteq \lang(\ssaut^{o})$ where $\lang(\baut_{\phi})$ denotes the language defined by the automaton $\baut_{\phi}$ for $\phi$.
The subset automata $\ssaut^{u}$ and $\ssaut^{o}$ are the same automaton $\ssaut$ except for their accepting conditions.
We build a product Markov chain with the subset automata.
We establish the useful property that the probability $\prob^{\mc}(\phi)$ equals the probability of reaching some \emph{accepting} bottom strongly connected components (SCCs) in this product:
for each bottom SCC $\scc$ in the product, we can first use the accepting conditions in $\ssaut^{u}$ or $\ssaut^{o}$ to determine whether $\scc$ is accepting or rejecting, respectively.
The challenge remains when the test is inconclusive.
In this case, we first refine $\scc$ using a breakpoint construction.
Finally, if the breakpoint construction fails as well, we have two options:
we can either perform a Rabin-based determinisation for the part of the model where it is required, thus avoiding to construct the larger complete Rabin product.
Alternatively, a refined multi-breakpoint construction is used.
An important consequence is that we no longer need to implement a Safra-style determinisation procedure:
subset and breakpoint constructions are enough.
From a theoretical point of view, this reduces the cost of the automata transformations involved from $n^{\bigO(k \cdot n)}$ to $\bigO(k \cdot 3^n)$ for generalised \buchi automata with $n$ states and $k$ accepting sets.
From a practical point of view, the easy symbolic encoding admitted by subset and breakpoint constructions is of equal value.
We discuss that (and how) the framework can be adapted to MDPs---with the same complexity---by analysing the end components \cite{Courcoubetis+Yannakakis/95/Markov,BiancoA95}.

We have implemented our approach---both explicit and symbolic versions---in our \iscasmc tool~\cite{iscasmc}, which we applied
on various Markov chain and MDP case studies.
Our experimental results confirm that our new algorithm outperforms the Rabin-based approach in most of the properties considered.
However, there are some cases in which the Rabin determinisation approach performs better when compared to the multi-breakpoint construction:
the construction of a single Rabin automaton suffices to decide a given connected component, while the breakpoint construction may require several iterations.
Our experiments also show that our prototype can compete with mature tools like \prism.

To keep the presentation clear, the detailed proofs are provided in the appendix.

\section{Preliminaries}
\label{sec:preliminaries}

\subsection{$\omega$-Automata}
Nondeterministic \buchi automata are used to represent $\omega$-regular languages $\lang \subseteq \Sigma^{\omega} = \omega \to \Sigma$ over a finite alphabet $\Sigma$.
In this paper, we use automata with trace-based acceptance mechanisms.
We denote by $\natIntK$ the set $\setnocond{1, 2, \ldots, k}$ and by $j \oplus_{k} 1$ the successor of $j$ in $\natIntK$.
I.e., $j \oplus_{k} 1 = j+1$ if $j < k$ and $j \oplus_{k} 1 = 1$ if $j = k$.

\begin{definition}%[NGBA]
\label{def:nonDetBuechiAut}
	A \emph{nondeterministic generalised \buchi automaton} (NGBA) is a quintuple $\baut = \bautElements$, consisting of
	\begin{itemize}
	\item
		a finite alphabet $\Sigma$ of input letters,
	\item
		a finite set $\astates$ of states with a non-empty subset $\ainit \subseteq \astates$ of initial states,
	\item
		a set $\amat \subseteq \astates \times \Sigma \times \astates$ of transitions from states through input letters to successor states, and
	\item
		a family $\abuechiset = \setcond{\afinal_{j} \subseteq \amat}{j \in \natIntK}$ of accepting (final) sets.
	\end{itemize}
\end{definition}

Nondeterministic \buchi automata are interpreted over infinite sequences $\word \colon \omega \to \Sigma$ of input letters.
An infinite sequence $\arun \colon \omega \to \astates$ of states of $\baut$ is called a \emph{run} of $\baut$ on an input word $\word$ if $\arun(0) \in \ainit$ and, for each $i \in \omega$, $\big(\arun(i), \word(i), \arun(i+1)\big) \in \amat$.
We denote by $\runs(\word)$ the set of all runs $\arun$ on $\word$.
For a run $\arun \in \runs(\word)$, we denote with $\tran{\arun} \colon i \mapsto \big(\arun(i), \word(i), \arun(i+1)\big)$ the \emph{transitions of $\arun$}.
We sometimes denote a run $\arun$ by the associated states, that is, $\arun = q_{0} \cdot q_{1} \cdot q_{2} \cdot \ldots$ where $\arun(i) = q_{i}$ for each $i \in \omega$ and we call a finite prefix $q_{0} \cdot q_{1} \cdot q_{2} \cdot \ldots \cdot q_{n}$ of $\arun$ a \emph{pre-run}.
A run $\arun$ of a NGBA is \emph{accepting} if its transitions $\tran{\arun}$ contain infinitely many transitions from all final sets, i.e., for each $j \in \natIntK$, $\insym(\tran{\arun}) \cap \afinal_{j} \neq \emptyset$, where $\insym(\tran{\arun}) = \setcond{t \in \amat}{\forall i \in \omega \; \exists j > i \text{ such that } \tran{\arun}(j) = t}$.
A word $\word \colon \omega \to \Sigma$ is \emph{accepted} by $\baut$ if $\baut$ has an accepting run on $\word$, and
the set $\lang(\baut) = \setcond{\word \in \Sigma^\omega}{\text{$\word$ is accepted by $\baut$}}$ of words accepted by $\baut$ is called its \emph{language}.

\begin{wrapfigure}[6]{right}{40mm}
	\centering
	\vskip-4mm
	\begin{tikzpicture}[->,>=stealth,auto]
	\path[use as bounding box] (-1.5,1.3) rectangle (1.5,0.2);
	
	\node (BA) at (0,0) {$\baut_{\calE}$};
	\node (bax) at ($(BA) + (0,1)$) {$x$};
	\node (bay) at ($(BA) + (-1.5,0)$) {$y$};
	\node (baz) at ($(BA) + (1.5,0)$) {$z$};
	
	\draw ($(bax.north) + (0,0.3)$) to node {} (bax.north);
	\draw (bax) to node [right,near end] {$a,1$} (bay);
	\draw (bay) to[bend left=30] node [left] {$c$} (bax);
	\draw (bax) to node [left,near end] {$a$} (baz);
	\draw (baz) to[bend right=30] node [right] {$b,2$} (bax);
	
	\end{tikzpicture}
	\caption{A \buchi automaton}
	\label{fig:exampleBuechi}
\end{wrapfigure}
Figure~\ref{fig:exampleBuechi} shows an example of \buchi automaton.
The number $j$ after the label as in the transition $(x,a,y)$, when present, indicates that the transition belongs to the accepting set $\afinal_{j}$, i.e., $(x,a,y)$ belongs to $\afinal_{1}$.
The language generated by $\baut_{\calE}$ is a subset of $(ab|ac)^{\omega}$ and a word $\word$ is accepted if each $b$ (and $c$) is eventually followed by a $c$ (by a $b$, respectively).

We call the automaton $\baut$ a \emph{nondeterministic \buchi automaton} (NBA) whenever $|\abuechiset| = 1$ and we denote it by $\baut = (\Sigma, \astates, \ainit, \amat, \afinal)$.
For technical convenience we also allow for finite runs $q_{0} \cdot q_{1} \cdot q_{2} \cdot \ldots \cdot q_{n}$ with $\amat \cap \setnocond{q_{n}} \times \setnocond{\word(n)} \times \astates = \emptyset$.
In other words, a run may end with $q_{n}$ if action $\word(n)$ is not enabled from $q_{n}$.
Naturally, no finite run satisfies the accepting condition, thus it is not accepting and has no influence on the language of an automaton.

To simplify the notation, the transition set $\amat$ can also be seen as a function $\amat \colon \astates \times \Sigma \to 2^{\astates}$ assigning to each pair $(q, \sigma) \in \astates \times \Sigma$ the set of successors according to $\amat$, i.e., $\amat(q, \sigma) = \setcond{q' \in \astates}{(q, \sigma, q') \in \amat}$.
We extend $\amat$ to sets of states in the usual way, i.e., by defining $\amat(S, \sigma) = \bigcup_{q \in S} \amat(q, \sigma)$.

\begin{definition}%[NPA]
\label{def:nonDetParityAut}
	A \emph{(transition-labelled) nondeterministic parity automaton} (NPA) with $k$ priorities is a quintuple $\paut = \pautElements$ where $\Sigma$, $\astates$, $\ainit$, and $\amat$ are as in Definition~\ref{def:nonDetBuechiAut} and a function $\aparityfun$ from transitions to a finite set $\natIntK$ of priorities.
\end{definition}

A run $\arun \colon \omega \to \astates$ of a NPA is \emph{accepting} if the lowest priority that occurs infinitely often is even, that is if $\liminf_{n \to \infty} \pri(\tran{\arun}(n))$ is even.

\begin{definition}
\label{def:nonDetRabinAut}
	A \emph{(transition-labelled) nondeterministic Rabin automaton} (NRA) with $k$ accepting pairs is a quintuple $\aut = \autElements$ where $\Sigma$, $\astates$, $\ainit$, and $\amat$ are as in Definition~\ref{def:nonDetBuechiAut} and $\arabinset = \setcond{(\aacc_{i}, \arej_{i})}{i \in \natIntK,\ \aacc_{i}, \arej_{i} \subseteq \amat}$ is a finite family of Rabin pairs.
	(For convenience, we sometimes use other finite sets of indices rather than $\natIntK$.)
\end{definition}

A run $\arun$ of a NRA is accepting if there exists $i \in \natIntK$ such that $\insym(\tran{\arun}) \cap \aacc_{i} \neq \emptyset$ and $\insym(\tran{\arun}) \cap \arej_{i} = \emptyset$.

An automaton $\aut = (\Sigma, \astates, \ainit, \amat, \ACC)$, where $\ACC$ is the acceptance condition (parity, Rabin, \buchi, or generalised \buchi), is called \emph{deterministic} if, for each $(q, \sigma) \in \astates \times \Sigma$, $|\amat(q, \sigma)| \leq 1$, and $\ainit = \setnocond{q_{0}}$ for some $q_{0} \in \astates$.
For notational convenience, we denote a deterministic automaton $\aut$ by the tuple $(\Sigma, \astates, q_{0}, \amat, \ACC)$ and $\amat \colon \astates \times \Sigma \to \astates$ is the partial function, which is defined at $(q, \sigma)$ if, and only if, $\sigma$ is enabled at $q$.
For a given deterministic automaton $\daut$, we denote by $\daut_{d}$ the otherwise similar automaton with initial state $d$.
Similarly, for a NGBA $\baut$, we denote by $\baut_{R}$ the NGBA with $R$ as set of initial states.

\begin{definition}%[Semi-deterministic automaton]
\label{def:semiDetAut}
	We call a NGBA $\baut = (\Sigma,$ $\astates, \ainit, \amat, \abuechiset)$ a \emph{semi-deterministic \buchi automaton} (SDBA) if the set of states $\astates$ can be partitioned into two sets $\sdstatesini$ and $\sdstatesfin$, where $\sdstatesini$ is the set of states $\baut$ is initially in and $\sdstatesfin$ is the set of finally reached states $\baut$ is eventually always in, such that
	\begin{inparaenum}[1.)]
	\item
		$\ainit \subseteq \sdstatesini$ is singleton, and
	\item
		the set of transitions can be partitioned into three sets:
		transitions $\sdmatini \subseteq \sdstatesini \times \Sigma \times \sdstatesini$ that are taken initially, transit transitions $\sdmattra \subseteq \sdstatesini \times \Sigma \times \sdstatesfin$, and transitions $\sdmatfin \subseteq \sdstatesfin \times \Sigma \times \sdstatesfin$ the automaton takes after a transit transition has been used, such that $\sdmatini$ and $\sdmatfin$ are partial functions, i.e., for each $\sigma \in \Sigma$, $|\sdmatini(q_{i}, \sigma)| \leq 1$ for each $q_{i} \in \sdstatesini$ and $|\sdmatfin(q_{f}, \sigma)| \leq 1$ for each $q_{f} \in \sdstatesfin$.
	\end{inparaenum}
\end{definition}

\subsection{Determinisation of Generalised \buchi Automata}
\label{ssec:determinisation}

NGBAs can be translated to deterministic Rabin automata (DRAs) using the following construction from~\cite{Schewe+Varghese/12/generalisedBuchi}.
We first define the structure that captures the acceptance mechanism of our deterministic Rabin automaton.
\begin{definition}[Ordered trees]
	We call a tree $\calT \subseteq \naturals^{*}$ an \emph{ordered tree}  if it satisfies the following constraints:
	\begin{itemize}
% 	\begin{inparaenum}[(1.)]
	\item
		Each element $v \in \calT$ is called a \emph{node}.
		The empty sequence $\epsilon\in\calT$ is called the \emph{root}.
	\item
		If a node $v = n_{1} \ldots n_{j} n_{j+1}$ is in $\calT$, then $v' = n_{1} \ldots n_{j}$ is also in $\calT$; $v'$ is called the \emph{predecessor} of $v$, denoted by $\pred(v)$; $\pred(\epsilon)$ is undefined.
	\item
		If a node $v = n_{1} \ldots n_{j-1}n_{j}$ is in $\calT$, then, for each $0 \leq i < n_{j}$, $v' = n_{1} \ldots n_{j-1} i$ is also in $\calT$;
		$v'$ is called an \emph{older sibling} of $v$ and $v$ a \emph{younger sibling} of $v'$; the set of older siblings of $v$ is denoted by $\os(v)$.
% 	\end{inparaenum}
	\end{itemize}
\end{definition}

Ordered trees are therefore non-empty sets of nodes that are closed under predecessors and older siblings.

\begin{definition}[Generalised history tree]
	Let $\baut = \bautElements$ be a NGBA.
	A \emph{generalised history tree} (GHT) $d$ over $\astates$ for $k$ accepting sets is a triple $d = (\calT, l, h)$, where
	\begin{itemize}
	\item
		$\calT$ is an ordered tree,
	\item
		$l \colon \calT \rightarrow 2^{\astates} \setminus \setnocond{\emptyset}$ is a labelling function such that
		\begin{itemize}
		\item
			$l(v) \subsetneq l(\pred(v))$ holds for all $v \in \calT$, $v \neq \epsilon$,
		\item
			the intersection of the labels of two siblings is disjoint, i.e., for each $v \in \calT$ and each $v' \in \os(v)$, $l(v) \cap l(v') = \emptyset$, and
		\item
			the union of the labels of all siblings is \emph{strictly} contained in the label of their predecessor, i.e., for each $v \in \calT$ there exists $q \in l(v)$ such that for each $v' \in \calT$, $v = \pred(v')$ implies $q \notin l(v')$,
		\end{itemize}
	\item
		$h \colon \calT \to \natIntK$ is a function that labels every node with a number from $\natIntK$.
	\end{itemize}
\end{definition}

For a GHT $d = (\calT, l, h)$, $(\calT, l)$ is the history tree introduced in~\cite{Schewe/09/determinise}.
GHTs are enriched by the second labelling function, $h$, which is used to relate nodes in the tree with a particular accepting set.
Intuitively, $h(v)$ denotes the \emph{active} index of the accepting transitions, i.e., $\afinal_{h(v)}$.
The construction separates the transition mechanism from the acceptance condition.
A GHT contains the set of currently reached states in its root.
We therefore denote $l(\epsilon)$ for a given GHT $d = (\calT, l, h)$ by $\reached(d)$.
As it helps to understand the correctness of our breakpoint construction, we recall how to determinise a NGBA.

\begin{definition}[Determinisation construction~\cite{Schewe+Varghese/12/generalisedBuchi}]
\label{def:determinisation}
	Given a NGBA $\baut = \bautElements$ with $|\astates|=n$ states and $k$ accepting sets, we construct an equivalent DRA $\daut = (\Sigma, \dstates, \dinit, \dmat, \setcond{(\aacc_{i}, \arej_{i})}{i \in J})$, denoted by $\det(\baut)$, as follows.
	\begin{itemize}
	\item
		$\dstates$ is the set of generalised history trees over $\astates$.
	\item
		$\dinit$ is the generalised history tree $(\setnocond{\epsilon}, l \colon \epsilon \mapsto \ainit, h \colon \epsilon \mapsto 1)$.
	\item
		For each tree $d \in \dstates$ and $\sigma \in \Sigma$ with $\amat(l(\epsilon), \sigma) \neq \emptyset$, the transition $d' = \dmat(d,\sigma)$ is the result of the following sequence of steps:
		for each node $v \in d$, let $\astates_{v}$ be $l(v)$;
		\begin{enumerate}
		\item \emph{Raw update of $l$.}
		\label{item:determinisation:rawupdate}
			For each node $v \in d$, set $l(v) = \amat(\astates_{v}, \sigma)$.

		\item \emph{Sprouting new children.}
		\label{item:determinisation:newchildren}
			For each node $v \in d$ with $c$ children and $h(v) = i$, create a new child $vc$ with $l(vc) = \afinal_{i}(\astates_{v},\sigma)$ and $h(vc) = 1$.

		\item \emph{Stealing of labels.}
		\label{item:determinisation:stealing}
			For each node $v$ and each $q \in \astates_{v}$, $q$ is removed from the labels of all younger siblings of $v$ and all of their descendants.

		\item \emph{Accepting and removing.}
		\label{item:determinisation:accepting}
			For each node $v$ whose label is equal to the union of the labels of its children, remove all descendants of $v$ from the tree, and restrict the domain of $l$ and $h$ accordingly.
			Update $h(v)$ with $h(v) \oplus_{k} 1$.

			The transition is in $\aacc_{v}$ for all nodes $v$ for which this applies.

		\item \emph{Removing nodes.}
		\label{item:determinisation:removing}
			Remove all nodes with empty labels. (The resulting tree $\calT'$ may no longer be ordered.) $l$ and $h$ are updated by restricting their domain.

		\item \emph{Renaming and rejecting.}
		\label{item:determinisation:rejecting}
			To repair the orderedness, we denote by $\|v\| = |\os(v) \cap \calT'|$ the number of (still existing) older siblings of $v$, and map $v = n_{1} \ldots n_{j}$ to $v' = \|n_{1}\|\ \|n_{1} n_{2}\|\ \|n_{1} n_{2} n_{3}\| \ldots \|v\|$, denoted $\rename(v)$.
			We update a triple $(\calT, l, h)$ from the previous step to $\big(\rename(\calT'), l' \colon \rename(v) \mapsto l(v), h' \colon \rename(v) \mapsto h(v)\big)$.

			For each $v \notin \calT \cap \calT'$ and each $v \in \calT'$ with $\rename(v) \neq v$, the transition is in $\arej_{v}$.
		\end{enumerate}
		The sets $\aacc_{i}$ and $\arej_{i}$ are constructed according to above Steps~\ref{item:determinisation:accepting} and~\ref{item:determinisation:rejecting}.
	\item
		$J$ is the set of nodes that occur in some ordered tree of size $n$ (the number of nodes).
	\end{itemize}
\end{definition}

\begin{wrapfigure}[8]{right}{36mm}
	\centering
	\scriptsize
	\vskip-5mm
\resizebox{36mm}{!}{
	\begin{tikzpicture}[->,>=stealth,auto]
		\path[use as bounding box] (-2,3.3) rectangle (2,0);
		
		\node (A) at (0,0) {\normalsize$\aut_{\calE}$};
		
		\node[rectangle, draw, rounded corners, minimum width=15mm, minimum height=6mm] (T0) at ($(A) + (-1.25,2.5)$) {};
		\node (T0R) at (T0) {\normalsize$\epsilon, \setnocond{x}, 1$};
		
		\draw ($(T0) + (0,0.6)$) to node {} ($(T0) + (0,0.3)$);
		
		\node[rectangle, draw, rounded corners, minimum width=15mm, minimum height=16mm] (T1) at ($(A) + (1.25,2.5)$) {};
		\node (T1R) at ($(T1) + (0,0.5)$) {\normalsize$\epsilon, \setnocond{yz}, 1$};
		\node (T1C0) at ($(T1) + (0,-0.5)$) {\normalsize$0, \setnocond{y}, 1$};
		\draw[-] (T1R) to node {} (T1C0);

		\node[rectangle, draw, rounded corners, minimum width=15mm, minimum height=6mm] (T2) at ($(A) + (1.25,0.75)$) {};
		\node (T2R) at (T2) {\normalsize$\epsilon, \setnocond{x}, 2$};

		\node[rectangle, draw, rounded corners, minimum width=15mm, minimum height=6mm] (T3) at ($(A) + (-1.25,0.75)$) {};
		\node (T3R) at (T3) {\normalsize$\epsilon, \setnocond{yz}, 2$};

		\draw ($(T0) + (0.75,0.1)$) to[bend left=15] node {$a$} ($(T1) + (-0.75,0.1)$);
		\draw ($(T1) + (-0.75,-0.1)$) to[bend left=15] node {$b$} ($(T0) + (0.75,-0.1)$);
		
		\draw[double distance=0.7pt] ($(T1) + (0,-0.8)$) to node {$c$} ($(T2) + (0,0.3)$);

		\draw ($(T2) + (-0.75,0.1)$) to[bend right=15] node[above] {$a$} ($(T3) + (0.75,0.1)$);
		\draw ($(T3) + (0.75,-0.1)$) to[bend right=15] node[below] {$c$} ($(T2) + (-0.75,-0.1)$);
		
		\draw[double distance=0.7pt] ($(T3) + (0,0.3)$) to node {$b$} ($(T0) + (0,-0.3)$);
	\end{tikzpicture}
}
\vskip 3mm
	\caption{The determinisation for $\baut_{\calE}$}
	\label{fig:exampleDeterm}
\end{wrapfigure}
Figure~\ref{fig:exampleDeterm} shows the DRA $\aut_{\calE}$ corresponding to the NGBA $\baut_{\calE}$ in Figure~\ref{fig:exampleBuechi}.
Each rounded box is a tree and each node contains the node identifier and the associated $l$- and $h$-labels, respectively.
$\aacc_{\epsilon}$ contains the double arrow transitions while $\arej_{0}$ contains the remaining transitions;
$\arej_{\epsilon}$ and $\aacc_{0}$ are empty.

\subsection{Markov Chains and Product}

A \emph{distribution} $\mu$ over a set $X$ is a function $\mu \colon X \to [0,1]$ such that $\sum_{x \in X} \mu(x) = 1$.
A \emph{Markov chain (MC)} is a tuple $\mc = (\pstates, \labelFunc, \pinit, \pmat)$, where $\pstates$ is a finite set of states, $\labelFunc \colon \pstates \to \Sigma$ is a \emph{labelling function}, $\pinit$ is the \emph{initial distribution}, and $\pmat \colon \pstates \times \pstates \to [0,1]$ is a \emph{probabilistic transition matrix} satisfying $\sum_{m' \in \pstates} \pmat(m, m') \in \setnocond{0,1}$ for all $m \in \pstates$.
A state $m$ is called \emph{absorbing} if $\sum_{m' \in \pstates} \pmat(m, m') = 0$.
We write $(m,m') \in \pmat$ for $\pmat(m,m') > 0$.

\begin{wrapfigure}[7]{right}{32mm}
	\centering
	\scriptsize
	\vskip-6mm
	\begin{tikzpicture}[->,>=stealth,auto]
		\path[use as bounding box] (-1.5,1.75) rectangle (1.5,0.15);
		
		\node (MC) at (0,0) {\normalsize$\mc_{\calE}$};
		\coordinate (mcatr) at ($(MC) + (0,0.75)$);
		\node (mca) at ($(mcatr) + (0,0.5)$) {\normalsize$a$};
		\node (mcc) at ($(mcatr) + (-1.4,-0.75)$) {\normalsize$c$};
		\node (mcb) at ($(mcatr) + (1.4,-0.75)$) {\normalsize$b$};
		
		\draw ($(mca.north) + (0,0.3)$) to node {} (mca.north);
		\draw (mca) -- (mcatr) -- node [below,near end] {~$2/3$} (mcc);
		\draw (mcatr) to node [below, near end] {$1/3$~~} (mcb);
		\draw (mcc) to[bend left=30] node [above, near end] {$1$} (mca);
		\draw (mcb) to[bend right=30] node [above, near end] {$1$} (mca);
	\end{tikzpicture}
	\caption{A MC with $\labelFunc(m) = m$ for each $m$.}
	\label{fig:exampleMC}
\end{wrapfigure}
A \emph{maximal path} of $\mc$ is an infinite sequence $\ppath = m_{0} m_{1} \ldots$ satisfying $\pmat(m_{i}, m_{i+1}) > 0$ for all $i \in \omega$, or a finite one if the last state is absorbing.
We denote by $\ppaths^{\mc}$ the set of all maximal paths of $\mc$.
An infinite path $\ppath = m_{0} m_{1} \ldots$ defines the word $\word(\ppath) = w_{0} w_{1} \ldots \in \Sigma^{\omega}$ with $w_{i} = \labelFunc(m_{i})$, $i \in \omega$.

Given a finite sequence $\ppath = m_{0} m_{1} \ldots m_{k}$, the \emph{cylinder} of $\ppath$, denoted by $\cyl(\ppath)$, is the set of maximal paths starting with prefix $\ppath$.
We define the probability of the cylinder set by $\prob^{\mc}\big(\cyl(m_{0} m_{1} \ldots m_{k})\big) \defeq \pinit(m_{0}) \cdot \prod_{i = 0}^{k-1} \pmat(m_{i}, m_{i+1})$.
For a given MC $\mc$, $\prob^{\mc}$ can be uniquely extended to a probability measure over the $\sigma$-algebra generated by all cylinder sets.

In this paper we are interested in $\omega$-regular properties $\lang \subseteq \Sigma^{\omega}$ and the probability $\prob^{\mc}(\lang)$ for some measurable set $\lang$.
Further, we define $\prob^{\mc}(\baut) \defeq \prob^{\mc}(\setcond{\ppath \in \ppaths^{\mc}}{\word(\ppath) \in \lang(\baut)})$ for an automaton $\baut$.
We write $\prob_{m}^{\mc}$ to denote the probability function when assuming that $m$ is the initial state.
Moreover, we omit the superscript $\mc$ whenever it is clear from the context.
We follow the standard way of computing this probability in the product of $\mc$ and a deterministic automaton for $\lang$.
\begin{definition}[Product MC and Projection]
	Given a MC $\mc = (\pstates, \labelFunc, \pinit, \pmat)$ and a deterministic automaton $\aut = (\Sigma, \astates, \dinit, \amat, \ACC)$, the \emph{product Markov chain} is defined by $\mc \prodMC \aut \defeq (\pstates \times \astates, \labelFunc', \pinit', \pmat')$ where
	\begin{itemize}
	\item
		$\labelFunc'\big((m,d)\big) = \labelFunc(m)$;
	\item
		$\pinit'\big((m,d)\big) = \pinit(m)$ if $d = \amat(\dinit, \labelFunc(m))$, $0$ otherwise; and
	\item
		$\pmat'\big((m,d), (m',d')\big)$ equals $\pmat(m,m')$ if $d' = \amat(d, \labelFunc(m'))$, and is $0$ otherwise.
	\end{itemize}

	We denote by $\projaut((m,d), (m',d'))$ the \emph{projection} on $\aut$ of the given $((m,d), (m',d')) \in \pmat'$, i.e., $\projaut((m,d), (m',d')) = (d, \labelFunc(m'), d')$, and by $\projaut(\transtyle{B})$ its extension to a set of transitions $\transtyle{B} \subseteq \amat'$, i.e., $\projaut(\transtyle{B}) = \setcond{\projaut(p,p')}{(p,p') \in \transtyle{B}}$.
\end{definition}
As we have accepting transitions on the edges of the automata, we propose product Markov chains with accepting conditions on their edges.
\begin{definition}[GMC, RMC, and PMC]
	Given a MC $\mc$ and a deterministic automaton $\aut$ with accepting set $\ACC$, the product automaton is $\mc \prodAut \aut \defeq (\mc \prodMC \aut, \ACC')$ where
	\begin{itemize}
	\item
		if $\ACC = \abuechiset$, then $\ACC' \defeq \abuechiset'$ where $\afinal'_{i} = \setcond{(p,p') \in \pmat'}{\projaut(p,p') \in \afinal_{i}} \in \abuechiset'$ for each $i \in \natIntK$ (Generalised \buchi Markov chain, GMC);
	\item
		if $\ACC = \arabinset$, then $\ACC' \defeq (\arabinsetacc', \arabinsetrej')$ where $\aacc'_{i} = \setcond{(p,p') \in \pmat'}{\projaut(p,p') \in \aacc_{i}} \in \arabinsetacc'$ and $\arej'_{i} = \setcond{(p,p') \in \pmat'}{\projaut(p,p') \in \arej_{i}} \in \arabinsetrej'$ for each $i \in \natIntK$ (Rabin Markov chain, RMC); and
	\item
		if $\ACC = \pri \colon \pmat \to \natIntK$, then $\ACC' \defeq \pri' \colon \pmat' \to \natIntK$ where $\pri'(p,p') = \pri(\projaut(p,p'))$ for each $(p,p') \in \pmat'$ (Parity Markov chain, PMC).
	\end{itemize}
\end{definition}
Thus, RMC, PMC, and GMC are Markov chains extended with the corresponding accepting conditions.
We remark that the labelling of the initial states of the Markov chain is taken into account in the definition of $\pinit'$.

\begin{definition}[Bottom SCC]
\label{def:scc}
	A \emph{bottom strongly connected component} (BSCC) $\scc \subseteq V$ is an SCC in the underlying digraph $(V, E)$ of a MC $\mc$, where all edges with source in $\scc$ have only successors in $\scc$ (i.e., for each $(v, v') \in E$, $v \in \scc$ implies $v' \in \scc$).
	We assume that a (bottom) SCC does not contain any absorbing state.
	Given an SCC $\scc$, we denote by $\scctran$ the transitions of $\mc$ in $\scc$, i.e., $\scctran = \setcond{(m,m') \in \pmat}{m,m' \in \scc}$.
\end{definition}

\section{Lazy Determinisation}
\label{sec:lazy}

We fix an input MC $\mc$ and a NGBA $\baut = \bautElements$ as a specification.
Further, let $\aut = \det(\baut)$ be the deterministic Rabin automaton (DRA) constructed for $\baut$ (cf.~\cite{Safra/88/Safra,Schewe/09/determinise,Schewe+Varghese/12/generalisedBuchi}), and let $\mc \prodAut \aut = (\pstates \times \astates, \labelFunc, \pinit, \pmat, \ACC)$ be the product RMC.
We consider the problem of computing $\prob_{m_{0}}^{\mc}(\baut)$, i.e., the probability that a run of $\mc$ is accepted by $\baut$.

\subsection{Outline of our Methodology}

We first recall the classical approach for computing $\prob^{\mc}(\baut)$, see~\cite{BaierK08} for details.
It is well known~\cite{BiancoA95} that the computation of $\prob^{\mc}(\baut)$ reduces to the computation of the probabilistic reachability in the product RMC $\mc \prodAut \aut$ with $\aut = \det(\baut)$.
We first introduce the notion of accepting SCCs:

\begin{definition}[Accepting SCC]
\label{def:acceptingSCCOfRMC}
	Given a MC $\mc$ and the DRA $\aut = \det(\baut)$, let $\scc$ be a bottom SCC of the product RMC $\mc \prodAut \aut$.
	We say that $\scc$ is accepting if there exists an index $i \in \natIntK$ such that $\aacc_{i} \cap \projaut(\scctran) \neq \emptyset$ and $\arej_{i} \cap \projaut(\scctran) = \emptyset$;
	we call each $s \in \scc$ an \emph{accepting state}.
	Moreover, we call the union of all accepting BSCCs the \emph{accepting region}.
\end{definition}

Essentially, since a BSCC is an ergodic set, once a path enters an accepting BSCC $\scc$, with probability $1$ it will take transitions from $\aacc_{i}$ infinitely often;
since $\aacc_{i}$ is finite, at least one transition from $\aacc_{i}$ is taken infinitely often.
Now we have the following reduction:

\begin{theorem}[\kern-0.75ex\cite{BiancoA95}]
\label{thm:BiancoA95}
	Given a MC $\mc$ and a \buchi automaton $\baut$, consider $\aut = \det(\baut)$.
	Let $U \subseteq \pstates \times \astates$ be the accepting region and let $\Diamond U$ denote the set of paths containing a state of $U$.
	Then, $\prob^{\mc}(\baut) = \prob^{\mc \prodAut \aut}(\Diamond U)$.
\end{theorem}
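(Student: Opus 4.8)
The plan is to establish the two inclusions $\prob^{\mc}(\baut) \le \prob^{\mc \prodAut \aut}(\Diamond U)$ and $\prob^{\mc}(\baut) \ge \prob^{\mc \prodAut \aut}(\Diamond U)$ by relating runs of $\baut$ on words produced by $\mc$ to paths in the product $\mc \prodAut \aut$, using the fact that $\aut = \det(\baut)$ is deterministic and language-equivalent to $\baut$. The key structural observation is that the product $\mc \prodAut \aut$ is itself a Markov chain whose paths are in measure-preserving bijection with the paths of $\mc$: since $\aut$ is deterministic, every maximal path $\ppath = m_0 m_1 \ldots$ of $\mc$ lifts to a \emph{unique} path $(m_0, d_0)(m_1, d_1)\ldots$ in $\mc \prodAut \aut$, where $d_0 = \amat(\dinit, \labelFunc(m_0))$ and $d_{i+1} = \amat(d_i, \labelFunc(m_{i+1}))$, and this lifting preserves the probability of cylinders by the definition of $\pinit'$ and $\pmat'$. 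Hence $\prob^{\mc}$ and $\prob^{\mc \prodAut \aut}$ may be identified along this projection, and it suffices to show that, up to a null set of paths, $\word(\ppath) \in \lang(\baut)$ if and only if the lifted path eventually enters the accepting region $U$.

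The first main step is a standard fact about finite Markov chains: with probability $1$, a path of $\mc \prodAut \aut$ eventually enters a bottom SCC and visits every state (and hence traverses every transition) of that BSCC infinitely often. This is where I would invoke the ergodicity of BSCCs; it lets me replace the $\insym$-set of the lifted run's transitions by exactly $\projaut(\scctran)$ for the BSCC $\scc$ that the path settles in. With this in hand, the lifted path satisfies the Rabin condition of $\aut$ almost surely if and only if $\scc$ is an accepting BSCC in the sense of Definition~\ref{def:acceptingSCCOfRMC} — i.e. some pair $(\aacc_i, \arej_i)$ has $\aacc_i \cap \projaut(\scctran) \neq \emptyset$ and $\arej_i \cap \projaut(\scctran) = \emptyset$ — which is precisely the event $\Diamond U$ (once in an accepting BSCC, a path stays there). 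The second main step is to transfer this from $\aut$ back to $\baut$: because $\lang(\aut) = \lang(\baut)$ and $\aut$'s (unique) run on $\word(\ppath)$ is the automaton component of the lifted path, the lifted path's run is accepting in $\aut$ exactly when $\word(\ppath) \in \lang(\aut) = \lang(\baut)$. Combining, $\{\ppath : \word(\ppath) \in \lang(\baut)\}$ and $\Diamond U$ differ by a null set, giving $\prob^{\mc}(\baut) = \prob^{\mc \prodAut \aut}(\Diamond U)$.

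The main obstacle I expect is the careful handling of the two directions of the Rabin-acceptance-versus-BSCC correspondence, and in particular the subtlety that a single accepting \emph{run} need not correspond to an accepting \emph{BSCC} for an individual path, only almost surely. A path could have a Rabin-accepting lifted run while momentarily, but this is irrelevant: the measure-$1$ argument says that almost every path both settles in a BSCC and, within it, realises exactly the transition set $\projaut(\scctran)$, so the pointwise notion of ``the lifted run is Rabin-accepting'' and the BSCC-level notion of ``$\scc$ is accepting'' agree on a co-null set. Making this precise requires being explicit that (i) finite runs contribute nothing (they are non-accepting and, in the product, lead to absorbing states outside any BSCC, matching the exclusion of absorbing states in Definition~\ref{def:scc}), and (ii) the event ``the lifted path enters $U$'' coincides with ``the BSCC it settles in is accepting'' because an accepting BSCC is entirely contained in $U$ and, conversely, reaching any state of $U$ means having entered an accepting BSCC. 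Once these bookkeeping points are in place, the equality of probabilities follows by integrating the indicator of the (co-null) event over the measure-preserving projection. I would remark that this is exactly the argument of~\cite{BiancoA95,BaierK08}, reproduced here in our transition-based setting for completeness.
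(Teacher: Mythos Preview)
The paper does not prove this theorem; it is stated with a citation to~\cite{BiancoA95} and used as a black box (the surrounding text only offers the informal remark that a BSCC is an ergodic set, so transitions in $\aacc_i$ are taken infinitely often). Your sketch is a correct reconstruction of the standard argument from~\cite{BiancoA95,BaierK08}: the measure-preserving lifting via determinism of $\aut$, the almost-sure absorption into a BSCC with full transition coverage, and the transfer back to $\baut$ via $\lang(\aut)=\lang(\baut)$ are exactly the ingredients of the textbook proof, so there is nothing to compare and nothing to correct.
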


When all bottom SCCs are evaluated, the evaluation of the Rabin MC is simple:
we abstract all accepting bottom SCCs to an absorbing goal state and perform a reachability analysis, which can be solved in polynomial time~\cite{BiancoA95,BaierK08}.
Thus, the outline of the traditional probabilistic model checking approach for LTL specifications is as follows:
\begin{enumerate}
\item
	translate the NGBA $\baut$ into an equivalent DRA $\aut = \det(\baut)$;
\item
	build (the reachable fragment of) the product automaton $\mc \prodAut \aut$;
\item
	for each BSCC $\scc$, check whether $\scc$ is accepting.
	Let $U$ be the union of these accepting SCCs;
\item
	infer the probability $\prob^{\mc \prodAut \aut}(\Diamond U)$.
\end{enumerate}

The construction of the deterministic Rabin automaton used in the classical approach is often the bottleneck of the approach, as one exploits some variant of the approach proposed by Safra~\cite{Safra/88/Safra}, which is rather involved.
The lazy determinisation technique we suggest in this paper follows a different approach.

\begin{figure}
	\centering
\resizebox{1\linewidth}{!}{%
	\begin{tikzpicture}[->,>=stealth',auto]
		\scriptsize
		\tikzstyle{inputoutput} = [trapezium, draw, inner sep=3pt, trapezium right angle=-70pt, trapezium left angle=70pt, align=center];
		\tikzstyle{data} = [draw, internaldata, inner sep=3pt, align=center];
		\tikzstyle{code} = [draw, process, inner sep=3pt, align=center];
		
		\node[inputoutput] (model) at (-4,0) {Model};
		\node[code] (ssexp) at ($(model) + (1.75,0)$) {State space\neueZeile exploration};
		\node[data] (pmpg) at ($(ssexp) + (1.75,0)$) {MDP\neueZeile /MC};
		
		\node[inputoutput] (formula) at (7,0.1) {Formula};
		\node[code] (autconst) at ($(formula) + (0,-1.2)$) {Automaton\neueZeile construction};
		\node[data] (buechi) at ($(autconst) + (-2,0)$) {B\"{u}chi\neueZeile automaton};
		\node[code] (determ) at ($(buechi) + (0,1.1)$) {Subset\neueZeile Construction};
		\node[data] (parity) at ($(determ) + (-2,0)$) {Subset\neueZeile automaton};

		\node[code] (prodbuild) at ($0.45*(parity) + 0.55*(pmpg)$) {Product\neueZeile building};
		\node[data] (ppg) at ($(prodbuild) + (-2.5,-1.1)$) {Product\neueZeile MDP/MC};
		\node[code] (decalgo) at ($(ppg) + (2.25,0)$) {Graph\neueZeile decomposition};
		\node[data] (scclist) at ($(decalgo) + (0,-1.1)$) {MEC/SCC\neueZeile list};
		\node[data] (transientt) at ($(decalgo) + (2.25,0)$) {Transient\neueZeile states};
		\node[code] (subset) at ($(scclist) + (-1.8,0)$) {Subset\neueZeile criterion};
		\node[data] (subsetundecided) at ($(subset) + (-1.8,0)$) {Undecided\neueZeile MECs/SCCs};
		\node[code] (breakpoint) at ($(subsetundecided) + (-2.2,0)$) {Breakpoint\neueZeile construction};
		\node[data] (breakpointundecided) at ($(breakpoint) + (0,-1.1)$) {Undecided\neueZeile MECs/SCCs};
		\node[code] (rabin) at ($(breakpointundecided) + (3,0)$) {Multi-breakpoint\neueZeile or Safra construction};
		\node[data] (decided) at ($(rabin) + (3,0)$) {Decided\neueZeile MECs/SCCs};
		\node[code] (encoding) at ($(decided) + (2.05,0)$) {Encoding};
		\node[data] (lp) at ($(encoding) + (2.2,0)$) {LP problem\neueZeile sparse matrix};
		\node[code] (iteration) at ($(lp) + (0,1.1)$) {Value\neueZeile iteration};
		\node[inputoutput] (result) at ($(iteration) + (1.6,0)$) {Result};
		
		\draw (model) to (ssexp);
		\draw (ssexp) to (pmpg);
		
		\draw (formula) to (autconst);
		\draw (autconst) to (buechi);
		\draw (buechi) to (determ);
		\draw (determ) to (parity);
		
		\draw (pmpg) to (prodbuild);
		\draw (parity) to (prodbuild);
		
		\draw (prodbuild.south) to ($(prodbuild.south) + (0,-0.15)$) -| (ppg);
		\draw (ppg) to (decalgo);
		\draw (decalgo) to (scclist);
		\draw (decalgo) to (transientt);
		\draw (scclist) to (subset);
		\draw (subset) to (subsetundecided);
		\draw (subsetundecided) to (breakpoint);
		\draw (breakpoint) to (breakpointundecided);
		\draw (breakpointundecided) to (rabin);
		\draw (rabin) to (decided);
		\draw (decided) to (encoding);
		\draw (encoding) to (lp);
		\draw (lp) to (iteration);
		\draw (iteration) to (result);
		\draw[-] (subset.south) to ($(subset.south) + (0,-0.15)$);
		\draw ($(breakpoint.south)+(0.2,0.0)$) to ($(breakpoint.south) + (0.2,-0.15)$) -| (decided.north);
		\draw (transientt) -- (encoding);
	\end{tikzpicture}
}
	\caption{Overview of the model checking procedure}
	\label{fig:overview}
\end{figure}
We first transform the high-level specification (e.g., given in the \prism language~\cite{KwiatkowskaNP11}) into its MDP or MC semantics.
We then employ some tool (e.g., \ltlthreeba~\cite{BabiakKRS12} or \spot~\cite{DuretLutz11}) to construct a \buchi automaton equivalent to the LTL specification.
This nondeterministic automaton is used to obtain the deterministic \buchi over- and under-approximation subset automata $\ssaut^{u}$ and $\ssaut^{o}$, as described in Subsection~\ref{subset}.
The languages recognised by these two deterministic \buchi automata are such that $\lang(\ssaut^{u}) \subseteq \lang(\baut) \subseteq \lang(\ssaut^{o})$.
We build the product of these subset automata with the model MDP or MC (cf.~Lemma~\ref{lem:productSubsetIsomorphicQuotientMC}).
We then compute the maximal end components or bottom strongly connected components.
According to Lemma~\ref{lem:subset}, we try to decide these components of the product by using the acceptance conditions $\afinal^{o}_{i}$ and $\afinal^{u}_{i}$ of $\ssaut^{u}$ and $\ssaut^{o}$, respectively.

For each of those components where over- and under-approximation do not agree (and which we therefore cannot decide), we employ the breakpoint construction (cf.~Corollary~\ref{cor:breakpoint}), involving the deterministic Rabin over- and under-approximation breakpoint automata $\rbpaut^{u}$ and $\rbpaut^{o}$, such that $\lang(\rbpaut^{u}) \subseteq \lang(\baut) \subseteq \lang(\rbpaut^{o})$.
For this, we take one state of the component under consideration and start the breakpoint construction with this state as initial state.
This way, we obtain a product of a breakpoint automaton with parts of the model.
If the resulting product contains an accepting component (using the under-approximation), then the original component must be accepting, and if the resulting product contains a rejecting component (using the over-approximation), then the original component must be rejecting.

The remaining undecided components are decided either by using a Safra-based construction, restricted to the undecided component, or only by using $\rbpaut^{u}$, where we start from possibly different states of the subset product component under consideration; 
this approach always decides the remaining components, and we call it the multi-breakpoint construction.

For the model states that are part of an accepting component, or from which no accepting component is reachable, the probability to fulfil the specification is now already known to be $1$ or $0$, respectively.
To obtain the remaining state probabilities, we construct and solve a linear programming (LP) problem (or a linear equation system when we start with MCs).

Note that, even in case the multi-breakpoint (or Safra-based) procedure is necessary in some places, our method is usually still more efficient than direct Rabin determinisation, for instance based on some variation of~\cite{Schewe+Varghese/12/generalisedBuchi}.
The reason for this is twofold.
First, when starting the determinisation procedure from a component rather than from the initial state of the model, the number of states in the Rabin product will be smaller, and second, we only need the multi-breakpoint determinisation to decide MECs or bottom SCCs, such that the computation of transient probabilities can still be done in the smaller subset product.

The following optimisations can be used to speed up the model checking algorithm.
\begin{itemize}
\item
	We can compute the graph decomposition on the fly.
	Thus, we first compute one component, then decide it, compute the next component, etc.

\item
	If we have shown a state $(m,R)$ of the subset product to be accepting and $R' \supset R$, then $(m,R')$ is accepting.

\item
	We can treat all states, from which we find that we can reach an accepting component with probability 1, as accepting.

	Note that, if such a state is part of a MEC, this expands to the complete MEC, and if the state is initial, we can already terminate the algorithm.

\item
	Subset and breakpoint products can be effectively represented using BDDs~\cite{MSL08}.
\end{itemize}

In the remainder of this section, we detail the proposed approach:
we first introduce the theoretical background, and then present the incremental evaluation of the bottom SCCs.

\subsection{Acceptance Equivalence}
\label{thm:proof2}

In order to be able to apply our lazy approach, we exploit a number of acceptance equivalences in the RMC.
Given the DRA $\aut = \det(\baut)$ and a state $d$ of $\aut$, we denote by $\reached(d)$ the label of the root node $\epsilon$ of the labelled ordered tree associated to $d$ (cf.~\cite{Safra/88/Safra,Schewe/09/determinise,Schewe+Varghese/12/generalisedBuchi}).

\begin{theorem}
\label{thm:only_reach}
	Given a NGBA $\baut$ and $\aut = \det(\baut)$, let $d$ be an arbitrary state of $\aut$.
	Then, a word is accepted by $\aut_{d}$ if, and only if, it is accepted by $\baut_{\reached(d)}$.
\end{theorem}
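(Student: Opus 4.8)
The plan is to exploit the structure of the determinisation construction (Definition~\ref{def:determinisation}): the language of $\aut_d$ depends only on the states reachable along a word from $d$, and at each step the root label $\reached(\cdot)$ is updated precisely by the subset construction, i.e.\ $\reached(\dmat(d,\sigma)) = \amat(\reached(d),\sigma)$. So the ``reachability skeleton'' of the DRA run from $d$ on a word $\word$ mirrors exactly the collection of all runs of $\baut_{\reached(d)}$ on $\word$. The claim then says that this skeleton already determines acceptance, and we recover the usual correctness statement of the construction as the special case $d = \dinit$, $\reached(\dinit) = \ainit$.

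First I would fix $\word \in \Sigma^\omega$ and let $\arun_d = d_0 d_1 d_2 \cdots$ be the (unique) run of $\aut_d$ on $\word$, with $d_0 = d$. By an easy induction on $i$, using Step~\ref{item:determinisation:rawupdate} of the construction and the fact that the root is never pruned (only its strict descendants are removed in Steps~\ref{item:determinisation:accepting}--\ref{item:determinisation:rejecting}), one gets $\reached(d_i) = \amat(\reached(d), \word(0)\cdots\word(i-1))$, the set of states $\baut$ can be in after reading the length-$i$ prefix of $\word$ starting from $\reached(d)$. In particular $\aut_d$ has an infinite run on $\word$ iff $\aut_{d'}$ does for any $d'$ with $\reached(d') = \reached(d)$; the finite-run technicality is harmless since finite runs are never accepting. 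So both directions of the iff reduce to comparing acceptance of the infinite run $\arun_d$ with the existence of an accepting run of $\baut$ from $\reached(d)$.

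For the ``only if'' direction: if $\arun_d$ is Rabin-accepting, there is a node $v$ that is eventually stable in the tree, hit by $\aacc_v$ infinitely often (Step~\ref{item:determinisation:accepting}) and by $\arej_v$ only finitely often (Step~\ref{item:determinisation:rejecting}). Each firing of $\aacc_v$ means the label $l(v)$ equalled the union of its children's labels, and between consecutive firings the $h$-labels cycle through all of $\natIntK$; tracking which states of $l(v)$ survive and generate the children of $v$ produces, by König's lemma, an infinite run of $\baut$ that visits every accepting set $\afinal_j$ infinitely often — hence accepting. Since the label of $v$ is always a subset of $\reached(d_i)$, this run starts in $\reached(d)$. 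Conversely, if $\baut_{\reached(d)}$ has an accepting run on $\word$, a standard argument (as in~\cite{Schewe/09/determinise,Schewe+Varghese/12/generalisedBuchi}) shows that this run is eventually ``trapped'' inside the label of some node $v$ that, from some point on, is never destroyed and is accepted (its label coincides with the union of its children's) infinitely often while being renamed/rejected only finitely often, so $\arun_d \in \aacc_v$ infinitely often and $\arun_d \notin \arej_v$ eventually — $\arun_d$ is accepting.

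The main obstacle is the ``if'' direction, because it essentially re-proves half of the soundness of the determinisation construction of~\cite{Schewe+Varghese/12/generalisedBuchi}, but relative to an \emph{arbitrary} initial state $d$ rather than $\dinit$. The cleanest route is to observe that this relativised statement is literally what the original correctness proof establishes — the argument there never uses that the run started at $\dinit$, only that the root label tracks the reachable set — so I would invoke that proof with $\reached(d)$ in place of $\ainit$, and the bulk of the work is just verifying that the inductive invariant on $\reached(d_i)$ is exactly the hypothesis that proof consumes. The finite-prefix bookkeeping (runs dying because an action is disabled) and the renaming in Step~\ref{item:determinisation:rejecting} are the only genuinely new fiddly points, and both are routine.
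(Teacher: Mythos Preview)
Your proposal is correct and takes essentially the same approach as the paper's own proof: both directions are proved exactly as you outline (K\"onig's lemma from an eventually-stable, infinitely-often-accepting node $v$ for $\lang(\aut_d)\subseteq\lang(\baut_{\reached(d)})$; the host-node stabilisation argument for the converse), and the paper makes precisely your key observation---that the correctness proof of~\cite{Schewe+Varghese/12/generalisedBuchi} never uses the special form of the initial state, only that the root label tracks the reachable set.
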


Intuitively, a word $\word$ is accepted by $\aut_{d}$ if there is an accepting sequence $d_{0} d_{1} d_{2} \ldots$ with $d_{0} = d$ and $d_{i} = (\calT_{i}, l_{i}, h_{i})$ for each $i \in \omega$;
since each $l_{i+1}(\epsilon)$ is the set of states reached from $l_{i}(\epsilon)$ via $\word(i)$, then in $\baut_{\reached(d)} = \baut_{l_{0}(\epsilon)}$ there is a sequence of states $q_{0} q_{1} q_{2} \ldots$ such that each $q_{i+1} \in l_{i+1}(\epsilon)$ is reached from some $q_{i} \in l_{i}(\epsilon)$ via $\word(i)$;
such a sequence is accepting as well by the way $\det(\baut)$ is constructed.
A similar argument \linebreak applies for the other direction.
The formal proof is a mild generalisation of the correctness proof of the DRA construction~\cite{Schewe+Varghese/12/generalisedBuchi}.
Theorem~\ref{thm:only_reach} provides an immediate corollary.

\begin{corollary}
\label{cor:same}
	Given a NGBA $\baut$, a MC $\mc$, and the DRA $\aut = \det(\baut)$,
	\begin{inparaenum}[$(1.)$]
	\item
		a path $\arun$ in $\mc \prodAut \aut$ that starts from a state $(m,d)$ is accepted if, and only if, the word it defines is accepted by $\baut_{\reached(d)}$;
		and
	\item
		if $\reached(d) = \reached(d')$, then the probabilities of acceptance from a state $(m,d)$ and a state $(m,d')$ are equal, i.e., $\prob_{(m,d)}^{\mc \prodAut \aut}(\baut) = \prob_{(m,d')}^{\mc \prodAut \aut}(\baut)$.
	\end{inparaenum}
\end{corollary}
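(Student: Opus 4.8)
My plan is to deduce both parts from Theorem~\ref{thm:only_reach} together with the determinism of $\aut$; part~$(2.)$ additionally uses a short measure-preserving-bijection argument layered on top of part~$(1.)$. Recall that, since $\mc \prodAut \aut$ is a Rabin Markov chain, a maximal path $\arun$ of it is \emph{accepted} if there is an index $i$ with $\insym(\tran{\arun}) \cap \aacc'_{i} \neq \emptyset$ and $\insym(\tran{\arun}) \cap \arej'_{i} = \emptyset$, where $\aacc'_{i}$ and $\arej'_{i}$ are the accepting and rejecting sets of $\mc \prodAut \aut$.

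For part~$(1.)$, I would fix a maximal path $\arun = (m_{0},d_{0})(m_{1},d_{1})\ldots$ of $\mc \prodAut \aut$ with $(m_{0},d_{0}) = (m,d)$. Because $\aut$ is deterministic, the definition of the product transitions forces the automaton component $d_{0} d_{1} d_{2} \ldots$ of $\arun$ to be the unique run of $\aut_{d}$ induced by $\arun$, so that $\projaut \circ \tran{\arun}$ is precisely the transition sequence of that run. Now $\aacc'_{i}$ and $\arej'_{i}$ are, by definition, the $\projaut$-preimages within $\pmat'$ of $\aacc_{i}$ and $\arej_{i}$; since the product has finitely many transitions, a pigeonhole argument gives $\insym(\tran{\arun}) \cap \aacc'_{i} \neq \emptyset$ iff $\projaut(\insym(\tran{\arun})) \cap \aacc_{i} \neq \emptyset$, and likewise for $\arej$. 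Hence $\arun$ satisfies the Rabin condition of $\mc \prodAut \aut$ if, and only if, the run $d_{0} d_{1} d_{2} \ldots$ satisfies the Rabin condition of $\aut$, that is, if, and only if, the word defined by $\arun$ belongs to $\lang(\aut_{d})$ (finite maximal paths are non-accepting on both sides and may be discarded). By Theorem~\ref{thm:only_reach}, $\lang(\aut_{d}) = \lang(\baut_{\reached(d)})$, which is exactly the assertion of part~$(1.)$.

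For part~$(2.)$, note that $\prob_{(m,d)}^{\mc \prodAut \aut}(\baut)$ is the probability of the (measurable) set of maximal paths from $(m,d)$ that are accepted. The map $\projmc$ that drops the automaton component restricts to a bijection between the maximal paths of $\mc \prodAut \aut$ starting in $(m,d)$ and the maximal paths of $\mc$ starting in $m$: it is injective because, by determinism of $\aut$, the automaton component of such a path is determined by $d$ together with the model component, and surjective because every maximal path of $\mc$ from $m$ lifts uniquely. This bijection matches cylinder probabilities, hence is measure-preserving, and it preserves the word defined by a path. By part~$(1.)$, the accepted paths from $(m,d)$ are exactly those whose word lies in $\lang(\baut_{\reached(d)})$; applying part~$(1.)$ also at $(m,d')$ and using $\reached(d) = \reached(d')$---so that $\baut_{\reached(d)}$ and $\baut_{\reached(d')}$ are literally the same automaton---the accepted paths from $(m,d)$ and those from $(m,d')$ are carried by $\projmc$ onto one and the same set $\setcond{\ppath \in \ppaths^{\mc}}{\word(\ppath) \in \lang(\baut_{\reached(d)})}$ of maximal $\mc$-paths from $m$. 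Since both restrictions of $\projmc$ are measure-preserving, $\prob_{(m,d)}^{\mc \prodAut \aut}(\baut)$ and $\prob_{(m,d')}^{\mc \prodAut \aut}(\baut)$ both equal $\prob_{m}^{\mc}\big(\setcond{\ppath \in \ppaths^{\mc}}{\word(\ppath) \in \lang(\baut_{\reached(d)})}\big)$ and therefore coincide.

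The argument is essentially bookkeeping. The two spots that need care are: (i) the equivalence in part~$(1.)$ between the Rabin acceptance of a product path and that of its induced $\aut_{d}$-run, which rests on the $\projaut$-preimage definition of $\aacc'_{i}$ and $\arej'_{i}$ and on finiteness of the product's state space; and (ii) in part~$(2.)$, checking that the path sets in play are measurable and that $\projmc$ really does restrict to a measure-preserving bijection once the starting state is fixed. I expect (i) to be the main---and rather modest---obstacle, since it is the only step where one has to unfold the product acceptance condition instead of citing Theorem~\ref{thm:only_reach} as a black box.
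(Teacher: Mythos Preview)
Your proposal is correct and follows the same line the paper intends: the paper does not give an explicit proof but simply states that Theorem~\ref{thm:only_reach} ``provides an immediate corollary,'' and your argument is precisely the natural unpacking of that claim---determinism of $\aut$ makes the automaton component of a product path the unique $\aut_{d}$-run on the induced word, so product acceptance coincides with $\aut_{d}$-acceptance, and Theorem~\ref{thm:only_reach} then identifies this with $\baut_{\reached(d)}$-acceptance; part~$(2.)$ follows by the standard measure-preserving projection to $\mc$.
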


This property allows us to work on quotients \emph{and} to swap between states with the same reachability set.
If we ignore the accepting conditions, we have a product MC, and we can consider the quotient of such a product MC as follows.
\begin{definition}[Quotient MC]
	Given a MC $\mc$ and a DRA $\aut = \det(\baut)$, the \emph{quotient MC} $[\mc \prodMC \aut]$ of $\mc \prodMC \aut$ is the MC $([\pstates \times \astates], [\labelFunc], [\pinit], [\pmat])$ where
	\begin{itemize}
	\item
		$[\pstates \times \astates] = \setcond{(m,[d])}{(m,d) \in \pstates \times \astates,\ [d] = \setcond{d' \in \astates}{\reached(d') = \reached(d)}}$,
	\item
		$[\labelFunc](m,[d]) = \labelFunc(m,d)$,
	\item
		$[\pinit](m,[d]) = \pinit(m, d)$, and
	\item
		$[\pmat]\big((m,[d]), (m',[d'])\big) = \pmat\big((m,d),(m',d')\big)$.
	\end{itemize}
\end{definition}
By abuse of notation, we define $[(m,d)]= (m,[d])$ and $[C] = \setcond{[s]}{s \in C}$.
It is easy to see that, for each $d \in \astates$, $d \in [d]$ holds and that $[\pmat]$ is well defined:
for $(m, d_{1}), (m, d_2) \in [(m,d)]$, $\pmat\big((m, d_{1}), (m', [d'])\big) = \pmat\big((m,d),(m',d')\big) = \pmat\big((m, d_{2}), (m', [d'])\big)$ holds.

\begin{theorem}
\label{thm:bottom_component}
	For a MC $\mc$ and DRA $\aut = \det(\baut)$, it holds that
	\begin{enumerate}
	\item
		if $\scc$ is a bottom SCC of $\mc \prodMC \aut$ then $[\scc]$ is a bottom SCC of $[\mc \prodMC \aut]$,
	\item
		if $\scc'$ is a bottom SCC of $[\mc \prodMC \aut]$, then there is a bottom SCC $\scc$ of $\mc \prodMC \aut$ with $\scc' = [\scc]$.
	\end{enumerate}
\end{theorem}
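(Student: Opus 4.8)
\emph{Proof plan.} The plan is to regard the quotient map $q\colon (m,d)\mapsto (m,[d])$ as a morphism between the digraphs underlying $\mc\prodMC\aut$ and $[\mc\prodMC\aut]$, and to show that it both preserves and reflects the structure that makes a vertex set a bottom SCC. Two elementary facts drive everything. First, $q$ maps edges to edges: if $\pmat\big((m,d),(m',d')\big)>0$ then $[\pmat]\big((m,[d]),(m',[d'])\big)>0$, so every path of the product projects to a path of the quotient. Second, for \emph{any} representative $(m,d)$ of a quotient state $(m,[d])$, the outgoing edges of $(m,[d])$ are in bijection (via $q$) with the outgoing edges of $(m,d)$ in the product: both are indexed by the successors $m'$ of $m$ in $\mc$ with $\amat(\reached(d),\labelFunc(m'))\neq\emptyset$, a condition depending on $d$ only through $\reached(d)$, and each quotient edge $(m,[d])\to(m',[e])$ lifts to a product edge $(m,d)\to(m',e')$ with $[e']=[e]$; iterating the lift lifts whole quotient paths starting at $(m,d)$. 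I will also use the standard fact that in a finite MC a nonempty vertex set $C$ is a bottom SCC iff $C$ is strongly connected and closed under successors (equivalently, every member of $C$ reaches exactly $C$), and that any finite nonempty successor-closed set with no absorbing state contains a bottom SCC of the whole MC (take a sink of its SCC-DAG).

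For part~(1), let $\scc$ be a bottom SCC of $\mc\prodMC\aut$. To see $[\scc]$ is closed under successors, take $(m,[d])\in[\scc]$, fix a representative $(m,d')\in\scc$, and lift any outgoing quotient edge to an outgoing edge of $(m,d')$; since $\scc$ is bottom it lands in $\scc$, so the quotient edge lands in $[\scc]$. To see $[\scc]$ is strongly connected, take $(m_{1},[d_{1}]),(m_{2},[d_{2}])\in[\scc]$, choose representatives in $\scc$, use strong connectedness of $\scc$ to get a path between them inside $\scc$, and project it by the first fact. A strongly connected, successor-closed set is maximal among strongly connected sets (a strictly larger one would contain a vertex reachable from within $[\scc]$ yet outside it, contradicting closedness), so $[\scc]$ is an SCC and, being successor-closed, a bottom SCC; it contains no absorbing state because $\scc$ contains none and a state of positive out-degree in the product projects to one of positive out-degree in the quotient.

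For part~(2), let $\scc'$ be a bottom SCC of $[\mc\prodMC\aut]$; it is nonempty and has no absorbing state. Its preimage $q^{-1}(\scc')=\setcond{(m,d)}{(m,[d])\in\scc'}$ is nonempty (since $d\in[d]$ always holds), successor-closed (by the first fact and closedness of $\scc'$), and has no absorbing state (an absorbing $(m,d)$ would project to an absorbing $(m,[d])$). By the graph-theoretic fact above, $q^{-1}(\scc')$ contains a bottom SCC $\scc$ of $\mc\prodMC\aut$. By part~(1), $[\scc]$ is a bottom SCC of $[\mc\prodMC\aut]$, and $[\scc]=q(\scc)\subseteq\scc'$; two bottom SCCs with one contained in the other must coincide, since SCCs are maximal strongly connected sets, so $[\scc]=\scc'$, as required.

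The main obstacle, and the place to be careful, is the bookkeeping around the quotient transition relation: the outgoing-edge correspondence in the second fact must hold for \emph{every} representative of a quotient state, which is precisely what the already-verified well-definedness of $[\pmat]$ delivers (together with the observation that $\reached(d)$ alone determines whether $\amat$ is defined on the letter $\labelFunc(m')$). It is equally important that this correspondence concerns only \emph{outgoing} edges---incoming edges genuinely get merged by $q$---which is why part~(2) cannot be proved by lifting $\scc'$ wholesale, but must instead locate a bottom SCC of the product inside the successor-closed preimage and then appeal to part~(1).
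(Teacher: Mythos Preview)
Your proof is correct and follows essentially the same route as the paper's: the paper first isolates a path-projection/path-lifting lemma (your ``two elementary facts'' about $q$) and then proves part~(1) by projecting connectedness and lifting successor-closedness, and part~(2) by taking the full preimage $q^{-1}(\scc')$, showing it is successor-closed, extracting a bottom SCC $\scc$ inside it, and invoking part~(1) to conclude $[\scc]=\scc'$. The only cosmetic difference is that the paper states the lifting at the level of paths (with a superfluous reachability hypothesis) whereas you state it crisply as an outgoing-edge bijection at each representative and iterate; your formulation is arguably cleaner.
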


Together with Definition~\ref{def:acceptingSCCOfRMC} and Theorem~\ref{thm:only_reach}, Theorem~\ref{thm:bottom_component} provides:

\begin{corollary}
\label{cor:quotients}
	Let $\scc$ be a bottom SCC of $[\mc \prodMC \aut]$.
	Then, either all states $s$ of $\mc \prodAut \aut$ with $[s] \in \scc$ are accepting, or all states $s$ of $\mc \prodAut \aut$ with $[s] \in \scc$ are rejecting.
\end{corollary}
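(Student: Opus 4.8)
The plan is to combine the structural correspondence between bottom SCCs given by Theorem~\ref{thm:bottom_component} with the acceptance-equivalence consequence of Theorem~\ref{thm:only_reach}. First I would take a bottom SCC $\scc$ of the quotient MC $[\mc \prodMC \aut]$. By part~(2.) of Theorem~\ref{thm:bottom_component}, there is a bottom SCC $\scc_{0}$ of $\mc \prodMC \aut$ with $[\scc_{0}] = \scc$. The key observation is then that the set of states $s$ of $\mc \prodAut \aut$ with $[s] \in \scc$ is exactly the union of all bottom SCCs $\scc'$ of $\mc \prodMC \aut$ with $[\scc'] = \scc$: indeed, if $[s] \in \scc$ then $s$ lies in some bottom SCC of $\mc \prodMC \aut$ (its own BSCC, which projects into $\scc$ since the projection $[\cdot]$ maps edges to edges and $\scc$ is a BSCC), and conversely each such $\scc'$ has all its states mapped into $\scc$.

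Next, I would argue that all states $s$ with $[s] \in \scc$ share the same $\reached(d)$-value, where $s = (m,d)$: by definition of the quotient, $[s] = (m,[d])$ where $[d]$ collects precisely the automaton states with a common reached set, and all states of $\scc$ with first component $m$ therefore correspond to a single such class. Hence for any two such states $s_{1} = (m, d_{1})$ and $s_{2} = (m, d_{2})$ with $[s_{1}] = [s_{2}] \in \scc$ we have $\reached(d_{1}) = \reached(d_{2})$. More importantly, any state $s = (m,d)$ with $[s] \in \scc$ lies in a bottom SCC $\scc'$ of $\mc \prodMC \aut$, and by Definition~\ref{def:acceptingSCCOfRMC} that $\scc'$ is either accepting or rejecting as a whole; I would then invoke Corollary~\ref{cor:same}(1.) (equivalently Theorem~\ref{thm:only_reach}) to conclude that whether $\scc'$ is accepting depends only on $\reached(d)$ for the states $(m,d) \in \scc'$ — a path from $(m,d)$ is accepted iff its word is accepted by $\baut_{\reached(d)}$, and inside a bottom SCC acceptance of the BSCC is a property of almost every path, hence of the common reached sets appearing in it.

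The remaining step is to tie these together: all bottom SCCs $\scc'$ of $\mc \prodMC \aut$ with $[\scc'] = \scc$ contain states whose reached-set values are exactly the members of the single class that constitutes the relevant coordinates of $\scc$, so by the $\reached$-dependence of acceptance they are \emph{all} accepting or \emph{all} rejecting, consistently with $\scc_{0}$. Therefore either every state $s$ of $\mc \prodAut \aut$ with $[s] \in \scc$ is accepting, or every such state is rejecting, which is the claim.

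I expect the main obstacle to be the precise justification that "accepting" is a function of the reached sets occurring in a bottom SCC rather than of the particular tree-states: one must check that a bottom SCC $\scc'$ is accepting (in the Rabin sense of Definition~\ref{def:acceptingSCCOfRMC}) if and only if the runs it supports are accepted by $\baut_{\reached(d)}$ for the $d$'s appearing in $\scc'$, which requires the standard ergodicity argument (in a BSCC, with probability one every transition of the BSCC is taken infinitely often, so Rabin-acceptance of the BSCC coincides with acceptance of almost every path, and the latter is governed by $\reached$ via Theorem~\ref{thm:only_reach}). Once this "BSCC-acceptance $\Leftrightarrow$ $\reached$-acceptance" bridge is in place, the rest is bookkeeping with the quotient definition and Theorem~\ref{thm:bottom_component}.
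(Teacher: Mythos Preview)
Your overall strategy matches the paper's intended argument (combine Theorem~\ref{thm:bottom_component} with Corollary~\ref{cor:same}/Theorem~\ref{thm:only_reach}), and the bridge you identify at the end---that Rabin-acceptance of a BSCC coincides with almost-sure acceptance of its runs, which by Theorem~\ref{thm:only_reach} depends only on the $\reached$-values---is exactly the right one. That part is fine.

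There is, however, a genuine gap in your ``key observation''. You claim that the set $\{\,s : [s]\in\scc\,\}$ equals the union of all BSCCs $\scc'$ of $\mc\prodMC\aut$ with $[\scc']=\scc$, arguing that any $s$ with $[s]\in\scc$ lies in ``its own BSCC''. This is false: the set $\{\,s : [s]\in\scc\,\}$ is closed under successors (this is shown in the proof of Theorem~\ref{thm:bottom_component}), but it need not be a union of BSCCs---it can contain transient states. Concretely, if $d_{1}\neq d_{2}$ with $\reached(d_{1})=\reached(d_{2})$ and on some letter $\sigma$ the DRA sends $d_{1}\mapsto d_{2}$ and $d_{2}\mapsto d_{2}$, then in the product with a one-state MC labelled $\sigma$ the state $(m,d_{1})$ is transient although $[(m,d_{1})]=[(m,d_{2})]$ lies in the (singleton) BSCC of the quotient. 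Such a transient $(m,d_{1})$ is not an accepting state in the sense of Definition~\ref{def:acceptingSCCOfRMC}, so under a literal reading your dichotomy would fail whenever $(m,d_{2})$ sits in an accepting BSCC.

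The fix is simple and does not change your argument's core. Either read the corollary as a statement about states that \emph{do} lie in BSCCs of $\mc\prodMC\aut$ (which is how it is actually used later, e.g.\ in the proof of Theorem~\ref{thm:acceptanceOfSubsetAutomaton}); then your comparison of any two BSCCs $\scc_{1},\scc_{2}$ with $[\scc_{1}]=[\scc_{2}]=\scc$ via a common class $[(m,d)]\in\scc$ and Corollary~\ref{cor:same}(2) finishes the job. Or, if you want the literal statement, observe that every BSCC reachable from a transient $s$ with $[s]\in\scc$ also projects to $\scc$ (closure under successors plus Theorem~\ref{thm:bottom_component}(1)), so by your BSCC argument they are all accepting or all rejecting, whence $\prob_{s}(\baut)\in\{0,1\}$; interpreting ``accepting/rejecting'' as $\prob=1$/$\prob=0$ then covers the transient states too. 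Either way, the obstacle you anticipated (the $\reached$-dependence of BSCC acceptance) is not where the issue lies.
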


Once all bottom SCCs are evaluated, we only need to perform a standard probabilistic reachability analysis on the quotient MC.

\subsection{Incremental Evaluation of Bottom SCCs}
\label{subset}

To evaluate each bottom SCC of the RMC, we use three techniques:
the first one is based on evaluating the subset construction directly.
We get two deterministic NGBAs that provide over- and under-approximations.
If this fails, we refine the corresponding bottom SCC by a breakpoint construction.
Only if both fail, a precise construction follows.

\subsubsection{Subset Construction}

For a given NGBA $\baut = \bautElements$, a simple way to over- and under-approximate its language by a subset construction is as follows.
We build two NGBAs $\ssaut^{o} = (\Sigma, 2^\astates, \setnocond{\ainit}, \amat', \abuechiset^{o})$ and $\ssaut^{u} = (\Sigma, 2^\astates, \setnocond{\ainit}, \amat', \abuechiset^{u})$, differing only for the accepting condition, where
\begin{itemize}
\item
	$\amat' = \setcond{ (R, \sigma, C)}{\emptyset \neq R \subseteq \astates, C = \amat(R, \sigma)}$,
\item
	$\afinal^{o}_{i} = \setcond{(R, \sigma, C) \in \amat'}{\exists (q, q') \in R \times C.\ (q, \sigma, q') \in \afinal_{i}} \in \abuechiset^{o}$ for each $i \in \natIntK$, and
\item
	$\afinal^{u}_{i} = \setcond{(R, \sigma, C) \in \amat'}{\forall (q, q') \in R \times C.\ (q, \sigma, q') \in \afinal_{i}} \in \abuechiset^{u}$ for each $i \in \natIntK$.
\end{itemize}

\begin{wrapfigure}[7]{right}{30mm}
	\centering
	\scriptsize
	\vskip-5mm
	\begin{tikzpicture}[->,>=stealth,auto]
        \path[use as bounding box] (-1.3,1.5) rectangle (1.5,0.15);

        \node (SUBA) at (0,0) {\normalsize$\ssaut_{\calE}$};
        \node (subx) at ($(SUBA) + (-1.1,0.9)$) {\normalsize$\setnocond{x}$};
        \node (subyz) at ($(SUBA) + (1.1,0.9)$) {\normalsize$\setnocond{y, z}$};

        \draw ($(subx.north) + (0,0.3)$) to node {} (subx.north);
        \draw (subx) to node[above, near end] {$a$} (subyz);
        \draw (subyz) to[bend right=30] node[above, very near end] {$~~b$} (subx);
        \draw (subyz) to[bend left=30] node[below, very near end] {$c$} (subx);

	\end{tikzpicture}
	\caption{The subset construction for $\baut_{\calE}$}
	\label{fig:exampleSubSet}
\end{wrapfigure}
Essentially, $\ssaut^{o}$ and $\ssaut^{u}$ are the subset automata that we use to over- and under-approximate the accepting conditions, respectively.
Figure~\ref{fig:exampleSubSet} shows the reachable fragment of the subset construction for the NGBA $\baut_{\calE}$ depicted in Figure~\ref{fig:exampleBuechi}.
The final sets of the two subset automata are $\afinal^{o}_{1} = \setnocond{(\setnocond{x}, a, \setnocond{yz})}$ and $\afinal^{o}_{2} = \setnocond{(\setnocond{yz}, b, \setnocond{x})}$ for $\ssaut^{o}$ and $\afinal^{u}_{1} = \afinal^{u}_{2} = \emptyset$ for $\ssaut^{u}$.
The following lemma holds:
\begin{lemma}
\label{lem:inclusions1}
	$\lang(\ssaut_{[d]}^{u}) \subseteq \lang(\aut_{d}) \subseteq \lang(\ssaut_{[d]}^{o})$.
\end{lemma}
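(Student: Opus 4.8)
The plan is to prove the two inclusions $\lang(\ssaut_{[d]}^{u}) \subseteq \lang(\aut_{d})$ and $\lang(\aut_{d}) \subseteq \lang(\ssaut_{[d]}^{o})$ separately, in each case relating the run of the subset automaton on a word $\word$ to the behaviour of the NGBA $\baut$ (via Theorem~\ref{thm:only_reach}, which tells us $\lang(\aut_{d}) = \lang(\baut_{\reached(d)})$). The key observation to set up first is that the transition structure of $\ssaut^{o}$ and $\ssaut^{u}$ is exactly the classical subset/powerset transformer: if $\word$ is the input and $R_{0} = \reached(d)$, then the unique run of $\ssaut^{o}$ (equivalently $\ssaut^{u}$, since they differ only in acceptance) from $R_{0}$ visits the sets $R_{0}, R_{1}, R_{2}, \ldots$ with $R_{i+1} = \amat(R_{i}, \word(i))$. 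Moreover each $R_{i}$ equals the set of states reachable in $\baut$ after reading $\word(0)\cdots\word(i-1)$ from some state in $R_{0}$. I would record this as a preliminary claim, proved by a trivial induction on $i$.

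For the over-approximation $\lang(\aut_{d}) \subseteq \lang(\ssaut_{[d]}^{o})$: suppose $\word \in \lang(\aut_{d}) = \lang(\baut_{\reached(d)})$. Then there is an accepting run $q_{0} q_{1} q_{2} \ldots$ of $\baut$ on $\word$ with $q_{0} \in R_{0}$, so for every $j \in \natIntK$ there are infinitely many $i$ with $(q_{i}, \word(i), q_{i+1}) \in \afinal_{j}$. By the preliminary claim, $q_{i} \in R_{i}$ and $q_{i+1} \in R_{i+1}$ for all $i$, so each such transition witnesses $(R_{i}, \word(i), R_{i+1}) \in \afinal^{o}_{j}$ by definition of $\afinal^{o}_{j}$ (existential over $R_{i} \times R_{i+1}$). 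Hence the subset run hits $\afinal^{o}_{j}$ infinitely often for every $j$, i.e.\ it is accepting in $\ssaut^{o}$, so $\word \in \lang(\ssaut_{[d]}^{o})$.

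For the under-approximation $\lang(\ssaut_{[d]}^{u}) \subseteq \lang(\aut_{d})$: suppose $\word \in \lang(\ssaut_{[d]}^{u})$, so the subset run $R_{0} R_{1} R_{2} \ldots$ satisfies, for each $j \in \natIntK$, $(R_{i}, \word(i), R_{i+1}) \in \afinal^{u}_{j}$ for infinitely many $i$; by the \emph{universal} definition of $\afinal^{u}_{j}$ this means that for those $i$, \emph{every} transition $(q, \word(i), q')$ with $q \in R_{i}$, $q' \in R_{i+1}$ lies in $\afinal_{j}$. Since each $R_{i}$ is nonempty (the run is a genuine infinite run of $\ssaut^{u}$), pick any infinite run $q_{0} q_{1} q_{2} \ldots$ of $\baut$ on $\word$ with $q_{i} \in R_{i}$ and each step a transition of $\baut$ — such a run exists because $R_{i+1} = \amat(R_{i}, \word(i))$ guarantees every state of $R_{i+1}$ has a predecessor in $R_{i}$ and $R_{i+1}\neq\emptyset$, so we can even construct it greedily backwards/forwards; I would phrase this via König-style reasoning or simply note each $R_i$ is the image under $\amat$ so there is a state in $R_0$ from which an infinite run through the $R_i$ exists. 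Every step of this run that falls on an index $i$ with $(R_i,\word(i),R_{i+1})\in\afinal^u_j$ is in $\afinal_j$; there are infinitely many such $i$ for each $j$, so the run is accepting in $\baut$. Thus $\word \in \lang(\baut_{\reached(d)}) = \lang(\aut_{d})$.

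The main obstacle is the extraction of a single infinite accepting run of $\baut$ inside the corridor of sets $R_{0}, R_{1}, \ldots$ in the under-approximation direction: one must be careful that a run living inside the $R_i$ actually exists (nonemptiness of all $R_i$ plus the fact that $R_{i+1}$ is exactly the $\amat$-image of $R_i$ is what makes this work — this is where the "$C = \amat(R,\sigma)$" in the definition of $\amat'$, rather than "$C \subseteq \amat(R,\sigma)$", is essential), and that the universally-quantified acceptance condition then transfers to \emph{whatever} such run we pick. The over-approximation direction, by contrast, is straightforward once the preliminary claim about the $R_i$ is in place.
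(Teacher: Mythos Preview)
Your proposal is correct and matches the paper's intended argument. The paper states only that ``the proof is easy as, in each $\afinal^{o}_{i}$ and $\afinal^{u}_{i}$, the accepting transitions are over- and under-approximated''; you have simply unpacked this remark, making explicit the subset-tracking claim $R_i = \amat(R_{i-1},\word(i-1))$, the use of Theorem~\ref{thm:only_reach} to pass between $\aut_d$ and $\baut_{\reached(d)}$, and the K\"onig argument extracting an infinite $\baut$-run inside the corridor $R_0,R_1,\ldots$ for the under-approximation direction.
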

The proof is easy as, in each $\afinal^{o}_{i}$ and $\afinal^{u}_{i}$, the accepting transitions are over- and under-approximated.
With this lemma, we are able to identify some accepting and rejecting bottom SCCs in the product.

We remark that $\ssaut^{o}$ and $\ssaut^{u}$ differ only in their accepting conditions.
Thus, the corresponding GMCs $\mc \prodAut \ssaut^{u}$ and $\mc \prodAut \ssaut^{o}$ also differ only for their accepting conditions.
If we ignore the accepting conditions, we have the following result:
\begin{lemma}
\label{lem:productSubsetIsomorphicQuotientMC}
	Let $\mc$ be a MC, $\baut$ a NGBA, $\aut = \det(\baut)$, and $\ssaut^{u}$ as defined above;
	let $\ssaut$ be $\ssaut^{u}$ without the accepting conditions.
	Then, $\mc \prodMC \ssaut$ and $[\mc \prodMC \aut]$ are isomorphic.
\end{lemma}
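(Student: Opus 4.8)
The statement to prove is Lemma~\ref{lem:productSubsetIsomorphicQuotientMC}: that $\mc \prodMC \ssaut$ and $[\mc \prodMC \aut]$ are isomorphic as Markov chains, where $\ssaut$ is the subset automaton (without accepting conditions) and $\aut = \det(\baut)$. The natural candidate for the isomorphism is the map sending a state $(m, R)$ of $\mc \prodMC \ssaut$ to $(m, [d])$ of $[\mc \prodMC \aut]$ whenever $\reached(d) = R$; this is well defined on equivalence classes precisely because, by definition of the quotient, $[d]$ is determined by $\reached(d)$. So first I would exhibit this map $\phi \colon (m,R) \mapsto (m, [d])$ with $\reached(d) = R$, and separately the candidate inverse $\psi \colon (m,[d]) \mapsto (m, \reached(d))$.

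\textbf{Step 1: $\phi$ is a well-defined bijection on state sets.} The key fact I would establish is that the set of reachable subsets $R$ appearing in $\mc \prodMC \ssaut$ coincides with the set $\setcond{\reached(d)}{d \text{ reachable in } \aut}$ appearing in $[\mc \prodMC \aut]$, and that both are reached along the \emph{same} words/paths of $\mc$. This follows by induction on path length from two observations: (i) the initial states match, since $\dinit$ has $\reached(\dinit) = \ainit$ and $\ssinit = \setnocond{\ainit}$ (modulo the labelling of initial states via $\pinit'$, which is handled identically in both products); and (ii) the root label evolves under the determinisation construction by exactly the subset transition, i.e. in Step~\ref{item:determinisation:rawupdate} of Definition~\ref{def:determinisation} the root is updated by $l(\epsilon) \mapsto \amat(l(\epsilon), \sigma)$, and the later steps (sprouting, stealing, accepting, removing, renaming) never touch the root label. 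Hence $\reached(\dmat(d,\sigma)) = \amat(\reached(d), \sigma) = \ssmat(\reached(d), \sigma)$. Since $\aut$ is deterministic and $\ssaut$ is deterministic, each word from the start reaches a unique state in each automaton, and these correspond under $\reached$; so on reachable states $\phi$ and $\psi$ are mutually inverse bijections. (One should restrict to reachable fragments, or note that unreachable states of the product are irrelevant to the Markov chain structure; I would state the lemma as concerning reachable fragments, consistent with how products are used elsewhere in the paper.)

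\textbf{Step 2: $\phi$ preserves labels, initial distribution, and transition probabilities.} Labels: $[\labelFunc](m,[d]) = \labelFunc(m,d) = \labelFunc(m) = \labelFunc'((m,R))$ by definition of both product labellings. Initial distribution: $[\pinit](m,[d]) = \pinit(m,d)$, which is $\pinit(m)$ exactly when $d = \amat(\dinit, \labelFunc(m))$, equivalently when $R = \reached(d) = \amat(\reached(\dinit),\labelFunc(m)) = \ssmat(\ssinit,\labelFunc(m))$, which is exactly the condition for $\pinit'((m,R)) = \pinit(m)$ in $\mc \prodMC \ssaut$. Transitions: $[\pmat]((m,[d]),(m',[d'])) = \pmat((m,d),(m',d'))$, which equals $\pmat(m,m')$ precisely when $d' = \amat(d,\labelFunc(m'))$; by the root-label computation in Step~1 this is equivalent to $R' = \reached(d') = \amat(\reached(d),\labelFunc(m')) = \ssmat(R,\labelFunc(m'))$, i.e. to $\pmat'((m,R),(m',R')) = \pmat(m,m')$. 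So all four data agree, and $\phi$ is an isomorphism of Markov chains.

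\textbf{The main obstacle} is Step~1, specifically pinning down precisely that the root label $l(\epsilon)$ of a generalised history tree evolves under $\dmat$ exactly by the subset transition and is never affected by the bookkeeping steps (label stealing only removes a state from \emph{younger siblings and their descendants}, never from a node's own label or an ancestor; ``accepting and removing'' and ``removing nodes'' can only delete proper descendants or empty-labelled nodes—and $l(\epsilon) = \amat(\ldots)\neq\emptyset$ by the transition-enabledness hypothesis $\amat(l(\epsilon),\sigma)\neq\emptyset$; renaming does not change labels). This is the one place where one genuinely has to look inside Definition~\ref{def:determinisation} rather than argue abstractly. Once that invariant is in hand, everything else is a routine unfolding of the definitions of $\mc \prodMC \aut$, its quotient, and $\mc \prodMC \ssaut$, together with Theorem~\ref{thm:bottom_component}'s underlying observation that $[\pmat]$ is well defined.
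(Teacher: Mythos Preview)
Your proposal is correct and follows precisely the approach the paper indicates: the paper's own proof is the single sentence ``it is based on the isomorphism identifying a state $(m, R)$ of $\mc \prodMC \ssaut$ with the state $(m,[d])$ of $[\mc \prodMC \aut]$ such that $\reached(d) = R$,'' and you have supplied exactly the details needed to make that work, including the key invariant that the root label $l(\epsilon)$ evolves under $\dmat$ by the bare subset transition (Step~\ref{item:determinisation:rawupdate} of Definition~\ref{def:determinisation}) and is untouched by Steps~\ref{item:determinisation:newchildren}--\ref{item:determinisation:rejecting}.
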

The proof is rather easy---it is based on the isomorphism identifying a state $(m, R)$ of $\mc \prodMC \ssaut$ with the state $(m,[d])$ of $[\mc \prodMC \aut]$ such that $\reached(d) = R$.

Considering the accepting conditions, we can classify some bottom SCCs.

\begin{lemma}
\label{lem:subset}
	Let $\mc$ be a MC and $\baut$ a NGBA.
	Let $\ssaut^{o}$ and $\ssaut^{u}$ be as defined above.
	Let $\scc$ be a bottom SCC of $\mc \prodAut \ssaut^{u}$.
	Then,
	\begin{itemize}
	\item
		$\scc$ is accepting if $\afinal^{u}_{i} \cap \proj{\ssaut^{u}}(\scctran) \neq \emptyset$ holds for all $i \in \natIntK$;
	\item
		$\scc$ is rejecting if $\afinal^{o}_{i} \cap \proj{\ssaut^{u}}(\scctran) = \emptyset$ holds for some $i \in \natIntK$.
	\end{itemize}
\end{lemma}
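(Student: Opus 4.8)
The plan is to transfer the statement from the subset product $\mc \prodAut \ssaut^{u}$ to the Rabin product $\mc \prodAut \aut$, where acceptance of bottom SCCs has already been characterised, and then to exploit the language inclusions of Lemma~\ref{lem:inclusions1}. Fix a bottom SCC $\scc$ of $\mc \prodAut \ssaut^{u}$ and a state $(m,R) \in \scc$. By Lemma~\ref{lem:productSubsetIsomorphicQuotientMC}, $\mc \prodMC \ssaut$ and $[\mc \prodMC \aut]$ are isomorphic under $(m,R) \mapsto (m,[d])$ with $\reached(d) = R$, so $\scc$ corresponds to a bottom SCC of $[\mc \prodMC \aut]$, which by Theorem~\ref{thm:bottom_component} equals $[\scc'']$ for some bottom SCC $\scc''$ of $\mc \prodMC \aut$; pick $(m,d) \in \scc''$ with $\reached(d) = R$. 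By Corollary~\ref{cor:quotients}, ``$\scc$ is accepting'' (resp.\ rejecting) means that all states of $\mc \prodAut \aut$ collapsing into $[\scc'']$ are accepting (resp.\ rejecting), which by Definition~\ref{def:acceptingSCCOfRMC} and the ergodicity of bottom SCCs is equivalent to $\prob^{\mc \prodAut \aut}_{(m,d)}(\baut) = 1$ (resp.\ $=0$). Finally, by Corollary~\ref{cor:same}(1) a path of $\mc \prodAut \aut$ from $(m,d)$ is accepted iff the word $\word$ it defines lies in $\lang(\baut_{R})$, equivalently in $\lang(\aut_{d})$ by Theorem~\ref{thm:only_reach}; such a path projects to the same path of $\mc$ as the corresponding path of $\mc \prodAut \ssaut^{u}$ from $(m,R)$, whose $\ssaut$-component is the unique run of $\ssaut^{u}$ on $\word$ from $R$ (unique because $\ssaut^{u}$ is deterministic), and this run, viewed as a transition sequence, is also the unique run of $\ssaut^{o}$ on $\word$ from $R$, since $\ssaut^{o}$ and $\ssaut^{u}$ share the transition skeleton $\amat'$. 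Writing $\ssaut^{u}_{R}$ and $\ssaut^{o}_{R}$ for these subset automata started in $R = \reached(d)$, Lemma~\ref{lem:inclusions1} gives $\lang(\ssaut^{u}_{R}) \subseteq \lang(\aut_{d}) \subseteq \lang(\ssaut^{o}_{R})$.

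For the first item, assume $\afinal^{u}_{i} \cap \proj{\ssaut^{u}}(\scctran) \neq \emptyset$ for every $i \in \natIntK$. Since $\scc$ is a bottom SCC, with probability $1$ a path starting in $\scc$ stays in $\scc$ and visits every transition of $\scctran$ infinitely often; hence, for every $i$, its run takes a transition whose $\proj{\ssaut^{u}}$-image lies in $\afinal^{u}_{i}$ infinitely often. Thus, with probability $1$, the run of $\ssaut^{u}$ on the induced word $\word$ from $R$ visits each $\afinal^{u}_{i}$ infinitely often, i.e.\ $\word \in \lang(\ssaut^{u}_{R})$. By the inclusion above, $\word \in \lang(\aut_{d})$, so by the correspondence of the first paragraph the associated path of $\mc \prodAut \aut$ from $(m,d)$ is accepted; therefore $\prob^{\mc \prodAut \aut}_{(m,d)}(\baut) = 1$ and $\scc$ is accepting.

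For the second item, assume $\afinal^{o}_{i} \cap \proj{\ssaut^{u}}(\scctran) = \emptyset$ for some $i \in \natIntK$. Every path of $\mc \prodAut \ssaut^{u}$ that starts in the bottom SCC $\scc$ stays in $\scc$, so all its transitions have $\proj{\ssaut^{u}}$-image in $\proj{\ssaut^{u}}(\scctran)$ and hence none of them lies in $\afinal^{o}_{i}$; consequently the run of $\ssaut^{o}$ on the induced word $\word$ from $R$ never visits $\afinal^{o}_{i}$, so $\word \notin \lang(\ssaut^{o}_{R})$. By the inclusion above, $\lang(\aut_{d}) \subseteq \lang(\ssaut^{o}_{R})$, hence $\word \notin \lang(\aut_{d})$; by the correspondence of the first paragraph the associated path of $\mc \prodAut \aut$ from $(m,d)$ is not accepted. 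As this holds for \emph{every} path from $(m,d)$, we get $\prob^{\mc \prodAut \aut}_{(m,d)}(\baut) = 0$ and $\scc$ is rejecting.

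The routine ingredients are the standard ergodicity fact for finite Markov chains (a path inside a bottom SCC almost surely visits every transition of that component infinitely often) and the observation that $\ssaut^{o}$ and $\ssaut^{u}$ induce literally the same run on a given word, so that ``the run'' is unambiguous. The only delicate step is the bookkeeping of the first paragraph: pinning down that ``accepting/rejecting'' for a bottom SCC of $\mc \prodAut \ssaut^{u}$ is inherited, via the isomorphism of Lemma~\ref{lem:productSubsetIsomorphicQuotientMC} and Theorem~\ref{thm:bottom_component}, from the Rabin product, and threading Corollaries~\ref{cor:same} and~\ref{cor:quotients} together with Theorem~\ref{thm:only_reach} consistently; once this dictionary is fixed, each of the two items reduces to a two-line argument.
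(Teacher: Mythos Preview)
Your proof is correct and follows the same route the paper indicates: the paper simply states that the lemma ``directly follows by Lemma~\ref{lem:inclusions1}'' without spelling out any details, and you have carefully unpacked exactly that argument, threading the isomorphism of Lemma~\ref{lem:productSubsetIsomorphicQuotientMC}, Theorem~\ref{thm:bottom_component}, Corollaries~\ref{cor:same} and~\ref{cor:quotients}, and the ergodicity of bottom SCCs to reduce both items to the language inclusions $\lang(\ssaut^{u}_{[d]}) \subseteq \lang(\aut_{d}) \subseteq \lang(\ssaut^{o}_{[d]})$.
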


\begin{wrapfigure}[6]{right}{34mm}
	\centering
	\scriptsize
	\vskip-8.5mm
	\begin{tikzpicture}[->,>=stealth,auto]
		\path[use as bounding box] (-1.7,2) rectangle (1.7,0.15);
		
		\node (MBA) at (0,0) {\normalsize$\mc_{\calE} \prodMC \ssaut_{\calE}$};
		\node (mbaa) at ($(MBA) + (0,1.5)$) {\normalsize$a, \setnocond{y,z}$};
		\node (mbac) at ($(MBA) + (-1.3,0.25)$) {\normalsize$c, \setnocond{x}$};
		\node (mbab) at ($(MBA) + (1.3,0.25)$) {\normalsize$b, \setnocond{x}$};
		\coordinate (mbaatr) at ($(mbaa) + (0,-0.5)$);
		
		\draw ($(mbaa.north) + (0,0.3)$) to (mbaa.north);
		\draw (mbaa) -- (mbaatr) -- node [below, near end] {~~~~$2/3$} (mbac);
		\draw (mbaatr) to node [below, near end] {$1/3$~~~~} (mbab);
		\draw (mbac) to[bend left=30] node [above,very near end] {$1$~~} (mbaa);
		\draw (mbab) to[bend right=30] node [above,very near end] {~$1$} (mbaa);

	\end{tikzpicture}
	\caption{The product of $\mc_{\calE}$ and $\ssaut_{\calE}$}
	\label{fig:exampleMCTimesSubSet}
\end{wrapfigure}
The above result directly follows by Lemma~\ref{lem:inclusions1}.
Figure~\ref{fig:exampleMCTimesSubSet} shows the product of the MC $\mc_{\calE}$ depicted in Figure~\ref{fig:exampleMC} and the subset automaton $\ssaut_{\calE}$ in Figure~\ref{fig:exampleSubSet}. It is easy to check that the only bottom SCC is neither accepting nor rejecting.

For the bottom SCCs, for which we cannot conclude whether they are accepting or rejecting, we continue with the breakpoint construction.

\subsubsection{Breakpoint Construction}

For a given NGBA $\baut = \bautElements$, we denote with $\bp(\astates, k) = \setcond{(R, j, C)}{C \subsetneq R \subseteq \astates,\ j \in \natIntK}$ the breakpoint set, where
\begin{inparaenum}[$(1.)$]
\item
	$R$ intuitively refers to the set of currently reached states $l(\epsilon)$ in the root $\epsilon$ of an extended history tree $d = (\calT, l, h)$;
\item
	$j$ refers to the index $h(\epsilon)$ of the root; and
\item
	$C$ is the union of the $l$-labels of the children of the root.
\end{inparaenum}
That is, $(R, j, C) = \big(l(\epsilon), h(\epsilon), \bigcup_{i \in \calT \cap \naturals} l(i) \big)$, also denoted by $\langle d \rangle$, where $\calT \cap \naturals$ represents the nodes of $\calT$ that are children of $\epsilon$, i.e., $(R, j, C)$ is an abstraction of the tree $(\calT, l, h)$.

We build two DRAs $\rbpaut^{o} = (\Sigma, \bp(\astates, k), (\ainit, 1, \emptyset), \amat', \setnocond{(\aacc_{\epsilon},\emptyset), (\amat', \arej_{0})})$ and $\rbpaut^{u} = (\Sigma, \bp(\astates, k), (\ainit, 1, \emptyset), \amat', \setnocond{(\aacc_{\epsilon},\emptyset)})$, called the \emph{breakpoint automata}, as follows.

From the breakpoint state $(R, j, C)$, let $R' = \amat(R, \sigma)$
and $C' = \amat(C, \sigma) \cup \afinal_{j}(R, \sigma)$.
Then an accepting transition with letter $\sigma$ reaches $(R', j \oplus_{k} 1, \emptyset)$ if $C' = R'$.
This corresponds to the equivalence from Step~\ref{item:determinisation:accepting} that determines acceptance.
(Note that Step~\ref{item:determinisation:stealing} does not affect the union of the children's labels.)
Since Step~\ref{item:determinisation:accepting} removes all children, this is represented by using $\emptyset$ as label of the child $0$.
Formally,
\begin{align*}
	\aacc_{\epsilon} & = \{\,((R, j, C), \sigma, (R', j \oplus_{k} 1, \emptyset)) \mid (R, j, C) \in \bp(\astates, k),\ \sigma \in \Sigma,\\
	& \qquad \qquad \emptyset \neq R' = \amat(R, \sigma),\ C' = \amat(C, \sigma) \cup \afinal_{j}(R, \sigma),\ C' = R'\,\}\text{.} \\
\intertext{The remaining transitions, for which $C' \neq R'$, are obtained in a similar way, but now the transition reaches $(R', j, C')$, where $j$ remains unchanged; formally,}
	\amat'' & = \{\,((R, j, C), \sigma, (R', j, C')) \mid (R, j, C) \in \bp(\astates,k),\ \sigma \in \Sigma,\\
	& \qquad \qquad \emptyset \neq R' = \amat(R, \sigma),\ C' = \amat(C, \sigma) \cup \afinal_{j}(R, \sigma),\ C' \neq R'\,\}\text{.}\\
\intertext{The transition relation $\amat'$ is just $\amat'' \cup \aacc_{\epsilon}$.
Transitions that satisfy $C' = \emptyset$ are rejecting:}
	\arej_{0} & = \setcond{((R, j, C), \sigma, d) \in \amat''}{\amat(C, \sigma) = \emptyset}
	\text{.}
\end{align*}

\begin{wrapfigure}[7]{right}{43mm}
	\centering
	\scriptsize
	\vskip-12mm
	\begin{tikzpicture}[->,>=stealth,auto]
		\path[use as bounding box] (-1.95,0.15) rectangle (2.15,2.4);
		
		\node (BRKA) at (0,0) {\normalsize$\rbpaut_{\calE}$};
		\node (brk-x-2-O) at ($(BRKA) + (1.25,0.75)$) {\normalsize$\setnocond{x}, 2, \emptyset$};
		\node (brk-yz-1-y) at ($(BRKA) + (1.25,2)$) {\normalsize$\setnocond{y,z}, 1, \setnocond{y}$};
		\node (brk-x-1-O) at ($(BRKA) + (-1.25,2)$) {\normalsize$\setnocond{x}, 1, \emptyset$};
		\node (brk-yz-2-O) at ($(BRKA) + (-1.25,0.75)$) {\normalsize$\setnocond{y,z}, 2, \emptyset$};

		\draw ($(brk-x-1-O.north) + (0,0.3)$) to node {} (brk-x-1-O.north);
		\draw ($(brk-x-2-O.west) + (0,0.1)$) to node[above] {$a$} ($(brk-yz-2-O.east) + (0,0.1)$);
		\draw ($(brk-yz-2-O.east) + (0,-0.1)$) to node[below] {$c$} ($(brk-x-2-O.west) + (0,-0.1)$);

		\draw[double distance=0.7pt] (brk-yz-2-O) to node[left] {$b$} (brk-x-1-O);

		\draw ($(brk-x-1-O.east)+(0,-0.1)$) to node[below] {$a$} ($(brk-yz-1-y.west) + (0,-0.1)$);
		\draw ($(brk-yz-1-y.west) + (0,0.1)$) to node[above] {$b$} ($(brk-x-1-O.east)+(0,0.1)$);

		\draw[double distance=0.7pt] (brk-yz-1-y) to node[right] {$c$} (brk-x-2-O);
		
	\end{tikzpicture}
	\caption{The breakpoint construction for $\baut_{\calE}$ (fragment reachable from $(\setnocond{x}, 1, \emptyset)$)}
	\label{fig:exampleBreakpoint}
\end{wrapfigure}
Figure~\ref{fig:exampleBreakpoint} shows the reachable fragment of the breakpoint construction for the NGBA $\baut_{\calE}$ depicted in Figure~\ref{fig:exampleBuechi}.
The double arrow transitions are in $\aacc_{\epsilon}$ while the remaining transitions are in $\arej_{0}$.

\begin{theorem}
\label{theo:inclusions}
	The following inclusions hold:
	\[
		\lang(\ssaut_{[d]}^{u}) \subseteq \lang(\rbpaut_{\langle d \rangle}^{u}) \subseteq \lang(\aut_{d}) \subseteq \lang(\rbpaut_{\langle d \rangle}^{o}), \lang(\ssaut_{[d]}^{o})\text{.}
	\]
\end{theorem}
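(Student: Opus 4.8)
The displayed chain amounts to four inclusions: $(i)$ $\lang(\ssaut^{u}_{[d]}) \subseteq \lang(\rbpaut^{u}_{\langle d \rangle})$, $(ii)$ $\lang(\rbpaut^{u}_{\langle d \rangle}) \subseteq \lang(\aut_{d})$, $(iii)$ $\lang(\aut_{d}) \subseteq \lang(\rbpaut^{o}_{\langle d \rangle})$, and $(iv)$ $\lang(\aut_{d}) \subseteq \lang(\ssaut^{o}_{[d]})$. Inclusion $(iv)$ is exactly the right half of Lemma~\ref{lem:inclusions1}, so nothing remains there. For the rest, the plan is to replace $\aut_{d}$ by the NGBA $\baut_{\reached(d)}$, which is legitimate since $\lang(\aut_{d}) = \lang(\baut_{\reached(d)})$ by Theorem~\ref{thm:only_reach}; thus $(ii)$ and $(iii)$ become, respectively, soundness and completeness of the breakpoint automata with respect to the existence of an accepting run of $\baut$ from a state of $\reached(d)$. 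The key unifying observation is that $\ssaut^{u}$, $\ssaut^{o}$, $\rbpaut^{u}$, $\rbpaut^{o}$ are all deterministic and their first component evolves by the common rule $R_{n+1} = \amat(R_{n}, \word(n))$ from $R_{0} = \reached(d)$, so $R_{n}$ is exactly the set of states reachable in $\baut$ from $\reached(d)$ along $\word(0) \cdots \word(n-1)$, and the common run is infinite precisely when every $R_{n}$ is non-empty. I will also use the invariant $X_{n} \subseteq R_{n}$ for the breakpoint component $X$ (immediate from $\afinal_{j} \subseteq \amat$ and $l(i) \subsetneq l(\epsilon)$ for the initial value), so that $X_{n+1} = \amat(X_{n}, \word(n)) \cup \afinal_{j_{n}}(R_{n}, \word(n)) \subseteq R_{n+1}$ always, and the step at $n$ lies in $\aacc_{\epsilon}$ iff this inclusion is an equality. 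Granting this, $(i)$ is quick: if $\word \in \lang(\ssaut^{u}_{[d]})$ then its run is infinite and, for each $j \in \natIntK$, infinitely many steps $n$ satisfy $(R_{n}, \word(n), R_{n+1}) \in \afinal^{u}_{j}$; unfolding the definition of $\afinal^{u}_{j}$ forces $\afinal_{j}(R_{n}, \word(n)) = R_{n+1}$, hence $X_{n+1} = R_{n+1}$ whenever the running index equals $j$ there; since the breakpoint index changes only at $\aacc_{\epsilon}$-steps it cannot stabilise at any $j^{*}$, so $\insym(\tran{\arun}) \cap \aacc_{\epsilon} \neq \emptyset$ and $\word \in \lang(\rbpaut^{u}_{\langle d \rangle})$.

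For the soundness inclusion $(ii)$, let $\word \in \lang(\rbpaut^{u}_{\langle d \rangle})$ and let $n_{1} < n_{2} < \cdots$ enumerate its (infinitely many) $\aacc_{\epsilon}$-steps. The index advances cyclically at each such step and is constant in between, so the index used on the block $(n_{m}, n_{m+1}]$ is some $i_{m} \in \natIntK$ with $i_{m+1} = i_{m} \oplus_{k} 1$; in particular every value of $\natIntK$ is used on infinitely many blocks. The core claim is that every $q \in R_{n_{m+1}+1}$ is reachable in $\baut$ from some state of $R_{n_{m}+1}$ along $\word(n_{m}+1) \cdots \word(n_{m+1})$ by a finite run using at least one transition of $\afinal_{i_{m}}$. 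I prove this by a backward induction over the block showing that every state in the current breakpoint set enjoys such a run (base case: the set is $\emptyset$ right after step $n_{m}$; inductive step: a state in the raw image of the previous set inherits a run from its predecessor, whereas a state obtained as an $\afinal_{i_{m}}$-successor of the previous $R$-layer gets one by prefixing any subset-path from $R_{n_{m}+1}$ with that $\afinal_{i_{m}}$-transition), together with the fact that at step $n_{m+1}$ the set equals $R_{n_{m+1}+1}$. Choosing, for each block $m$ and each $q \in R_{n_{m+1}+1}$, one such predecessor in $R_{n_{m}+1}$ and one such run segment yields a finitely branching, everywhere-infinite forest on the finite non-empty layers $R_{n_{m}+1}$; König's lemma gives an infinite forward chain $q_{1}, q_{2}, \ldots$ with $q_{m} \in R_{n_{m}+1}$ together with $\baut$-run segments linking consecutive $q_{m}$ that visit $\afinal_{i_{m}}$. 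Prefixing any subset-path from $\reached(d)$ to $q_{1}$ produces an infinite run of $\baut$ starting in $\reached(d)$ that visits each $\afinal_{j}$ infinitely often, i.e.\ an accepting one; hence $\word \in \lang(\baut_{\reached(d)}) = \lang(\aut_{d})$.

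For the completeness inclusion $(iii)$ I argue the contrapositive. By the two Rabin pairs $(\aacc_{\epsilon}, \emptyset)$ and $(\amat', \arej_{0})$ of $\rbpaut^{o}$, rejection of $\word$ means the run is either finite or infinite with only finitely many $\aacc_{\epsilon}$-steps and infinitely many $\arej_{0}$-steps, i.e.\ infinitely many steps $n$ with $\amat(X_{n}, \word(n)) = \emptyset$. In the finite case the subset run dies and, since every $\baut$-run from $\reached(d)$ stays inside the $R_{n}$'s, no infinite $\baut$-run from $\reached(d)$ exists. In the infinite case, assume for contradiction that $\baut$ has an accepting run $\arun = p_{0} p_{1} \cdots$ with $p_{0} \in \reached(d)$; then $p_{n} \in R_{n}$ for all $n$, the index is eventually fixed at some $j^{*}$, and acceptance of $\arun$ gives a step $t$ beyond the last $\aacc_{\epsilon}$-step with $(p_{t}, \word(t), p_{t+1}) \in \afinal_{j^{*}}$, whence $p_{t+1} \in \afinal_{j^{*}}(R_{t}, \word(t)) \subseteq X_{t+1}$. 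A forward induction using $p_{s} \in X_{s} \Rightarrow p_{s+1} \in \amat(X_{s}, \word(s)) \subseteq X_{s+1}$ then shows $p_{s} \in X_{s}$ for all $s > t$, so $\amat(X_{s}, \word(s)) \neq \emptyset$ there, contradicting the infinitude of $\arej_{0}$-steps. In either case $\word \notin \lang(\baut_{\reached(d)}) = \lang(\aut_{d})$, as required.

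I expect the main obstacle to be the backward-induction core claim of $(ii)$ and the bookkeeping around it: the inductive invariant on $X$ must be stated with the correct (index $i_{m}$, block $(n_{m}, n_{m+1}]$) alignment so that the claimed $\afinal_{i_{m}}$-visits really splice into a single $\baut$-run, and the König extraction must be run on the checkpoint layers $R_{n_{m}+1}$ rather than on individual steps. Everything else is routine once the invariant $X_{n} \subseteq R_{n}$ and the characterisation ``the step at $n$ is in $\aacc_{\epsilon}$ iff $X_{n+1} = R_{n+1}$'' are in place.
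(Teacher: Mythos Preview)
Your proof is correct. For inclusions $(i)$ and $(iv)$ you follow essentially the paper's argument (the paper also proves $(i)$ by assuming the breakpoint index stabilises and deriving a contradiction from an $\afinal^{u}_{j^*}$-step, and its proof of $(iv)$ is precisely the right half of Lemma~\ref{lem:inclusions1}).

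For $(ii)$ and $(iii)$, however, you take a genuinely different route. The paper does \emph{not} go through $\baut_{\reached(d)}$; instead it compares the breakpoint run directly with the run of $\aut_{d}$, exploiting the structural fact that $\langle d \rangle = (l(\epsilon), h(\epsilon), \bigcup_{i} l(i))$ is literally the root-level data of the extended history tree $d$. By a short induction the paper shows that along the two runs one has $\set(\arun(n)) = l_{n}(\epsilon)$, $\ind(\arun(n)) = h_{n}(\epsilon)$, and $\children(\arun(n)) = \bigcup_{i \in \calT_{n} \cap \naturals} l_{n}(i)$; then an $\aacc_{\epsilon}$-step coincides with a root-accepting step of $\aut_{d}$, which gives $(ii)$ in one line, and for $(iii)$ one observes that if $\aut_{d}$ accepts via any non-root node then node $0$ is eventually always stable, so $\children(\arun(n)) \supseteq l_{n}(0) \neq \emptyset$ eventually, ruling out late $\arej_{0}$-steps. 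Your approach instead black-boxes Theorem~\ref{thm:only_reach} and proves soundness/completeness of the breakpoint automata against $\baut_{\reached(d)}$ from first principles, with a K\"onig extraction for $(ii)$ and a ``the accepting $\baut$-run gets trapped in $X$'' argument for $(iii)$. This is longer but more elementary: it never looks inside the Rabin tree structure, so it would survive unchanged if $\aut$ were replaced by any other determinisation, whereas the paper's argument is shorter precisely because it reuses the internal mechanics of Definition~\ref{def:determinisation}. Both are valid; the trade-off is brevity versus independence from the particular determinisation construction.
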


We remark that the breakpoint construction can be refined further such that it is finer than $\lang(\ssaut_{[d]}^{o})$.
However we leave it as future work to avoid heavy technical preparations.
Exploiting the above theorem, the following becomes clear.

\begin{corollary}
\label{cor:breakpoint}
	Let $\scc$ be a bottom SCC of the quotient MC.
	Let $(m, d) \in \scc$ be an arbitrary state of $\scc$.
	Moreover, let $\rbpaut^{o}$, $\rbpaut^{u}$ be the breakpoint automata.
	Then,
	\begin{itemize}
	\item
		$\scc$ is accepting if there exists a bottom SCC $\scc'$ in $\mc \prodAut \rbpaut^{u}_{\langle d \rangle}$ with $\scc = [\scc']$, which is accepting (i.e., $\scc'$ contains some transition in $\aacc_{\epsilon}$).
	\item
		$\scc$ is rejecting if there exists a bottom SCC $\scc'$ in $\mc \prodAut \rbpaut^{o}_{\langle d \rangle}$ with $\scc = [\scc']$, which is rejecting (i.e., $\scc'$ contains no transition in $\aacc_{\epsilon}$, but some transition in $\arej_{0}$).
	\end{itemize}
\end{corollary}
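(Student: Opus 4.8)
The plan is to read the classification of $\scc$ off the ergodic behaviour inside the breakpoint products, using the language sandwich of Theorem~\ref{theo:inclusions} as the only bridge between the two constructions. First recall what we may take for granted. Identifying the quotient MC with the subset product via Lemma~\ref{lem:productSubsetIsomorphicQuotientMC}, a state of $\scc$ has the form $(m', R')$, and it is the class of exactly those states $(m', d')$ of $\mc \prodAut \aut$ with $\reached(d') = R'$. By Corollary~\ref{cor:same}(1) the probability that a path of $\mc \prodAut \aut$ from such an $(m', d')$ is accepted equals $\prob_{m'}^{\mc}\big(\setcond{\ppath}{\word(\ppath) \in \lang(\baut_{R'})}\big)$, hence depends only on $R'$; and, as discussed after Theorem~\ref{thm:BiancoA95} and in Corollary~\ref{cor:quotients}, this number is the same for every state of $\scc$ and equals $1$ when $\scc$ is accepting and $0$ when $\scc$ is rejecting. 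So it suffices to decide, for a single state $(m', R')$ of $\scc$, whether $\prob_{m'}^{\mc}\big(\setcond{\ppath}{\word(\ppath) \in \lang(\baut_{R'})}\big)$ is $0$ or $1$.

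The bridge is that for \emph{every} breakpoint state $(R, j, C) \in \bp(\astates, k)$ there is a generalised history tree $d'$ of $\aut = \det(\baut)$ with $\reached(d') = R$ and $\langle d' \rangle = (R, j, C)$: take the tree whose root carries label $R$ and $h$-value $j$ and, when $C \neq \emptyset$, a single child carrying label $C$ (the defining constraints of a GHT then reduce to $C \subsetneq R$, which holds by definition of $\bp(\astates, k)$). Applying Theorem~\ref{theo:inclusions} to this $d'$ and then Theorem~\ref{thm:only_reach} yields
\[
	\lang(\rbpaut^{u}_{(R,j,C)}) \subseteq \lang(\baut_{R}) \subseteq \lang(\rbpaut^{o}_{(R,j,C)}) \text{.}
\]
(When one restricts the breakpoint products to the part reachable from $\langle d \rangle$, the same follows more directly, as the breakpoint transition function is by construction the $\langle\cdot\rangle$-abstraction of $\dmat$, so every reachable breakpoint state is already $\langle\dmat^{*}(d, w)\rangle$ for a word $w$.) Moreover, since $\rbpaut^{u}$ and $\rbpaut^{o}$ are deterministic, from a product state $(m', (R', j', C'))$ each path of $\mc$ from $m'$ lifts uniquely and in a measure-preserving way to a path of the corresponding breakpoint product, which is accepting if, and only if, $\word(\ppath) \in \lang(\rbpaut^{u}_{(R', j', C')})$, resp.\ $\word(\ppath) \in \lang(\rbpaut^{o}_{(R', j', C')})$.

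For the first item, let $\scc'$ be a bottom SCC of $\mc \prodAut \rbpaut^{u}_{\langle d \rangle}$ with $[\scc'] = \scc$ that contains a transition of $\aacc_{\epsilon}$, and fix $(m', (R', j', C')) \in \scc'$. Every path from $(m', (R', j', C'))$ stays in $\scc'$, and since $\scc'$ is ergodic almost every such path traverses that $\aacc_{\epsilon}$-transition infinitely often; as the rejecting set of $\rbpaut^{u}$ is empty, such a path is accepting. Projecting to $\mc$, almost every path $\ppath$ from $m'$ satisfies $\word(\ppath) \in \lang(\rbpaut^{u}_{(R', j', C')}) \subseteq \lang(\baut_{R'})$, so $\prob_{m'}^{\mc}\big(\setcond{\ppath}{\word(\ppath) \in \lang(\baut_{R'})}\big) = 1$; since $(m', R') \in [\scc'] = \scc$, the reduction above forces $\scc$ to be accepting. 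For the second item, let $\scc'$ be a bottom SCC of $\mc \prodAut \rbpaut^{o}_{\langle d \rangle}$ with $[\scc'] = \scc$ that contains no transition of $\aacc_{\epsilon}$ but some transition of $\arej_{0}$, and fix $(m', (R', j', C')) \in \scc'$. Again every path from this state stays in $\scc'$, and by ergodicity almost every such path never traverses $\aacc_{\epsilon}$ and traverses $\arej_{0}$ infinitely often; then neither Rabin pair $(\aacc_{\epsilon}, \emptyset)$ nor $(\amat', \arej_{0})$ of $\rbpaut^{o}$ is satisfied, so such a path is rejecting. Projecting, almost every path $\ppath$ from $m'$ satisfies $\word(\ppath) \notin \lang(\rbpaut^{o}_{(R', j', C')}) \supseteq \lang(\baut_{R'})$, so $\prob_{m'}^{\mc}\big(\setcond{\ppath}{\word(\ppath) \in \lang(\baut_{R'})}\big) = 0$ and $\scc$ is rejecting.

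The one point calling for care is the glue between the two product constructions: that the map $[\,\cdot\,]$ used above --- sending a state $(m, (R, j, C))$ of a breakpoint product to the quotient state with subset component $R$ --- indeed sends bottom SCCs to bottom SCCs, and, what amounts to the same, that $\bpmat$ computes the triple $\big(l(\epsilon), h(\epsilon), \bigcup_{i} l(i)\big)$ along $\dmat$, so that every breakpoint state that occurs is the $\langle\cdot\rangle$-abstraction of a genuine automaton state and Theorem~\ref{theo:inclusions} applies. This is precisely how the breakpoint automata were designed --- using that ``stealing of labels'' leaves the union of the children's labels unchanged and that ``accepting and removing'' empties it --- so it is bookkeeping rather than a new obstacle; once it is in place, the argument above is the routine ergodicity reasoning, and the reduction to a $0/1$ reachability question on bottom SCCs of the quotient is inherited verbatim from Corollaries~\ref{cor:same} and~\ref{cor:quotients}.
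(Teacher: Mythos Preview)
Your argument is correct and follows the paper's intended route: the paper does not give a detailed proof of Corollary~\ref{cor:breakpoint} beyond the remark ``Exploiting the above theorem, the following becomes clear,'' and you have faithfully unpacked what that means---reducing the acceptance of $\scc$ to a $0/1$ probability via Corollaries~\ref{cor:same} and~\ref{cor:quotients}, invoking the language sandwich of Theorem~\ref{theo:inclusions} at an arbitrary state of $\scc'$, and using ergodicity of the bottom SCC to transfer the acceptance condition into a statement about almost every path. Your explicit construction of a GHT $d'$ realising an arbitrary breakpoint state $(R,j,C)$ as $\langle d'\rangle$ is a clean way to make Theorem~\ref{theo:inclusions} applicable at every state of $\scc'$ rather than only at $\langle d\rangle$; the paper leaves this implicit (it is equally covered by your parenthetical observation that $\bpmat$ is the $\langle\cdot\rangle$-abstraction of $\dmat$).
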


\begin{wrapfigure}[8]{right}{55mm}
	\centering
	\scriptsize
	\vskip-5mm
	\begin{tikzpicture}[->,>=stealth,auto]
		\path[use as bounding box] (-2.7,0.15) rectangle (2.8,2.5);
		
		\node (MBRKA) at (0,0) {\normalsize$\mc_{\calE} \prodAut \rbpaut^{u}_{\calE}$};
		\node (mbrk-x-2-O) at ($(MBRKA) + (1.6,0.75)$) {\normalsize$c,(\setnocond{x}, 2, \emptyset)$};
		\node (mbrk-yz-1-y) at ($(MBRKA) + (1.6,2)$) {\normalsize$a,(\setnocond{y,z}, 1, \setnocond{y})$};
		\node (mbrk-x-1-O) at ($(MBRKA) + (-1.6,2)$) {\normalsize$b,(\setnocond{x}, 1, \emptyset)$};
		\node (mbrk-yz-2-O) at ($(MBRKA) + (-1.6,0.75)$) {\normalsize$a,(\setnocond{y,z}, 2, \emptyset)$};
		
		\draw ($(mbrk-x-1-O.north)+(0,0.3)$) to node {} (mbrk-x-1-O.north);

		\draw ($(mbrk-x-2-O.west) + (0,0.1)$) to node[above, very near end] {$ 1$} ($(mbrk-yz-2-O.east) + (0,0.1)$);
		\draw ($(mbrk-yz-2-O.east) + (0,-0.1)$) to node[below, very near end] {$2/3$} ($(mbrk-x-2-O.west) + (0,-0.1)$);

		\draw[double distance=0.7pt] (mbrk-yz-2-O) to node[left, very near end] {$1/3$} (mbrk-x-1-O);

		\draw ($(mbrk-x-1-O.east)+(0,-0.1)$) to node[below, very near end] {$1$} ($(mbrk-yz-1-y.west) + (0,-0.1)$);
		\draw ($(mbrk-yz-1-y.west) + (0,0.1)$) to node[above, very near end] {$1/3$} ($(mbrk-x-1-O.east)+(0,0.1)$);

		\draw[double distance=0.7pt] (mbrk-yz-1-y) to node[right, very near end] {$2/3$} (mbrk-x-2-O);
		
	\end{tikzpicture}
	\caption{The product of $\mc_{\calE}$ and $\rbpaut^{u}_{\calE}$}
	\label{fig:exampleMCTimesBrkpnt}
\end{wrapfigure}
Note that the products $\mc \prodAut \rbpaut^{u}_{\langle d \rangle}$ and $\mc \prodAut \rbpaut^{o}_{\langle d \rangle}$ are the same RMCs except for their accepting conditions.
Figure~\ref{fig:exampleMCTimesBrkpnt} shows the product of the MC $\mc_{\calE}$ depicted in Figure~\ref{fig:exampleMC} and the breakpoint automaton $\rbpaut^{u}_{\calE}$ in Figure~\ref{fig:exampleBreakpoint}.
It is easy to see that the only bottom SCC is accepting.

Together with Corollary~\ref{cor:quotients}, Lemma~\ref{lem:subset} and Corollary~\ref{cor:breakpoint} immediately provide the following result, which justifies the incremental evaluations of the bottom SCCs.

\begin{corollary}
	Given a MC $\mc$, a NGBA $\baut$, and $\aut = \det(\baut)$, if $[(m,d)]$ is a state in a bottom SCC of the quotient MC and $[d] = [d']$, then
	\begin{itemize}
	\item
		$\prob_{(m,d)}^{\mc \prodAut \aut_{d}}(\baut) = 1$ if $\prob_{(m,[d])}^{\mc \prodAut \ssaut_{[d']}^{u}}(\baut) > 0$ or $\prob_{(m, \langle d \rangle)}^{\mc \prodAut \rbpaut_{\langle d' \rangle}^{u}}(\baut) > 0$, and
	\item
		$\prob_{(m,d)}^{\mc \prodAut \aut_{d}}(\baut) = 0$ if $\prob_{(m,[d])}^{\mc \prodAut \ssaut_{[d']}^{o}}(\baut) < 1$ or $\prob_{(m, \langle d \rangle)}^{\mc \prodAut \rbpaut_{\langle d' \rangle}^{o}}(\baut) < 1$.
	\end{itemize}
\end{corollary}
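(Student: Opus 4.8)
The plan is to reduce the corollary to the $0$/$1$ dichotomy $\prob_{(m,d)}^{\mc \prodAut \aut_{d}}(\baut) \in \setnocond{0,1}$, and then to read off which of the two values holds from the language sandwich of Theorem~\ref{theo:inclusions}.

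For the dichotomy I would argue as follows. Let $\scc$ be the bottom SCC of the quotient MC that contains $[(m,d)]$. Since $\scc$ is a bottom SCC of $[\mc \prodMC \aut]$, every state reachable from $(m,d)$ in $\mc \prodMC \aut$ has its $[\,\cdot\,]$-image in $\scc$. By Corollary~\ref{cor:quotients}, either all states $s$ of $\mc \prodAut \aut$ with $[s] \in \scc$ lie in the accepting region $U$ of Theorem~\ref{thm:BiancoA95}, or none of them do. In the first case $(m,d) \in U$, and the state-wise form of Theorem~\ref{thm:BiancoA95} gives $\prob_{(m,d)}^{\mc \prodAut \aut_{d}}(\baut) = \prob_{(m,d)}^{\mc \prodAut \aut_{d}}(\Diamond U) = 1$. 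In the second case no state of $U$ is reachable from $(m,d)$, and the same identity gives $\prob_{(m,d)}^{\mc \prodAut \aut_{d}}(\baut) = \prob_{(m,d)}^{\mc \prodAut \aut_{d}}(\Diamond U) = 0$. Hence only the choice between $0$ and $1$ remains.

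To make that choice I would exploit that $\ssaut^{u}$, $\ssaut^{o}$, $\rbpaut^{u}$, and $\rbpaut^{o}$ are deterministic: for any one of them, say $\mathcal{C}$ with start state $c$, the quantity $\prob_{(m,c)}^{\mc \prodAut \mathcal{C}_{c}}(\baut)$ is simply the $\prob_{m}^{\mc}$-probability that a run of $\mc$ from $m$ has a label word in $\lang(\mathcal{C}_{c})$ --- the standard product-with-a-deterministic-automaton identity, the same one behind Theorem~\ref{thm:BiancoA95} --- and likewise $\prob_{(m,d)}^{\mc \prodAut \aut_{d}}(\baut)$ is the $\prob_{m}^{\mc}$-probability of the runs with word in $\lang(\aut_{d})$, with $\lang(\aut_{d}) = \lang(\aut_{d'})$ by Theorem~\ref{thm:only_reach} because $[d] = [d']$. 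Theorem~\ref{theo:inclusions} then sandwiches $\lang(\aut_{d})$ between $\lang(\ssaut_{[d']}^{u})$, $\lang(\rbpaut_{\langle d'\rangle}^{u})$ from below and $\lang(\ssaut_{[d']}^{o})$, $\lang(\rbpaut_{\langle d'\rangle}^{o})$ from above, so by monotonicity of $\prob_{m}^{\mc}$ the value $\prob_{(m,d)}^{\mc \prodAut \aut_{d}}(\baut)$ is at least $\prob_{(m,[d])}^{\mc \prodAut \ssaut_{[d']}^{u}}(\baut)$ and $\prob_{(m,\langle d \rangle)}^{\mc \prodAut \rbpaut_{\langle d' \rangle}^{u}}(\baut)$, and at most $\prob_{(m,[d])}^{\mc \prodAut \ssaut_{[d']}^{o}}(\baut)$ and $\prob_{(m,\langle d \rangle)}^{\mc \prodAut \rbpaut_{\langle d' \rangle}^{o}}(\baut)$. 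Hence, if $\prob_{(m,[d])}^{\mc \prodAut \ssaut_{[d']}^{u}}(\baut) > 0$ or $\prob_{(m,\langle d \rangle)}^{\mc \prodAut \rbpaut_{\langle d' \rangle}^{u}}(\baut) > 0$, the lower bound forces $\prob_{(m,d)}^{\mc \prodAut \aut_{d}}(\baut) > 0$, hence $= 1$ by the dichotomy; and if $\prob_{(m,[d])}^{\mc \prodAut \ssaut_{[d']}^{o}}(\baut) < 1$ or $\prob_{(m,\langle d \rangle)}^{\mc \prodAut \rbpaut_{\langle d' \rangle}^{o}}(\baut) < 1$, the upper bound forces $\prob_{(m,d)}^{\mc \prodAut \aut_{d}}(\baut) < 1$, hence $= 0$ --- precisely the two bullets.

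The step I expect to be the real work is the dichotomy: pinning the acceptance probability of the \emph{full} Rabin product to $\setnocond{0,1}$ for a state whose quotient class sits in a bottom SCC of the quotient. It rests on Corollary~\ref{cor:quotients} (which already bundles Theorem~\ref{thm:bottom_component}, Theorem~\ref{thm:only_reach}, and Definition~\ref{def:acceptingSCCOfRMC}) together with the ergodicity fact that the Rabin acceptance probability equals the probability of reaching an accepting bottom SCC --- the reasoning behind Theorem~\ref{thm:BiancoA95} --- and on the observation that from $(m,d)$ only states whose quotient class lies in the single bottom SCC $\scc$ are reachable, so that the ``all accepting / all rejecting'' alternative of Corollary~\ref{cor:quotients} applies uniformly along every run. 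The remainder is routine: monotonicity of probability in the recognised language, plus the acceptance equivalences of Corollary~\ref{cor:same} and Theorem~\ref{thm:only_reach} and the inclusions of Theorem~\ref{theo:inclusions}. A variant closer to the wording of the excerpt would instead invoke Lemma~\ref{lem:subset} and Corollary~\ref{cor:breakpoint} directly: inside a bottom SCC of $\mc \prodAut \ssaut^{u}$ (or of a breakpoint product), ergodicity turns a positive acceptance probability into the value $1$ and witnesses the hitting condition required there, and a sub-one acceptance probability into the value $0$ and witnesses the corresponding failure, after which Corollary~\ref{cor:quotients} closes the argument.
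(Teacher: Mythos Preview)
Your argument is correct and aligns with the paper's own justification, which simply states that the corollary follows from Corollary~\ref{cor:quotients} together with Lemma~\ref{lem:subset} and Corollary~\ref{cor:breakpoint}. Your main route is a mild repackaging: instead of invoking Lemma~\ref{lem:subset} and Corollary~\ref{cor:breakpoint} (which are themselves consequences of Theorem~\ref{theo:inclusions}), you appeal to Theorem~\ref{theo:inclusions} directly and pass the language inclusions through $\prob_{m}^{\mc}$ by monotonicity. The ``variant'' you sketch at the end is exactly the paper's phrasing. Either way, the two ingredients are the same: the $0/1$ dichotomy from Corollary~\ref{cor:quotients}, and the sandwich of Theorem~\ref{theo:inclusions}.

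One cosmetic point: in your dichotomy paragraph you write ``$(m,d)\in U$'' in the accepting case. Strictly, $(m,d)$ need not itself lie in a bottom SCC of $\mc \prodMC \aut$ even when $[(m,d)]$ lies in a bottom SCC of the quotient; what you actually have (and what you use) is that every bottom SCC reachable from $(m,d)$ projects to $\scc$ and is therefore accepting, so $\prob_{(m,d)}(\Diamond U)=1$. This is the same looseness the paper allows itself in stating Corollary~\ref{cor:quotients}, and it does not affect the argument.
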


In case there are remaining bottom SCCs, for which we cannot conclude whether they are accepting or rejecting, we continue with a multi-breakpoint construction that is language-equivalent to the Rabin construction.

\subsubsection{Multi-Breakpoint Construction}

The multi-breakpoint construction we propose to decide the remaining bottom SCCs makes use of a combination of the subset and breakpoint constructions we have seen in the previous steps, but with different accepting conditions:
for the subset automaton $\ssaut = \bToSub(\baut) = (\Sigma, \ssstates, \ssinit, \ssmat, \ssfinal)$, we use the accepting condition $\ssfinal = \emptyset$, i.e., the automaton accepts no words;
for the breakpoint automaton $\bbpaut = \bToBp(\baut) = (\Sigma, \bpstates, \bpinit, \bpmat, \bpfinal)$, we consider $\bpfinal = \aacc_{\epsilon}$.
Note that the \buchi acceptance condition $\bpfinal = \aacc_{\epsilon}$ is trivially equivalent to the Rabin acceptance condition $\setnocond{(\aacc_{\epsilon}, \emptyset)}$, so $\bbpaut$ is essentially $\rbpaut^{u}$.
We remark that in general the languages accepted by $\ssaut$ and $\bbpaut$ are different from $\lang(\baut)$:
$\lang(\ssaut) = \emptyset$ by construction while $\lang(\bbpaut) \subseteq \lang(\baut)$, as shown in Theorem~\ref{theo:inclusions}.
To generate an automaton accepting the same language of $\baut$, we construct a \emph{semi-deterministic} automaton $\sdaut = \bToSd(\baut) = (\Sigma, \sdstates, \sdinit, \sdmat, \sdfinal)$ by merging $\ssaut$ and $\bbpaut$ as follows:
$\sdstates = \ssstates \cup \bpstates$, $\sdinit = \ssinit$, $\sdmat = \ssmat \cup \sdmattra \cup \bpmat$, and $\sdfinal = \bpfinal$, where $\sdmattra = \setcond{(R, \sigma, (R', j', C'))}{\text{$R \in \ssstates$, $(R', j', C') \in \bpstates$, and $R' \subseteq \ssmat(R, \sigma)$}}$.
The equivalence between the languages accepted by $\baut$ and $\sdaut$ will be established in Section~\ref{ssec:languageEquivalenceSemiDet}, more precisely by Proposition~\ref{pro:buechiLangEqualSemiDetLang}, but we point it out here as it is used in the proof of Theorem \ref{thm:acceptanceOfSubsetAutomaton}.

For $\aut = \det(\baut)$, it is known by Lemma~\ref{lem:productSubsetIsomorphicQuotientMC} that $\mc \prodMC \ssaut$ and $\mc \prodMC \aut$ are strictly related, so we can define the accepting SCC of $\mc \prodMC \ssaut$ by means of the accepting states of $\mc \prodMC \aut$.
\begin{definition}
\label{def:acceptingSccOfSubset}
	Given a MC $\mc$ and a NGBA $\baut$, for $\ssaut = \bToSub(\baut)$ and $\aut = \det(\baut)$, we say that a bottom SCC $\scc$ of $\mc \prodMC \ssaut$ is accepting if, and only if, there exists a state $s = (m, d)$ in an accepting bottom SCC $\scc'$ of $\mc \prodMC \aut$ such that $(m, \reached(d)) \in \scc$.
\end{definition}
Note that Corollary~\ref{cor:same} ensures that the accepting SCCs of $\mc \prodMC \ssaut$ are well defined.

\begin{theorem}
\label{thm:acceptanceOfSubsetAutomaton}
	Given a MC $\mc$ and a NGBA $\baut$, for $\sdaut = \bToSd(\baut)$ and  $\ssaut = \bToSub(\baut)$, the following facts are equivalent:
	\begin{enumerate}
	\item
		$\scc$ is an accepting bottom SCC of $\mc \prodMC \ssaut$;
	\item
		there exist $(m,R) \in \scc$ and $R' \subseteq R$ such that $(m, (R', j, \emptyset))$ belongs to an accepting SCC of $\mc \prodAut \sdaut_{(m, (R', j, \emptyset))}$ for some $j \in \natIntK$;
	\item
		there exist $(m,R) \in \scc$ and $q \in R$ such that $(m,(\setnocond{q}, 1, \emptyset))$ reaches with probability $1$ an accepting SCC of $\mc \prodAut \sdaut_{(m, (\setnocond{q}, 1, \emptyset))}$.
	\end{enumerate}
\end{theorem}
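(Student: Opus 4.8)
The plan is to prove the two equivalences $(1)\Leftrightarrow(2)$ and $(2)\Leftrightarrow(3)$, i.e.\ the four implications $(1)\Rightarrow(2)$, $(2)\Rightarrow(1)$, $(2)\Rightarrow(3)$ and $(3)\Rightarrow(2)$. Two observations will be used throughout. First, whenever the initial product state is a breakpoint state $(R',j,\emptyset)$, the subset part of $\sdaut$ is unreachable and $\rbpaut^u$ is deterministic, so $\mc\prodAut\sdaut_{(m,(R',j,\emptyset))}$ is a genuine Markov chain that agrees on its reachable part with $\mc\prodAut\rbpaut^u_{(R',j,\emptyset)}$; hence Theorem~\ref{thm:BiancoA95}, applied to the single Rabin pair $(\aacc_\epsilon,\emptyset)$ of $\rbpaut^u$, says that $\prob^{\mc}_m(\rbpaut^u_{(R',j,\emptyset)})$ equals the probability of reaching an accepting bottom SCC from $(m,(R',j,\emptyset))$, and this probability is $0$ or $1$ whenever $m$ lies in a bottom SCC of $\mc$ (which happens as soon as $(m,R)\in\scc$). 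Second, every transition of $\aacc_\epsilon$ implements the acceptance step (Step~\ref{item:determinisation:accepting}) and therefore leads to a state with empty third component, so an accepting bottom SCC always contains a state of the form $(m^*,(R^*,j^*,\emptyset))$.

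For $(3)\Rightarrow(2)$, from the witnesses $(m,R)\in\scc$ and $q\in R$ I take a path from $(m,(\setnocond{q},1,\emptyset))$ into an accepting bottom SCC $\scc''$; its $\aacc_\epsilon$-target $(m^*,(R^*,j^*,\emptyset))\in\scc''$ is reached by an $\mc$-path whose label word I call $v$, and running the subset automaton from $(m,R)\in\scc$ along $v$ gives $(m^*,\amat(R,v))\in\scc$ with $R^*=\amat(\setnocond{q},v)\subseteq\amat(R,v)$ since $\setnocond{q}\subseteq R$; as $\scc''$ is a bottom SCC containing $(m^*,(R^*,j^*,\emptyset))$, it is an accepting bottom SCC of $\mc\prodAut\sdaut_{(m^*,(R^*,j^*,\emptyset))}$ too, which is $(2)$. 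For $(2)\Rightarrow(1)$, $(2)$ gives $\prob^{\mc}_m(\rbpaut^u_{(R',j,\emptyset)})=1$; since $\reached$ of the one-node tree abstracted by $(R',j,\emptyset)$ is $R'$, Theorems~\ref{theo:inclusions} and~\ref{thm:only_reach} yield $\lang(\rbpaut^u_{(R',j,\emptyset)})\subseteq\lang(\baut_{R'})\subseteq\lang(\baut_R)$, so $\prob^{\mc}_m(\baut_R)=1$; via Lemma~\ref{lem:productSubsetIsomorphicQuotientMC} and Theorem~\ref{thm:bottom_component} I pick a bottom SCC $\scc^*$ of $\mc\prodMC\aut$ with $[\scc^*]=\scc$ and a state $(m,d)\in\scc^*$ with $\reached(d)=R$, and Corollary~\ref{cor:same} turns $\prob^{\mc}_m(\baut_R)=1$ into $\prob^{\mc\prodAut\aut}_{(m,d)}(\baut)=1$, so $\scc^*$ is accepting and hence so is $\scc$ by Definition~\ref{def:acceptingSccOfSubset}.

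For $(1)\Rightarrow(2)$, Definition~\ref{def:acceptingSccOfSubset} together with Corollary~\ref{cor:same} produces $(m,R)\in\scc$ with $\prob^{\mc}_m(\baut_R)=1$. The key external input is the language equivalence $\lang(\baut_R)=\lang(\sdaut')$, where $\sdaut'=\bToSd(\baut_R)$ (this is Proposition~\ref{pro:buechiLangEqualSemiDetLang}, and the construction does not depend on the choice of initial states). An accepting run of $\sdaut'$ stays in the subset part for finitely many steps and then transits into some breakpoint state from which $\rbpaut^u$ accepts the remaining suffix. By $\sigma$-additivity over the countably many transit times and breakpoint states, a fixed transit time and a fixed breakpoint state $(R'',j'',C'')$ witness acceptance on a set of positive probability, and the Markov property then yields a finite $\mc$-path from $m$ to some $m^\dagger$ with label word $v^\dagger$ satisfying $R''\subseteq\amat(R,v^\dagger)$ and $\prob^{\mc}_{m^\dagger}(\rbpaut^u_{(R'',j'',C'')})>0$. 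Hence an accepting bottom SCC is reachable from $(m^\dagger,(R'',j'',C''))$, and its $\aacc_\epsilon$-target $(m^*,(R^*,j^*,\emptyset))$ satisfies, exactly as in the argument for $(3)\Rightarrow(2)$, the requirements of $(2)$.

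The hard implication is $(2)\Rightarrow(3)$, where the tracked set $R'$ must be shrunk to a \emph{singleton}. A purely language-theoretic reduction is impossible — the breakpoint automaton started from $\setnocond{q}$ may miss an accepting run of $\baut_q$ because of nondeterministic branching — so the proof must use that $(3)$ only asks $(m,(\setnocond{q},1,\emptyset))$ to \emph{reach} an accepting bottom SCC with probability one, not to belong to one. The plan is to read the singleton off the generalised-history-tree determinisation $\aut=\det(\baut)$: by $(2)\Rightarrow(1)$ the bottom SCC $\scc^*$ of $\mc\prodMC\aut$ with $[\scc^*]=\scc$ is accepting, so some node $v$ witnesses a Rabin pair along $\scc^*$, i.e.\ $v$ is never renamed or destroyed and performs the acceptance step (Step~\ref{item:determinisation:accepting}) infinitely often along $\scc^*$; taking such a $v$ with label of minimal cardinality, a configuration of $\scc^*$ in which $v$ has just performed its step (so the union of its children's labels is empty) and any $q\in l(v)$, the claim is that $q$ lies in $\reached$ of a state of $\scc$ (because $l(v)\subseteq l(\epsilon)=\reached$) and that $(m,(\setnocond{q},1,\emptyset))$ reaches an accepting bottom SCC of $\mc\prodAut\rbpaut^u_{(\setnocond{q},1,\emptyset)}$ with probability one. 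The main obstacle is this claim: one must show that minimality of $l(v)$ prevents the set tracked from $\setnocond{q}$ from spreading in a way that destroys the breakpoint rounds — the subtree rooted at $v$ runs an essentially independent breakpoint construction, and a genuine further split of $l(v)$ would contradict minimality, so tracking $\setnocond{q}$ alone still forces the rounds once the Markov chain has mixed within its bottom SCC; Theorem~\ref{thm:BiancoA95} then converts this into the reachability assertion of $(3)$.
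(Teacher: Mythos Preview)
Your implications $(3)\Rightarrow(2)$, $(2)\Rightarrow(1)$ and $(1)\Rightarrow(2)$ are sound, and although your $(1)\Rightarrow(2)$ takes a measure-theoretic route (countable additivity over transit times) rather than the paper's detour through the parity determinisation $\daut=\sdToDet(\sdaut)$ and Lemma~\ref{lem:parityOfSccAndAcceptingSccOfSemidet}, both work. The paper, incidentally, does not prove $(2)\Rightarrow(3)$ at all: it closes the cycle by proving $(1)\Leftrightarrow(3)$ directly.

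The genuine gap is in your argument for producing the singleton, which you yourself flag. Your idea of picking a Rabin-tree node $v$ of minimal label cardinality and then an arbitrary $q\in l(v)$ does not obviously work: even at a moment when $v$ has just performed Step~\ref{item:determinisation:accepting}, the set $l(v)$ need not be a singleton, and the subset tracked from $\setnocond{q}$ alone is in general a \emph{strict} subset of $l(v)$ at every later time. The ``independent breakpoint construction'' you allude to runs on $l(v)$, not on the orbit of $\setnocond{q}$; minimality of $|l(v)|$ only rules out a strictly smaller \emph{stable tree node}, not a strictly smaller reachable set, so your contradiction does not close. Nothing you have written forces the $C$-component of the breakpoint started at $\setnocond{q}$ ever to catch up with its $R$-component.

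The paper's device for $(1)\Rightarrow(3)$ is exactly the missing ingredient: the \emph{width argument} inside the proof of Proposition~\ref{pro:buechiLangEqualSemiDetLang} (Lemma~\ref{lem:buechiLangSubseteqSemiDetLang}). For an accepting run $q_0 q_1 q_2\cdots$ of $\baut$ on $\word$, one defines $\width(\arun,n,\word)=\max_{j\ge n}|R_j|$ with $R_n=\setnocond{q_n}$ and $R_{j+1}=\amat(R_j,\word(j))$, shows it is non-increasing and hence stabilises, and proves that for every $h$ the analogous $\width_h$ (built with $\afinal_h$-augmented updates) stabilises to the \emph{same} value. Once $n$ is past the stabilisation point, the breakpoint automaton started in $(\setnocond{q_n},1,\emptyset)$ must hit $C=R$ (i.e.\ an $\aacc_\epsilon$-transition) infinitely often, because $C$ would otherwise be strictly smaller than $R$ forever while $\width_h=\width$ forces equal maxima. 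This gives, for almost every $\word$ from the bottom SCC, a position $n$ and a state $q_n$ such that the singleton breakpoint succeeds; a pigeonhole over the finitely many pairs $(m,q)$ then yields one fixed pair with positive---hence, by ergodicity of the bottom SCC, unit---success probability. That is the content the paper packages in its one-line reference to Lemma~\ref{lem:buechiLangSubseteqSemiDetLang}, and it replaces your unfinished tree-node argument entirely.
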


Theorem~\ref{thm:acceptanceOfSubsetAutomaton} provides a practical way to check whether an SCC $\scc$ of $\mc \prodMC \ssaut$ is accepting:
it is enough to check whether some state $(m, R)$ of $\scc$ has $R \supseteq R'$ for some $(m, (R', j, \emptyset))$ in the accepting region of $\mc \prodAut \sdaut$, or whether, for a state $q \in R$, $(m, (\setnocond{q}, 1, \emptyset))$ reaches with probability $1$ the accepting region.
We remark that, by construction of $\sdaut$, if we change the initial state of $\sdaut$ to $(R, j ,C)$---i.e., if we consider $\sdaut_{(R, j, C)}$---then the run can only visit breakpoint states; i.e., it is actually a run of $\bbpaut_{(R, j, C)}$.

\section{Semi-Determinisation}
\label{sec:motivation}

Based on the Theorem \ref{thm:BiancoA95}, the classical approach for evaluating MCs for LTL specifications is sketched as
follows:
\begin{enumerate}
\item
	translate the NGBA $\baut$ into an equivalent DPA $\daut = \det(\baut)$;
\item
	build (the reachable fragment of) the product automaton $\mc \prodAut \daut$;
\item
	for each bottom SCC $\scc$, check whether $\scc$ is accepting.
	Let $U$ be the union of these accepting SCCs;
\item
	abstract all accepting bottom SCCs to an absorbing goal state and perform a reachability analysis to infer $\prob^{\mc \prodAut \daut}(\Diamond U)$, which can be solved in polynomial time~\cite{BiancoA95,BaierK08}.
\end{enumerate}

The classical approach is to construct a deterministic Rabin automaton in step~1 and thus to evaluate Rabin acceptance conditions in step~3~\cite{Safra/88/Safra,Schewe+Varghese/12/generalisedBuchi}.
The size of such deterministic Rabin automaton is $m \cdot n^{\bigO(k \cdot n)}$ where $n$ and $k$ are the number of states and accepting sets of $\baut$, respectively, and $m$ the number of states of $\mc$.
\begin{figure}[t]
	\centering
\resizebox{\textwidth}{!}{%
	\begin{tikzpicture}
%         \path[use as bounding box] (2.6,-4.0) rectangle (4.75,0.25);
		\node[draw, text height=2ex, minimum width=55mm, inner sep=0pt, minimum height=3ex] (headingDP) at (0,0) {\progHeader{$\ComputeAcceptingScc(\mc, \baut)$}};
		\node[below=-0.1ex of headingDP] (bodyDP) {%
		\begin{minipage}[h]{55mm}
			\begin{algorithmic}[1]
				\STATE{$\mathit{Acc}_{\mathit{Scc}} = \emptyset$}
				\STATE{$\ssaut = \BuildSubset(\baut)$}
				\STATE{$\mc \prodMC \ssaut = \BuildProduct(\mc,\ssaut)$}
				\STATE{$\mathit{Scc} = \ComputeScc(\mc \prodMC \ssaut)$}
				\FORALL{$\scc \in \mathit{Scc}$}
					\IF{$\IsAccepting(\mc, \baut, \scc)$}
						\STATE{$\mathit{Acc}_{\mathit{Scc}} = \mathit{Acc}_{\mathit{Scc}} \cup \setnocond{\scc}$}
					\ENDIF
				\ENDFOR
				\RETURN{$\mathit{Acc}_{\mathit{Scc}}$}
			\end{algorithmic}
		\end{minipage}%
		};
		\coordinate (endDP) at ($(headingDP.south west) + (0.007,-3.7)$);
	%       the y part of the offset for (endDP) is 0.3 + number of rows * 0.4
		\draw ($(headingDP.south west) + (0.007,0)$) -- (endDP) -- ($(endDP)+(2em,0)$) {};

		\node[draw, text height=2ex, minimum width=78mm, inner sep=0pt, minimum height=3ex] (headingISACC) at (7,0) {\progHeader{$\IsAccepting(\mc, \baut, \scc)$}};
		\node[below=-0.1ex of headingISACC] (bodyISACC) {%
		\begin{minipage}[h]{78mm}
			\begin{algorithmic}[1]
				\STATE{take a state $(m, S) \in \scc$}
				\FORALL{$q \in S$}
					\STATE{$\bbpaut = \BuildBreakpoint(\baut, \setnocond{q})$}
					\STATE{$\mc \prodAut \bbpaut {=} \BuildProduct(\mc, \bbpaut, (m,\setnocond{q}))$}
					\IF{$(m,\setnocond{q})$ is accepting in $\mc \prodAut \bbpaut$}
						\RETURN{\TRUE}
					\ENDIF
				\ENDFOR
				\RETURN{\FALSE}
			\end{algorithmic}
		\end{minipage}
		};
		\coordinate (endISACC) at ($(headingISACC.south west) + (0.007,-3.25)$);
	%       the y part of the offset for (endISACC) is 0.3 + number of rows * 0.4
		\draw ($(headingISACC.south west) + (0.007,0)$) -- (endISACC) -- ($(endISACC)+(2em,0)$) {};
	\end{tikzpicture}
}
	\vskip-1mm
	\caption{Algorithm to compute accepting SCCs of $\mc \prodMC \bToSub(\baut)$}
	\label{fig:ComputeAcceptingScc}
\end{figure}
By using the isomorphism between the product MC $\mc \prodMC \ssaut$ of $\mc$ and $\ssaut = \bToSub(\baut)$ and the quotient MC $[\mc \prodMC \daut]$, in Lemma~\ref{lem:productSubsetIsomorphicQuotientMC} we have established that it is enough to check whether each SCC of $\mc \prodMC \ssaut$ is accepting.
Then, computing the probability $\prob^{\mc}(\baut)$ simply reduces to computing the probability of reaching the accepting SCCs in $\mc \prodMC \ssaut$.
The latter step is analogous to the classical one, so let us focus on the former.
The corresponding pseudocode makes use of the procedures $\ComputeAcceptingScc(\mc, \baut)$ and $\IsAccepting(\mc, \baut, \scc)$, computing the accepting SCCs of $\mc \prodMC \ssaut$ and whether the SCC $\scc$ is accepting, respectively.

The procedure $\ComputeAcceptingScc(\mc, \baut)$ is the same as the one depicted in Figure~\ref{fig:ComputeAcceptingScc} and works as follows:
\begin{inparaenum}[1.)]
\item
	we build the subset automaton $\ssaut$ and its product with the MC $\mc$, $\mc \prodMC \ssaut$;
\item
	we compute the SCCs of $\mc \prodMC \ssaut$;
\item
	for each SCC $\scc$, we decide whether it is accepting and we collect into $\mathit{Acc}_{\mathit{Scc}}$ all accepting SCCs.
	By Corollary~\ref{cor:same}, we have that $\mathit{Acc}_{\mathit{Scc}}$ contains all states of $\mc \prodMC \ssaut$ corresponding to the accepting states of $\mc \prodMC \daut$.
\end{inparaenum}
Note that we just need to work with the reachable fragment of $\ssaut$ and of $\mc \prodMC \ssaut$, since the unreachable parts do not contribute to the evaluation of $\prob^{\mc}(\baut)$.

Regarding the procedure $\IsAccepting(\mc, \baut, \scc)$, we first verify whether $\scc$ is accepting via over- and under-approximating acceptance conditions on the subset construction itself.
This technique is in not complete; if $\scc$ has not been decided, we proceed with over- and under-approximating acceptance conditions on the breakpoint construction.
If $\scc$ is still not decided, we finally fall back to the original Rabin construction, but only on the states of~$\scc$.

In this work we prove that it is sufficient to use subset and breakpoint constructions for identifying accepting SCCs.
The new construction avoids the Rabin (or parity) determinisation of the \buchi automaton completely and gives an improved complexity.
The pseudocode of our semi-deterministic construction is the one depicted in
Figure~\ref{fig:ComputeAcceptingScc}:
deciding the accepting SCCs of $\mc \prodMC \ssaut$ is based on breakpoint automata $\bbpaut$ with \buchi acceptance conditions;
the method is conclusive, but different initial states might have to be considered.
The evaluation of the probability of reaching such accepting SCCs requires only $\mc \prodMC \ssaut$.
The algorithm we propose is rather simple, but its correctness is much more involved and we devote the next section to show that our novel approach is correct.
The correctness is based on the equivalence of the given NGBA $\baut$ and a semi-deterministic \buchi automaton whose initial part is generated via subset construction while for the final part the breakpoint construction is used.
The transit transitions between these two parts connect each subset state $S$ to each possible breakpoint state $(R',j',C')$ having $R' \subseteq \amat(R,\sigma)$.
As Theorem~\ref{thm:acceptanceOfSubsetAutomaton} will show, an SCC of $\mc \prodMC \ssaut$ is accepting if and only if by performing one of such transit transitions we land directly inside an accepting SCC of $\mc \prodAut \bbpaut$.
Alternatively, by the same theorem, an SCC of $\mc \prodMC \ssaut$ is accepting if and only if by performing one of such transit transitions we land to a state $(m, (\setnocond{q}, 1, \emptyset))$ that reaches accepting SCCs of $\mc \prodAut \bbpaut$ with probability $1$.
Note that such SCCs may be unreachable from the usual initial state of $\mc \prodAut \bbpaut$, so this does not contradict the fact that the usual breakpoint construction is not enough to decide whether $\word \in \lang(\baut)$ for a given word $\word \in \Sigma^{\omega}$.

\section{Correctness}
\label{sec:determinisation}

In this section we show that the semi-determinisation construction is correct.
This result is achieved in several steps.
We first describe how a NGBA $\baut$ can be converted into a semi-deterministic \buchi automaton $\sdaut$ via subset and breakpoint constructions;
then we show that $\baut$ and $\sdaut$ recognise the same language;
next, we consider the parity determinisation $\daut$ of $\sdaut$ and we show that again the recognised language is preserved;
finally, we relate the accepting SCCs of $\mc \prodMC \ssaut$ with the accepting SCCs of $\mc \prodAut \sdaut$ and of $\mc \prodAut \bbpaut$, where $\ssaut = \bToSub(\baut)$ and $\bbpaut = \bToBp(\baut)$.
We conclude the section with the complexity analysis of the semi-determinisation construction.

\subsection{Semi-Determinisation of NGBAs}

The first step for proving the correctness of our semi-determinisation construction is the generation of a semi-deterministic automaton corresponding to the given $\baut$, by using the subset and breakpoint constructions.

\begin{definition}
\label{def:semi-determinisation}
	Given a NGBA $\baut$, consider the subset and \buchi breakpoint automata $\bToSub(\baut) = (\Sigma, \ssstates, \ssinit, \ssmat, \ssfinal)$ and $\bToBp(\baut) = (\Sigma, \bpstates, \bpinit, \bpmat, \bpfinal)$, respectively.
	The \emph{semi-determinisation} of $\baut$ is the \buchi automaton $\bToSd(\baut) = (\Sigma, \sdstates, \sdinit, \sdmat, \sdfinal)$ where $\sdstates = \ssstates \cup \bpstates$ is the set of states, $\sdinit = \ssinit$ is the initial state, $\sdmat = \ssmat \cup \sdmattra \cup \bpmat$ is the transition relation, $\sdfinal = \bpfinal$ is the accepting set, and $\sdmattra$ is defined as $\sdmattra = \setcond{(R, \sigma, (R', j', C'))}{\text{$R \in \ssstates$, $(R', j', C') \in \bpstates$, and $R' \subseteq \ssmat(R, \sigma)$}}$.
\end{definition}

Thus, the semi-deterministic automaton $\bToSd(\baut)$ consists of two deterministic parts:
an initial part with the states $\sdstatesini = \ssstates$, where the automaton follows the subset construction, and a final part with the states $\sdstatesfin = \bpstates$, where the automaton follows the breakpoint construction.
Within a run of an automaton, there is only (or: at most) a single step that is not following this deterministic pattern:
the transition taken from $\sdmattra$ from the initial to the final part.
By construction, it is clear that the resulting automaton is semi-deterministic.

\begin{figure*}
	\centering
	\scriptsize
	\begin{tikzpicture}[->,>=stealth,shorten >=1pt,auto,xscale=1.1,yscale=0.9]
		\path[use as bounding box] (-5.2,0.4) rectangle (5.75,3.5);

		\node (SD) at (0,0) {};%{$\sdaut_{\calE}$};
		\node (SUBA) at ($(SD) + (-4,0.25)$) {Subset construction};
		\coordinate (SUBAC) at ($(SUBA) + (0,2)$);
		\node (subx) at ($(SUBAC) + (-1,0)$) {$\setnocond{x}$};
		\node (subyz) at ($(SUBAC) + (1,0)$) {$\setnocond{y,z}$};
		
		\draw ($(subx) + (0,0.5)$) to node {} ($(subx) + (0,0.2)$);
		\draw (subx) to node[above] {$a$} (subyz);
		\draw (subyz) to[bend left=30] node[below] {$b$} (subx);
		\draw (subyz) to[bend right=30] node[above] {$c$} (subx);

		\node (BRKA) at ($(SD) + (2.5,0.25)$) {Breakpoint construction};
		\coordinate (BRKAC) at ($(BRKA) + (-1.5,2)$);
		\node (brk-x-O-2) at ($(BRKAC) + (0,-1)$) {$\setnocond{x}, 2, \emptyset$};
		\node (brk-x-O-1) at ($(BRKAC) + (0,1)$) {$\setnocond{x}, 1, \emptyset$};
		\node (brk-yz-O-2) at ($(BRKAC) + (-1.5,0)$) {$\setnocond{y,z}, 2, \emptyset$};
		\node (brk-yz-y-1) at ($(BRKAC) + (1.5,0)$) {$\setnocond{y,z}, 1, \setnocond{y}$};

		\node (brk-yz-z-1) at ($(BRKAC) + (4,1)$) {$\setnocond{y,z}, 1, \setnocond{z}$};
		\node (brk-y-O-2) at ($(BRKAC) + (4,-1)$) {$\setnocond{y}, 2, \emptyset$};

		\draw (brk-x-O-2) to[bend left=15] node {$a$} (brk-yz-O-2);
		\draw (brk-yz-O-2) to[bend left=15] node {$c$} (brk-x-O-2);

		\draw[double distance=0.7pt] (brk-yz-O-2) to node {$b$} (brk-x-O-1);

		\draw (brk-x-O-1) to[bend left=15] node {$a$} (brk-yz-y-1);
		\draw (brk-yz-y-1) to[bend left=15] node {$b$} (brk-x-O-1);

		\draw[double distance=0.7pt] (brk-yz-y-1) to node {$c$} (brk-x-O-2);
		
		\draw (brk-y-O-2) to node[above] {$c$} (brk-x-O-2);
		\draw[double distance=0.7pt] (brk-yz-z-1) to[bend left=15] node {$b$} (brk-x-O-2);
		\draw (brk-yz-z-1) to node {$c$} (brk-x-O-1);

		\draw[dotted] (subyz) to[bend left=15] node {$b$} (brk-x-O-1);
		\draw[dotted] (subyz) to[bend right=15] node {$b$} (brk-x-O-2);
		\draw[dotted] (subyz) to[bend left=30] node {$c$} (brk-x-O-1);
		\draw[dotted] (subyz) to[bend right=30] node {$c$} (brk-x-O-2);
		\draw[dotted] (subx) .. controls ($(subx) + (0.75,-1.5)$) and ($(brk-yz-O-2) + (-1,-0.4)$) .. node {$a$} (brk-yz-O-2);
		\draw[dotted] (subx) .. controls ($(subx) + (1,-2)$) and ($(brk-y-O-2) + (-4,-0.75)$) .. node[near end] {$a$} (brk-y-O-2);
		\draw[dotted] (subx) .. controls ($(subx) + (1,2)$) and ($(brk-yz-z-1) + (-4,0.75)$) .. node[near end] {$a$} (brk-yz-z-1);
		\draw[dotted] (subx) .. controls ($(subx) + (1.5,2)$) and ($(brk-yz-y-1) + (-3,0)$) .. node[above,near start] {\kern10mm$a$} (brk-yz-y-1);

	\end{tikzpicture}
	\caption{Semi-determinisation $\sdaut_{\calE}$ of $\baut_{\calE}$ in Figure~\ref{fig:exampleBuechi} (fragment)}
	\label{fig:exampleSemiDet}
\end{figure*}

Figure~\ref{fig:exampleSemiDet} shows the semi-determinisation of the \buchi automaton $\baut_{\calE}$ in Figure~\ref{fig:exampleBuechi}.
The left hand side is the initial part, obtained via the subset construction;
the right hand side is a fragment of the final part, generated via the breakpoint construction.
We remark that the breakpoint construction has 38 states (of which only 12 states are reachable via transit transitions) while we have depicted only 6 of them.
Double arrows are transitions belonging to $\sdfinal = \bpfinal$ while dotted arrows are (some of) the transit transitions in $\sdmattra$.

\subsection{Determinising and Applying SDAs}
Given a SDBA $\sdaut = (\Sigma, \sdstates, \sdinit, \sdmat, \sdfinal)$ with $\sdstates = \sdstatesini \cup \sdstatesfin$ and $\sdmat = \sdmatini \cup \sdmattra \cup \sdmatfin$, we can construct a DPA $\daut = (\Sigma, \dstates, (\sdinit, \emptyset), \dmat, \pri)$ (where $\emptyset$ represents the function with an empty domain) as follows.
Let $\sdmat^{\blank}$ be the completion of $\sdmat$ that maps every element of $\sdstates \times \Sigma$ not in the domain of $\sdmat$ to a fresh symbol $\blank$.
For the components $\sdmatini^{\blank}$ and $\sdmatfin^{\blank}$, we use according definitions.

A state in $\dstates$ is a pair $(r,f)$, consisting of the state $r$ reached through the extended initial transitions $\sdmatini^{\blank}$ and a bijection $f \colon \natIntK[m] \to R$ for a set $R \subseteq \sdstatesfin$ with $m = |R|$.

The transition $\dmat \colon \big((r,f), \sigma\big) \mapsto (r',f')$ is defined as follows:
\begin{itemize}
\item
	update of subset part:
	$r' = \sdmatini^{\blank}(r,\sigma)$;
\item
	updating breakpoint states:
	let $g \colon \natIntK[m] \to R'$ be a surjection with $R' \subseteq \sdstatesfin \cup \setnocond{\blank}$ defined as $g(j) = \sdmatfin^{\blank}\big(f(j), \sigma\big)$ for each $j \in \natIntK[m]$;
\item
	minimal acceptance number:
	let $a$ be the minimal integer such that $\big(f(a), \sigma, g(a)\big) \in \sdfinal$ is an accepting transition if such an integer exists, and $a = |\sdstatesfin| + 1$ otherwise;
\item
	removing duplicate breakpoint states:
	let $g' \colon \natIntK[m] \to R'$ be obtained from $g$ by replacing, for every $h,j \in \natIntK[m]$ such that $j > h$ and $g(j) = g(h)$, $g(j)$ by $\blank$;
\item
	minimal rejecting number:
	let $d$ be the minimal integer with $g'(d) = \blank$ if such an integer exists, and $d = |\sdstatesfin| + 1$ otherwise;
\item
	removing blanks:
	let $g'' \colon \natIntK[m''] \to R''$ be a bijection with $R'' = R' \setminus \setnocond{\blank}$ where $m'' = |R''|$;
	$g''$ is obtained from $g'$ by removing the $\blank$ signs while preserving the order, that is, if $h < j$, $g''(h) = g'(h')$, and $g''(j) = g'(j')$, then $h' < j'$;
\item
	restoring transit transitions:
	$f' \colon \natIntK[m'] \to S$ is a bijection with $f'(h) = g''(h)$ for all $h \leq m''$, $S = R'' \cup \setcond{q \in \sdstatesfin}{(r, \sigma, q) \in \sdmattra}$, and $m' = |S|$; and
\item
	transition priority:
	the priority of this transition $\big((r,f), \sigma, (r',f')\big)$ is $2d-1$ if $d \leq a$ and $2a$ if $a < d$.
\end{itemize}
Note that in the above definition, for $f'$ the assignment of numbers $i$ with $m'' < i \leq m'$ to elements of $\setcond{q \in \sdstatesfin}{(r, \sigma, q) \in \sdmattra}$ is arbitrary as long as $f'$ is a bijection.
We denote by $\sdToDet(\sdaut)$ the DPA $\daut$ constructed as above from $\sdaut$.

Given a NGBA $\baut$, we write $\daut = \det(\baut)$ to denote the automaton $\daut = \sdToDet(\bToSd(\baut))$ and for a state $d = (r,f)$ of $\daut$, we denote by $\reached(d)$ the states reached in $d$, i.e., $\reached(d) = r$.

The parity automaton follows the initial subset part of the semi-deterministic automaton in the part $r$ of a state $(r,f)$.
It simulates the final breakpoint part via the function $f$ that stores the nondeterministic choice of where to start in the breakpoint part by assigning them to the entries $f(i)$ while preserving the previous choices.

\begin{figure}
	\centering
% \resizebox{\textwidth}{!}{
	\scriptsize
	\begin{tikzpicture}[->,>=stealth,shorten >=1pt,auto]
	\path[use as bounding box] (-2.55,0.25) rectangle (11,4.5);
		
		\node (P) at (0,0) {};
		\node (Pl) at ($(P) + (-2,3.5)$) {\normalsize $\daut_{\calE}$};
		\node (blank) at ($(P) + (2,3.5)$) {$(\blank, \emptyset)$};
		\node (xO) at ($(P) + (0,3.5)$) {$(\setnocond{x}, \emptyset)$};
		\node (yz1) at ($(P) + (0,2.5)$) {$(\setnocond{y,z}, f_{1})$};
		\node (x2) at ($(P) + (2,1.5)$) {$(\setnocond{x}, f_{2})$};
		\node (yz4) at ($(P) + (0,0.5)$) {$(\setnocond{y,z}, f_{4})$};
		\node (x3) at ($(P) + (-2,1.5)$) {$(\setnocond{x}, f_{3})$};
		
		\draw (blank) edge[loop above] node {$a,77$} (blank);
		\draw (blank) to[in=30, out=60, distance=5mm] node {$b,77$} (blank);
		\draw (blank) edge[loop right, distance=5mm] node {$c,77$} (blank);

		\draw ($(xO) + (0,0.5)$) to node {} ($(xO) + (0,0.2)$);
		\draw (xO) to node[left] {$a,77$} (yz1);
		\draw[dotted] (xO) to[bend left=15] node[above] {$b,77$} (blank);
		\draw[dotted] (xO) to[bend right=15] node[above] {$c,77$} (blank);
		
		\draw (yz1) to[bend left=15] node[right] {~~$b,3$} (x2);
		\draw[dotted] (yz1) to[bend right=15] node[above, near start] {$a,1$} (blank);

		\draw (x2) to[bend left=15] node[below] {$a,77$~~~~~~} (yz1);
		\draw[dotted] (x2) to[bend right=15] node[left] {$b,1$} (blank);
		\draw[dotted] (x2) to[bend right=30] node[right] {$c,1$} (blank);
		
		\draw (yz1) to node[below] {~~$c,2$} (x3);

		\draw (x3) to[bend left=15] node[above] {~~$a,77$} (yz4);
		\draw[dotted] (x3) .. controls ($(x3) + (2,4)$) and ($(blank) + (-0.7,0.8)$) .. node[left, near start] {$a,1$} (blank);

		\draw (yz4) to[bend left=15] node[below] {$c,3$~~~~} (x3);
		\draw (yz4) to node[above] {$b,2$~~} (x2);
		\draw[dotted] (yz4) .. controls ($(yz4) + (3.75,0.5)$) and ($(blank) + (1.25,-0.8)$) .. node[below, near start] {$a,1$} (blank);
		
		\node (table) at (7.5,2.5) {%
			\setlength{\tabcolsep}{2pt}
			\begin{tabular}{c|cccc}
				$j$ & $f_{1}(j)$ & $f_{2}(j)$ & $f_{3}(j)$ & $f_{4}(j)$\\
				\hline
				1 & $\setnocond{y,z},1,\setnocond{y}$ & $\setnocond{x},1,\emptyset$ & $\setnocond{x},2,\emptyset$ & $\setnocond{y,z},2,\emptyset$ \\
				2 & $\setnocond{y,z},2,\emptyset$ & $\setnocond{x},2,\emptyset$ & $\setnocond{x},1,\emptyset$ & $\setnocond{y,z},1,\setnocond{y}$ \\
				3 & $\setnocond{y,z},2,\setnocond{y}$ & & & $\setnocond{y,z},2,\setnocond{y}$ \\
				4 & $\setnocond{y,z},1,\emptyset$ & & & $\setnocond{y,z},1,\emptyset$ \\
				5 & $\setnocond{y,z},1,\setnocond{z}$ & & & $\setnocond{y,z},1,\setnocond{z}$ \\
				6 & $\setnocond{y,z},2,\setnocond{z}$ & & & $\setnocond{y,z},2,\setnocond{z}$ \\
				7 & $\setnocond{y},1,\emptyset$ & & & $\setnocond{y},1,\emptyset$ \\
				8 & $\setnocond{y},2,\emptyset$ & & & $\setnocond{y},2,\emptyset$ \\
				9 & $\setnocond{z},1,\emptyset$ & & & $\setnocond{z},1,\emptyset$ \\
				10 & $\setnocond{z},2,\emptyset$ & & & $\setnocond{z},2,\emptyset$ \\
			\end{tabular}
		};
	\end{tikzpicture}
% }
	\caption{Parity automaton $\daut_{\calE}$ corresponding to $\sdaut_{\calE}$ in Figure~\ref{fig:exampleSemiDet}}
	\label{fig:exampleParity}
\end{figure}
Figure~\ref{fig:exampleParity} shows the parity automaton $\daut_{\calE}$ obtained by applying the above determinisation to the semi-deterministic automaton $\bToSd(\baut_{\calE})$ depicted in Figure~\ref{fig:exampleSemiDet}.
Function $f_{1}$ is completely arbitrary since functions $g$, $g'$, and $g''$ are all the empty function, so let us detail how to obtain the transition from $(\setnocond{y,z},f_{1})$ to $(\setnocond{x},f_{2})$ via action $b$.
Table~\ref{tab:exampleParityFunctions} shows the functions $g$, $g'$, $g''$, and $f'$ we compute and whether $(f(j), b, g(j))$ is accepting (i.e., $(f(j), b, g(j)) \in \sdfinal$).
As we can see, for the transition from $(\setnocond{y,z},f_{1})$ to $(\setnocond{x},f_{2})$ via action $b$ we have that both $a$ and $d$ have value $2$ since $(f_{1}(2), b, g(2)) \in \sdfinal$ and $g'(2) = \blank$, so the resulting transition has priority $2d - 1 = 3$ as $d \leq a$.
Instead, for the transition from $(\setnocond{x},f_{2})$ to $(\setnocond{y,z},f_{1})$ via action $a$, we have that both $a$ and $d$ have value $|\sdstatesfin| + 1 = 39$ since there is no accepting transition and no blank in $g'(\functionGenericArgument)$, so the resulting transition has priority $2d - 1 = 77$.
Note that in $f'(\functionGenericArgument)$ only positions $1$ and $2$ are determined by $g''(\functionGenericArgument)$;
the remaining positions are again arbitrary and having $f' = f_{1}$ is the result of a deliberate choice.
In fact, a different choice would just make the resulting parity automaton larger than $\daut_{\calE}$ while accepting the same language.

\begin{table}
	\caption{Examples of the construction of the transitions of $\daut_{\calE}$ in
	Figure~\ref{fig:exampleParity}}
	\label{tab:exampleParityFunctions}
	\centering
	\small
	\setlength{\tabcolsep}{4pt}
\resizebox{\textwidth}{!}{%
	\begin{tabular}{c|ccccc|ccccc}
		& \multicolumn{5}{c|}{Transition $((\setnocond{y,z},f_{1}), b, (\setnocond{x}, f_{2}))$} & \multicolumn{5}{c}{Transition $((\setnocond{x},f_{2}), a, (\setnocond{y,z}, f_{1}))$} \\
		$j$ & $g(j)$ & acc. & $g'(j)$ & $g''(j)$ & $f'(j)$ & $g(j)$ & acc. & $g'(j)$ & $g''(j)$ & $f'(j)$ \\
		\hline
		1 & $\setnocond{x},1,\emptyset$ & no & $\setnocond{x},1,\emptyset$ & $\setnocond{x},1,\emptyset$ & $\setnocond{x},1,\emptyset$ & $\setnocond{y,z},1,\setnocond{y}$ & no & $\setnocond{y,z},1,\setnocond{y}$ & $\setnocond{y,z},1,\setnocond{y}$ & $\setnocond{y,z},1,\setnocond{y}$ \\
		2 & $\setnocond{x},1,\emptyset$ & yes & $\blank$ & $\setnocond{x},2,\emptyset$ & $\setnocond{x},2,\emptyset$ & $\setnocond{y,z},2,\emptyset$ & no & $\setnocond{y,z},2,\emptyset$ & $\setnocond{y,z},2,\emptyset$ & $\setnocond{y,z},2,\emptyset$ \\
		3 & $\setnocond{x},1,\emptyset$ & yes & $\blank$ & & & & & & & $\setnocond{y,z},2,\setnocond{y}$ \\
		4 & $\setnocond{x},1,\emptyset$ & no & $\blank$ & & & & & & & $\setnocond{y,z},1,\emptyset$ \\
		5 & $\setnocond{x},2,\emptyset$ & yes & $\setnocond{x},2,\emptyset$ & & & & & & & $\setnocond{y,z},1,\setnocond{z}$ \\
		6 & $\setnocond{x},1,\emptyset$ & yes & $\blank$ & & & & & & & $\setnocond{y,z},2,\setnocond{z}$ \\
		7 & $\blank$ & no & $\blank$ & & & & & & & $\setnocond{y},1,\emptyset$ \\
		8 & $\blank$ & no & $\blank$ & & & & & & & $\setnocond{y},2,\emptyset$ \\
		9 & $\setnocond{x},1,\emptyset$ & no & $\blank$ & & & & & & & $\setnocond{z},1,\emptyset$ \\
		10 & $\setnocond{x},1,\emptyset$ & yes & $\blank$ & & & & & & & $\setnocond{z},2,\emptyset$
	\end{tabular}
}
\end{table}

Consider a word $\word \in \setnocond{a,b,c}^{\omega}$ and the associated run $\arun$:
if $\word \in \setnocond{a,b,c}^{\omega} \setminus (ab|ac)^{\omega}$, then the corresponding run of $\daut_{\calE}$ has $77$ as limiting minimum priority since $\word \notin (ab|ac)^{\omega}$ means that there exists $i \in \omega$ such that either $\word(i) = \word(i+1) = a$, or $\word(i) = b$ and $\word(i+1) = c$, or $\word(i) = c$ and $\word(i+1) = b$, thus the state $(\blank, \emptyset)$ is reached via the transition $\tran{\arun}(i+1)$.
Since $(\blank, \emptyset)$ enables only self-loops each one with priority $77$, this is also the minimum priority appearing infinitely often.
Now, suppose that $\word \in (ab|ac)^{\omega} \setminus \lang(\baut)$.
This means that either $\word \in (ab|ac)^{*}(ab)^{\omega}$, or $\word \in (ab|ac)^{*}(ac)^{\omega}$;
in the former case, the automaton repeatedly switches between states $(\setnocond{x},f_{2})$ and $(\setnocond{y,z}, f_{1})$, and in the latter case between states $(\setnocond{x},f_{3})$ and $(\setnocond{y,z}, f_{4})$.
In both cases, it is immediate to see that the minimum priority appearing infinitely often is $3$ that is odd, thus  $\alpha$ is rejected.

\subsection{Language Equivalence}
\label{ssec:languageEquivalenceSemiDet}

The semi-deterministic construction we presented in Definition~\ref{def:semi-determinisation} preserves the accepted language, that is, a NGBA $\baut$ and the resulting semi-deterministic automaton $\bToSd(\baut)$ accept the same language;
moreover, the language accepted by $\bToSd(\baut)$ starting from a state $(R, j, C) \in \sdstatesfin$ depends only on $R$, the subset component.

\begin{proposition}
\label{pro:buechiLangEqualSemiDetLang}
	Given a NGBA $\baut$, let $\sdaut$ be constructed as above.
	Then, $\lang(\sdaut) = \lang(\baut)$.
\end{proposition}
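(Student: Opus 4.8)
The plan is to prove the two inclusions $\lang(\baut) \subseteq \lang(\sdaut)$ and $\lang(\sdaut) \subseteq \lang(\baut)$ separately, exploiting the structure of $\sdaut$ as the merge of a subset part $\ssaut = \bToSub(\baut)$, a breakpoint part $\bbpaut = \bToBp(\baut)$, and transit transitions $\sdmattra$ connecting any subset state $R$ to any breakpoint state $(R',j',C')$ with $R' \subseteq \amat(R,\sigma)$.

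For the inclusion $\lang(\sdaut) \subseteq \lang(\baut)$, I would take an accepting run $\arun$ of $\sdaut$ on a word $\word$. By semi-determinism, $\arun$ stays in the subset part $\sdstatesini = \ssstates$ for a finite prefix, takes a single transit transition at some step $n$ into a breakpoint state $(R_{n},j_{n},C_{n})$, and then follows the deterministic breakpoint part $\bbpaut$ forever. Since $\sdfinal = \bpfinal = \aacc_{\epsilon}$, the suffix of $\arun$ from step $n$ on is an accepting run of $\bbpaut_{(R_{n},j_{n},C_{n})}$ on the suffix word $\word^{(n)} = \word(n)\word(n+1)\cdots$. Here I invoke Theorem~\ref{theo:inclusions} (the breakpoint under-approximation), which gives $\lang(\rbpaut^{u}_{\langle d\rangle}) \subseteq \lang(\aut_d)$; combined with Theorem~\ref{thm:only_reach} and the fact that $\bbpaut$ is essentially $\rbpaut^{u}$, acceptance of $\word^{(n)}$ by $\bbpaut_{(R_{n},j_{n},C_{n})}$ yields that $\word^{(n)}$ is accepted by $\baut_{R_{n}}$. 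Finally, since the transit transition requires $R_{n} \subseteq \amat(R_{n-1}^{\,ss}, \word(n-1))$ and the subset part tracks exactly $\amat(\cdot,\cdot)$, every state of $R_n$ is reachable in $\baut$ from some initial state along $\word(0)\cdots\word(n-1)$; prepending such a finite run of $\baut$ to an accepting run of $\baut_{R_n}$ on $\word^{(n)}$ gives an accepting run of $\baut$ on $\word$.

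For the converse $\lang(\baut) \subseteq \lang(\sdaut)$, I would take an accepting run $q_0 q_1 q_2 \cdots$ of $\baut$ on $\word$. The subset part of $\sdaut$ deterministically reaches the state $R_i = \amat(\ainit, \word(0)\cdots\word(i-1))$ at step $i$, and $q_i \in R_i$ for all $i$. I then want to exhibit, for some $n$, a transit transition at step $n$ into a breakpoint state from which the deterministic breakpoint run is accepting. The natural choice is to start the breakpoint construction at $(\{q_n\},1,\emptyset)$ for suitable $n$ and argue that the single-state-initialised breakpoint run is driven by the tail $q_n q_{n+1}\cdots$ of the accepting NGBA run: since that tail visits all accepting sets $\afinal_1,\dots,\afinal_k$ infinitely often, the breakpoint construction repeatedly fills up its $C$-component (equating $C'=R'$) and hence takes transitions in $\aacc_{\epsilon}=\bpfinal$ infinitely often, cycling $j$ through $\natIntK$. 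This is the step that needs care: I must show the breakpoint round-robin genuinely completes each index infinitely often whenever the underlying NGBA has a run visiting every $\afinal_j$ infinitely often. This is precisely the correctness argument of the breakpoint construction from Step~\ref{item:determinisation:accepting} of the determinisation, so I would cite Theorem~\ref{theo:inclusions} (the inclusion $\lang(\ssaut^{u}_{[d]}) \subseteq \lang(\rbpaut^{u}_{\langle d\rangle})$ does \emph{not} suffice here, but the argument behind $\lang(\rbpaut^{u}) \subseteq \lang(\aut)$ and its converse via $\det(\baut)$ does): concretely, $\word^{(n)} \in \lang(\baut_{\{q_n\}}) \subseteq \lang(\baut_{R_n})$, and by the way $\det(\baut)$ and the breakpoint abstraction relate, $\word^{(n)}$ is accepted by $\rbpaut^{u}_{(\{q_n\},1,\emptyset)} = \bbpaut_{(\{q_n\},1,\emptyset)}$ provided $\{q_n\}$ is "productive", which holds because the NGBA run witnesses acceptance from $q_n$. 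Taking the transit transition $(R_{n-1}, \word(n-1), (\{q_n\},1,\emptyset))$ — legal since $q_n \in \amat(q_{n-1},\word(n-1)) \subseteq \amat(R_{n-1},\word(n-1))$ — and concatenating with this accepting breakpoint run produces an accepting run of $\sdaut$ on $\word$.

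The main obstacle is the forward direction: showing that a \emph{single-state} initialised breakpoint automaton accepts $\word^{(n)}$ whenever $\baut$ has an accepting run whose tail starts at $q_n$. One cannot simply reuse $\lang(\baut_{R_n}) = \lang(\aut_{R_n})$ because the breakpoint abstraction only tracks the root and its children's union, losing the finer tree structure that $\det(\baut)$ uses to certify acceptance; the point is that starting from a singleton $\{q_n\}$ there is initially only the root, so the history tree degenerates and the breakpoint information is exactly what the full GHT would record along that run — this needs to be spelled out by tracking the GHT of $\det(\baut)$ started at $\{q_n\}$ and observing that its root/children-union projection is faithfully simulated by $\bbpaut_{(\{q_n\},1,\emptyset)}$, so that an accepting GHT run projects to an accepting breakpoint run. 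Once this lemma is in place (it is a mild variant of the correctness proof of Definition~\ref{def:determinisation}, mirroring Theorem~\ref{thm:only_reach}), both inclusions follow, and the "moreover" clause about dependence only on $R$ is then immediate from Theorem~\ref{thm:only_reach} applied to $\det(\baut)$ together with the fact that the transit transitions into the breakpoint part from any $R$ reach the same set of breakpoint states regardless of $j,C$ at the source.
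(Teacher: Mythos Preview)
Your inclusion $\lang(\sdaut)\subseteq\lang(\baut)$ is fine and, via Theorems~\ref{theo:inclusions} and~\ref{thm:only_reach}, even slicker than the paper's direct K\"onig's-lemma argument: you correctly reduce to $\lang(\rbpaut^{u}_{\langle d\rangle})\subseteq\lang(\aut_d)=\lang(\baut_{\reached(d)})$, noting that any breakpoint state $(R_n,j_n,C_n)$ is $\langle d\rangle$ for a suitable GHT $d$ with $\reached(d)=R_n$, and then prepend the finite prefix in $\baut$.

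The forward inclusion $\lang(\baut)\subseteq\lang(\sdaut)$, however, has a genuine gap. Your key claim is that ``an accepting GHT run projects to an accepting breakpoint run'' when the GHT starts at the singleton tree for $\{q_n\}$. This is false. GHT acceptance means \emph{some} node $v$ is eventually always stable and always eventually accepting; breakpoint acceptance corresponds exactly to the case $v=\epsilon$. Even from a singleton $\{q_n\}$ the reachable set $R_i$ can grow to include states that lie on non-accepting branches of $\baut$, and such states may never pass through $\afinal_{j}$. Then $C_i\subsetneq R_i$ forever, no $\aacc_\epsilon$ transition is ever taken, and $\bbpaut_{(\{q_n\},1,\emptyset)}$ rejects---while the full GHT run is still accepting thanks to a deeper node tracking the ``good'' states. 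So the GHT-projection argument cannot work for an arbitrary $n$, and the sentence ``since that tail visits all accepting sets \ldots\ the breakpoint construction repeatedly fills up its $C$-component'' is simply not justified.

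What the paper does instead is make precise your ``for suitable $n$''. It defines, for the fixed accepting run $\arun=q_0q_1\cdots$ of $\baut$ and each position $n$, the quantity $\width(\arun,n,\word)=\max_{i\ge n}|R_i|$ where $R_n,R_{n+1},\ldots$ is the subset sequence from the singleton $\{q_n\}$, and an analogous $\width_h$ using the $\afinal_h$-breakpoint sets. These are non-increasing in $n$, so they have limits $\width(\arun,\word)$ and $\width_h(\arun,\word)$. Because $\arun$ is accepting, one shows $\width_h(\arun,\word)=\width(\arun,\word)$ for every $h$. Now pick $n$ at which the widths have already reached their limits and make the transit to $(\{q_n\},1,\emptyset)$. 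If the breakpoint run from there were rejecting, the index would freeze at some $j$ beyond some point $m$, and then $C_i\subsetneq R_i$ for all $i\ge m$; but $\width_j(\arun,m,\word)=\width(\arun,\word)=\width(\arun,n,\word)$ forces $|C_i|\ge|R_i|$ at some $i$, a contradiction. That width-stabilisation argument is the missing idea; once you have it, the transit to a singleton breakpoint state works exactly as you outlined.
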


\begin{proposition}
\label{pro:semiDetLanguageIgnoreBandi}
	Given a NGBA $\baut$ and two states $(R, j, C),$ $(R, j', C') \in \sdstatesfin$ of $\bToSd(\baut)$, $\lang(\bToSd(\baut)_{(R, j, C)}) = \lang(\bToSd(\baut)_{(R, j', C')})$.
\end{proposition}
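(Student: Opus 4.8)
The plan is to show that from any final-part state $(R,j,C)$, the semi-deterministic automaton behaves—as far as its language is concerned—like the breakpoint automaton $\bbpaut$ started in $(R,j,C)$, and then to argue that the breakpoint automaton's language depends only on the $R$-component. First I would observe that, once in the final part, no transit transitions are available, so every run of $\sdaut_{(R,j,C)}$ is exactly a run of $\bbpaut_{(R,j,C)}$, and the accepting condition $\sdfinal=\bpfinal=\aacc_\epsilon$ coincides; hence $\lang(\sdaut_{(R,j,C)})=\lang(\bbpaut_{(R,j,C)})$. Thus the claim reduces to: for two breakpoint states with the same first component, $\lang(\bbpaut_{(R,j,C)})=\lang(\bbpaut_{(R,j',C')})$.

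For that reduction I would use the inclusions already established in Theorem~\ref{theo:inclusions}. Pick any $d$ with $\reached(d)=R$, so that $[d]=[d']$ whenever $\reached(d')=R$ as well. Theorem~\ref{theo:inclusions} gives, for \emph{every} GHT-state $d$ whose breakpoint abstraction is $\langle d\rangle=(R,j,C)$,
\[
\lang(\ssaut^{u}_{[d]})\subseteq \lang(\rbpaut^{u}_{(R,j,C)})\subseteq \lang(\aut_{d})\subseteq \lang(\ssaut^{o}_{[d]}).
\]
By Lemma~\ref{lem:inclusions1} we also have $\lang(\ssaut^{u}_{[d]})\subseteq\lang(\aut_{d})\subseteq\lang(\ssaut^{o}_{[d]})$, and by Theorem~\ref{thm:only_reach}, $\lang(\aut_d)=\lang(\baut_{\reached(d)})=\lang(\baut_R)$ depends only on $R$. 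The key point is that $\bbpaut$ is exactly $\rbpaut^{u}$ with the same transition structure, so $\lang(\bbpaut_{(R,j,C)})=\lang(\rbpaut^{u}_{(R,j,C)})$. Squeezing between the subset-automaton bounds: since $\ssaut^{u}$ and $\ssaut^{o}$ depend only on $[d]$, and $[d]$ is determined by $R=\reached(d)$, the lower bound $\lang(\ssaut^{u}_{[d]})$ and the upper bound $\lang(\ssaut^{o}_{[d]})$ are the same for any two breakpoint states sharing the first component. Combined with $\lang(\rbpaut^{u}_{(R,j,C)})\subseteq\lang(\aut_d)=\lang(\baut_R)$ on one side, I would need the matching lower bound $\lang(\baut_R)\subseteq\lang(\rbpaut^{u}_{(R,j,C)})$ to pin it down exactly—which is not immediate from the squeeze above, since $\rbpaut^{u}$ is only an under-approximation of $\lang(\baut)$.

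So the honest route, and the main obstacle, is to prove directly that $\lang(\rbpaut^{u}_{(R,j,C)})=\lang(\rbpaut^{u}_{(R,j',C')})$ by a concrete run-transfer argument rather than by a squeeze. Given an accepting run of $\rbpaut^{u}_{(R,j,C)}$ on a word $\word$, it visits $\aacc_\epsilon$ infinitely often; each visit resets the child-label to $\emptyset$ and advances the index $j$ by $\oplus_k 1$. The idea is that after the first $\aacc_\epsilon$-transition, the state reached has the form $(R'',\, \cdot\,,\emptyset)$ and is determined \emph{entirely} by $\word$ and $R$ (the index has cycled, the child-set is freshly $\emptyset$), so it does not depend on the original $j$ or $C$; from that point on the two runs from $(R,j,C)$ and $(R,j',C')$ may only differ by a finite shift in the index value $j$, which does not affect whether $\aacc_\epsilon$ is hit infinitely often (every index in $\natIntK$ is visited cofinally in both). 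Making this precise—showing the runs synchronise after finitely many steps, with the only caveat that when $k>1$ one must check the index cycles through all of $\natIntK$ in both runs so that ``infinitely many $\aacc_\epsilon$'' is equivalent to ``the breakpoint $C'=R'$ is hit infinitely often'' independently of the starting index and starting $C$—is the one place real care is needed; everything else is bookkeeping on the breakpoint transition function defined just above Figure~\ref{fig:exampleBreakpoint}.
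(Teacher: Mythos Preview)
Your reduction to the breakpoint automaton is fine, and you correctly recognise that the squeeze via Theorem~\ref{theo:inclusions} does not close the gap. The problem is in the final ``run-transfer'' sketch: two of the concrete claims you rely on are false, and the synchronisation picture is the wrong mental model.

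First, the state reached after the first $\aacc_\epsilon$-transition is \emph{not} determined by $\word$ and $R$ alone: it has the form $(R'', j''\oplus_k 1, \emptyset)$ where $j''$ is the index at the moment of the breakpoint, and that index still carries the original $j$. Second, and more seriously, the two runs do not ``synchronise after finitely many steps'': the accepting transitions of the run from $(R,j,C)$ and the (possibly absent) ones from $(R,j',C')$ occur at different positions, so the $(j,C)$-components of the two runs can differ forever. There is no finite offset after which the two runs coincide.

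The paper's argument avoids synchronisation entirely and uses a \emph{monotonicity-plus-contradiction} idea. Assume the run $\arun$ from $(R,j,C)$ is accepting while the run $\arun'$ from $(R,j',C')$ is rejecting. The $R$-components agree at every step. Since $\arun'$ is rejecting, from some position $n$ on its index is frozen at some $j'_n$ and its breakpoint sets satisfy $C'_m\subsetneq R_m$ for all $m\ge n$. Since $\arun$ is accepting, its index cycles through all of $\natIntK$, so at some $l>n$ we have $j_l=j'_n$ and $C_l=\emptyset$ (immediately after an $\aacc_\epsilon$-step). Now use that, for a fixed index and the non-accepting branch of the transition function, the breakpoint update is monotone in $C$: from $C_l=\emptyset\subseteq C'_l$ one gets $C_m\subseteq C'_m\subsetneq R_m$ for all $m\ge l$, which forbids any further $\aacc_\epsilon$-transition in $\arun$ --- contradicting that $\arun$ is accepting. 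This is the missing ingredient; your sketch never isolates the monotonicity of $C\mapsto \amat(C,\sigma)\cup\afinal_j(R,\sigma)$ at a fixed index, which is exactly what makes the comparison go through without any synchronisation.
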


Similarly, for a given NGBA $\baut$, also $\sdaut = \bToSd(\baut)$ and the corresponding parity automaton $\daut = \sdToDet(\sdaut)$ are language equivalent, thus $\lang(\baut) = \lang(\sdaut) = \lang(\daut)$.

\begin{proposition}
\label{pro:semiDetLangEqParityLang}
	Given a SDBA $\sdaut$ and $\daut = \sdToDet(\sdaut)$, $\lang(\sdaut) = \lang(\daut)$ holds.
\end{proposition}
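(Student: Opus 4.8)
The plan is to establish the two inclusions $\lang(\sdaut) \subseteq \lang(\daut)$ and $\lang(\daut) \subseteq \lang(\sdaut)$ directly, exploiting that $\sdaut$ is semi-deterministic: since $\sdmatini$ and $\sdmatfin$ are partial functions, an accepting run of $\sdaut$ on a word $\word$ necessarily has the shape of a unique finite prefix in $\sdstatesini$ driven by $\sdmatini$, a single transit transition from $\sdmattra$ at some position $n$ into a state of $\sdstatesfin$, and thereafter a unique infinite tail in $\sdstatesfin$ driven by $\sdmatfin$ that visits $\sdfinal$ infinitely often. Throughout, I would read the bijections $f_{t}$ in the run $(r_{0},f_{0})(r_{1},f_{1})\cdots$ of $\daut$ as an age-ordered list of \emph{tracked} runs of the final part of $\sdaut$: $f_{t}(i)$ is the current $\sdmatfin$-state of the $i$-th oldest tracked run, runs with smaller index having been spawned by earlier transit transitions; the construction spawns new tracked runs only at the tail of the list, kills a tracked run when it duplicates a strictly older one (or when its $\sdmatfin^{\blank}$-successor is $\blank$), and compresses the list preserving order, so the index of a surviving tracked run is non-increasing.

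For $\lang(\sdaut) \subseteq \lang(\daut)$ I would fix an accepting run $\arun$ of $\sdaut$ with transit position $n$. Its prefix coincides with the subset component of the $\daut$-run up to step $n$, so the state entered by the transit transition is inserted into $f_{n+1}$ and I can follow this tracked run. While alive, its index only decreases; it can die only by merging into a strictly older tracked run, and because $\sdmatfin$ is deterministic that older run from the merge point on carries exactly $\arun$'s remaining suffix, hence is accepting too. Each such switch strictly decreases the index one is following, so by well-foundedness only finitely many switches occur; afterwards one follows a single tracked run that is never killed again, is accepting, and whose index stabilises to some $i^{*}$. After stabilisation no tracked run of index $\leq i^{*}$ ever dies (that would dislodge the stable run), so the minimal rejecting number $d$ is always $> i^{*}$; at each of the infinitely many steps where the stable run is accepting the minimal acceptance number $a$ is $\leq i^{*} < d$, giving an even priority $2a \leq 2i^{*}$. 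Every odd priority that occurs infinitely often is $2d-1$ with $d > i^{*}$, hence $> 2i^{*}$, so the limit infimum of the priorities is even and $\word \in \lang(\daut)$.

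For $\lang(\daut) \subseteq \lang(\sdaut)$, suppose the $\daut$-run on $\word$ is accepting and let $2c$ be the least priority seen infinitely often. Priorities below $2c$ occur only finitely often, and a death of a tracked run at index $\leq c$ forces, by the priority rule, a priority $\leq 2c-1$; hence beyond some position $N$ no tracked run of index $\leq c$ dies, the runs occupying indices $1,\dots,c$ are frozen, and since priority $2c$ (necessarily of the form $2a$ with $a=c<d$) recurs, index $c$ is thereafter always occupied by one fixed tracked run whose $\sdmatfin$-state sequence visits $\sdfinal$ infinitely often. I would then trace this tracked run back to the transit transition that spawned it, at some position $m$: the subset component is then at a genuine (non-$\blank$) state reachable by $\sdmatini$-transitions from $\sdinit$ --- once the subset part reaches $\blank$ it is trapped there and spawns nothing --- and from position $m+1$ on the tracked run's transitions are genuine $\sdmatfin$-transitions, since a run that ever became $\blank$ would have been removed rather than survive to index $c$. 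Concatenating the subset prefix, the transit step, and this tail yields an accepting run of $\sdaut$ on $\word$, so $\word \in \lang(\sdaut)$.

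The main obstacle is the bookkeeping in the first inclusion: one must argue rigorously that ``following'' the accepting run through duplicate-eliminations never strands us, and this is precisely where semi-determinism of the final part is used, since $\sdmatfin$ being a partial function is what guarantees that an older run merging with ours inherits our accepting future verbatim. The only other point needing care is checking that the $\blank$-completion and the arbitrary index assignments above $m''$ in $f'$ are harmless --- they affect only indices strictly larger than the relevant stable index ($i^{*}$, resp.\ $c$) and so never interfere with the limit-infimum computation. Apart from that, this is the standard correctness argument for an index-appearance-record style parity construction.
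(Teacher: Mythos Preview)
Your proposal is correct and follows essentially the same index-appearance-record argument as the paper: for $\lang(\sdaut)\subseteq\lang(\daut)$ you track the accepting $\sdaut$-run through the list $f$ via a non-increasing index that eventually stabilises (your ``switch to the older run on a merge'' is exactly the paper's descending chain $i_{n}\geq i_{n+1}\geq\cdots$ with $q_{l}=f_{l}(i_{l})$), and for $\lang(\daut)\subseteq\lang(\sdaut)$ you fix the run sitting at index $c$ once the dominating even priority $2c$ freezes the first $c$ slots. The only cosmetic difference is that the paper packages your backward tracing step as a separate pre-run correspondence lemma and phrases both inclusions for arbitrary starting pairs $q\in\q$ rather than just the initial states.
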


Given a semi-deterministic \buchi automaton $\sdaut$, $\daut = \sdToDet(\sdaut)$, and a state $(r,f)$ of $\daut$, we remark that for $i \in \omega$ we have $f(i) \in \sdstatesfin$.
Since $\sdaut$ is semi-deterministic, by Definition~\ref{def:semiDetAut} the reachable fragment of $\sdaut_{f(i)}$ is a deterministic automaton so we can consider the product $\mc \prodAut \sdaut_{f(i)}$ that is a MC extended with accepting conditions.
In particular, the accepting SCCs of $\mc \prodAut \daut$ and $\mc \prodAut \sdaut$ are strictly related by the function $f$ of states $(r,f) \in \daut$ and the smallest priority occurring in the considered SCC.
In the following, we say that $\mc \prodAut \sdaut_{(m,q)}$ (or $\mc \prodAut \daut_{(m,(r,f))}$) is accepting if the probability to eventually being trapped into an accepting SCC is $1$.
\begin{lemma}
\label{lem:parityOfSccAndAcceptingSccOfSemidet}
	Given a SDBA $\sdaut$ and $\daut = \sdToDet(\sdaut)$, if $\mc \prodAut \daut_{(m,(r,f))}$ forms an SCC where the smallest priority of the transitions in the SCC is $2a$, then $\mc \prodAut \sdaut_{(m,f(a))}$ is accepting.
\end{lemma}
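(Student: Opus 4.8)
The plan is to show that the set of states obtained from the given SCC by projecting onto the ``$a$-th token'' of the parity automaton is an \emph{accepting} bottom SCC of $\mc \prodAut \sdaut$ that already contains $(m, f(a))$; once this is done, acceptingness is immediate.

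First I would unpack what the priority bound forces on the construction $\sdToDet$. Let $\scc$ denote the SCC in question; being the whole reachable part of the product $\mc \prodAut \daut_{(m,(r,f))}$ and strongly connected, $\scc$ is a bottom SCC. For a transition $t$ of $\scc$ write $\alpha(t)$ for its minimal acceptance number and $\delta(t)$ for its minimal rejecting number, so that $t$ has priority $2\delta(t)-1$ if $\delta(t)\le\alpha(t)$ and $2\alpha(t)$ otherwise. Since the smallest priority on $\scc$ is $2a$, a short case analysis shows $\delta(t) > a$ for \emph{every} transition $t$ of $\scc$: in the first case $2\delta(t)-1\ge 2a$ forces $\delta(t)\ge a+1$, and in the second $2\alpha(t)\ge 2a$ forces $\alpha(t)\ge a$, hence $\delta(t)>\alpha(t)\ge a$. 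By the definition of $\delta$, this means $g'(1),\dots,g'(a)$ are all different from $\blank$ along each transition of $\scc$; tracking the steps ``removing duplicate breakpoint states'', ``removing blanks'', and ``restoring transit transitions'' then shows that the first $a$ positions of $f$ are never killed and never merged into an older position, and evolve purely deterministically: $f'(i)=\sdmatfin\big(f(i),\sigma\big)$ for all $i\le a$, where $\sigma$ is the letter of the transition. In particular $|R|\ge a$ in every state $(r,f)$ of $\scc$, so $f(a)$ is always defined, and $f(a)\in\sdstatesfin$, so $\sdaut_{f(a)}$ is deterministic and $\mc\prodAut\sdaut_{(m,f(a))}$ is an MC with a \buchi acceptance condition.

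Next I would extract the acceptance. Choose a transition $t_{0}$ of $\scc$ of priority exactly $2a$; being even, this puts us in the case $\alpha(t_{0})<\delta(t_{0})$ with $\alpha(t_{0})=a$, i.e.\ the $a$-th token takes a transition in $\sdfinal=\bpfinal$ along $t_{0}$. Now set $\scc_{a}=\setcond{(m',f'(a))}{(m',(r',f'))\text{ is a state of }\scc}$. From the first step, any path inside $\scc$ induces, via the $a$-th token, a path in $\mc\prodAut\sdaut$ over the same underlying MC path (hence of positive probability) whose automaton component follows $\sdmatfin$; strong connectivity of $\scc$ therefore gives strong connectivity of $\scc_{a}$. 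Since $\scc$ is bottom and the parity automaton is complete, every MC step $m'\to m''$ out of a state $(m',f'(a))\in\scc_{a}$ is matched in $\mc\prodAut\daut$ by a step into $\scc$ whose $a$-th token is $\sdmatfin(f'(a),\labelFunc(m''))$ (which is defined, again by the first step); conversely this is the only kind of $\mc\prodAut\sdaut$-step leaving $(m',f'(a))$, because $\sdaut_{f(a)}$ is deterministic. Hence $\scc_{a}$ is closed under transitions, i.e.\ a bottom SCC of $\mc\prodAut\sdaut$, and by the choice of $t_{0}$ it contains an accepting transition, so it is an accepting bottom SCC. Finally, $(m,(r,f))$ is itself a state of $\scc$, so $(m,f(a))\in\scc_{a}$; starting inside an accepting bottom SCC, the probability of remaining forever in an accepting SCC is $1$, so $\mc\prodAut\sdaut_{(m,f(a))}$ is accepting.

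The main obstacle is the first step: faithfully turning the numerical condition ``smallest priority $=2a$'' into the structural statement that the first $a$ tokens survive and move deterministically along $\sdmatfin$. This needs careful bookkeeping of the auxiliary maps $g,g',g''$ and the bijection $f'$ through each bullet of the definition of $\sdToDet$ --- in particular, that blanks appearing at positions $>a$ and transit tokens appended at the end never disturb positions $1,\dots,a$, and that order is preserved, so that position $i$ of $f'$ genuinely tracks position $i$ of $f$ for $i\le a$. After that the argument is routine SCC bookkeeping together with the standard fact that a finite Markov chain trapped in an accepting bottom SCC accepts with probability one.
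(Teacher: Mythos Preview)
Your proposal is correct and follows essentially the same approach as the paper: both arguments use the priority lower bound to conclude that position $a$ of $f$ is never blanked and evolves deterministically via $\sdmatfin$, then project the SCC onto the $a$-th token to obtain accepting behaviour in $\mc\prodAut\sdaut$. Your write-up is considerably more explicit about the bookkeeping (the case split on $\alpha(t)$ versus $\delta(t)$, and why positions $1,\dots,a$ are neither killed nor merged), whereas the paper's proof states these facts tersely and phrases the conclusion in terms of the measure of accepting runs rather than exhibiting $\scc_{a}$ as an accepting bottom SCC; but the underlying idea is the same.
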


It is known by Lemma~\ref{lem:productSubsetIsomorphicQuotientMC} that $\mc \prodMC \ssaut$ and $\mc \prodMC \daut$ are strictly related, so we can define the accepting SCC of $\mc \prodMC \ssaut$ by means of the accepting states of $\mc \prodMC \daut$.
\begin{definition}
\label{def:acceptingSccOfSubsetApp}
	Given a MC $\mc$ and a NGBA $\baut$, for $\ssaut = \bToSub(\baut)$ and $\daut = \det(\baut)$, we say that a bottom SCC $\scc$ of $\mc \prodMC \ssaut$ is accepting if, and only if, there exists a state $s = (m, (R, f))$ in an accepting bottom SCC $\scc'$ of $\mc \prodMC \daut$ such that $(m, R) \in \scc$.
\end{definition}
Note that Corollary~\ref{cor:same} ensures that the accepting SCCs of $\mc \prodMC \ssaut$ are well defined.

\begin{theorem}
\label{thm:acceptanceOfSubsetAutomatonApp}
	Given a MC $\mc$ and a NGBA $\baut$, for $\sdaut = \bToSd(\baut)$ and  $\ssaut = \bToSub(\baut)$, the following facts are equivalent:
	\begin{enumerate}
	\item
		$\scc$ is an accepting bottom SCC of $\mc \prodMC \ssaut$;
	\item
		there exist $(m,R) \in \scc$ and $R' \subseteq R$ such that $(m, (R', j, \emptyset))$ belongs to an accepting SCC of $\mc \prodAut \sdaut_{(m, (R', j, \emptyset))}$ for some $j \in \natIntK$;
	\item
		there exist $(m,R) \in \scc$ and $q \in R$ such that $(m,(\setnocond{q}, 1, \emptyset))$ reaches with probability $1$ an accepting SCC of $\mc \prodAut \sdaut_{(m, (\setnocond{q}, 1, \emptyset))}$.
	\end{enumerate}
\end{theorem}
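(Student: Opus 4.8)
The plan is to prove the three‑way equivalence as a cycle $(1)\Rightarrow(3)\Rightarrow(2)\Rightarrow(1)$, in which the first implication carries essentially all of the content and the other two are bookkeeping. Before starting I would isolate a \emph{language characterisation} of accepting bottom SCCs: a bottom SCC $\scc$ of $\mc\prodMC\ssaut$ is accepting, in the sense of Definition~\ref{def:acceptingSccOfSubsetApp}, if and only if there is a state $(m,R)\in\scc$ with $\prob^{\mc}_{m}(\baut_{R})=1$. This follows from Definition~\ref{def:acceptingSccOfSubsetApp}, Lemma~\ref{lem:productSubsetIsomorphicQuotientMC}, Theorem~\ref{thm:bottom_component} and the analogue of Corollary~\ref{cor:same} for the parity automaton $\daut$ (a path of $\mc\prodMC\daut$ from $(m,(R,f))$ is accepting iff its word lies in $\lang(\baut_{\reached((R,f))})=\lang(\baut_{R})$), together with the standard fact that inside a bottom SCC of a parity Markov chain almost every path realises the minimal priority of the SCC; hence the SCC is accepting precisely when from one---equivalently every---of its states the generated word is almost surely accepted by $\baut_{R}$.

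With this in hand, $(2)\Rightarrow(1)$ is short: if $(m,(R',j,\emptyset))$ with $R'\subseteq R$ and $(m,R)\in\scc$ lies in an accepting bottom SCC of $\mc\prodAut\sdaut_{(m,(R',j,\emptyset))}$---which, since the initial state is a breakpoint state, is just the breakpoint product---then from it the unique breakpoint run is accepting with probability $1$, so almost every generated word lies in $\lang(\bbpaut_{(R',j,\emptyset)})$, which by Theorem~\ref{theo:inclusions} and Theorem~\ref{thm:only_reach} is contained in $\lang(\baut_{R'})\subseteq\lang(\baut_{R})$; hence $\prob^{\mc}_{m}(\baut_{R})=1$ and $\scc$ is accepting. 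For $(3)\Rightarrow(2)$: probability‑$1$ reachability is in particular positive, so there is a finite path of $\mc\prodAut\bbpaut$ from $(m,(\setnocond{q},1,\emptyset))$ into an accepting bottom SCC; such an SCC must carry a transition of $\aacc_{\epsilon}$, and by construction these transitions have a target of the form $(R'',j'',\emptyset)$ with empty breakpoint component; reading the word $\word''$ of the resulting path in $\mc\prodMC\ssaut$ from $(m,R)$ stays inside $\scc$ (as $\scc$ is a bottom SCC and the model path lifts uniquely) and ends in $(m'',\amat(R,\word''))$, while the target breakpoint state has first component $\amat(\setnocond{q},\word'')\subseteq\amat(R,\word'')$ since $q\in R$; taking $(m'',\amat(R,\word''))$ as the new $(m,R)$ and this first component as $R'$ witnesses $(2)$.

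The heart of the proof is $(1)\Rightarrow(3)$. I would first turn $(1)$ into a breakpoint witness: Definition~\ref{def:acceptingSccOfSubsetApp} gives an accepting bottom SCC of $\mc\prodMC\daut$ of minimal priority $2a$ containing a state $(m^{*},(R^{*},f^{*}))$ with $(m^{*},R^{*})\in\scc$, and Lemma~\ref{lem:parityOfSccAndAcceptingSccOfSemidet} then yields that $\mc\prodAut\sdaut_{(m^{*},f^{*}(a))}$ is accepting, i.e.\ from the breakpoint state $f^{*}(a)$ one almost surely falls into an accepting bottom SCC of the breakpoint product. Projecting exactly as in $(3)\Rightarrow(2)$ already produces a state $(m,(R',j,\emptyset))$ as required in $(2)$, sitting inside an accepting breakpoint SCC $D$ whose model projection is the ergodic bottom SCC $B$ of $\mc$ underlying $\scc$. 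What is still missing---and what I expect to be the genuinely hard step---is to \emph{shrink $R'$ to a singleton} and to turn ``belongs to'' into ``reaches with probability $1$''. This is a Courcoubetis--Yannakakis‑style argument~\cite{Courcoubetis+Yannakakis/95/Markov}: assume for contradiction that for \emph{every} $(m,R)\in\scc$ and \emph{every} $q\in R$ the breakpoint product started in $(m,(\setnocond{q},1,\emptyset))$ can still reach a non‑accepting bottom SCC; using recurrence of the ergodic model SCC $B$ and finiteness of all the state spaces, one then splices together a single infinite word $\word$, read along a path that remains in $\scc$, on which the breakpoint run from $R$ never gets all of its states ``refreshed'' through all accepting sets infinitely often---equivalently $\word\notin\lang(\baut_{R})$---contradicting that $\scc$ is accepting. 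The delicate points are (i) designing the rank that makes this splicing terminate, namely arranging that each ``failure phase'' permanently removes one more state $q$ from the pool of states that could carry the acceptance, so that after finitely many phases the pool is exhausted, and (ii) keeping the constructed word inside $\scc$ throughout. Once a singleton $q$ with $\prob^{\mc}_{m}(\bbpaut_{(\setnocond{q},1,\emptyset)})=1$ has been extracted, probability‑$1$ reachability of an accepting bottom SCC from $(m,(\setnocond{q},1,\emptyset))$ is immediate, completing $(3)$ and hence the cycle.
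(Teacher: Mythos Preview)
Your cyclic decomposition and the opening language characterisation are correct, and your handling of $(2)\Rightarrow(1)$ via the inclusion $\lang(\bbpaut_{(R',j,\emptyset)})\subseteq\lang(\baut_{R'})\subseteq\lang(\baut_{R})$, and of $(3)\Rightarrow(2)$ by locating a state with empty breakpoint component inside the reached accepting SCC, are both sound. The paper organises the proof differently, establishing $(1)\Leftrightarrow(2)$ and $(1)\Leftrightarrow(3)$ as two separate equivalences, both routed through the parity product $\mc\prodAut\daut$ and Lemma~\ref{lem:parityOfSccAndAcceptingSccOfSemidet}; your $(2)\Rightarrow(1)$ is arguably more transparent than the paper's, which instead reconstructs an accepting parity SCC from the breakpoint witness.

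The genuine divergence is $(1)\Rightarrow(3)$. You plan a Courcoubetis--Yannakakis style contradiction: assuming no singleton works, splice together along a path trapped in $\scc$ a word refuting $\word\in\lang(\baut_{R})$, using a rank that prunes one candidate $q$ per ``failure phase''. This is the standard CY template and can be made to work, but your sketch leaves the rank and the splicing glue unspecified, and that is precisely where such arguments bite (you must rule out \emph{all} runs of $\baut_{R}$, not just the tracked one). The paper does not go this way at all: it simply invokes the \emph{width argument} already carried out in the proof of Lemma~\ref{lem:buechiLangSubseteqSemiDetLang} (inside Proposition~\ref{pro:buechiLangEqualSemiDetLang}). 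That argument shows that every accepting run $q_{0} q_{1}\cdots$ of $\baut$ on $\word$ admits an index $n$ after which the breakpoint run from the \emph{singleton} $(\{q_{n}\},1,\emptyset)$ on the tail is accepting; combined with Proposition~\ref{pro:semiDetLangEqParityLang}, this delivers the required $(m,R)\in\scc$ and $q\in R$ with almost-sure acceptance directly. In short, the singleton is produced for free by machinery the paper has already built, and no CY splicing is needed.
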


Theorem~\ref{thm:acceptanceOfSubsetAutomatonApp} provides a practical way to check whether an SCC $\scc$ of $\mc \prodMC \ssaut$ is accepting:
it is enough to check whether some state $(m, R)$ of $\scc$ has $R \supseteq R'$ for some $(m, (R', j, \emptyset))$ in the accepting region of $\mc \prodAut \sdaut$, or whether, for a state $q \in R$, $(m, (\setnocond{q}, 1, \emptyset))$ reaches with probability $1$ the accepting region.
We remark that, by construction of $\sdaut$, if we change the initial state of $\sdaut$ to $(R, j ,C)$---i.e., if we consider $\sdaut_{(R, j, C)}$---then the run can only visit breakpoint states; i.e., it is actually a run of $\bbpaut_{(R, j, C)}$.

\begin{theorem}
\label{thm:probabilityViaSubsetOnlyMC}
	Given a MC $\mc$ and a \buchi automaton $\baut$, consider $\ssaut = \bToSub(\baut)$ and $\mc \prodMC \ssaut$ with accepting SCCs according to Definition~\ref{def:acceptingSccOfSubset}.
	Let $U$ be the accepting region and let $\Diamond U$ denote the set of paths containing a state of $U$.
	Then, $\prob^{\mc}(\baut) = \prob^{\mc \prodMC \ssaut}(\Diamond U)$.
\end{theorem}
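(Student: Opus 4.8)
The plan is to reduce the statement to Theorem~\ref{thm:BiancoA95} by routing through the quotient $[\mc \prodMC \aut]$, where $\aut = \det(\baut)$ is the deterministic automaton underlying Definition~\ref{def:acceptingSccOfSubset}. Write $U_{\aut} \subseteq \pstates \times \astates$ for the accepting region of $\mc \prodAut \aut$, i.e.\ the union of its accepting bottom SCCs. By Theorem~\ref{thm:BiancoA95}, $\prob^{\mc}(\baut) = \prob^{\mc \prodAut \aut}(\Diamond U_{\aut})$, and since $\mc \prodAut \aut$ and $\mc \prodMC \aut$ share the same underlying Markov chain this equals $\prob^{\mc \prodMC \aut}(\Diamond U_{\aut})$. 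So it suffices to prove $\prob^{\mc \prodMC \aut}(\Diamond U_{\aut}) = \prob^{\mc \prodMC \ssaut}(\Diamond U)$, and I would do this in two moves: (i) a lumping step $\prob^{\mc \prodMC \aut}(\Diamond U_{\aut}) = \prob^{[\mc \prodMC \aut]}(\Diamond [U_{\aut}])$; and (ii) transport along the isomorphism of Lemma~\ref{lem:productSubsetIsomorphicQuotientMC}, giving $\prob^{[\mc \prodMC \aut]}(\Diamond [U_{\aut}]) = \prob^{\mc \prodMC \ssaut}(\Diamond U)$.

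For (i) the key point is that $U_{\aut}$ is \emph{saturated}: by Corollary~\ref{cor:quotients}, for each bottom SCC $\scc$ of $[\mc \prodMC \aut]$ either all lifts of states of $\scc$ are accepting or all are rejecting, so $s \in U_{\aut}$ iff $[s] \in [U_{\aut}]$. The canonical projection $\mc \prodMC \aut \to [\mc \prodMC \aut]$ is a Markov-chain homomorphism (the matrix $[\pmat]$ matches $\pmat$ by definition, and $d \in [d]$), hence measure-preserving on paths, and the preimage of $\Diamond [U_{\aut}]$ under it is exactly $\Diamond U_{\aut}$ because $U_{\aut}$ is saturated; equivalently, one invokes Corollary~\ref{cor:same} to see that $\prob_{s}(\Diamond U_{\aut})$ depends only on $[s]$ and solves, class by class, the same linear reachability equations as in the quotient.

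For (ii), let $\phi$ be the isomorphism of Lemma~\ref{lem:productSubsetIsomorphicQuotientMC}, $(m,R) \mapsto (m,[d])$ for any $d$ with $\reached(d) = R$; being an isomorphism of Markov chains it preserves reachability probabilities, so I only need $\phi(U) = [U_{\aut}]$. Since $\phi$ carries bottom SCCs to bottom SCCs, it is enough to show that a bottom SCC $\scc$ of $\mc \prodMC \ssaut$ is accepting in the sense of Definition~\ref{def:acceptingSccOfSubset} precisely when $\phi(\scc)$ is the quotient of an accepting bottom SCC of $\mc \prodMC \aut$. One direction unwinds the definition: an accepting $\scc$ supplies some $(m,d)$ in an accepting bottom SCC $\scc'$ of $\mc \prodMC \aut$ with $(m,\reached(d)) \in \scc$, so $(m,[d]) = \phi(m,\reached(d)) \in \phi(\scc)$ lifts an accepting state, and then Corollary~\ref{cor:quotients} forces the whole bottom SCC $\phi(\scc)$ to consist of lifts of accepting states. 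For the converse, Theorem~\ref{thm:bottom_component} realises $\phi(\scc)$ as $[\scc'']$ for a bottom SCC $\scc''$ of $\mc \prodMC \aut$ all of whose states are accepting---hence $\scc''$ is itself an accepting bottom SCC---and any $(m,d) \in \scc''$ then witnesses, via $(m,\reached(d)) = \phi^{-1}(m,[d]) \in \scc$, that $\scc$ is accepting. Running this equivalence over all bottom SCCs, together with the surjectivity clause of Theorem~\ref{thm:bottom_component}, yields $\phi(U) = [U_{\aut}]$.

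Chaining (i), (ii) and Theorem~\ref{thm:BiancoA95} gives $\prob^{\mc}(\baut) = \prob^{\mc \prodMC \aut}(\Diamond U_{\aut}) = \prob^{[\mc \prodMC \aut]}(\Diamond [U_{\aut}]) = \prob^{\mc \prodMC \ssaut}(\Diamond U)$, as required. I expect Step (ii)---showing that the combinatorial Definition~\ref{def:acceptingSccOfSubset} coincides, class by class, with the $[\cdot]$-image of the Rabin accepting region---to be the main obstacle, since it must juggle the non-injectivity of the quotient map using Corollary~\ref{cor:quotients} (uniform accept/reject inside a quotient bottom SCC) and Theorem~\ref{thm:bottom_component} (the bottom-SCC correspondence upstairs and downstairs). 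The lumping argument in Step (i) is routine probability theory, but it is worth noting that it, too, needs exactly the saturation of $U_{\aut}$ provided by Corollary~\ref{cor:quotients}.
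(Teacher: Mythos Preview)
The paper does not supply an explicit proof of this theorem; it is stated without proof immediately after Theorem~\ref{thm:acceptanceOfSubsetAutomatonApp} and is evidently meant to follow from the machinery already in place (Theorem~\ref{thm:BiancoA95}, Theorem~\ref{thm:bottom_component}, Corollary~\ref{cor:quotients}, Lemma~\ref{lem:productSubsetIsomorphicQuotientMC}). Your reconstruction is correct and is precisely the argument that this infrastructure supports: reduce to the Rabin product via Theorem~\ref{thm:BiancoA95}, pass to the quotient using that $U_{\aut}$ is saturated (Corollary~\ref{cor:quotients}) so that lumping preserves the reachability probability, and then transport along the isomorphism of Lemma~\ref{lem:productSubsetIsomorphicQuotientMC}, matching the two accepting regions via Theorem~\ref{thm:bottom_component} and Definition~\ref{def:acceptingSccOfSubset}.
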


\subsection{Complexity of Semi-Determinisation}

The complexity of the procedure $\ComputeAcceptingScc(\mc, \baut)$ in Figure~\ref{fig:ComputeAcceptingScc} is $\bigO(m^{2} \cdot k \cdot 3^{n})$, where $n$ is the number of states of the \buchi automaton $\baut$, $k$ the number of accepting sets in $\abuechiset$, and $m$ the number of states of the MC $\mc$.
Note that the actual runtime can be improved by caching the positive results of $\IsAccepting(\mc, \baut, \scc)$:
if $\scc$ is accepting with witness $(m,q)$ with $q \in R$ and $(m,R) \in \scc$, and we have to compute $\IsAccepting(\mc, \baut, \scc')$, we can first verify whether there exists $(m,R') \in \scc'$ such that $q \in R'$;
if this is the case, we can immediately return a positive answer without constructing the \buchi breakpoint automaton.

\section{Markov Decision Processes}
\label{sec:mdp}

The lazy determinisation approach proposed in this paper extends to Markov decision processes (MDPs) after minor adaptation;
Markov chains have mainly been used for ease of notation.
While the details of the extension to MDPs have been moved to Appendix~\ref{app:MDP}, we give here an outline of the adaptation with a focus on the differences and particularities that need to be taken into consideration when we are dealing with MDPs.

An MDP is a tuple $\mdp = (\pstates, \labelFunc, \Aset, \pinit, \pmat)$ where $\pstates$, $\labelFunc$, and $\pinit$ are as for Markov chains, $\Aset$ is a finite set of actions, and $\pmat \colon \pstates \times \Aset \to \dist(\pstates)$ is the transition probability function where $\dist(\pstates)$ is the set of distributions over $\pstates$.
The nondeterministic choices are resolved by a scheduler $\sched$ that chooses the next action to be executed depending on a finite path.
Like for Markov chains, the principal technique to analyse MDPs against a specification $\phi$ is to construct a deterministic Rabin automaton $\aut$, build the product $\mdp \prodAut \aut$, and analyse it.
This product will be referred to as a \emph{Rabin MDP} (RMDP).
According to~\cite{BiancoA95}, for a RMDP, it suffices to consider memoryless deterministic schedulers of the form $\sched \colon \pstates \times \astates \to \Aset$, where $\astates$ is the set of states of $\aut$.
Given a NGBA specification $\baut_{\phi}$, we are interested in $\sup_{\sched} \prob^{\mdp, \sched}(\baut_{\phi})$.
In particular, one can use finite memory schedulers on $\mdp$.
(Schedulers that control $\mdp$ can be used to control $\mdp \prodAut \aut$ for all deterministic automata $\aut$.)
The superscript $\mdp$ is omitted when it is clear from the context.
We remark that the infimum can be treated accordingly, as $\inf_{\sched} \prob^{\sched}(\baut_{\phi}) = 1 - \sup_{\sched} \prob^{\sched}(\baut_{\neg \phi})$.

As Theorem~\ref{thm:only_reach} operates on the word level, Corollary \ref{cor:same} immediately extends to MDPs.
Under the corresponding equivalence relation we obtain a \emph{quotient MDP}.
From here, it is clear that we can use the estimation of the word languages provided in Theorem~\ref{theo:inclusions} to estimate $\sup_{\sched} \prob^{\sched}(\baut_{\phi})$.

\begin{corollary}
\label{cor:MDPestimate}
	Given an MDP $\mdp$ and a NGBA $\baut$, let $m$ be a state of $\mdp$ and $d,d'$ be states of $\aut = \det(\baut)$ with $[d]=[d']$.
	Then
	$\sup_{\sched}\prob_{(m,[d])}^{\sched}(\ssaut^{u}_{[d]})
	\leq
	\sup_{\sched}\prob_{(m,\langle d \rangle)}^{\sched}(\rbpaut^{u}_{\langle d \rangle})
	\leq
	\sup_{\sched}\prob_{(m,d)}^{\sched}(\aut_{d})
	=                                                   % added
	\sup_{\sched}\prob_{(m,d')}^{\sched}(\aut_{d'})     % added
	\leq
	\sup_{\sched}\prob_{(m,\langle d \rangle)}^{\sched}(\rbpaut^{o}_{\langle d \rangle}),
	\ \sup_{\sched}\prob_{(m,[d])}^{\sched}(\ssaut^{o}_{[d]})$ holds.
\end{corollary}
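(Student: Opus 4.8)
The plan is to reduce the MDP statement to the already-established word-language inclusions of Theorem~\ref{theo:inclusions} by the standard observation that a scheduler on $\mdp$ induces a scheduler on any product $\mdp \prodAut \baut'$ with a deterministic automaton $\baut'$, and conversely. Concretely, I would first argue that for a fixed (finite-memory, or even arbitrary) scheduler $\sched$ on $\mdp$, the quantity $\prob^{\sched}_{(m,q_0)}(\baut')$ depends only on the distribution over infinite words induced by $\sched$ starting from $m$, together with the language $\lang(\baut'_{q_0})$: indeed, since $\baut'$ is deterministic (in its reachable fragment, by Definition~\ref{def:semiDetAut} for the breakpoint/semi-deterministic automata, and by definition for $\ssaut$, $\rbpaut$, $\aut$), a path of the product is accepting if and only if the word it reads lies in $\lang(\baut'_{q_0})$, exactly as in Corollary~\ref{cor:same}(1). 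Hence $\prob^{\sched}_{(m,q_0)}(\baut') = \prob^{\mdp,\sched}_{m}\big(\setcond{\mpath}{\word(\mpath) \in \lang(\baut'_{q_0})}\big)$.

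Next I would invoke the monotonicity of probability measures together with the language chain of Theorem~\ref{theo:inclusions}, namely
\[
	\lang(\ssaut_{[d]}^{u}) \subseteq \lang(\rbpaut_{\langle d \rangle}^{u}) \subseteq \lang(\aut_{d}) \subseteq \lang(\rbpaut_{\langle d \rangle}^{o}),\ \lang(\ssaut_{[d]}^{o})\text{.}
\]
Applying $\prob^{\mdp,\sched}_{m}$ (which is monotone on measurable sets) to this chain, and using the identification from the previous paragraph on each of the five automata $\ssaut^{u}_{[d]}$, $\rbpaut^{u}_{\langle d\rangle}$, $\aut_{d}$, $\rbpaut^{o}_{\langle d\rangle}$, $\ssaut^{o}_{[d]}$ evaluated in their respective products, I get the pointwise chain of inequalities
\[
	\prob^{\sched}_{(m,[d])}(\ssaut^{u}_{[d]}) \leq \prob^{\sched}_{(m,\langle d\rangle)}(\rbpaut^{u}_{\langle d\rangle}) \leq \prob^{\sched}_{(m,d)}(\aut_{d}) \leq \prob^{\sched}_{(m,\langle d\rangle)}(\rbpaut^{o}_{\langle d\rangle}),\ \prob^{\sched}_{(m,[d])}(\ssaut^{o}_{[d]})
\]
for \emph{every} scheduler $\sched$. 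Since the supremum over $\sched$ preserves $\leq$ (taking $\sup_\sched$ of both sides of each inequality), the desired chain of suprema follows. The central equality $\sup_\sched \prob^{\sched}_{(m,d)}(\aut_d) = \sup_\sched \prob^{\sched}_{(m,d')}(\aut_{d'})$ when $[d]=[d']$ is exactly the MDP version of Corollary~\ref{cor:same}(2): since $\reached(d)=\reached(d')$, Theorem~\ref{thm:only_reach} gives $\lang(\aut_d)=\lang(\baut_{\reached(d)})=\lang(\aut_{d'})$, so the word-level characterisation yields $\prob^{\sched}_{(m,d)}(\aut_d)=\prob^{\sched}_{(m,d')}(\aut_{d'})$ for each $\sched$, hence equal suprema; this is the step the paper already flagged as ``immediately extending to MDPs'' because Theorem~\ref{thm:only_reach} operates on words.

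The only genuinely non-routine point I expect is making the correspondence between schedulers on $\mdp$ and on $\mdp \prodAut \baut'$ airtight: one must check that the map sending $\sched$ (a function of finite paths in $\mdp$) to the induced scheduler on the product is a bijection onto the memoryless-lifted schedulers that suffices to realise the supremum, and that it preserves the induced word distribution so that $\prob^{\sched}_{(m,q_0)}(\baut') = \prob^{\mdp,\sched}_m(\word^{-1}\lang(\baut'_{q_0}))$ really holds. This is precisely the content of the parenthetical remark in Section~\ref{sec:mdp} that ``schedulers that control $\mdp$ can be used to control $\mdp \prodAut \aut$ for all deterministic automata $\aut$,'' combined with the determinism of the automata in the reachable fragment; once that is granted, the rest is monotonicity and an appeal to Theorem~\ref{theo:inclusions} and Theorem~\ref{thm:only_reach}, with no further calculation.
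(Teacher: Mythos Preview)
Your proposal is correct and follows essentially the same approach as the paper: the paper's justification is precisely that Theorem~\ref{thm:only_reach} operates at the word level (so Corollary~\ref{cor:same} extends to MDPs, giving the central equality) and that the language inclusions of Theorem~\ref{theo:inclusions} yield the remaining inequalities by monotonicity of the measure under each scheduler, with the scheduler correspondence handled exactly by the remark you cite from Section~\ref{sec:mdp}. You have simply made explicit the details the paper leaves implicit.
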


In the standard evaluation of RMDP, the \emph{end components} of the product $\mdp \prodAut \aut$ play a role comparable to the one played by bottom SCCs in MCs.
An end component (EC) is simply a sub-MDP, which is closed in the sense that there exists a memoryless scheduler $\sched$ such that the induced Markov chain is a bottom SCC.
If there is a scheduler that additionally guarantees that a run that contains all possible transitions infinitely often is accepting, then the EC is \emph{accepting}. 
Thus, one can stay in the EC and traverse all of its transitions (that the scheduler allows) infinitely often, where acceptance is defined as for BSCCs in MCs.

\begin{theorem}
\label{thm:acceptanceOfSubsetAutomatonEC}
	Given an MDP $\mdp$ and a NGBA $\baut$, for $\aut = \det(\baut)$,  $\sdaut = \bToSd(\baut)$, and $\ssaut = \bToSub(\baut)$, if $\mec$ is an accepting EC of $\mdp {\prodAut} \aut$, then
	\begin{inparaenum}[(1.)]
%	\begin{enumerate}
	\item $[\mec]$ is an EC of $\mdp {\prodMDP} \ssaut$ and
	\item $\mec' = \langle \mec \rangle$ is an accepting EC of $\mdp {\prodAut} \sdaut$. $\mec'$ contains a state $(m,(R, 1, \emptyset))$ with $R \subseteq [d]$ and $(m,d) \in \mec$.
% 	\end{enumerate}
	\end{inparaenum}
\end{theorem}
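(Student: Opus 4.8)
The plan is to mirror, in the MDP/end\nobreakdash-component setting, the chain of arguments behind Theorems~\ref{thm:bottom_component} and~\ref{thm:acceptanceOfSubsetAutomatonApp}: fix a scheduler witnessing that $\mec$ is accepting, pass to the induced Markov chain, and reduce both claims to the Markov\nobreakdash-chain statements already proved. So first I would fix a scheduler $\sched$ witnessing that $\mec$ is an accepting EC of $\mdp \prodAut \aut$; by~\cite{BiancoA95} it can be taken memoryless on the product, hence it induces a (parity) Markov chain in which $\mec$ is a bottom SCC whose minimal transition priority is even, say $2a$. (Recall that, in Section~\ref{sec:determinisation}, $\aut = \det(\baut)$ denotes the DPA $\sdToDet(\bToSd(\baut))$, so ``$\mec$ accepting'' means exactly this.)

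For item~(1) I would project along the quotient map $\Pi\colon (m,(r,f)) \mapsto (m,r)$. This is legitimate because the subset transition commutes with $\reached$, i.e.\ $\reached(\amat(d,\sigma)) = \amat(\reached(d),\sigma)$; hence $\Pi$ is a graph epimorphism from $\mdp \prodAut \aut$ onto $\mdp {\prodMDP} \ssaut$ (using the MDP version of Lemma~\ref{lem:productSubsetIsomorphicQuotientMC}, the identification of $(m,[d])$ with $(m,\reached(d))$). Taking as enabled actions at a state $(m,R)\in[\mec]$ the set $\setcond{\sched(m,(r,f))}{(m,(r,f)) \in \mec,\ r = R}$, closure of $[\mec]$ follows from closure of $\mec$, and strong connectivity of the induced digraph on $[\mec]$ follows because it is the $\Pi$\nobreakdash-image of the strongly connected digraph induced by $\sched$ on $\mec$. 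By the standard end\nobreakdash-component theory this makes $[\mec]$ an EC; this is the EC-analogue of Theorem~\ref{thm:bottom_component}(1).

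For item~(2) I would work in the $\sched$\nobreakdash-induced chain (carrying the $\aut$\nobreakdash-component), in which $\mec$ is a bottom SCC with minimal priority $2a$, and apply Lemma~\ref{lem:parityOfSccAndAcceptingSccOfSemidet} thread by thread: for every $(m,(r,f)) \in \mec$ the breakpoint state $f(a)$ is such that $\mec$'s induced chain reaches, from $(m,f(a))$, an accepting bottom SCC of the corresponding $\sdaut$\nobreakdash-product with probability $1$. Since $\sched$ is an $\mdp$\nobreakdash-scheduler and $\sdaut$ restricted to breakpoint states is deterministic (Definition~\ref{def:semiDetAut}), each such accepting bottom SCC is the state set of an accepting EC of $\mdp \prodAut \sdaut$: the enabled actions are those used by $\sched$, and the EC is accepting because $\sdfinal = \bpfinal = \aacc_{\epsilon}$ contains one of its transitions and there is no rejecting component to avoid. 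I would take $\mec'$ to be (the EC spanned by) such an SCC; the abstraction $\langle\,\cdot\,\rangle$ underlying $\mec' = \langle \mec \rangle$ is well defined because minimality of $2a$ forces the minimal\nobreakdash-rejecting index in the DPA construction to stay above $a$ on every transition taken within $\mec$, so slot $a$ is never deleted and its breakpoint thread evolves consistently.

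Finally, to exhibit a state of the claimed shape, I would pick any $(m,d) = (m,(r,f)) \in \mec$ and any $q \in r = \reached(d)$; by construction of the transit transitions $\sdmattra$, the breakpoint state $(\setnocond{q}, 1, \emptyset)$ is reachable from the subset part and therefore occupies a slot of $f$, and the thread of that slot reaches an accepting bottom SCC with probability $1$ — choosing $\mec'$ to contain that SCC yields a state $(m,(R,1,\emptyset))$ with $R = \setnocond{q} \subseteq \reached(d) = [d]$, which is precisely the MDP/EC instance of the third characterisation of Theorem~\ref{thm:acceptanceOfSubsetAutomatonApp}. I expect the main obstacle to be the bookkeeping across the three products and the two regimes — bottom SCCs of the $\sched$\nobreakdash-induced chain versus end components of the MDP — together with verifying that the slot index $a$ realising the minimal priority is genuinely invariant along $\mec$; that invariance is the single point where the detailed structure of the DPA construction of Section~\ref{sec:determinisation} is needed, whereas closure, strong connectivity, and the commutation of $\reached$ with transitions are routine and parallel the Markov\nobreakdash-chain proofs.
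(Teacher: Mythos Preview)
The paper does not prove this theorem explicitly; Section~\ref{sec:mdp} and Appendix~\ref{app:MDP} treat the MDP case as a routine adaptation of the Markov\nobreakdash-chain results (Theorems~\ref{thm:bottom_component} and~\ref{thm:acceptanceOfSubsetAutomatonApp}). Your overall strategy---fix a memoryless scheduler on the product, pass to the induced Markov chain, and invoke the MC statements---is exactly the intended route, and your argument for item~(1) is sound.

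For item~(2), however, the final step has a gap. You pick an arbitrary $q \in r$, note that $(\setnocond{q},1,\emptyset)$ occupies \emph{some} slot of $f$, and then conclude that ``choosing $\mec'$ to contain that SCC yields a state $(m,(R,1,\emptyset))$ with $R=\setnocond{q}$''. Two things go wrong: the slot holding $(\setnocond{q},1,\emptyset)$ need not be slot~$a$ (and threads in slots above $a$ may be deleted inside $\mec$), and even if it survives, the accepting bottom SCC that thread \emph{reaches} need not \emph{contain} the starting state $(m,(\setnocond{q},1,\emptyset))$. The clean argument---mirroring the $1)\Rightarrow 2)$ direction in the appendix proof of Theorem~\ref{thm:acceptanceOfSubsetAutomaton}---is to stay with slot~$a$ throughout: take a transition in $\mec$ realising the minimal priority $2a$, so that the target satisfies $f'(a)=(R',j\oplus_k 1,\emptyset)$ with $R'\subseteq r'=[d']$ and $(m',d')\in\mec$; since accepting transitions of the slot-$a$ thread occur infinitely often and the index cycles through $\natIntK$, some such target has index~$1$, and that state lies \emph{inside} the accepting EC $\mec'$ by construction. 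This also resolves the tension you create by first defining $\mec'$ via the slot-$a$ projection and then ``choosing $\mec'$'' differently at the end.

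A smaller point: in the paper $\langle d\rangle$ is the root abstraction $(l(\epsilon),h(\epsilon),\bigcup_i l(i))$ of a generalised history tree (Section~\ref{subset}), not the slot-$a$ projection of a DPA state. Your reading of $\aut=\det(\baut)$ as the DPA of Section~\ref{sec:determinisation} and of $\langle\mec\rangle$ as the slot-$a$ image is a reinterpretation; it captures the intended content, but you should say so explicitly, since the statement's notation is informal under either reading.
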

Note that, since each EC $\mec$ of $\mdp \prodAut \aut$ is either accepting or rejecting, finding an accepting EC $\mec' = \langle \mec \rangle$ of $\mdp \prodAut \sdaut$ allows us to derive that $\mec$ is accepting as well.

For RMDPs, it suffices to analyse maximal end components (MEC). 
We define a MEC as accepting if it contains an accepting EC.
MECs are easy to construct and, for each accepting pair, they are easy to evaluate:
it suffices to remove the rejecting transitions, repeat the construction of MECs on the remainder, and check if there is any that contains an accepting transition.
Once accepting MECs are determined, their states are assigned a winning probability of $1$, and evaluating the complete MDP reduces to a maximal reachability analysis, which %in turn
reduces to solving an LP problem. 
It can therefore be solved in polynomial~time.

These two theorems allow us to use a layered approach of lazy determinisation for MDPs, which is rather similar to the one described for Markov chains.
We start with the quotient MDP, and consider an arbitrary MEC $\mec$.
By using the accepting conditions of the subset automata $\ssaut^{u}$ and $\ssaut^{o}$, we check whether $\mec$ is accepting or rejecting, respectively.
If this test is inconclusive, we first refine $\mec$ by a breakpoint construction, and finally by a multi-breakpoint construction.
We remark that, as for Markov chains, the breakpoint and multi-breakpoint constructions can be considered as oracles:
when we have identified the accepting MECs, a plain reachability analysis is performed on the quotient MDP.

Theorem~\ref{thm:acceptanceOfSubsetAutomatonEC} makes clear what needs to be calculated in order to classify an EC---and thus a MEC---as accepting, while Corollary~\ref{cor:MDPestimate} allows for applying this observation in the quantitative analysis of an MDP, and also to smoothly combine this style of reasoning with the lazy approach.
This completes the picture of~\cite{Courcoubetis+Yannakakis/95/Markov} for the quantitative analysis of MDPs, which is technically the same as their analysis of concurrent probabilistic programs~\cite{Courcoubetis+Yannakakis/95/Markov}.

In general, it is possible to compute $\inf_{\sched} \prob^{\mdp, \sched}(\baut_{\neg\phi})$ instead of $\sup_{\sched} \prob^{\mdp, \sched}(\baut_{\phi})$, and then use $\sup_{\sched} \prob^{\mdp, \sched}(\baut_{\phi}) = 1 - \inf_{\sched} \prob^{\mdp, \sched}(\baut_{\neg\phi})$.
For computing the infimum, we would have to invert the terms: an EC is rejecting, if it is not accepting, and a MEC is rejecting, if it contains a rejecting EC.
This detour has a principle computational disadvantage: a witness that an EC is rejecting is more involved than a witness that it is accepting, as it requires to try all combinations of subsets (see below).

We have not implemented this detour.
However, this approach may be worthwhile trying when $\baut_{\neg\phi}$ appears to be better suited for analysis.
This could, e.g., be the case, when  $\baut_{\neg\phi}$ is deterministic, or when  $\baut_{\neg\phi})$ is significantly smaller than $\baut_{\phi}$.
Given that the construction of $\baut_{\phi}$ and $\baut_{\neg\phi}$ are cheap in practice, it might be worth computing both.

It is worthwhile to point out that, in principle, the qualitative analysis from~\cite{Courcoubetis+Yannakakis/95/Markov} could replace Theorem~\ref{thm:acceptanceOfSubsetAutomatonEC} when using this detour.
This would imply accepting the computational drawbacks and losing the choice between the two automata to start with.
The techniques from Section 4.2 of~\cite{Courcoubetis+Yannakakis/95/Markov} also do not directly allow for analysing MECs only, and some further work would have to be invested to allow for focussing on rejecting MECs, and to facilitate it when using the detour through $\baut_{\neg\phi}$.

As said before, from a theoretical point of view, to compute $\sup_{\sched} \prob^{\mdp, \sched}(\baut_{\phi})$ it is enough to consider the maximal end components of $\mdp \prodAut \det(\baut_{\phi})$ instead of the bottom SCCs as in the Markov chain case and then compute the maximal probability to reach the accepting MECs.
From the practical point of view, the algorithms $\ComputeAcceptingScc$ and $\IsAccepting$ shown in Figure~\ref{fig:ComputeAcceptingScc} can be easily adapted to MDPs as follows:
in $\ComputeAcceptingScc$, at line 4 $\ComputeScc$ is replaced by $\ComputeMec$ that computes the MECs of the product between the MDP $\mdp$ and the subset automaton $\ssaut$;
in $\IsAccepting(\mdp, \baut, \mec)$, line 1 has to be replaced by a loop on all $(m,R) \in \mec$ since checking the acceptance from a single state of the MEC $\mec$ does not suffice.
However, if $(m,R)$ is known to be not accepting, we can exclude all states that cannot avoid reaching $(m,R)$, thus the breakpoint construction can be performed on a reduced number of states.

\begin{figure}
	\centering
% 	\vskip-4mm
	\begin{tikzpicture}[->,>=stealth,auto,xscale=0.8]
% 		\tiny
		\path[use as bounding box] (-8.2,4.6) rectangle (8.4,0);
		\node (mdp) at (-6,3) {\begin{minipage}{3.5cm}\centering$\mdp$\\$\forall m \in \pstates.\,\lang(m) = m$\end{minipage}};
		\node (mdpc) at ($(mdp) + (0,0.7)$) {};
		\node (mdp_b) at ($(mdpc) + (-1.5,0)$) {$b$};
		\node (mdp_c) at ($(mdpc) + (0,0.75)$) {$c$};
		\node (mdp_a) at ($(mdpc) + (1.5,0)$) {$a$};
		
		\draw ($(mdp_b.north) + (0,0.3)$) to (mdp_b.north);
		\draw (mdp_b) to node[above, very near end] {\scriptsize$1$} (mdp_c);
		\draw (mdp_c) to node[above, very near end] {\scriptsize$1$} (mdp_a);
		\draw (mdp_a) to node[below, very near end] {\scriptsize$1$} (mdp_b);
		\draw (mdp_a) edge [loop right, distance=7mm] node [below, very near end] {\scriptsize$1$} (mdp_a);

		\node (baut) at (-0.25,3) {$\baut$};
		\node (bautc) at ($(baut) + (0,1)$) {};
		\node (baut_x) at ($(bautc) + (-1,0)$) {$x$};
		\node (baut_y) at ($(bautc) + (1,0)$) {$y$};
		
		\draw ($(baut_x.north) + (0,0.3)$) to (baut_x.north);
		\draw (baut_x) to node {$a$} (baut_y);
		\draw (baut_x) edge [loop left, distance=7mm] node [left] {$a$} (baut_x);
		\draw (baut_x) to[in=30, out=60, distance=7mm] node[right] {$b$} (baut_x);
		\draw (baut_x) to[in=-50, out=-20, distance=7mm] node[right] {$c$} (baut_x);
		\draw (baut_y) edge [loop right, distance=7mm, double] node [right] {$a$} (baut_y);

		\node (mdpss) at (5.5,3) {$\mdp \prodMDP \ssaut$};
		\node (mdpssc) at ($(mdpss) + (0,0.7)$) {};
		\node (mdpss_b) at ($(mdpssc) + (-2,0)$) {$b,\setnocond{x}$};
		\node (mdpss_c) at ($(mdpssc) + (0,0.75)$) {$c,\setnocond{x}$};
		\node (mdpss_a) at ($(mdpssc) + (2,0)$) {$a,\setnocond{x,y}$};
		
		\draw ($(mdpss_b.north) + (0,0.3)$) to (mdpss_b.north);
		\draw (mdpss_b) to[above, very near end] node {\scriptsize$1$} (mdpss_c);
		\draw (mdpss_c) to[above, very near end] node {\scriptsize$1$} (mdpss_a);
		\draw (mdpss_a) to[below, very near end] node {\scriptsize$1$} (mdpss_b);
		\draw (mdpss_a) edge [loop above, distance=7mm] node [right, very near end] {\scriptsize$1$} (mdpss_a);

		\node (mdpbp) at (-1,-0.2) {};
		\node (mdpbpl) at ($(mdpbp) + (6.5,0.5)$) {$\mdp \prodAut \bbpaut$};
		\node (mdpbpc) at ($(mdpbp) + (0,1.2)$) {};
		
		\node (mdpbp_b_x_) at ($(mdpbpc) + (-6,0)$) {$b,(\setnocond{x}, 1, \emptyset)$};
		\node (mdpbp_c_x_) at ($(mdpbpc) + (-2.7,0)$) {$c,(\setnocond{x}, 1, \emptyset)$};
		\node (mdpbp_a_xy_) at ($(mdpbpc) + (0.8,0)$) {$a,(\setnocond{x,y}, 1, \emptyset)$};
		\node (mdpbp_a_xy_y) at ($(mdpbpc) + (4.7,0)$) {$a,(\setnocond{x,y}, 1, \setnocond{y})$};
		\node (mdpbp_a_y_) at ($(mdpbpc) + (8.25,0)$) {$a,(\setnocond{y}, 1, \emptyset)$};
		\node (mdpbp_a_x_) at ($(mdpbpc) + (-2.7,0.75)$) {$a,(\setnocond{x}, 1, \emptyset)$};
				
		\draw (mdpbp_b_x_) to node[above, very near end] {\scriptsize$1$} (mdpbp_c_x_);
		\draw (mdpbp_c_x_) to node[above, very near end] {\scriptsize$1$} (mdpbp_a_xy_);
		\draw (mdpbp_a_xy_) to node[above, very near end] {\scriptsize$1$} (mdpbp_a_xy_y);
		\draw (mdpbp_a_xy_) .. controls ($(mdpbp_a_xy_) + (-1,-0.75)$) and ($(mdpbp_b_x_) + (1,-0.75)$) .. node [below, very near end] {\scriptsize$1$} (mdpbp_b_x_);
		\draw (mdpbp_a_xy_y) edge [loop above, distance=7mm] node [right, very near end] {\scriptsize$1$} (mdpbp_a_xy_y);
		\draw[-] (mdpbp_a_xy_y) .. controls ($(mdpbp_a_xy_y) + (0,-1)$) and ($0.5*(mdpbp_a_xy_y) + 0.5*(mdpbp_b_x_) + (2,-1)$) .. ($0.5*(mdpbp_a_xy_y) + 0.5*(mdpbp_b_x_) + (0,-1)$); 
		\draw ($0.5*(mdpbp_a_xy_y) + 0.5*(mdpbp_b_x_) + (0,-1)$) .. controls ($0.5*(mdpbp_a_xy_y) + 0.5*(mdpbp_b_x_) + (-2,-1)$) and ($(mdpbp_b_x_) + (0,-1)$) .. node [below, very near end] {\scriptsize$1$} (mdpbp_b_x_);
		\draw (mdpbp_a_x_) to[bend right=10] node[above, very near end] {\scriptsize$1$} (mdpbp_b_x_);
		\draw (mdpbp_a_x_) to[bend left=10] node[right, very near end] {\scriptsize$1$} (mdpbp_a_xy_);
		\draw (mdpbp_a_y_) edge [loop above, distance=7mm,double] node [right, very near end] {\scriptsize$1$} (mdpbp_a_y_);
	\end{tikzpicture}
	\caption{Finding accepting ECs in MDPs: MDP $\mdp$, NGBA $\baut$, $\ssaut = \bToSub(\baut)$, $\bbpaut = \bToBp(\baut)$}
	\label{fig:exampleMDPs}
\end{figure}
For MDPs, differently from the subset and breakpoint construction, for the multi-breakpoint case testing only one $(m, R) \in \mec$ in general is not sufficient; 
consider the MDP $\mdp$ and the NGBA $\baut$ depicted in Figure~\ref{fig:exampleMDPs}.
We first consider the product MDP $\mdp \prodMDP \ssaut$, containing one MEC.
We first try to decide whether it is accepting by considering the state $(c,\setnocond{x})$.
The only nonempty subset of $\setnocond{x}$ is the set itself, thus we look for accepting MECs in $\mdp \prodAut \bbpaut_{(c,(\setnocond{x}, 1, \emptyset))}$.
It is clear that from $(c,(\setnocond{x}, 1, \emptyset))$ no accepting MECs can be reached.
In contrast to the MC setting, we cannot conclude that the original MEC is not accepting.
Instead, we remove $(c,\setnocond{x})$ from the set of states to consider, as well as $(b,\setnocond{x})$, from which we cannot avoid reaching $(c,\setnocond{x})$.
The state left to try is $(a,\setnocond{x,y})$, where we have two transitions available.
Indeed, in $\mdp \prodAut \bbpaut$ the singleton MEC $\setnocond{(a,(\setnocond{y}, 1, \emptyset))}$ is accepting.
Thus the MEC of $\mdp \prodMDP \ssaut$ is accepting, though only one of its states---$\setnocond{(a,\setnocond{x,y})}$---allows us to conclude this, and we need to select the correct subset, $\setnocond{y}$, to start with.

\section{Implementation and Results}
\label{sec:experiments}

We have implemented our approach in our \iscasmc tool~\cite{iscasmc} in both explicit and BDD-based symbolic versions.
We use LTL formulas to specify properties, and apply \spot~\cite{DuretLutz11} to translate them to NGBAs.
Our experimental results suggest that our technique provides a practical approach for checking LTL properties for probabilistic systems.
A web interface to \iscasmc can be found at \url{http://iscasmc.ios.ac.cn/}.
For our experiments, we used a 3.6 GHz Intel Core i7-4790 with 16GB 1600 MHz DDR3 RAM.

We consider a model~\cite{ClothH05} of a distributed file server system used by Google in the \prism model version of~\cite{BaierHHHK13}.
This model is a continuous-time Markov chain, but we can apply our methods on its embedded (discrete time) MC.

\begin{figure}[t]
	\centering
	\begin{tikzpicture}
	\begin{axis}
	[
	width=60mm,
	height=50mm,
	title={\begin{minipage}{29mm}Average runtimes (Google file server)\end{minipage}}
	]
	\addplot[
	point meta=explicit symbolic,
	blue,
	mark=+
	]
	table[meta=label] {
	x y label
	5 6.55 prism
	6 6.6 prism
	7 9.74 prism
	8 3.96 prism
	9 9.75 prism
	10 12.99 prism
	11 15.87 prism
	12 28.07 prism
	13 10.19 prism
	14 28.25 prism
	15 40.74 prism
	16 52.08 prism
	17 54.92 prism
	18 64.53 prism
	19 76.59 prism
	20 79.71 prism
	21 115.09 prism
	22 93.9 prism
	23 107.1 prism
	24 140.73 prism
	25 150.28 prism
	26 133.26 prism
	27 174.03 prism
	28 170.21 prism
	29 164.62 prism
	30 193.98 prism
	};
	\addplot[
	point meta=explicit symbolic,
	red,
	mark=x
	]
	table[meta=label] {
	x y label
	5 1.64 iscasmc
	6 7.64 iscasmc
	7 1.64 iscasmc
	8 1.77 iscasmc
	9 1.84 iscasmc
	10 1.88 iscasmc
	11 1.86 iscasmc
	12 5.05 iscasmc
	13 1.99 iscasmc
	14 2.13 iscasmc
	15 2.12 iscasmc
	16 2.04 iscasmc
	17 2.68 iscasmc
	18 2.32 iscasmc
	19 6.36 iscasmc
	20 8.21 iscasmc
	21 2.54 iscasmc
	22 2.94 iscasmc
	23 6.34 iscasmc
	24 8.76 iscasmc
	25 10.3 iscasmc
	26 4.95 iscasmc
	27 8.29 iscasmc
	28 11.67 iscasmc
	29 9.11 iscasmc
	30 12.54 iscasmc
	};
	\end{axis}
	\end{tikzpicture}
	\ \
	\begin{tikzpicture}
	\begin{axis}
	[
	width=60mm,
	height=50mm,
	title={\begin{minipage}{29mm}Average runtimes (Mutual exclusion)\end{minipage}}
	]
	\addplot[
	point meta=explicit symbolic,
	blue,
	mark=+
	]
	table[meta=label] {
	x y label
	5 11.51 prism
	6 18.28 prism
	7 27.07 prism
	8 25.22 prism
	9 22.84 prism
	10 26.2 prism
	11 26.22 prism
	12 34.28 prism
	13 33.26 prism
	14 41.3 prism
	15 50.48 prism
	16 65.57 prism
	17 86.71 prism
	18 86.62 prism
	19 86.42 prism
	20 84.3 prism
	21 108.63 prism
	22 102.33 prism
	23 153.44 prism
	24 134.03 prism
	25 175.07 prism
	26 128.92 prism
	27 180.07 prism
	28 160.48 prism
	29 182.4 prism
	30 190.28 prism
	};
	\addplot[
	point meta=explicit symbolic,
	red,
	mark=x
	]
	table[meta=label] {
	x y label
	5 2.19 iscasmc
	6 6.1 iscasmc
	7 9.74 iscasmc
	8 8.12 iscasmc
	9 2.6 iscasmc
	10 5.37 iscasmc
	11 5.9 iscasmc
	12 13.08 iscasmc
	13 6.99 iscasmc
	14 3.66 iscasmc
	15 8.15 iscasmc
	16 11.58 iscasmc
	17 8.67 iscasmc
	18 13.9 iscasmc
	19 19.4 iscasmc
	20 12.35 iscasmc
	21 9.9 iscasmc
	22 21.16 iscasmc
	23 25.68 iscasmc
	24 11.3 iscasmc
	25 15.51 iscasmc
	26 22.15 iscasmc
	27 27.05 iscasmc
	28 22.3 iscasmc
	29 22.93 iscasmc
	30 34.51 iscasmc
	};
	\end{axis}
	\end{tikzpicture}
	\ \
	\begin{tikzpicture}
		\node at (0,0) {\textcolor{blue}{$+$ \prism}};
		\node at (0,-0.5) {\textcolor{red}{$\times$ \iscasmc}};
		\node at (0,-2) {};
	\end{tikzpicture}
	\caption{Plots for random formulas}
	\label{fig:random}
\end{figure}

To estimate the scalability of our approach, we have applied it on increasingly larger formulas.
For this, we have applied \spot to generate 100 random formulas for each formula size from 2 to 30 based on the atomic propositions of~\cite{ClothH05}.
On these formulas we have then applied our model checker as well as \prism.
We used a timeout of 5 minutes and represent model checking runs which timed out as if they were performed in 5 minutes.
In the left part of Figure~\ref{fig:random} we plot the averages of the different runs for both tools (blue ``$+$'' for \prism and red ``$\times$'' for \iscasmc).
As shown, \iscasmc is very close to \prism for smaller formula sizes and its relative performance improves for larger ones.
Unfortunately, we were not able to compare our approach with \rabinizerthree~\cite{KomarkovaK14}, because the tool failed in the considered random formulas, such that we could not produce a plot. 
The problem seems to be caused by a preprocessing step on the formula before translating it to an automaton.

Next, we consider a mutual exclusion protocol~\cite{PZ86} (with four processes) which is an MDP.
Again, we consider random formulas of given lengths.
As seen in the right part of Figure~\ref{fig:random}, the general picture is similar to the previous case.

As a second case study, we consider a set of properties analysed previously in~\cite{ChatterjeeGK13}.
As there, we aborted tool runs when they took more than 30 minutes or needed more than 4GB of RAM.
The comparison with the results from~\cite{ChatterjeeGK13} cannot be completely accurate:
unfortunately, their implementation is not available on request to the authors, and for their results they did not state the exact speed of the machine used.
By comparing the runtimes stated for \prism in~\cite{ChatterjeeGK13} with the corresponding runtimes we obtained on our machine, we estimate that our machine is faster than theirs by about a factor of 1.6.
Thus, we have included the values from~\cite{ChatterjeeGK13} divided by 1.6 to take into account the estimated effect of the machine.
\begin{table}[t]
	\centering
	\caption{Runtime comparison for the randomised mutual exclusion protocol}\label{tab:mutual}
\resizebox{1\textwidth}{!}{
% \tiny
	\renewcommand{\tabcolsep}{2mm}
	\begin{tabular}{clddddddd}
	\toprule
	& & \multicolumn{7}{c}{time} \\
	\multicolumn{1}{c}{property} & \multicolumn{1}{c}{n} & \multicolumn{1}{c}{BP expl.} & \multicolumn{1}{c}{BP BDD} & \multicolumn{1}{c}{RB expl.} & \multicolumn{1}{c}{RB BDD} & \multicolumn{1}{c}{\prism} & \multicolumn{1}{c}{Rabinizer3} & \multicolumn{1}{c}{scaled \cite{ChatterjeeGK13}} \\
	\cmidrule(r{2pt}){1-2} \cmidrule(lr){3-9}
	\multirow{2}{6.9cm}{\begin{minipage}{6.2cm}$\ltlPmin ( \ltlGF p_1{=}10 \wedge \ltlGF p_2{=}10$\\\hspace*{0.5cm}$\wedge \ltlGF p_3{=}10 \wedge \ltlGF p_4{=}10 )$\end{minipage}\ (3)} & 4 & \textbf{3} & 5 & 15 & 28 & \multicolumn{1}{c}{--} & 104 & 23 \\
	& 5 & \textbf{19} & 21 & \multicolumn{1}{c}{--} & 104 & \multicolumn{1}{c}{--} & 1478 & 380 \\
	\midrule
	\multirow{3}{6.9cm}{\begin{minipage}{6.2cm}$\ltlPmax ( (\ltlGF p_1{=}0 \vee \ltlFG p_2{\neq} 0)$\\\hspace*{0.5cm}$\wedge (\ltlGF p_2{=}0 \vee \ltlFG p_3{\neq} 0) )$\end{minipage}\ (4)}& 3 & \textbf{1} & 2 & 2 & 4 & 138 & 2 & \textbf{1} \\
	& 4 & \textbf{3} & 7 & 4 & 15 & \multicolumn{1}{c}{--} & 20 & 18 \\
	& 5 & \textbf{19} & 32 & 35 & 76 & \multicolumn{1}{c}{--} & 319 & 299 \\
	\midrule
	\multirow{3}{6.9cm}{\begin{minipage}{6.2cm}$\ltlPmax ( (\ltlGF p_1{=}0 \vee \ltlFG p_1{\neq} 0)$\\\hspace*{0.5cm}$\wedge (\ltlGF p_2{=}0 \vee \ltlFG p_2{\neq} 0) )$\end{minipage}\ (5)}& 3 & 2 & 2 & 2 & 4 & 41 & 2 & \textbf{1}\\
	& 4 & \textbf{3} & 8 & 4 & 17 & 336 & 19 & 18\\
	& 5 & \textbf{19} & 34 & 45 & 68 & \multicolumn{1}{c}{--} & 314 & 289 \\
	\midrule
	\multirow{3}{6.9cm}{\begin{minipage}{6.2cm}$\ltlPmax ( (\ltlGF p_1{=}0 \vee \ltlFG p_2{\neq} 0)$\\\hspace*{0.5cm}$\wedge (\ltlGF p_2{=}0 \vee \ltlFG p_3{\neq} 0)$\\\hspace*{0.5cm}$\wedge (\ltlGF p_3{=}0 \vee \ltlFG p_1{\neq} 0) )$\end{minipage}\ (6)}& 3 & \textbf{1} & 2 & 2 & 6 & \multicolumn{1}{c}{--} & 5 & 4 \\
	& 4 & \textbf{3} & 9 & 7 & 27 & \multicolumn{1}{c}{--} & 52 & 47 \\
	& 5 & \textbf{29} & 38 & 99 & 124 & \multicolumn{1}{c}{--} & 871 & 762\\
	\midrule
	\multirow{3}{6.9cm}{\begin{minipage}{6.2cm}$\ltlPmax ( (\ltlGF p_1{=}0 \vee \ltlFG p_1{\neq} 0)$\\\hspace*{0.5cm}$\wedge (\ltlGF p_2{=}0 \vee \ltlFG p_2{\neq} 0)$\\\hspace*{0.5cm}$ \wedge (\ltlGF p_3{=}0 \vee \ltlFG p_3{\neq} 0) )$\end{minipage}\ (7)}& 3 & \textbf{1} & 2 & 2 & 9 & \multicolumn{1}{c}{--} & 5 & 5 \\
	& 4 & \textbf{3} & 9 & 12 & 41 & \multicolumn{1}{c}{--} & 50 & 49 \\
	& 5 & \textbf{29} & 38 & \multicolumn{1}{c}{--} & 171 & \multicolumn{1}{c}{--} & 849 & 792 \\
	\midrule
	\multirow{3}{6.9cm}{\begin{minipage}{6.2cm}$\ltlPmin ( (\ltlGF p_1{\neq} 10 \vee \ltlGF p_1{=}0 \vee \ltlFG p_1{=}1)$\\\hspace*{0.5cm}$\wedge \ltlGF p_1{\neq} 0 \wedge \ltlGF p_1{=}1 )$\end{minipage}\ (8)}& 3 & \textbf{1} & 2 & \textbf{1} & 3 & \textbf{1} & \textbf{1} & \textbf{1} \\
	& 4 & \textbf{3} & 6 & \textbf{3} & 10 & 8  & 13 & 6\\
	& 5 & \textbf{17} & 25 & \textbf{17} & 41 & 123 & 208 & 91  \\
	\midrule
	\multirow{3}{6.9cm}{\begin{minipage}{6.2cm}$\ltlPmax ( (\ltlG p_1{\neq} 10 \vee \ltlG p_2{\neq} 10 \vee \ltlG p_3{\neq} 10)$\\\hspace*{0.5cm}$ \wedge (\ltlFG p_1{\neq} 1 \vee \ltlGF p_2{=}1 \vee \ltlGF p_3{=}1)$\\\hspace*{0.5cm}$ \wedge (\ltlFG p_2{\neq} 1 \vee \ltlGF p_1{=}1 \vee \ltlGF p_3{=}1) )$\end{minipage}\ (9)}& 3 & \textbf{2} & 6 & \textbf{2} & 4 & \multicolumn{1}{c}{--} & 982 & 50 \\
	& 4 & 9 & 16 & \textbf{7} & 14 & \multicolumn{1}{c}{--} & 1718 & 440 \\
	& 5 & 136 & 60 & 91 & \textbf{56} & \multicolumn{1}{c}{--} & \multicolumn{1}{c}{--} & \multicolumn{1}{c}{--} \\
	\midrule
	\multirow{3}{6.9cm}{\begin{minipage}{6.2cm}$\ltlPmin ( (\ltlFG p_1{\neq} 0 \vee \ltlFG p_2{\neq} 0 \vee \ltlGF p_3{=}0)$\\\hspace*{0.5cm}$ \vee (\ltlFG p_1{\neq} 10 \wedge \ltlGF p_2{=}10 \wedge \ltlGF p_3{=}10)$\end{minipage}\ (10)}& 3 & \textbf{2} & 3 & \textbf{2} & 5 & 169 & 3 & \textbf{2}\\
	& 4 & 79 & 12 & \textbf{4} & 18 & \multicolumn{1}{c}{--} & 32 & 21 \\
	& 5 & \multicolumn{1}{c}{--} & 48 & \textbf{44} & 69 & \multicolumn{1}{c}{--} & 480 & 339 \\
	\bottomrule
	\multicolumn{9}{p{19cm}}{Note: The entries in column ``scaled \cite{ChatterjeeGK13}'' are the runtimes from \cite{ChatterjeeGK13} divided by $1.6$ as we used an estimated $1.6$ times faster machine.}
	\end{tabular}
}
\end{table}
In Table~\ref{tab:mutual} we provide the results obtained.
Here, ``property'' and ``n'' are as in~\cite{ChatterjeeGK13} and depict the property and the size of the model under consideration.
We report the total runtime in seconds (``time'') for the explicit-state (``BP expl.'') and the BDD-based symbolic (``BP BDD'') implementations of the multi-breakpoint construction, as well as the explicit and symbolic (``RB expl.'', ``RB BDD'') of the Rabin-based implementation.
In both BP and RB cases, we first apply the subset and breakpoint steps.
We also include the runtimes of \prism (``\prism'') and of the tool used in~\cite{ChatterjeeGK13} (``scaled~\cite{ChatterjeeGK13}'') developed for a subclass of LTL formulas and its generalisation to full LTL~\cite{EsparzaK14} implemented in \rabinizerthree~\cite{KomarkovaK14} (``\rabinizerthree'', for which we thank the authors for providing the source code).
We mark the best running times (obtained by rounding the actual times) with bold font.

The runtime of our new approaches is almost always better than the running time of other methods.
In many cases, the multi-breakpoint approach performs better than new the Rabin-based approach (restricted to the single undecided end component), but not always.
Broadly speaking, this can happen when the breakpoint construction has to consider many subsets as starting points for one end component, while the Rabin determinisation does not lead to a significant overhead compared to the breakpoint construction. 
(E.g., when the history trees have at most three or four nodes.)
Thus, both methods are of value.
Both of them are faster than the specialised algorithm of~\cite{ChatterjeeGK13} and \rabinizerthree.
We assume that one reason for this is that this method is not based on the evaluation of end components in the subset product, and also its implementation might not involve some of the optimisations we apply.
In most cases, the explicit-state implementation is faster than the BDD-based approach, which is, however, more memory-efficient.

\begin{table}[t]
	\centering
\renewcommand{\tabcolsep}{2mm}
	\caption{Runtime comparison for the workstation cluster protocol.
		In the table, we use the following shortcuts for the formulas:\newline
		$
		\mathit{prop}\ltlU_{\!k} := \ltlP (\mathit{left\_{n}}{=}n \ltlU (\mathit{left\_{n}}{=}n{-}1 \ltlU (\ldots \ltlU (\mathit{left\_{n}}{=}n{-}k \ltlU\mathit{right\_{n}}{\neq}n) \ldots )
		$%
		, \newline
		$
		\mathit{prop}{\ltlGF}{\wedge}_{\!k} := \ltlP ( \ltlGF \mathit{left\_{n}}{=}n \wedge \bigvee_{i=0}^k \ltlFG \mathit{right\_{n}}{=}n{-}i )
		$%
		, and \newline
		$
		\mathit{prop}{\ltlGF}{\vee}_{\!k} := \ltlP ( \ltlGF \mathit{left\_{n}}{=}n \vee \bigvee_{i=0}^k \ltlFG \mathit{right\_{n}}{=}n{-}i )
		$%
		.
	}
	\label{tab:cluster}
% \resizebox{1\textwidth}{!}{%
	\begin{tabular}{ldddddd}
	\toprule
		& %& 
		\multicolumn{6}{c}{time} \\
		\multicolumn{1}{c}{property} & 
		\multicolumn{1}{c}{BP expl.} & \multicolumn{1}{c}{BP BDD} & \multicolumn{1}{c}{RB expl.} & \multicolumn{1}{c}{RB BDD} & \multicolumn{1}{c}{\prism} & \multicolumn{1}{c}{Rabinizer3} \\
	\cmidrule(r{2pt}){1-1} \cmidrule(lr){2-7}
	% subset
		$\mathit{prop\ltlU}_{\!9}$ 
		& \textbf{2} & \textbf{2} & \textbf{2} & \textbf{2} & 6 & 25 \\
		$\mathit{prop\ltlU}_{\!10}$ 
		& \textbf{2} & 3 & \textbf{2} & 3 & 23 & 121 \\
		$\mathit{prop\ltlU}_{\!11}$ 
		& \textbf{3} & 4 & \textbf{3} & 4 & 95 & 686 \\
		$\mathit{prop\ltlU}_{\!12}$ 
		& \textbf{4} & 5 & \textbf{4} & 5 & \multicolumn{1}{c}{--} & \multicolumn{1}{c}{--} \\
		$\mathit{prop\ltlU}_{\!13}$ 
		& \textbf{7} & 8 & \textbf{7} & 8 & \multicolumn{1}{c}{--} & \multicolumn{1}{c}{--} \\
		$\mathit{prop\ltlU}_{\!14}$ 
		& \textbf{7} & 9 & \textbf{7} & 9 & \multicolumn{1}{c}{--} & \multicolumn{1}{c}{--} \\
	\midrule
	% breakpoint
		$\mathit{prop{\ltlGF}{\wedge}}_{\!2}$ 
		& \textbf{1} & \textbf{1} & \textbf{1} & \textbf{1} & 2 & \textbf{1} \\
		$\mathit{prop{\ltlGF}{\wedge}}_{\!3}$ 
		& \textbf{1} & \textbf{1} & \textbf{1} & \textbf{1} & 48 & \textbf{1} \\
		$\mathit{prop{\ltlGF}{\wedge}}_{\!4}$ 
		& 2 & \textbf{1} & \textbf{1} & \textbf{1} & \multicolumn{1}{c}{--} & 2 \\
		$\mathit{prop{\ltlGF}{\wedge}}_{\!5}$ 
		& \textbf{1} & \textbf{1} & \textbf{1} & 2 & \multicolumn{1}{c}{--} & 14 \\
		$\mathit{prop{\ltlGF}{\wedge}}_{\!6}$ 
		& \textbf{1} & \textbf{1} & \textbf{1} & \textbf{1} & \multicolumn{1}{c}{--} & 177 \\
		$\mathit{prop{\ltlGF}{\wedge}}_{\!7}$ 
		& \textbf{2} & \textbf{2} & \textbf{2} & \textbf{2} & \multicolumn{1}{c}{--} & \multicolumn{1}{c}{--} \\
	\midrule
	% deterministic
		$\mathit{prop{\ltlGF}{\vee}}_{\!2}$ 
		& \textbf{1} & \textbf{1} & \textbf{1} & 2 & 2 & \textbf{1} \\
		$\mathit{prop{\ltlGF}{\vee}}_{\!3}$ 
		& \textbf{1} & \textbf{1} & 2 & 3 & 233 & \textbf{1} \\
		$\mathit{prop{\ltlGF}{\vee}}_{\!4}$ 
		& \textbf{1} & \textbf{1} & 2 & 3 & \multicolumn{1}{c}{--} & 2 \\
		$\mathit{prop{\ltlGF}{\vee}}_{\!5}$ 
		& \textbf{1} & 2 & \textbf{1} & 2 & \multicolumn{1}{c}{--} & 14 \\
		$\mathit{prop{\ltlGF}{\vee}}_{\!6}$ 
		& \textbf{2} & \textbf{2} & \textbf{2} & 4 & \multicolumn{1}{c}{--} & 180 \\
		$\mathit{prop{\ltlGF}{\vee}}_{\!7}$  
		& \textbf{3} & \textbf{3} & \textbf{3} & 6 & \multicolumn{1}{c}{--} & \multicolumn{1}{c}{--} \\
	\bottomrule
	\end{tabular}%
% }
\end{table}

As third case study, we consider a model~\cite{HHK00} of two clusters of $n {=} 16$ workstations each, so that the two clusters are connected by a backbone.
Each of the workstations may fail with a given rate, as may the backbone.
Though this case study is a continuous-time Markov chain, we focused on time-unbounded properties, such that we could use discrete-time Markov chains to analyse them.
We give the results in Table~\ref{tab:cluster}, where the meaning of the columns is as for the mutual exclusion case in Table~\ref{tab:mutual}.
As before, we mark the best (rounded) running times with bold font.
The properties $\mathit{prop}\ltlU_{\!k}$ are probabilities of the event of component failures with respect to the order (first $k$ failures on left before right) while $\mathit{prop}{\ltlGF}{\wedge}_{\!k}$ and $\mathit{prop}{\ltlGF}{\vee}_{\!k}$ describe the long-run number of workstations functional.
As clearly shown from the results in the table, \iscasmc outperforms \prism and \rabinizerthree all cases, in particular for large PLTL formulas. 
It is worthwhile to analyse in details the three properties and how they have been checked:
for the $\mathit{prop}\ltlU_{\!k}$ case, the subset construction suffices and returns a (rounded) probability value of $0.509642$;
for $\mathit{prop}{\ltlGF}{\wedge}_{\!k}$, the breakpoint construction is enough to determine that the property holds with probability $0$. 
This explains why the BP and RB columns are essentially the same (we remark that the reported times are the rounded actual runtimes).
Property $\mathit{prop}{\ltlGF}{\vee}_{\!k}$, instead, requires to use the multi-breakpoint or the Safra-based construction to complete the model checking analysis and obtaining a probability value of $1$.

\begin{table}[t]
	\centering
\renewcommand{\tabcolsep}{2mm}
	\caption{Runtime comparison for the self-stabilising protocol. In the table, $\mathit{prop}_{k}$ is the formula\newline
	$\ltlPmin[\ltlG ( (\sum_{i=1}^{k}q_{i} = k) \implies (((\sum_{i=1}^{k}q_{i} = k) \ltlU (\sum_{i=1}^{k}q_{i} = k-1)) \ltlU \dots \ltlU (\sum_{i=1}^{k}q_{i} = 1)))]$
	}
	\label{tab:ij}
% \resizebox{1\textwidth}{!}{%
	\begin{tabular}{lcdddd}
	\toprule
	& & \multicolumn{3}{c}{time} \\
	\multicolumn{1}{c}{property} & \multicolumn{1}{c}{n} & \multicolumn{1}{c}{\iscasmc expl.} & \multicolumn{1}{c}{\iscasmc BDD} & \multicolumn{1}{c}{\prism} & \multicolumn{1}{c}{\rabinizerthree}\\
	\cmidrule(r{2pt}){1-2} \cmidrule(lr){3-6}
	\multirow{6}{*}{$\mathit{prop}_6$} 
	& 10 & \textbf{1} & 2 & 56 & 360 \\
	& 11 & \textbf{1} & 4 & 56 & 359 \\
	& 12 & \textbf{2} & 7 & 56 & 360 \\
	& 13 & \textbf{3} & 14 & 57 & 361 \\
	& 14 & \textbf{4} & 29 & 56 & 363 \\
	& 15 & \textbf{6} & 71 & 59 & 364 \\
	\midrule
	\multirow{6}{*}{$\mathit{prop}_7$} 
	& 10 & \textbf{1} & 2 & \multicolumn{1}{c}{--} & \multicolumn{1}{c}{--} \\
	& 11 & \textbf{2} & 4 & \multicolumn{1}{c}{--} & \multicolumn{1}{c}{--} \\
	& 12 & \textbf{2} & 7 & \multicolumn{1}{c}{--} & \multicolumn{1}{c}{--} \\
	& 13 & \textbf{3} & 14 & \multicolumn{1}{c}{--} & \multicolumn{1}{c}{--} \\
	& 14 & \textbf{4} & 31 & \multicolumn{1}{c}{--} & \multicolumn{1}{c}{--} \\
	& 15 & \textbf{6} & 72 & \multicolumn{1}{c}{--} & \multicolumn{1}{c}{--} \\
	\midrule
	\multirow{6}{*}{$\mathit{prop}_8$} 
	& 10 & \textbf{1} & 2 & \multicolumn{1}{c}{--} & \multicolumn{1}{c}{--} \\
	& 11 & \textbf{2} & 4 & \multicolumn{1}{c}{--} & \multicolumn{1}{c}{--} \\
	& 12 & \textbf{2} & 7 & \multicolumn{1}{c}{--} & \multicolumn{1}{c}{--} \\
	& 13 & \textbf{3} & 14 & \multicolumn{1}{c}{--} & \multicolumn{1}{c}{--} \\
	& 14 & \textbf{4} & 30 & \multicolumn{1}{c}{--} & \multicolumn{1}{c}{--} \\
	& 15 & \textbf{7} & 69 & \multicolumn{1}{c}{--} & \multicolumn{1}{c}{--} \\
	\bottomrule
	\end{tabular}%
% }
\end{table}

Finally, as fourth case study, we consider a self-stabilising protocol originally proposed by Israeli and Jalfon \cite{IJ90}; 
as our \prism model, we consider the one adopted in \cite{KNP12a}.
The model is parametric in the number $n$ of participants.
For this case study, we analyse three instances of the PLTL formula
\[
	\ltlPmin\left[\ltlG \left( (\sum_{i=1}^{k}q_{i} = k) \implies \Big(\big((\sum_{i=1}^{k}q_{i} = k) \ltlU (\sum_{i=1}^{k}q_{i} = k-1)\big) \ltlU \dots \ltlU (\sum_{i=1}^{k}q_{i} = 1)\Big) \right)\right]
\]
providing a statement about the number of tokens still existing during the executing of the self-stabilising algorithm.
In Table~\ref{tab:ij} we provide performance comparisms for $k=6,7,8$.
\iscasmc was able to decide all properties by the subset criterion, such that applying breakpoint, multi-breakpoint, or Safra determinisation was not necessary.
Therefore, we only provide two entries for the \iscasmc explicit (``\iscasmc expl.'') and symbolic (``\iscasmc BDD'') implementation.
From the results, it is clear that the explicit version of \iscasmc, by using the subset criterion, outperforms both \prism and \rabinizerthree; 
if we consider the symbolic version, \iscasmc BDD is always faster than \rabinizerthree and only in one case ($k=6$, $n=15$) it is slower than \prism.
By analysing the causes that made \prism fail in all cases for $k=7$ and $k=8$, we find that the construction of the DRA for the given formula took too long or too much memory, such that \prism failed before even starting to construct the product of the model and the Rabin automaton.

It is interesting to observe, together with the runtimes for the mutual exclusion protocol shown in Table~\ref{tab:mutual} and the runtimes for the workstation cluster shown in Table~\ref{tab:cluster}, how \rabinizerthree performs much better on formulas involving mainly the nested $\ltlG$ operators than the $\ltlU$ operator.
This seems to be caused by the master-slave automata construction underlying \rabinizerthree (cf.~\cite{EsparzaK14,KomarkovaK14}):
given an LTL formula $\phi$, let $\phi_{1}$, \dots, $\phi_{n}$ be LTL formulas such that for each $\phi_{i}$, $\ltlG \phi_{i}$ is a sub-formula of $\phi$, occurring in the scope of an $\ltlF$ temporal operator.
For instance, for $\phi = \ltlF(\ltlG (a \wedge \ltlFG b) \vee \ltlG c)$, examples of such formulas are $\phi_{1} = a \wedge \ltlFG b$, $\phi_{2} = b$ and $\phi_{3} = c$.
Checking whether a subformula $\ltlG \phi_{i}$ finally holds, i.e., it occurs in the scope of $\ltlF$, is delegated to a slave automaton, while the remaining tasks are directly managed by the master automaton.
For the formula $\phi$, three slaves are created, one for each formula $\ltlG\phi_{i}$.
This means that a formula like $\mathit{prop}{\ltlGF}{\wedge}_{\!k}$ and $\mathit{prop}{\ltlGF}{\vee}_{\!k}$ can be split among multiple slaves, while a formula like $\mathit{prop}\ltlU_{\!k}$ is essentially managed by a single automaton, thus taking no advantage from the master-slave construction.

\clearpage
\appendix

\section{Markov Decision Processes}
\label{app:MDP}

\begin{wrapfigure}[6]{right}{30mm}
	\centering
	\begin{tikzpicture}[->,>=stealth,auto]
		\path[use as bounding box] (-1,0) rectangle (1,1);
		
		\node (M) at (0,0) {$\mdp_{\calE}$};
		% model
		\node (m0) at ($(M) + (-1,1)$) {$m_{0}^{a}$};
		\node (m1) at ($(M) + (1,1)$) {$m_{1}^{b}$};
		
		\draw ($(m0.north) + (0,0.3)$) to node {} (m0.north);
		\draw (m0) to[bend left=15] node[above, very near end] {$1$} (m1);
		\draw (m1) to[bend left=15] node[below, very near end] {$1$} (m0);
		\draw (m1) edge[loop below, distance=7mm] node[left]{$1$} (m1);

	\end{tikzpicture}
	\caption{An MDP}
	\label{fig:exampleMDP}
\end{wrapfigure}
In this section, we give more details on how we evaluate MDPs.
Let $\mdp = (\pstates, \labelFunc, \Aset, \pinit, \pmat)$ be an MDP.
Figure~\ref{fig:exampleMDP} shows an example of an MDP;
the labels of the states are in superscript and we omit the actions of the transitions.

We recall the computation of the probability $\sup_{\sched} \prob^{\sched}(\baut_{\phi})$ where $\sched \colon \pstates \to \Aset$ is a deterministic scheduler, based on~\cite[Lemma 10.125]{BaierK08} but adapted to the accepting condition on transitions instead of on states.
Let $\aut = \det(\baut_{\phi})$ denote the deterministic Rabin automaton for $\baut_{\phi}$.

\subsection{Product of MDPs with Deterministic Automata}
As for MC, we have to define products of MDP with deterministic automata.
We first define the product MDP and then the product automata and finally the quotient MDP.

\begin{definition}[Product MDP]
	Given an MDP $\mdp = (\pstates, \labelFunc, \Aset, \pinit, \pmat)$ and a deterministic automaton
	$\aut = (\Sigma, \astates, \dinit, \amat, \ACC)$, the product MDP is defined
	by $\mdp \prodMDP \aut \defeq (\pstates \times \astates, \labelFunc', \Aset, \pinit', \pmat')$ where
	\begin{itemize}
	\item
		$\labelFunc'((m,d)) = \labelFunc(m)$;
	\item
		$\pinit'((m,d)) = \pinit(m)$ if $d = \amat(\dinit, \labelFunc(m))$, and $0$ otherwise; and
	\item
		$\pmat'((m,d), a)(m',d')$ equals $\pmat(m,a)(m')$ if $d' = \amat(d, \labelFunc(m'))$, and is $0$ otherwise.
	\end{itemize}
\end{definition}
We recall that $(p,a,p') \in \pmat'$ means $\pmat'(p,a)(p') > 0$ and we define $\projaut(p,a,p')$ to be the projection on $\aut$ of the given $(p,a,p') = ((m,d), a, (m',d'))$, i.e., $\projaut(p,a,p') = (d, \labelFunc(m'), d')$.

\begin{figure}[h!]
	\centering
	\begin{tikzpicture}[->,>=stealth,auto]
		\path[use as bounding box] (-4,-0.25) rectangle (4.75,4);
		% buechi automaton
		\node (B) at (-2.5,0) {$\baut_{\calE}$};
		\node (MS) at (2.5,0) {$\mdp_{\calE} \prodMDP \ssaut_{\calE}$};
		
		\node (b0) at ($(B) + (-1,1)$) {$q_{0}$};
		\node (b1) at ($(B) + (1,1)$) {$q_{1}$};
		\draw ($(b0.south) + (0,-0.3)$) to node {} (b0);
		\draw (b0) edge[loop above, distance=7mm] node[above] {$\mathit{true}$} (b0);
		\draw (b0) to node[above] {$a$} (b1);
		\draw (b1) edge[loop above, double, distance=7mm] node[above] {$a$} (b1);
		\draw (b1) edge[loop right, distance=7mm] node[above] {$\neg a$} (b1);

		\node (p0) at ($(MS) + (-1.5,2)$) {$m_{0}, \setnocond{q_{0}}$};
		\node[rectangle, dashed, draw, rounded corners, minimum width=20mm, minimum height=26mm] (MEC) at ($(MS) + (1,2)$) {};
		\node (p1) at ($(MEC) + (0,1)$) {$m_{1}, \setnocond{q_{0}, q_{1}}$};
		\node (p2) at ($(MEC) + (0,-1)$) {$m_{0}, \setnocond{q_{0},q_{1}}$};

		\draw (p0) to[bend left=15] node[above, near end] {$1$} (p1);
		\draw (p1) to[bend left=15] node[right, near end] {$1$} (p2);
		\draw (p1) edge[loop above, distance=7mm] node[right, near end] {$1$} (p1);
		\draw (p2) to[bend left=15] node[left, near end] {$1$} (p1);
	\end{tikzpicture}
	
	\caption{A \buchi automaton $\baut_{\calE}$ and the product MDP of $\mdp_{\calE}$ and $\ssaut_{\calE}$}
	\label{fig:exampleMDPTimesSubSet}
\end{figure}
Figure~\ref{fig:exampleMDPTimesSubSet} shows in the left hand side a \buchi automaton whose language contains all words from $\Sigma^{\omega}$ where $a$ occurs infinitely often;
the labels $\mathit{true}$ and $\neg a$ are shortcuts for all labels in $\Sigma$ and in $\Sigma \setminus \setnocond{a}$, respectively.
This means, for instance, that if $\Sigma = \setnocond{a, b}$, then $q_{0}$ has two transitions with label $a$ and $b$, respectively, to $q_{0}$ itself.
The right hand side of Figure~\ref{fig:exampleMDPTimesSubSet} shows the product MDP between $\mdp_{\calE}$ in Figure~\ref{fig:exampleMDP} and the subset construction $\ssaut_{\calE}$ of $\baut_{\calE}$.
The dashed box encloses a maximal end component we will formally define later;
intuitively, a maximal end component is the MDP counterpart of a bottom SCC of a Markov chain.

\begin{definition}[RMDP, PMDP, and GMDP]
	Given an MDP $\mdp = (\pstates, \labelFunc, \Aset, \pinit, \pmat)$ and a deterministic automaton
	$\aut = (\Sigma, \astates, \dinit, \amat, \ACC)$, the product automaton is defined
	by $\mdp \prodAut \aut \defeq (\mdp \prodMDP \aut, \ACC')$ where
	\begin{itemize}
	\item
		if $\ACC = \setcond{(\aacc_{i}, \arej_{i})}{i \in \natIntK}$, then $\ACC' \defeq \setcond{(\aacc'_{i}, \arej'_{i})}{i \in \natIntK}$ where $\aacc'_{i} = \setcond{(p,a,p') \in \pmat'}{\projaut(p, a, p') \in \aacc_{i}, a \in \Aset}$ and $\arej'_{i} = \setcond{(p,a,p') \in \pmat'}{\projaut(p, a, p')  \in \arej_{i}, a \in \Aset}$ (Rabin Markov Decision Process, RMDP);
	\item
		if $\ACC = \pri \colon \pmat \to \natIntK$, then $\ACC' \defeq \pri' \colon \pmat' \to \natIntK$ where $\pri'(p,a,p') = \pri(\projaut(p,a,p'))$ for each $(p,a,p') \in \pmat'$ (Parity Markov Decision Process, PMDP); or
	\item
		if $\ACC = \setcond{\afinal_{i}}{i \in \natIntK}$, then $\ACC' \defeq \setcond{\afinal'_{i}}{i \in \natIntK}$ where $\afinal'_{i} =  \setcond{(p,a,p') \in \pmat'}{\projaut(p,a,p') \in \afinal_{i}, a \in \Sigma}$ (Generalised \buchi Markov Decision Process, GMDP).
	\end{itemize}
\end{definition}

\begin{definition}[Quotient MDP]
	Given an MDP $\mdp$ and a DRA $\aut = \det(\baut)$, the \emph{quotient MDP} $[\mdp \prodMDP \aut]$ is the MDP $([\pstates \times \astates], [\labelFunc], \Aset, [\pinit], [\pmat])$ where
	\begin{itemize}
	\item
		$[\pstates \times \astates] = \setcond{(m,[d])}{(m,d) \in \pstates \times \astates,\ [d] = \setcond{d' \in \astates}{\reached(d') = \reached(d)}}$,
	\item
		$[\labelFunc](m,[d]) = \labelFunc(m,d)$,
	\item
		$[\pinit](m,[d]) = \pinit(m, d)$, and
	\item
		$[\pmat]\big((m,[d]), a, (m',[d'])\big) = \pmat\big((m,d), a, (m',d')\big)$.
	\end{itemize}
\end{definition}
As for the quotient MC, the quotient MDP is again well defined:
it is immediate to see that $d \in [d]$ and that for each $(m, d_{1}), (m, d_2) \in [(m,d)]$ and each $a \in \Aset$, it holds
$\pmat\big((m, d_{1}), a, (m', [d'])\big) = \pmat\big((m, d), a, (m', d')\big) = \pmat\big((m, d_{2}), a, (m', [d'])\big)$.

\subsection{Reduction to Probabilistic Reachability}
For the reader's convenience, we recall the reduction of $\sup_{\sched}\prob^{\mdp, \sched}(\baut_{\phi})$ to probabilistic reachability in the RMDP $\mdp \prodAut \aut$.

First, we introduce some concept and corresponding notation, starting with the concept of maximal end component (MEC) that is the MDP counterpart of the SCC for a MC;
the formal definition is not so immediate as we have to take care of the role of the actions.
\begin{definition}[MEC]
	Given an MDP $\mdp = (\pstates, \labelFunc, \Aset, \pinit, \pmat)$,
	\begin{itemize}
	\item
		a sub-MDP is a pair $(T, \en)$ such that $\emptyset \neq T \subseteq \pstates$ and $\en \colon \pstates \to 2^{\Aset}$ satisfying:
		\begin{inparaenum}[$(1.)$]
		\item
			$\emptyset \neq \en(t) \subseteq \Aset(t)$ for each $r \in T$ where $\Aset(t)$ denotes the enabled actions of $t$, and
		\item
			$t \in T$ and $a \in \en(t)$ implies $\pmat(t, a) \in \dist(T)$.
		\end{inparaenum}
	\item
		An \emph{end component} of $\mdp$ is a sub-MDP $(T, \en)$ such that the digraph induced by $(T, \en)$ is strongly connected.
	\item
		An end component $\mec = (T, \en)$ is a \emph{maximal end component} (MEC) if it is not contained in some other end component $(T', \en') \neq (T, \en)$ with $T \subseteq T'$ and $\en(t) \subseteq \en'(t)$ for all $t \in T$.
	\end{itemize}
\end{definition}
As for SCCs, we define the transitions $\mectran$ of the MEC $\mec = (T, \en)$ of the MDP $\mdp$ as $\mectran = \setcond{(t,a,t') \in \pmat}{t, t' \in T, a \in \en(t)}$.
\begin{definition}[Accepting MEC]
	Given an MDP $\mdp$ and a DRA $\aut = \det(\baut)$, let $\mec$ be a MEC of the product RMDP $\mdp \prodAut \aut$.
	We say that $\mec$ is accepting if there exists an index $i \in \natIntK$ such that $\aacc_{i} \cap \projaut(\mectran) \neq \emptyset$ and $\arej_{i} \cap \projaut(\mectran) = \emptyset$.
\end{definition}
Note that Theorem~\ref{thm:bottom_component} and Corollary~\ref{cor:quotients} extend easily to the MECs of the quotient MDP and to accepting MECs of the product automaton, respectively.

We now describe how to simplify the identification of the accepting MECs.
Given an MDP $\mdp$, we first define a sub-MDP restricting to a subset of transitions $\transtyle{B} \subseteq \pmat$.
Let $\mdp_{\Box \transtyle{B}}$ denote the sub-MDP with transition space $\transtyle{B}$.
Moreover, let $\widehat{\transtyle{B}}$ be the completion of the set $\transtyle{B}$, i.e., $\widehat{\transtyle{B}} = \setcond{(m,a,m') \in \pmat}{\exists m'' \in \pstates.\ (m,a,m'') \in \transtyle{B}}$.

With these notations, we recall that the computation of the probability
$\sup_{\sched}\prob^{\sched}(\baut_{\phi})$ can be reduced to a reachability
probability:
\begin{itemize}
\item
	for a Rabin pair $(\aacc, \arej)$, define $U_{(\aacc, \arej)} \subseteq \pstates$ such that $m \in U_{(\aacc, \arej)}$ if there exists a MEC $\mec = (T, \en)$ in $\mdp_{\Box (\pmat \setminus \widehat{\arej})}$ such that $m \in T$ and $\mectran \cap \aacc \neq \emptyset$.
	In this case we say $m$ is accepting with respect to $(\aacc, \arej)$.
	Note that if $m$ is accepting, then all states in $T$ are accepting as well, thus $T \subseteq U_{(\aacc, \arej)}$.
\item
	Let $U \defeq \bigcup_{i \in \natIntK} U_{(\aacc_{i}, \arej_{i})}$ be the accepting region and $\aut = \det(\baut)$.
	Then, $\sup_{\sched} \prob^{\mdp, \sched}(\baut_{\phi}) $ reduces to a probabilistic reachability:
	\[
		\sup_{\sched}\prob^{\mdp,\sched}(\baut_{\phi}) = \sup_{\sched}\prob^{\mdp \prodAut \aut,\sched}(\Diamond U)\text{.}
	\]
\end{itemize}

Obviously, if a MEC $(T, \en)$ contains a state $m \in U$ from the accepting region, then the probability of accepting the language of $\baut_{\phi}$ is the same for all of them, so it does not matter the particular state $m' \in T$ we reach when we enter $(T, \en)$.

\subsection{The Incremental Evaluation of MECs}

As argued in the body of the paper, the likelihood $\sup_{\sched} \prob^{\mdp \prodAut \aut,\sched}_{(m,d)}(\baut_{\phi})$ does not depend on $d$ itself, but only on $[d]$.
Together with the observations from Corollary~\ref{cor:MDPestimate}, we can therefore follow the principle layered approach.
As for Markov chains, we can use the results from the previous layers to avoid parts of the construction.
The incremental evaluation is described in details as follows.

Construct the quotient MDP $\mdp \prodAut \ssaut^{u}$ (which is the same as $\mdp \prodAut \ssaut^{o}$ except the accepting conditions).
For each MEC $\mec$, evaluate as follows:
\begin{enumerate}
\item
	Similar to Lemma~\ref{lem:subset} for Markov chains, $\mec$ is accepting if $\mec$ contains some accepting transition $(p,a,p')$ with $\proj{\ssaut^{u}}(p,a,p') \in \afinal^{u}_{i}$ for each $i \in \natIntK$;
	$\mec$ is rejecting if $\mec$ does not contain transitions $(p,a,p')$ with $\proj{\ssaut^{u}}(p,a,p') \in \afinal^{o}_{i}$ for some $i \in \natIntK$.

\item
	If we could neither establish that $\mec$ is accepting nor that $\mec$ is rejecting, we refine $\mec$ by a breakpoint construction (only for this MEC).
	Let $(\mdp \prodAut \rbpaut^{u})|_{\mec}$, $(\mdp \prodAut \rbpaut^{o})|_{\mec}$ denote the breakpoint automata for $\mec$ (again, the difference is only reflected by the accepting conditions).

	These RMDPs are to be read as the MDPs that exist if one expands the MDP restricted to $\mec$.
	Let $\mec'$ be an arbitrary MEC in the resulting breakpoint automata;
	we recall that the Rabin pairs for $\rbpaut^{o}$ and $\rbpaut^{u}$ are $\setnocond{(\aacc_{\epsilon},\emptyset), (\amat', \arej_{0})}$ and $\setnocond{(\aacc_{\epsilon},\emptyset)}$, respectively.

\item
	$\mec'$ is accepting if $\mec'$ contains some accepting transition $(p,a,p')$ with $\proj{\rbpaut^{u}}(p,a,p') \in \aacc_{\epsilon}$, i.e., it is accepted by $\rbpaut^{u}$.
	Otherwise, $\mec' = (T, \en)$ is for sure rejecting if it is rejected by $\rbpaut^{o}$, i.e., there exists $(p, a, p') \in \mecprimetran$ with $\proj{\rbpaut^{u}}(p, a, p') \in \arej_{0}$.

\item
	If $\mec$ remains inconclusive, we finally evaluate the MECs of $\mdp \prodAut \aut$.
	For this, we follow a similar procedure as for the breakpoints.
	In particular, it suffices to refine the MECs individually.
\end{enumerate}

After the above procedure, we make accepting MECs in the quotient MDP absorbing, whereas non-accepting MECs are untouched.
We remark that a MEC cannot be accepting if it contains a state $(m, \langle d \rangle)$ such that $\sup_{\sched} \prob^{\mdp \prodAut \aut, \sched}_{(m,d)}(\baut_{\phi}) < 1$ has been established by a previous estimation.
As a final step, a probabilistic reachability analysis---through solving a linear programming problem---is needed for the evaluation.
Note that this can always be performed on the quotient MDP.

\begin{figure}
	\centering
	\begin{tikzpicture}[->,>=stealth,auto]
		\path[use as bounding box] (-3,0) rectangle (3,3);
		
		\node (MR) at (0,0) {$\mdp_{\calE} \prodAut \aut_{\calE}$};

		\node[draw, rectangle, inner sep=0pt, rounded corners, minimum width=18mm, minimum height=9mm] (p0) at ($(MR) + (-4.5,1.5)$) {$m_{1},\ \ \phantom{\setnocond{q_{0}}}$};
		\node[astateq] (p0t0) at ($(p0) + (0.3,0)$) {$\setnocond{q_{0}}$};
		\node[dashed, rectangle, draw, rounded corners, minimum width=20mm, minimum height=17mm] at ($(p0) + (0,0.3)$) {};
		\draw (p0) edge[loop above, distance=7mm] node[right, very near end] {$1$} (p0);
		
		\draw ($(p0.south) + (0,-0.5)$) to node {} (p0.south);
		
		\node[draw, rectangle, inner sep=0pt, rounded corners, minimum width=22mm, minimum height=9mm] (p1) at ($(MR) + (-1.6,1.5)$) {$m_{0},\ \ \phantom{\setnocond{q_{0}, q_{1}}}$};
		\node[astateq] (p1t0) at ($(p1) + (0.3,0)$) {$\setnocond{q_{0},q_{1}}$};

		\node[dashed, rectangle, draw, rounded corners, minimum width=54mm, minimum height=20mm] (MEC) at ($(MR) + (3,1.5)$) {};
		\node[draw, rectangle, inner sep=0pt, rounded corners, minimum width=22mm, minimum height=9mm] (p2) at ($(MEC) + (-1.5,0)$) {$m_{1},\ \ \phantom{\setnocond{q_{0},q_{1}}}$};
		\node[astateq] (p2t0) at ($(p2) + (0.3,0)$) {$\setnocond{q_{0},q_{1}}$};
		
		\node[draw, rectangle, inner sep=0pt, rounded corners, minimum width=22mm, minimum height=18mm] (p3) at ($(MEC) + (1.5,0)$) {$m_{0},\ \ \phantom{\setnocond{q_{0},q_{1}}}$};
		\node[astateq] (p3t0) at ($(p3) + (0.3,0.5)$) {$\setnocond{q_{0},q_{1}}$};
		\node[astateq] (p3t1) at ($(p3) + (0.3,-0.5)$) {$\setnocond{q_{1}}$};
		\draw[-] (p3t0) to node {} (p3t1);

		\draw (p0) to node[above, near end] {$1$} (p1);

		\draw (p1) to node[above, near end] {$1$} (p2);

		\draw (p2) to[bend left=15] node[above, near end] {$1$} (p3);

		\draw[double] (p3) to[bend left=15] node[below, near end] {$1$} (p2);
	
	\end{tikzpicture}
	\caption{The product automaton of $\mdp_{\calE}$ and $\aut_{\calE} = \det(\baut_{\calE})$}
	\label{fig:exampleMDPTimesRabin}
\end{figure}

Figure~\ref{fig:exampleMDPTimesRabin} shows the product automaton of $\mdp_{\calE}$ and $\aut_{\calE} = \det(\baut_{\calE})$, i.e., the RMDP obtained with the DRA corresponding to $\baut_{\calE}$ depicted in Figure~\ref{fig:exampleMDPTimesSubSet}.
This example is particularly interesting since it remarks that finding a rejecting MEC is inconclusive.
In fact, consider the MEC enclosed in the left hand dashed box of the product automaton.
This MEC is rejecting but we can not conclude that the language of the product automaton is empty since there is another scheduler that eventually leaves such MEC reaching the right hand MEC that is indeed accepting, since it contains an accepting transition, depicted with the double arrow.

\section{Symbolic Implementation of Our Approach}
\label{sec:bdd}
Binary decision diagrams (BDDs)~\cite{Lee:1959:RSC} are a well-known mechanism to represent binary functions $f \colon \booleans^{V} \to \booleans$ (where $\booleans$ is the Boolean set and $v \in \booleans^{V}$ means $v \colon V \to \booleans$) by using a specific form of directed acyclic graphs.
Given a NGBA $\baut$, we describe first how $\baut$ can be represented symbolically.
By taking $V = \astates$, we can construct the indicator BDDs $v_{R}$ for any set $R \subseteq \astates$, such that $v_{R}(q) = 1$ if, and only if, $q \in R$.
We can represent multiple states of the subset automaton by building a disjunction over their indicator functions.
This is then used to encode states of subset automata $\ssaut^{o}$ and $\ssaut^{u}$ (which are subsets of the states $\astates$ of $\baut$).
We encode the transition relation by introducing additional variables $\Sigma$ representing the transition labels and a copy $\successor{\astates}$ of $\astates$ as successor variables.
For this, we construct a function $t \colon \booleans^{\astates} \times \booleans^{\Sigma} \times \booleans^{\successor{\astates}} \to \booleans$ where $\amat(R, \sigma) = C$ if, and only if, $t(v_{R}, v_{\sigma}, v_{C}) = 1$.
The construction for the breakpoint automata is similar;
for instance, denoted by $\natIntK$ the set of variables encoding $\natIntK$, the states can be represented by using functions from $\booleans^{\astates} \times \booleans^{\natIntK} \times \booleans^{\astates}$ to $\booleans$.

This idea resembles~\cite{MSL08}, but there additional variables are introduced to enumerate states of subset (or breakpoint) automata.
For our purposes this is not needed.
The construction of the single-breakpoint and multi-breakpoint automata is almost identical in terms of their BDD representations.
According to~\cite{MSL08}, Rabin automata are not well suited to be be constructed using BDDs directly.
It is however possible to construct them in an explicit way and convert them to a symbolic representation afterwards.
For this, we assign a number to each of the explicit states of the Rabin automaton.
Afterwards, we can refer to the state using BDD variables encoding this number.

We emphasise that we can still compute Rabin automata on-the-fly when using the BDD-based approach, so as to avoid having to construct parts of the Rabin automaton which are not needed in the product with the MC or MDP.
It might happen that in the symbolic computation of the reachable states of the product we note that a certain state of the Rabin automaton is required.
In this case, we compute this successor state in the explicit representation of the automaton, assign to it a new number, encode this number using BDDs and then use this BDD as part of the reachable states.

MDPs can be represented similarly.
To represent exact transition probabilities, one can involve multi-terminal BDDs (MTBDDs)~\cite{ClarkeFZ96,AlfaroKNPS00}.
If they are not required, BDDs are sufficient.
Products of symbolic model and automata can then be computed using (MT)BDD operations, allowing for effective symbolic analyses.
To compute the (bottom) SCCs, we employ a slight variantion of~\cite{GentiliniPP03}.
For MDPs, we then employ a symbolic variant of the classical algorithm~\cite{Courcoubetis+Yannakakis/95/Markov,Alfaro97} to obtain the set of MECs from the set of SCCs.
The acceptance of an SCC/MEC can be decided by a few BDD operations.

\section{Proofs}

\subsection{Proof of Theorem~\ref{thm:only_reach}}
To establish Theorem~\ref{thm:only_reach}, we show inclusion in both directions.
The proof is the same as the correctness proof for the determinisation construction in \cite{Schewe+Varghese/12/generalisedBuchi}, but the claim is different, and the proof is therefore included for completeness.
The difference in the claim is that $\lang(\aut_{d}) = \lang(\baut_{\reached(d)})$ is shown for $d = (\calT, l, h)$ with a singleton set $\calT = \setnocond{\epsilon}$ that contains only the root and $h(\epsilon)=1$.
The proof, however, does not use either of these properties.

\begin{lemma}
	$\lang(\aut_{d}) \subseteq \lang(\baut_{\reached(d)})$
\end{lemma}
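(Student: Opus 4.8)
The plan is to show the inclusion $\lang(\aut_{d}) \subseteq \lang(\baut_{\reached(d)})$ by tracking, along an accepting run of $\aut_{d}$ on a word $\word$, how the root label of the generalised history tree evolves. First I would recall that $\reached(d) = l(\epsilon)$ and that, by Step~\ref{item:determinisation:rawupdate} of Definition~\ref{def:determinisation}, the root label is updated by the subset-style transition $l(\epsilon) \mapsto \amat(l(\epsilon), \sigma)$, and that the later steps (stealing, accepting-and-removing, removing nodes, renaming) only ever affect \emph{proper descendants} of the root and never change $l(\epsilon)$ itself. Hence, if $d_{0} d_{1} d_{2} \ldots$ is the run of $\aut_{d}$ on $\word$ with $d_{i} = (\calT_{i}, l_{i}, h_{i})$ and $d_{0} = d$, then $l_{i+1}(\epsilon) = \amat(l_{i}(\epsilon), \word(i))$ for all $i$, i.e.\ the sequence $R_{i} := l_{i}(\epsilon)$ satisfies $R_{0} = \reached(d)$ and $R_{i+1} = \amat(R_{i}, \word(i))$. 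Since $\word \in \lang(\aut_{d})$, the run never dies, so every $R_{i}$ is non-empty.

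The next step is to reconstruct an accepting run of $\baut_{\reached(d)}$ on $\word$. Because $R_{i} \neq \emptyset$ for all $i$ and $R_{i+1} = \amat(R_{i}, \word(i))$, there is (by König's Lemma on the finitely-branching tree of partial runs through the $R_{i}$'s) an infinite sequence $q_{0}, q_{1}, q_{2}, \ldots$ with $q_{0} \in R_{0} = \reached(d)$, $q_{i} \in R_{i}$, and $(q_{i}, \word(i), q_{i+1}) \in \amat$ for all $i$. This is a run of $\baut_{\reached(d)}$ on $\word$. The content of the claim is that among all such runs there is an accepting one, and for this we must use the Rabin-acceptance of $d_{0} d_{1} d_{2} \ldots$. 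Let $v$ be a node of minimal index in the fixed ordered tree $J$ such that the run visits $\aacc_{v}$ infinitely often and $\arej_{v}$ only finitely often (such $v$ exists by Rabin acceptance). After some point $N$, the run only ever sees trees in which the node $v$ (under the renaming bookkeeping) persists — its label is never emptied and it is never re-christened — and infinitely often the ``accepting-and-removing'' event of Step~\ref{item:determinisation:accepting} fires at $v$, which means its $h$-label cycles through all of $\natIntK$ via $\oplus_{k}$, so that for \emph{each} $j \in \natIntK$ the set $\afinal_{j}$ is ``flushed'' through the subtree rooted at $v$ infinitely often. Using the stability of $l(v)$ and the fact that all younger siblings eventually stop stealing from it, one then extracts from the labels $l_{i}(v)$ a run of $\baut$ that, from some point on, stays inside the labels of $v$ and is forced to take, for every $j$, some transition from $\afinal_{j}$ infinitely often; this run is accepting in the generalised-Büchi sense, and prefixing it with any path from $\reached(d)$ into $l_{N}(v)$ (which exists since $l_{N}(v) \subseteq R_{N}$ and the $R_{i}$ are reachable) gives an accepting run of $\baut_{\reached(d)}$ on $\word$.

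The main obstacle I expect is precisely this last extraction step: making rigorous the passage from ``the node $v$ is accepting infinitely often and never rejecting eventually'' to ``there is a single infinite run of $\baut$ contained in the shrinking labels $l_{i}(v)$ that hits every $\afinal_{j}$ infinitely often.'' The delicate points are (i) handling the label-stealing in Step~\ref{item:determinisation:stealing}, which can move states out from under $v$ but, crucially, only to \emph{older} siblings, so that $v$'s label can only be reduced by older siblings, and one must argue older siblings stabilise; (ii) dealing with the renaming in Step~\ref{item:determinisation:rejecting} so that one is really following the same semantic node throughout; and (iii) organising the per-$j$ flushes of $\afinal_{j}$ through $v$'s children into a single run via a König-style argument on the sub-branch under $v$, using that whenever $v$'s label equals the union of its children's labels every element of $l(v)$ continues inside some child. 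This is exactly the combinatorial heart of the Schewe–Varghese correctness proof~\cite{Schewe+Varghese/12/generalisedBuchi}, and since the excerpt explicitly says the argument is ``the same proof'' with only a change of claim, I would cite that structure and merely verify that none of the steps use $\calT = \{\epsilon\}$ or $h(\epsilon) = 1$; the reachability of $R_{N}$ from $\reached(d)$ and the reconnection of the tail run to the front are the only genuinely new bookkeeping, and they are routine.
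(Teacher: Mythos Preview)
Your proposal is correct and follows essentially the same approach as the paper: pick a Rabin-accepting node $v$ that is eventually always stable and infinitely often accepting, observe that the $h$-label at $v$ cycles through $\natIntK$, and use K\"onig's lemma on the labels $l_{i}(v)$ to extract an accepting run of $\baut_{\reached(d)}$. The paper's proof is terser---it skips your opening detour through the root labels $R_{i}$ (which yields only an arbitrary, not necessarily accepting, run and which you yourself immediately set aside) and goes straight to the node $v$, packaging your point~(iii) as the two ``by construction'' claims $\reached(d) \rightarrow^{\word[0,i_{0}[} l_{i_{0}}(v)$ and $l_{i_{j}}(v) \Rightarrow^{\word[i_{j},i_{j+1}[}_{h_{i_{j}}} l_{i_{j+1}}(v)$---but the substance is the same.
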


\paragraph*{Notation}

For an $\omega$-word $\word$ and $j \geq i$, we denote with $\word[i,j[$ the word $\word(i) \cdot \word(i+1) \cdot \word(i+2) \cdot \ldots \cdot \word(j-1)$.
We denote with $\astates_{1} \rightarrow^\word \astates_{2}$ for a finite word $\word = \word_{1} \cdot \ldots \cdot \word_{j-1}$ that there is, for all $q_{j} \in \astates_{2}$ a sequence
$q_{1} \cdot \ldots \cdot q_{j}$ with $q_{1} \in \astates_{1}$ and $(q_{i}, \word_{i}, q_{i+1}) \in \amat$ for all $1 \leq i < j$.
If one of these transitions is guaranteed to be in $\afinal_{a}$, we write $\astates_{1} \Rightarrow^{\word}_{a} \astates_{2}$.

For an extended history tree $d = (\calT, l, h)$, we denote for a state $q \in \reached(d)$ the node of $\vartheta \in \calT$, such that $q \in l(\vartheta)$, but not in any child of $\vartheta$ ($\nexists i \in \naturals.\ q \in l(\vartheta i)$), by $\host(q,d)$.

For an input word $\word \colon \omega \to \Sigma$, let $\arun = d_{0} \cdot d_{1} \cdot \ldots$ be the run of the DRA $\aut$ on $\word$.
A node $v$ in the history tree $d_{i+1}$ is called \emph{stable} if $\rename(v) = v$ and \emph{accepting} if it is accepting in the transition $(d_{i}, \word(i), d_{i+1})$.

\begin{proof}
Let $\word \in \lang(\aut_{d})$.
Then there is a $v$ that is eventually always stable and always eventually accepting in the run $\arun = d_{0} \cdot d_{1} \cdot d_2 \cdot \ldots$ (with $d_{0} = d$) of $\aut_{d}$ on $\word$.
We pick such a $v$.

Let $i_{0} < i_{1} < i_{2} < \ldots$ be an infinite ascending chain of indices such that
\begin{itemize}
\item
	$v$ is stable for all $d_{j}$ with $j \geq i_{0}$, and
\item
	the chain $i_{0} < i_{1} < i_{2} < \ldots$ contains exactly those indices $i \geq i_{0}$ such that $d_{i}$ is accepting; this implies that $h$ is updated exactly at these indices.
\end{itemize}

Let $d_{i} = (\calT_{i}, l_{i}, h_{i})$ for all $i \in \omega$. By construction, we have
\begin{itemize}
\item
	$\reached(d) \rightarrow^{\word[0,i_{0}[} l_{i_{0}}(v)$, and
\item
	$l_{i_{j}}(v) \Rightarrow^{\word[i_{j},i_{j+1}[}_{h_{i_{j}}} l_{i_{j+1}}(v)$.
\end{itemize}
Exploiting K\"onig's lemma, this provides us with the existence of a run that visits all accepting sets $\afinal_{i}$ of $\baut_{\reached(d)}$ infinitely often.
(Note that the value of $h$ is circulating in the successive sequences of the run.)
This run is accepting, and $\word$ therefore belongs to the language of $\baut_{\reached(d)}$.
\end{proof}

\begin{lemma}
	$\lang(\aut_{d}) \supseteq \lang(\baut_{\reached(d)})$.
\end{lemma}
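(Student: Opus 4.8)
The plan is to start from an accepting run of $\baut_{\reached(d)}$ and to read off from it a Rabin pair of $\aut_{d}$ that is satisfied along the (unique) run of $\aut_{d}$ on the same word. So fix $\word \in \lang(\baut_{\reached(d)})$ together with an accepting run $\arun = q_{0} \cdot q_{1} \cdot q_{2} \cdots$ of $\baut_{\reached(d)}$ on $\word$, so that $q_{0} \in \reached(d)$, $(q_{i}, \word(i), q_{i+1}) \in \amat$ for every $i$, and $\insym(\tran{\arun}) \cap \afinal_{j} \neq \emptyset$ for every $j \in \natIntK$. Let $\arun' = d_{0} \cdot d_{1} \cdot d_{2} \cdots$, with $d_{0} = d$ and $d_{i} = (\calT_{i}, l_{i}, h_{i})$, be the run of $\aut_{d}$ on $\word$. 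The first, routine step is a bookkeeping induction showing $q_{i} \in \reached(d_{i}) = l_{i}(\epsilon)$ for all $i$: the root label is changed only by the raw update of Step~\ref{item:determinisation:rawupdate} of Definition~\ref{def:determinisation} (namely $l_{i}(\epsilon) \mapsto \amat(l_{i}(\epsilon), \word(i))$), since Step~\ref{item:determinisation:stealing} touches younger siblings only and $\epsilon$ has none, Steps~\ref{item:determinisation:accepting}--\ref{item:determinisation:removing} never delete $\epsilon$ (whose label stays non-empty), and Step~\ref{item:determinisation:rejecting} leaves labels untouched; hence $q_{i} \in l_{i}(\epsilon)$ yields $q_{i+1} \in \amat(\{q_{i}\}, \word(i)) \subseteq l_{i+1}(\epsilon)$. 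Consequently $v_{i} := \host(q_{i}, d_{i})$ is well defined, and, by the nesting invariant $l_{i}(w) \subsetneq l_{i}(\pred(w))$, $q_{i}$ lies in $l_{i}(w)$ for every ancestor $w$ of $v_{i}$.

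The core of the argument is to track how $v_{i}$ evolves, following the completeness part of the determinisation correctness proof in~\cite{Schewe+Varghese/12/generalisedBuchi}. After the raw update $q_{i+1}$ still lies in $l_{i+1}(v_{i})$, and it is removed from the subtree of $v_{i}$ only by the stealing of Step~\ref{item:determinisation:stealing}, which can only push it into the subtree of an \emph{older} sibling of $v_{i}$ or of one of its ancestors; moreover the sprouting of Step~\ref{item:determinisation:newchildren} only adds a new \emph{rightmost} child. Since every $d_{i}$ has at most $n = |\astates|$ nodes, only finitely many node removals can occur among the ancestors and older-siblings-of-ancestors of the hosts, so the renaming of Step~\ref{item:determinisation:rejecting} eventually stabilises on the relevant part of the tree, and a pigeonhole/K\"onig-style extraction (exactly as in~\cite{Schewe+Varghese/12/generalisedBuchi}) yields a node $v$ — with a fixed name after stabilisation, hence $v \in J$ — such that $v_{i} = v$ for infinitely many $i$ and $v_{i}$ lies in the subtree of $v$ for all sufficiently large $i$. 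For all large $i$ we then have $q_{i} \in l_{i}(v) \neq \emptyset$, so from some point on $v$ is neither removed nor renamed: $v$ is eventually stable, and the transitions of $\arun'$ lying in $\arej_{v}$ occur only finitely often.

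It remains to show that $v$ is \emph{accepting} — i.e.\ $l(v)$ equals the union of the labels of its children, so Step~\ref{item:determinisation:accepting} fires at $v$ and the transition lies in $\aacc_{v}$ — for infinitely many transitions of $\arun'$. Suppose not; then there is $N$ after which $v$ is never accepting, so $h_{i}(v)$ is constant, say $h_{i}(v) = a$, for all $i \geq N$. Two observations then close the loop. \emph{(i)} If $v_{i} = v$ and $(q_{i}, \word(i), q_{i+1}) \in \afinal_{a} = \afinal_{h_{i}(v)}$, then $q_{i+1} \in \afinal_{a}(l_{i}(v), \word(i))$, which is the label of the freshly sprouted child of $v$; since for large $i$ the host cannot escape the subtree of $v$, and $v$ is not accepting (so its children are not removed), $q_{i+1}$ settles in the subtree of $v$ \emph{strictly below} $v$, i.e.\ $v_{i+1} \neq v$. \emph{(ii)} For $i \geq N$, whenever $v_{i}$ is a proper descendant of $v$, the host can re-enter $v$ only if every child of $v$ containing $q_{i+1}$ is deleted; but the relevant child's label contains $q_{i+1}$ and is therefore non-empty, so this can happen only through the wholesale removal of $v$'s children in Step~\ref{item:determinisation:accepting}, i.e.\ only if $v$ is accepting at that transition — impossible for $i \geq N$. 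Now $\arun$ meets $\afinal_{a}$ infinitely often and $v_{i} = v$ infinitely often; by \emph{(i)} the host cannot stay at $v$ forever, so after $N$ it leaves $v$ downwards at some point, and by \emph{(ii)} it never returns to $v$ afterwards, contradicting $v_{i} = v$ infinitely often. Hence $v$ is accepting infinitely often, so $\insym(\tran{\arun'}) \cap \aacc_{v} \neq \emptyset$ and $\insym(\tran{\arun'}) \cap \arej_{v} = \emptyset$, and therefore $\word \in \lang(\aut_{d})$.

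The step I expect to be the real obstacle is the extraction of the stable limit node $v$ in the second paragraph together with observations \emph{(i)}--\emph{(ii)}: all three rest on a delicate case analysis of how the stealing, accepting-and-removing, node-removal and renaming steps of Definition~\ref{def:determinisation} interact and move the host around, and this is exactly where the proof must reproduce — with the mild generalisation to an arbitrary $d = (\calT, l, h)$ rather than the initial singleton tree — the completeness part of the determinisation correctness argument of~\cite{Schewe+Varghese/12/generalisedBuchi}. Everything else (the induction $q_{i} \in \reached(d_{i})$, and the translation of "stable and accepting infinitely often" into the Rabin condition of $\aut_{d}$) is straightforward.
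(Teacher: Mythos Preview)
Your proposal is correct and follows essentially the same approach as the paper: track the host sequence $v_i = \host(q_i, d_i)$, extract a node $\pi$ (your $v$) that is eventually stable and infinitely often equal to $v_i$, and derive a contradiction from the assumption that $\pi$ is only finitely often accepting by observing that an $\afinal_{h(\pi)}$-transition forces $q$ into a child of $\pi$ from which it cannot return. The paper's contradiction step is marginally more direct---it uses only that $q_{j-1} \in l_{j-1}(\pi)$ (not that $v_{j-1} = \pi$) to place $q_j$ in the freshly sprouted child, so your split into observations \emph{(i)} and \emph{(ii)} is a slight detour, but the argument is the same.
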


\begin{proof}
Let $\word \in \lang(\baut_{\reached(d)})$ and $\arun = q_{0} \cdot q_{1} \cdot \ldots$ be the run of $\baut_{\reached(d)}$ on the input word $\word$; 
let $\arun_{\aut} = d_{0} \cdot d_{1} \cdot \ldots$ be the run of $\aut_{d}$ on $\word$.
We then define the related sequence of host nodes $\vartheta = v_{0} \cdot v_{1} \cdot v_{2} \cdot \ldots = \host(q_{0}, d_{0}) \cdot \host(q_{1}, d_{1}) \cdot \host(q_{2}, d_{2}) \cdot \ldots$.
Let $l$ be the shortest length $|v_{i}|$ of these nodes of the trees $d_{i}$ hosting $q_{i}$ that occurs infinitely many times.

We follow the run and see that the initial sequence of length $l$ of the nodes in $\vartheta$ eventually stabilises.
Let $i_{0} < i_{1} < i_{2} < \ldots$ be an infinite ascending chain of indices such that the length $|v_{j}| \geq l$ of the $j$-th node is not smaller than $l$ for all $j \geq i_{0}$, and equal to $l = |v_{i}|$ for all indices $i \in \setnocond{i_{0}, i_{1}, i_{2}, \ldots}$ in this chain.
This implies that $v_{i_{0}}$, $v_{i_{1}}$, $v_{i_{2}}$, \ldots is a descending chain when the single nodes $v_{i}$ are compared by lexicographic order.
As the domain is finite, almost all elements of the descending chain are equal, say $v_{i}:=\pi$.
In particular, $\pi$ is eventually always stable.

Let us assume for contradicting that this stable prefix $\pi$ is accepting only finitely many times.
We choose an index $i$ from the chain $i_{0} < i_{1} < i_2 < \ldots$ such that $\pi$ is stable for all $j \geq i$.
(Note that $\pi$ is the host of $q_{i}$ for $d_{i}$, and $q_{j} \in l_{j}(\pi)$ holds for all $j \geq i$.)

As $\arun$ is accepting, there is a smallest index $j > i$ such that $(q_{j-1}, \word(j-1), q_{j}) \in \afinal_{h_{i}(\pi)}$.
Now, as $\pi$ is not accepting, $q_{i}$ must henceforth be in the label of a child of $\pi$, which contradicts the assumption that infinitely many nodes in $\vartheta$ have length~$|\pi|$.

Thus, $\pi$ is eventually always stable and always eventually accepting.
\end{proof}

The lemmas in this appendix imply Theorem~\ref{thm:only_reach}.

\subsection{Proof of Theorem~\ref{thm:bottom_component}}

Before proving Theorem~\ref{thm:bottom_component} we establish the following result about the relation between paths and bottom SCCs of the product DRA.

\begin{lemma}
\label{lem:paths}
	For a MC $\mc$, a DRA $\aut = \det(\baut)$, and the product MC $\mc \prodMC \aut$, the following holds:
	\begin{enumerate}
	\item
		if there is a path from $(m,d)$ to $(m',d')$ in $\mc \prodMC \aut$, then there is a path from $[(m,d)]$ to $[(m',d')]$ in $[\mc \prodMC \aut]$; and
	\item
		if there is a path from $[(m,d)]$ to $[(m',d')]$ in $[\mc \prodMC \aut]$ and $(m,d)$ is reachable in $\mc \prodMC \aut$, then there is a path from $(m,d)$ to some $(m',d'')$ with $\reached(d') = \reached(d'')$ in $\mc \prodMC \aut$.
	\end{enumerate}
	In particular, if $[(m,d)]$ is reachable in $[\mc \prodMC \aut]$, then $(m,d')$ is reachable in $\mc \prodMC \aut$ for some $d'$ with $\reached(d) = \reached(d')$.
\end{lemma}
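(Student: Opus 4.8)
The two items are proved separately. The forward direction (item~1) is immediate from the definition of the quotient MC, while item~2 needs a short induction on path length that uses the determinism of $\aut$ together with the well-definedness of $[\pmat]$ already checked in the text. For item~1, I would take a path $(m_{0},d_{0})(m_{1},d_{1})\cdots(m_{n},d_{n})$ in $\mc\prodMC\aut$ with $(m_{0},d_{0})=(m,d)$ and $(m_{n},d_{n})=(m',d')$. By the definition of $[\mc\prodMC\aut]$ we have $[\pmat]\big((m_{i},[d_{i}]),(m_{i+1},[d_{i+1}])\big)=\pmat\big((m_{i},d_{i}),(m_{i+1},d_{i+1})\big)>0$ for every $i$, hence $[(m_{0},d_{0})][(m_{1},d_{1})]\cdots[(m_{n},d_{n})]$ is a path from $[(m,d)]$ to $[(m',d')]$ in $[\mc\prodMC\aut]$; nothing beyond unfolding the definition is needed.

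\textbf{Item~2.}
I would argue by induction on the length of a quotient path from $[(m,d)]$ to $[(m',d')]$. The length-zero case is trivial: then $\reached(d)=\reached(d')$ and $d''=d$ works. For the inductive step, let the quotient path begin with an edge $[(m,d)]\to c_{1}$, write $c_{1}=(m_{1},[e_{1}])$. Since $[\pmat]\big((m,[d]),c_{1}\big)>0$, using $d$ itself as the representative of $[d]$ (legitimate because the value of $[\pmat]$ does not depend on the chosen representatives, as already verified) this value equals $\pmat\big((m,d),(m_{1},e_{1})\big)$, so $\mc\prodMC\aut$ contains an edge $(m,d)\to(m_{1},d_{1})$ where $d_{1}=\amat(d,\labelFunc(m_{1}))$ is the unique automaton successor of $d$ over $\labelFunc(m_{1})$, and $[(m_{1},d_{1})]=c_{1}$. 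The rest of the quotient path, from $c_{1}=[(m_{1},d_{1})]$ to $[(m',d')]$, is shorter, so the induction hypothesis gives a path in $\mc\prodMC\aut$ from $(m_{1},d_{1})$ to some $(m',d'')$ with $\reached(d'')=\reached(d')$; prepending $(m,d)\to(m_{1},d_{1})$ yields the required path.

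\textbf{The ``in particular'' clause.}
This follows by combining item~2 with the correspondence between the initial states of $\mc\prodMC\aut$ and those of $[\mc\prodMC\aut]$: if $[(m,d)]$ is reachable in the quotient, fix a quotient path to it starting from a state $[(m_{0},e_{0})]$ in the support of $[\pinit]$, where $(m_{0},e_{0})$ is an initial state of $\mc\prodMC\aut$ (such $(m_{0},e_{0})$ exists because $\pinit'$ assigns positive mass, for each $m_{0}$ with $\pinit(m_{0})>0$, exactly to the automaton component $\amat(\dinit,\labelFunc(m_{0}))$). Applying item~2 to $(m_{0},e_{0})$, which is reachable in $\mc\prodMC\aut$, produces a path from $(m_{0},e_{0})$ to some $(m,d')$ with $\reached(d')=\reached(d)$, so $(m,d')$ is reachable in $\mc\prodMC\aut$.

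\textbf{Main obstacle.}
There is no deep difficulty; the only point needing care is that, in item~2, the deterministic automaton step taken in the product after each quotient edge stays inside the intended $\reached$-equivalence class. This is precisely the content of the well-definedness of $[\pmat]$, equivalently the fact that $\reached(\amat(d,\sigma))$ depends on $d$ only through $\reached(d)$ (visible from Definition~\ref{def:determinisation}, since the root label is updated to $\amat(\reached(d),\sigma)$ and no subsequent step of the construction alters it), so it can simply be invoked rather than re-proved. The remainder of the argument is bookkeeping about representatives.
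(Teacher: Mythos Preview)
Your proposal is correct and follows exactly the approach the paper indicates: the paper's own proof of this lemma consists of the single sentence ``Both of the claims are easy to establish by induction over the length of the path,'' and you have spelled out precisely that induction, correctly identifying that the only non-trivial point is that $\reached(\amat(d,\sigma))$ depends on $d$ only through $\reached(d)$. One very minor remark: in the inductive step of item~2 you apply the induction hypothesis to $(m_1,d_1)$, which formally requires $(m_1,d_1)$ to be reachable; this is immediate since $(m,d)$ is reachable by assumption and $(m,d)\to(m_1,d_1)$ is an edge, but it is worth saying explicitly (or, equivalently, you could observe that the reachability hypothesis is in fact not needed for the induction to go through, since definedness of $\dmat(d,\sigma)$ depends only on $\reached(d)$).
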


Both of the claims are easy to establish by induction over the length of the path.
They prepare the relevant theorem about bottom SCCs.
\begin{proof}
To show (1), we have to run through the properties of a bottom SCC.
First, all states in $[\scc]$ are reachable and connected by Lemma~\ref{lem:paths}.
It remains to show that no state $[(m',d')] \notin [\scc]$ is the successor of any state in the quotient MC.
Let us assume for contradiction that there is a state $(m,d) \in \scc$ such that $[(m',d')]$ is the successor of $[(m,d)']$
By Lemma~\ref{lem:paths}, this implies that there is a state $(m',d'')$ with $\reached(d'')=\reached(d')$ reachable from $(m,d)$.
As $\scc$ is a bottom SCC, $(m',d'') \in \scc$ holds, and $[(m',d')] = [(m',d'')] \in [\scc]$ follows, which is a contradiction.

To show (2), let us start with the set $\scc'' = \setcond{s \in V}{[s] \in \scc}$.
$[\scc''] = \scc$ follows from the point (2) of Lemma~\ref{lem:paths}, and $\scc''$ is closed under successors:
assuming by contradiction that this is not the case provides a $(m,d) \in \scc''$ with successor $(m',d') \notin \scc''$.
(Note that the reachability of $(m,d)$ implies the reachability of $(m',d')$.)
But this implies that $[(m',d')]$ is a successor of $[(m,d)]$, and by construction of $\scc''$, $[(m',d')] \in \scc$.
For the bottom SCC $\scc$, the construction of $\scc''$ then implies $(m',d') \in \scc''$, which is a contradiction.

As $\scc''$ is closed under successors, it contains some bottom SCC, and we select $\scc'$ to be such a bottom SCC.
We have shown that $[\scc']$ is a bottom SCC in the quotient MC in the first half of this proof.
Consequently, $[\scc']$ is a bottom SCC that is contained in $\scc$, and hence $[\scc'] = \scc$ holds.
\end{proof}

\subsection{Proof of Theorem~\ref{theo:inclusions}}
In this appendix, we show Theorem~\ref{theo:inclusions}, that is, the inclusions $\lang(\ssaut_{[d]}^{u}) \subseteq \lang(\rbpaut_{\langle d \rangle}^{u}) \subseteq \lang(\aut_{d}) \subseteq \lang(\rbpaut_{\langle d \rangle}^{o}), \lang(\ssaut_{[d]}^{o})$.
The subset automata $\ssaut^{u}$, $\ssaut^{o}$, and the breakpoint automata $\rbpaut^{u}$, $\rbpaut^{o}$ are defined in Section~\ref{subset}.

Given a triple $(R, j, C)$, we refer to $R$ by $\set\big((R, j, C)\big)$, to $j$ by $\ind\big((R, j, C)\big)$, and to $C$ by $\children\big((R, j, C)\big)$.

\begin{lemma}
	$\lang(\ssaut_{[d]}^{u}) \subseteq \lang(\rbpaut_{\langle d \rangle}^{u})$.
\end{lemma}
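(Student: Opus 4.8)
The plan is to run the subset automaton and the breakpoint automaton on the same word in lockstep. Fix $\word \in \lang(\ssaut_{[d]}^{u})$ and let $R_{0} R_{1} R_{2} \ldots$ be the unique run of $\ssaut^{u}$ on $\word$ with $R_{0} = \reached(d)$. Since $\word$ is accepted, this run is infinite, every $R_{i}$ is non-empty, and for each $j \in \natIntK$ there are infinitely many $i$ with $(R_{i}, \word(i), R_{i+1}) \in \afinal^{u}_{j}$. Let $(R_{0}, j_{0}, C_{0})(R_{1}, j_{1}, C_{1}) \ldots$ be the run of $\rbpaut^{u}_{\langle d \rangle}$ on $\word$; it starts in $\langle d \rangle$ (a legitimate breakpoint state, since the union of children labels is strictly contained in the root label of a GHT), its first component is exactly the subset run because both iterate $\amat(\functionGenericArgument, \word(i))$, and it is infinite for the same reason, hence well defined.

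The crux I would establish is the following local claim: \emph{if at some position $i$ the subset transition $(R_{i}, \word(i), R_{i+1})$ belongs to $\afinal^{u}_{h}$ with $h = j_{i}$ the current breakpoint index, then the breakpoint transition taken at position $i$ is an $\aacc_{\epsilon}$ transition.} The $\afinal^{u}$ condition, being the universal quantification over $R_{i} \times R_{i+1}$, forces $\afinal_{j_{i}}(R_{i}, \word(i)) = R_{i+1}$: any $q' \in R_{i+1}$ has some $q \in R_{i}$ with $(q, \word(i), q') \in \amat$, hence $(q, \word(i), q') \in \afinal_{j_{i}}$, so $q' \in \afinal_{j_{i}}(R_{i}, \word(i))$, while the reverse inclusion follows from $\afinal_{j_{i}} \subseteq \amat$. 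Since $C_{i} \subseteq R_{i}$ we also have $\amat(C_{i}, \word(i)) \subseteq R_{i+1}$, so the new children set $C' = \amat(C_{i}, \word(i)) \cup \afinal_{j_{i}}(R_{i}, \word(i))$ equals $R_{i+1} = R'$, which is exactly the condition defining $\aacc_{\epsilon}$.

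Finally I would show the breakpoint run uses $\aacc_{\epsilon}$ infinitely often, whence $\word \in \lang(\rbpaut^{u}_{\langle d \rangle})$. Assume not. Every $\aacc_{\epsilon}$ transition advances the index by $\oplus_{k} 1$ and every other transition leaves it fixed, so the index is eventually constant, say $j_{i} = j^{*}$ for all $i \geq N$. By acceptance of $\word$ in $\ssaut^{u}$ there is $i \geq N$ with $(R_{i}, \word(i), R_{i+1}) \in \afinal^{u}_{j^{*}} = \afinal^{u}_{j_{i}}$, and then the local claim makes the transition at $i$ an $\aacc_{\epsilon}$ transition, contradicting the choice of $N$.

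The whole argument is essentially bookkeeping. The one place that needs care is the local claim --- verifying that a universally quantified $\afinal^{u}$ transition really does let the breakpoint's children set catch up to the full reached set --- and the synchronisation in the last step, i.e.\ aligning ``the index is stuck at $j^{*}$'' with ``$\afinal^{u}_{j^{*}}$ is hit infinitely often''. I do not expect any genuine obstacle beyond lining up these two facts.
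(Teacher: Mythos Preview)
Your proposal is correct and follows essentially the same approach as the paper: both argue by contradiction that if the breakpoint run had only finitely many $\aacc_{\epsilon}$ transitions then the index would stabilise at some $j^{*}$, and then an $\afinal^{u}_{j^{*}}$ transition of the subset run (which must recur infinitely often) forces an $\aacc_{\epsilon}$ transition, yielding the contradiction. Your write-up is in fact more explicit than the paper's on the local claim---the paper just asserts ``but then transition $n$ in $\tran{\arun'}$ is accepting'' without unpacking why $C' = R'$---so the level of detail you provide is an improvement.
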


\begin{proof}
Let us assume for contradiction that $\word$ is an infinite word such that the run $\arun$ of $\ssaut_{[d]}^{u}$ on $\word$ is accepting, while the run $\arun'$ of $\lang(\rbpaut_{\langle d \rangle}^{u})$ on $\word$ is rejecting.
Then there is a position $p \in \omega$ such that the following statements hold:
\begin{itemize}
\item
	no transition after position $p$ is accepting in $\tran{\arun'}$, and, consequently,
\item
	there is an index $i$ such that, in the run $\arun'$ of $\rbpaut_{\langle d \rangle}^{u}$, for all $n \geq p$, $\ind(\arun'(n)) = i$.
\end{itemize}

It is easy to see that, by construction, $\set(\arun'(n)) = \arun(n)$ holds for all $n \in \omega$.
As $\arun$ is accepting, there is a position $n \geq p$ such that, for all $q \in \arun(n)$ and all $q' \in \arun(n+1)$, $(q, \word(n), q') \in \amat$ implies $(q, \word(n), q') \in \afinal_{i}$.
(Otherwise $\afinal_{i}$ would not be accepting in $\ssaut_{[d]}^{u}$.)
But then, transition $n$ in $\tran{\arun'}$ is accepting, which is a contradiction.
\end{proof}

\begin{lemma}
	$\lang(\rbpaut_{\langle d \rangle}^{u}) \subseteq \lang(\aut_{d})$
\end{lemma}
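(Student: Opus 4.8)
The plan is to show that on a word accepted by $\rbpaut_{\langle d \rangle}^{u}$ the (necessarily infinite) breakpoint run is mirrored, transition for transition, by the evolution of the root of the generalised history trees that $\aut_d = \det(\baut)$ goes through, and that accepting breakpoint transitions correspond exactly to transitions of $\aut_d$ in the Rabin set $\aacc_{\epsilon}$.

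First I would fix $\word \in \lang(\rbpaut_{\langle d\rangle}^{u})$ together with its (unique) accepting run $\arun' = p_{0} p_{1} p_{2} \cdots$ with $p_{0} = \langle d \rangle$, and the run $\arun = d_{0} d_{1} d_{2} \cdots$ of $\aut_d$ on $\word$, where $d_{0} = d$ and $d_{n} = (\calT_{n}, l_{n}, h_{n})$. The core claim, proved by induction on $n$, is that $p_{n} = \langle d_{n} \rangle = \big(l_{n}(\epsilon),\, h_{n}(\epsilon),\, \bigcup_{\epsilon i \in \calT_{n}} l_{n}(\epsilon i)\big)$. The base case is the definition of $\langle d \rangle$. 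For the step, trace the six phases of Definition~\ref{def:determinisation} applied to $d_{n}$ with letter $\word(n)$: Step~\ref{item:determinisation:rawupdate} turns $l_{n}(\epsilon)$ into $\amat(l_{n}(\epsilon),\word(n))$, which is the $R'$ of the corresponding breakpoint transition; Steps~\ref{item:determinisation:rawupdate}--\ref{item:determinisation:newchildren} turn the union of the labels of the children of $\epsilon$ into $\amat\big(\bigcup_{i} l_{n}(\epsilon i),\word(n)\big)\cup\afinal_{h_{n}(\epsilon)}(l_{n}(\epsilon),\word(n))$, which is exactly the $C'$ of that breakpoint transition; Step~\ref{item:determinisation:stealing} only moves states between siblings and hence leaves the union of the labels of the children of $\epsilon$ unchanged, as does any firing of Step~\ref{item:determinisation:accepting} at a proper descendant of a child of $\epsilon$, and as do Steps~\ref{item:determinisation:removing}--\ref{item:determinisation:rejecting} (which only prune empty-labelled nodes and relabel for orderedness, never touching $l(\epsilon)$). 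Finally, Step~\ref{item:determinisation:accepting} fires \emph{at $\epsilon$} precisely when $R' = C'$, in which case it empties the children of $\epsilon$ and replaces $h(\epsilon)$ by $h(\epsilon)\oplus_{k}1$ --- exactly the behaviour of an accepting breakpoint transition, which moves to $(R', j\oplus_{k}1, \emptyset)$. This simultaneously proves $p_{n} = \langle d_{n} \rangle$ and that transition $n$ lies in the breakpoint set $\aacc_{\epsilon}$ of $\rbpaut^{u}$ iff Step~\ref{item:determinisation:accepting} fires at $\epsilon$ in $(d_{n},\word(n),d_{n+1})$ iff transition $n$ of $\arun$ lies in the Rabin accepting set $\aacc_{\epsilon}$ of $\aut$.

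With this in hand the conclusion is immediate. Since $\arun'$ is accepting it is infinite, so $l_{n}(\epsilon) = \set(p_{n}) \neq \emptyset$ for every $n$; hence the root $\epsilon$ is never deleted (Step~\ref{item:determinisation:removing}) and is never renamed ($\rename(\epsilon) = \epsilon$), so no transition of $\arun$ ever lies in $\arej_{\epsilon}$. On the other hand $\arun'$ visits $\aacc_{\epsilon}$ infinitely often, so by the correspondence $\arun$ visits the Rabin set $\aacc_{\epsilon}$ infinitely often. As $\epsilon$ occurs in every ordered tree, $\epsilon \in J$, and the Rabin pair $(\aacc_{\epsilon}, \arej_{\epsilon})$ witnesses that $\arun$ is accepting; therefore $\word \in \lang(\aut_{d})$.

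The main obstacle is the inductive step: one must verify carefully that none of the operations acting below the root --- sibling stealing (Step~\ref{item:determinisation:stealing}), "accepting and removing" at nodes other than $\epsilon$, and the pruning/renaming of Steps~\ref{item:determinisation:removing}--\ref{item:determinisation:rejecting} --- ever perturbs the triple $\big(l_{n}(\epsilon),\, h_{n}(\epsilon),\, \bigcup_{i} l_{n}(\epsilon i)\big)$, so that it genuinely obeys the deterministic breakpoint recurrence. The rest is routine bookkeeping.
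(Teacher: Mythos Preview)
Your proposal is correct and follows essentially the same approach as the paper: both establish by induction that the breakpoint state $p_n$ coincides with $\langle d_n\rangle$ (the root data of the generalised history tree), that accepting breakpoint transitions correspond to Step~\ref{item:determinisation:accepting} firing at the root (i.e., membership in the Rabin set $\aacc_\epsilon$), and that the root is never rejecting. You spell out the inductive step in more detail than the paper does---in particular the verification that Steps~\ref{item:determinisation:stealing}--\ref{item:determinisation:rejecting} leave the root triple untouched---whereas the paper simply asserts the invariant is ``easy to show by induction.''
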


\begin{proof}
Let us consider an accepting run $\arun$ of $\rbpaut_{\langle d \rangle}^{u}$ and a run $\arun'$ of $\aut_{d}$ on a given input word $\word$.
It is easy to show by induction that, for $\arun'(n) = (\calT_{n}, l_{n}, h_{n})$,
\begin{itemize}
\item
	$\set(\arun(n)) = \reached(\arun'(n))$,
\item
	$\children(\arun(n)) = \bigcup_{i \in \calT_{n} \cap \naturals} l_{n}(i)$, and
\item
	$\ind(\arun(n)) = h_{n}(\epsilon)$
\end{itemize}
hold for all $n \in \omega$, and that if $\tran{\arun}(n)$ is accepting, then $\tran{\arun'}(n)$ is accepting with accepting pair with index $\epsilon$.
As the root cannot be rejecting, this implies that $\arun'$ is accepting, too.
\end{proof}

\begin{lemma}
	$\lang(\aut_{d}) \subseteq \lang(\rbpaut_{\langle d \rangle}^{o})$
\end{lemma}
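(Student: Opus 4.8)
The plan is to run the (deterministic) run $\arun'$ of $\aut_{d}$ and the run $\arun$ of $\rbpaut_{\langle d \rangle}^{o}$ on the same word $\word$ in lock-step, reusing the correspondence already established in the proof of the lemma $\lang(\rbpaut_{\langle d \rangle}^{u}) \subseteq \lang(\aut_{d})$ above. Writing $\arun'(n) = (\calT_{n}, l_{n}, h_{n})$, the same induction yields, for all $n \in \omega$, that $\set(\arun(n)) = \reached(\arun'(n))$, $\children(\arun(n)) = \bigcup_{i \in \calT_{n} \cap \naturals} l_{n}(i)$, $\ind(\arun(n)) = h_{n}(\epsilon)$, and $\tran{\arun}(n) \in \aacc_{\epsilon}$ exactly when node $\epsilon$ is accepting in $\tran{\arun'}(n)$. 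The last equivalence is just the observation that Steps~\ref{item:determinisation:rawupdate}--\ref{item:determinisation:stealing} of the determinisation turn the union of the labels of the children of $\epsilon$ into $\amat(\children(\arun(n)), \word(n)) \cup \afinal_{h_{n}(\epsilon)}(\reached(\arun'(n)), \word(n))$, i.e.\ into the set $C'$ of the breakpoint construction, while $l(\epsilon)$ becomes the corresponding $R'$, so that Step~\ref{item:determinisation:accepting} fires at $\epsilon$ iff $R' = C'$. Since $\word \in \lang(\aut_{d})$ forces $\arun'$, and hence $\arun$, to be infinite, both runs are defined for all $n$.

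Now assume $\word \in \lang(\aut_{d})$ and fix a node $v$ witnessing acceptance of $\arun'$: $v$ is eventually never removed and never renamed, and $\epsilon$-- sorry, $v$ is accepting in $\tran{\arun'}(n)$ for infinitely many $n$. If $v = \epsilon$, then by the correspondence $\tran{\arun}(n) \in \aacc_{\epsilon}$ for infinitely many $n$, so $\arun$ is accepted through the Rabin pair $(\aacc_{\epsilon}, \emptyset)$ of $\rbpaut^{o}$ and $\word \in \lang(\rbpaut_{\langle d \rangle}^{o})$.

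If instead $v = n_{1} \cdots n_{l}$ with $l \geq 1$, I will show that only finitely many transitions of $\arun$ lie in $\arej_{0}$; since $\insym(\tran{\arun}) \cap \amat' \neq \emptyset$ holds trivially for an infinite run, this gives acceptance of $\arun$ through the pair $(\amat', \arej_{0})$. By the same Safra-style bookkeeping used in the proof of Theorem~\ref{thm:only_reach} (a node that is eventually stable and infinitely often accepting, together with all of its ancestors, belongs to every sufficiently late tree) there is an $N$ with $v \in \calT_{n}$, hence $n_{1} \in \calT_{n} \cap \naturals$, for all $n \geq N$; as the labels of a GHT are non-empty, $l_{n}(n_{1}) \neq \emptyset$ for every such $n$. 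Suppose $\tran{\arun}(n) \in \arej_{0}$ for some $n \geq N$. Then $\amat(\children(\arun(n)), \word(n)) = \emptyset$ by definition of $\arej_{0}$; but $l_{n}(n_{1}) \subseteq \children(\arun(n)) = \bigcup_{i \in \calT_{n} \cap \naturals} l_{n}(i)$, so the raw update sets the label of $n_{1}$ to $\amat(l_{n}(n_{1}), \word(n)) = \emptyset$, whence $n_{1}$ is deleted at Step~\ref{item:determinisation:removing}, contradicting $n_{1} \in \calT_{n+1}$. Hence $\insym(\tran{\arun}) \cap \arej_{0} = \emptyset$, and $\word \in \lang(\rbpaut_{\langle d \rangle}^{o})$ in this case as well.

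The \emph{main obstacle} is the bookkeeping invoked in the last case: making precise that an eventually stable, infinitely often accepting node -- and therefore its first coordinate $n_{1}$, which is a child of the root -- is present in all sufficiently late history trees, despite the renaming of Step~\ref{item:determinisation:rejecting}. This is exactly the reasoning already carried out in the appendix proof of Theorem~\ref{thm:only_reach}, so I would cite it rather than redo it. Everything else reduces to matching Steps~\ref{item:determinisation:rawupdate}--\ref{item:determinisation:rejecting} of $\det(\baut)$ against the transition function of $\rbpaut^{o}$, which is essentially contained in the definition of the breakpoint automata and in the preceding lemmas of this appendix.
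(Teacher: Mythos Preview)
Your proof is correct and follows essentially the same approach as the paper: establish the lock-step correspondence between the breakpoint run and the root data of the history-tree run, then case-split on whether the witnessing Rabin index is $\epsilon$ or not, and in the latter case argue that some child of the root survives forever, forcing $\amat(C,\sigma)\neq\emptyset$ and hence cofinitely many transitions outside $\arej_0$. The only cosmetic difference is that the paper tracks node~$0$ (observing that whenever any non-root node is stable, $0$ is stable too), whereas you track $n_1$, the first coordinate of the witness $v$ itself; both choices yield a child of $\epsilon$ that is present in every late tree, and the rest of the argument is identical.
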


\begin{proof}
Let us consider an accepting run $\arun'$ of $\aut_{d}$ and a run $\arun$ of $\rbpaut_{\langle d \rangle}^{o}$ on a given input word $\word$.
It is easy to show by induction that, for $\arun'(n) = (\calT_{n}, l_{n}, h_{n})$,
\begin{itemize}
\item
	$\set(\arun(n)) = \reached(\arun'(n))$,
\item
	$\children(\arun(n)) = \bigcup_{i \in \calT_{n} \cap \naturals} l_{n}(i)$, and
\item
	$\ind(\arun(n)) = h_{n}(\epsilon)$
\end{itemize}
hold for all $n \in \omega$, and that if $\tran{\arun'}(n)$ is accepting with accepting pair with index $\epsilon$ (recall that the root cannot be rejecting), then $\tran{\arun}(n)$ is accepting.
If $\tran{\arun'}(n)$ is accepting, but not with index $\epsilon$, then the node with position $0$ in the history is eventually always stable (note that, whenever $0$ is not stable, no other node than $\epsilon$ is) say from position $p \in \omega$ onwards.
But then $\children(\arun(n)) \supseteq l_{n}(0)$ cannot be empty for any $n \geq p$, thus $\tran{\arun}(n) \notin \arej_{0}$.

Consequently, $\tran{\arun}$ is rejecting only finitely many times, and $\word$ is accepted by $\rbpaut_{\langle d \rangle}^{o}$.
\end{proof}

\begin{lemma}
	$\lang(\aut_{d}) \subseteq \lang(\ssaut_{[d]}^{o})$.
\end{lemma}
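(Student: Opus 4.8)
The plan is to reduce the claim to a statement about the nondeterministic automaton $\baut$ via Theorem~\ref{thm:only_reach}, and then to exhibit a concrete accepting run of the over-approximating subset automaton $\ssaut^{o}$ obtained by ``tracking'' an accepting run of $\baut$ at the level of subsets.

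First I would fix a word $\word \in \lang(\aut_{d})$. By Theorem~\ref{thm:only_reach}, $\word$ is accepted by $\baut_{\reached(d)}$, so there is an accepting run $\arun = q_{0} \cdot q_{1} \cdot q_{2} \cdots$ of $\baut_{\reached(d)}$ on $\word$; in particular $q_{0} \in \reached(d)$, $(q_{n}, \word(n), q_{n+1}) \in \amat$ for all $n \in \omega$, and for every $i \in \natIntK$ we have $\insym(\tran{\arun}) \cap \afinal_{i} \neq \emptyset$, i.e.\ infinitely many transitions of $\arun$ lie in $\afinal_{i}$.

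Next I would consider the run $\arun' = R_{0} \cdot R_{1} \cdot R_{2} \cdots$ of $\ssaut^{o}_{[d]}$ on $\word$, i.e.\ the run of $\ssaut^{o}$ started in the subset state $\reached(d)$, so that $R_{0} = \reached(d)$ and $R_{n+1} = \amat(R_{n}, \word(n))$. A one-line induction on $n$ gives $q_{n} \in R_{n}$ for all $n$: the base case is $q_{0} \in \reached(d) = R_{0}$, and in the step $q_{n} \in R_{n}$ yields $q_{n+1} \in \amat(q_{n}, \word(n)) \subseteq \amat(R_{n}, \word(n)) = R_{n+1}$; in particular every $R_{n+1}$ is non-empty, so $\arun'$ is genuinely infinite and hence a well-defined run of the subset automaton. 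Now fix $i \in \natIntK$. Whenever $(q_{n}, \word(n), q_{n+1}) \in \afinal_{i}$ we have $q_{n} \in R_{n}$ and $q_{n+1} \in R_{n+1}$, so the pair $(q_{n}, q_{n+1}) \in R_{n} \times R_{n+1}$ witnesses $(R_{n}, \word(n), R_{n+1}) \in \afinal^{o}_{i}$ by the very definition of the over-approximating accepting sets. Since there are infinitely many such $n$, $\arun'$ takes infinitely many transitions from $\afinal^{o}_{i}$; as $i \in \natIntK$ was arbitrary, $\arun'$ is accepting, and therefore $\word \in \lang(\ssaut^{o}_{[d]})$.

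There is no genuine obstacle here: this is exactly the ``easy direction'' announced below Lemma~\ref{lem:inclusions1}. The only points requiring a little care are the reading of the subscript $[d]$ of $\ssaut^{o}$ as the subset state $\reached(d)$ (the identification underlying Lemma~\ref{lem:productSubsetIsomorphicQuotientMC}), the remark that the tracked run $\arun'$ never terminates because it always contains the corresponding state of $\arun$, and the fact that over-approximation requires only a \emph{single} witnessing pair in $R \times C$, which is precisely what the tracked run supplies. If one preferred to avoid Theorem~\ref{thm:only_reach}, one could instead argue directly on the run $d_{0} \cdot d_{1} \cdots$ of $\aut_{d}$, observing that the sequence of root labels $\reached(d_{0}), \reached(d_{1}), \dots$ coincides with $R_{0}, R_{1}, \dots$ and recovering an accepting $\baut$-run through the roots by König's lemma; but routing through Theorem~\ref{thm:only_reach} keeps the argument shortest.
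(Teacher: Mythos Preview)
Your proof is correct and follows essentially the same approach as the paper: both reduce to $\baut_{\reached(d)}$ via Theorem~\ref{thm:only_reach} and exploit that each state $q_{n}$ of an accepting $\baut$-run lies in the corresponding subset state $R_{n}$, so any $\afinal_{i}$-transition of $\baut$ witnesses an $\afinal^{o}_{i}$-transition of the subset automaton. The only cosmetic difference is that the paper phrases the argument contrapositively (if the subset run has only finitely many $\afinal^{o}_{i}$-transitions for some $i$, then no run of $\baut_{\reached(d)}$ can take more than finitely many $\afinal_{i}$-transitions), whereas you argue in the direct direction; your version is in fact slightly more explicit about the induction establishing $q_{n}\in R_{n}$.
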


\begin{proof}
Recall that $\aut$ is obtained by determinisation of the NGBA $\baut$, i.e., $\aut = \det(\baut)$.
Let $\arun$ be the run of a word $\word$ that is rejected by $\ssaut_{[d]}^{o}$, and $i$ an index such that $\tran{\arun}$ contains only finitely many transitions in $\afinal_{i}$.
Then there is a position $n \in \omega$ such that no transition in $\afinal_{i}$ may occur from $n$ onwards.

But then there is no $m \geq n$ for which a position $q \in \arun(m)$ is reachable such that $(q, \word(n), q')$ is in the set $\afinal_{i}$ of the NGBA $\baut$.
Thus no run of $\baut$ on $\word$ can have more than $n$ transitions from $\afinal_{i}$, hence $\word \notin \lang(\baut_{[d]})$ and the claim follows with since $\lang(\baut_{[d]}) = \lang(\aut_{d})$ by Theorem~\ref{thm:only_reach}, thus $\word \notin \lang(\aut_{d})$.
\end{proof}

Together, the lemmas in this appendix imply Theorem~\ref{theo:inclusions}.

\subsection{Proof of Proposition~\ref{pro:buechiLangEqualSemiDetLang}}

We split Proposition~\ref{pro:buechiLangEqualSemiDetLang}, which says that given a NGBA $\baut$ we have $\lang(\bToSd(\baut)) = \lang( \baut)$, in two lemmas corresponding to the inclusions $\lang(\bToSd(\baut)) \subseteq \lang( \baut)$ and $\lang(\baut) \subseteq \lang(\bToSd(\baut))$, respectively.
\begin{lemma}
\label{lem:semiDetLangSubseteqBuechiLang}
	Given a NGBA $\baut$, $\lang(\bToSd(\baut)) \subseteq \lang( \baut)$.
\end{lemma}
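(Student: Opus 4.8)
The plan is to take an accepting run of $\sdaut = \bToSd(\baut)$ on a word $\word$, split it according to the semi-deterministic structure of $\sdaut$, and then use the results already established to turn its two parts into a finite prefix run and an accepting suffix run of $\baut$ on $\word$.

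First I would analyse the shape of an accepting run. Since $\sdstates = \ssstates \cup \bpstates$, $\sdmat = \ssmat \cup \sdmattra \cup \bpmat$ with $\ssmat \subseteq \ssstates \times \Sigma \times \ssstates$, $\bpmat \subseteq \bpstates \times \Sigma \times \bpstates$, and $\sdmattra \subseteq \ssstates \times \Sigma \times \bpstates$, any run starting in $\sdinit = \ssinit \in \ssstates$ stays in $\ssstates$ until it performs at most one transit transition and, after that, remains in $\bpstates$ forever. As $\sdfinal = \bpfinal \subseteq \bpmat$, an accepting run $\arun$ of $\sdaut$ on $\word$ must take exactly one transit transition, at some position $n$; thus $\arun(0), \dots, \arun(n) \in \ssstates$, $(\arun(n), \word(n), \arun(n{+}1)) \in \sdmattra$, and the suffix $\arun' := \arun(n{+}1) \cdot \arun(n{+}2) \cdots$ is a run of $\bbpaut$ started in $\arun(n{+}1)$ on $\word(n{+}1) \cdot \word(n{+}2) \cdots$ that is accepting for $\bbpaut$, because $\insym(\tran{\arun'}) \cap \bpfinal = \insym(\tran{\arun}) \cap \sdfinal \neq \emptyset$.

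Next I would push the suffix through the chain of prior results. Write $\arun(n{+}1) = (R', j', C')$ with $C' \subsetneq R' \subseteq \astates$. Every such triple equals $\langle d \rangle$ for some generalised history tree $d$ over $\astates$ for $k$ accepting sets: take $\calT = \setnocond{\epsilon}$, $l(\epsilon) = R'$, $h(\epsilon) = j'$ if $C' = \emptyset$, and additionally a single child $0$ with $l(0) = C'$, $h(0) = 1$ otherwise; the defining conditions of a GHT are checked directly, and $\reached(d) = R'$. Since the \buchi condition $\bpfinal = \aacc_{\epsilon}$ is equivalent to the Rabin condition $\setnocond{(\aacc_{\epsilon}, \emptyset)}$, the automata $\bbpaut_{(R', j', C')}$ and $\rbpaut^{u}_{\langle d \rangle}$ recognise the same language, so by Theorem~\ref{theo:inclusions} and Theorem~\ref{thm:only_reach} we get $\word(n{+}1) \cdot \word(n{+}2) \cdots \in \lang(\bbpaut_{(R', j', C')}) = \lang(\rbpaut^{u}_{\langle d \rangle}) \subseteq \lang(\aut_{d}) = \lang(\baut_{\reached(d)}) = \lang(\baut_{R'})$. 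Fix an accepting run $q_{0} q_{1} q_{2} \cdots$ of $\baut_{R'}$ on $\word(n{+}1) \cdot \word(n{+}2) \cdots$; in particular $q_{0} \in R'$, and its transitions meet every $\afinal_{j}$ infinitely often. Then, using the subset semantics, an easy induction gives $\arun(i) = \amat(\ainit, \word(0) \cdots \word(i{-}1))$, the set of states of $\baut$ reachable from $\ainit$ on the prefix $\word(0) \cdots \word(i{-}1)$; and by the definition of $\sdmattra$, $R' \subseteq \ssmat(\arun(n), \word(n)) = \amat(\arun(n), \word(n)) = \amat(\ainit, \word(0) \cdots \word(n))$, so $q_{0}$ is reachable. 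Hence there is a pre-run $p_{0} \cdots p_{n{+}1}$ of $\baut$ with $p_{0} \in \ainit$, $(p_{i}, \word(i), p_{i{+}1}) \in \amat$ for $i < n{+}1$, and $p_{n{+}1} = q_{0}$. Splicing it with the accepting suffix run yields $p_{0} \cdots p_{n} \cdot q_{0} q_{1} q_{2} \cdots$, a run of $\baut$ on $\word$ starting in $\ainit$ whose tail already visits every $\afinal_{j}$ infinitely often; it is therefore accepting, so $\word \in \lang(\baut)$.

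I expect the only real work to be bookkeeping: establishing the factorisation (subset prefix)$\cdot$(single transit)$\cdot$(breakpoint suffix) of an accepting run, checking that $\bp(\astates, k)$ is closed under $\bpmat$ so that $\arun'$ indeed lives among the breakpoint states, and verifying that an arbitrary breakpoint triple is $\langle d \rangle$ for a genuine GHT so that Theorem~\ref{theo:inclusions} applies. All the substance of the inclusion is inherited from Theorems~\ref{theo:inclusions} and~\ref{thm:only_reach}.
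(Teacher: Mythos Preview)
Your proof is correct but takes a genuinely different route from the paper's. The paper gives a direct, self-contained argument: it takes the accepting run of $\sdaut$, identifies the infinite ascending chain of breakpoint positions $b_0 < b_1 < \cdots$, and builds a K\"onig tree of finite pre-runs of $\baut$ that track the breakpoint structure (each pre-run ends in the current reachable set $R_h$, and between consecutive breakpoints $b_l$ and $b_{l+1}$ must contain a transition from the currently active accepting set $\afinal_{j_{b_l}}$); an infinite branch of this finitely-branching tree is then an accepting run of $\baut$.

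Your approach instead factors the work through results already established in Section~\ref{sec:lazy}: you observe that the breakpoint suffix witnesses membership in $\lang(\rbpaut^{u}_{\langle d \rangle})$ for a suitably constructed GHT $d$, then invoke the inclusion $\lang(\rbpaut^{u}_{\langle d \rangle}) \subseteq \lang(\aut_d)$ of Theorem~\ref{theo:inclusions} together with $\lang(\aut_d) = \lang(\baut_{\reached(d)})$ of Theorem~\ref{thm:only_reach} to obtain an accepting suffix run of $\baut$, which you splice onto a prefix extracted from the subset part. This is more modular and arguably shorter, since the K\"onig-lemma work is already hidden inside the proof of Theorem~\ref{thm:only_reach}. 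The trade-off is that your argument routes through the Safra-style DRA $\aut = \det(\baut)$, which is conceptually a detour given that the paper's aim is to sidestep Safra; the paper's direct proof keeps the lemma self-contained and independent of the Rabin determinisation. There is no circularity, however, since neither Theorem~\ref{thm:only_reach} nor Theorem~\ref{theo:inclusions} depends on Proposition~\ref{pro:buechiLangEqualSemiDetLang}.
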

\begin{proof}
Let $\arun = R_{0} \cdot \ldots \cdot R_{n-1} \cdot (R_{n}, j_{n}, C_{n}) \cdot (R_{n+1}, j_{n+1}, C_{n+1}) \ldots$ be an accepting run of $\bToSd(\baut)$ on a word $\word$.
Thus, for each $i \geq 0$, $R_{i+1} = \sdmatini(R_{i}, \word(i))$ holds;
$(R_{n}, j_{n}, C_{n}) \in \sdmattra(R_{n-1}, \word(n-1))$; and, for each $i \geq n$, $(R_{i+1}, j_{i+1}, C_{i+1}) = \sdmatfin((R_{i}, j_{i}, C_{i}), \word(i))$ holds.

Let $b_{0} < b_{1} < b_{2} \ldots$ be the ascending chain of breakpoints.
That is, the chain that contains exactly the positions $b_{l}$ such that $\arun(b_{l}) = (R_{b_{l}}, j_{b_{l}}, \emptyset)$ for some $R_{b_{l}} \subseteq \astates$ and $j_{b_{l}} \in \natIntK$ that are reached after some accepting transition has been performed.

We build the largest prefix closed tree of initial sequences $q_{0} \cdot q_{1} \cdot \ldots \cdot q_{h} \in \astates^{*}$ of runs, for which the following conditions hold:
\begin{itemize}
\item
    $q_{0} \cdot q_{1} \cdot \ldots \cdot q_{h-1} \in \astates^{*}$ is included in the tree,
\item
    it is an initial sequence of a run of $\baut$,
\item
    $q_{h} \in R_{h}$, and
\item
    for all indices $h = b_{l+1}$ (with $l \in \naturals$) from the chain of breakpoints there has to be a $j_{b_{l}}$ accepting transition $(q_{m}, \word(m), q_{m+1}) \in \afinal_{j_{b_{l}}}$ for some $b_{l} \leq m < h$.
\end{itemize}

As usual with the breakpoint construction, it is easy to show that there exists a run $q_{0} \cdot q_{1} \cdot \ldots \cdot q_{h}$ for all $h \in \omega$ and $q_{h} \in R_{h}$.
Thus, we are left with an infinite and finitely branching tree.
Invoking K\"onig's lemma, this tree contains an infinite path, which is an accepting run of $\baut$ by construction.
\end{proof}

\begin{lemma}
\label{lem:buechiLangSubseteqSemiDetLang}
	Given a NGBA $\baut$, it is $\lang(\baut) \subseteq \lang(\bToSd(\baut))$.
\end{lemma}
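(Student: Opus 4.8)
The plan is to prove the reverse inclusion directly: from an accepting run of $\baut$ on a word $\word$ I will construct an accepting run of $\bToSd(\baut)$ on $\word$. So fix $\word \in \lang(\baut)$ and an accepting run $\arun = q_{0} \cdot q_{1} \cdot q_{2} \cdots$ of $\baut$ on $\word$; by definition it uses a transition from each accepting set $\afinal_{j}$, $j \in \natIntK$, infinitely often.

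The run of $\sdaut = \bToSd(\baut)$ I build has three phases. First, the automaton follows the (forced) subset part: after reading $\word[0,m[$ it is in the subset state $R_{m} = \amat(\ainit, \word[0,m[)$, and an easy induction gives $q_{m} \in R_{m}$ for every $m$. Second, at a position $n$ chosen below it takes one transit transition on the letter $\word(n)$ into a breakpoint state $(R', 1, \emptyset)$; this is legal provided $\emptyset \neq R' \subseteq \amat(R_{n}, \word(n)) = R_{n+1}$. Third, it follows the deterministic breakpoint part forever, and the resulting run is accepting precisely when this tail meets $\aacc_{\epsilon}$ infinitely often, i.e.\ when infinitely many breakpoints occur.

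The crux is the choice of $n$ and $R'$. I would take $n$ large enough that, from position $n$ on, the ``live'' sets $L_{m} := \setcond{q \in \astates}{\baut_{q}\text{ accepts }\word[m,\infty)}$ only assume values that recur infinitely often, and commit to $R' = R_{n+1} \cap L_{n+1}$, which is nonempty since $q_{n+1} \in L_{n+1}$. It then remains to show that the deterministic breakpoint run from $(R', 1, \emptyset)$ on $\word[n+1,\infty)$ reaches a breakpoint infinitely often. For this I would follow the breakpoint dynamics $R \mapsto \amat(R,\sigma)$ and $C \mapsto \amat(C,\sigma) \cup \afinal_{j}(R,\sigma)$ over a window between two successive breakpoints: with the current counter value $j$, the run $\arun$ (which visits $\afinal_{j}$ infinitely often) eventually drives $q_{n+1+t}$ into $C$, and---using the recurrence of the live sets to keep every state still present in the $R$-component live, hence reachable from a state that has just taken an $\afinal_{j}$-transition inside the window---the $\amat(C,\,\cdot\,)$ part pulls the rest of $R$ into $C$, so $C = R$ is reached, the accepting transition fires, and the counter advances. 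Since the counter cycles through all of $\natIntK$, iterating shows the tail is accepting. This last part is essentially the correctness argument of the Miyano--Hayashi/breakpoint construction and can be organised as a König's-lemma argument over the tree of finite runs of $\baut$ that stay within $R'$ and realise the required $\afinal_{j}$-transitions, mirroring the tree built in the proof of Lemma~\ref{lem:semiDetLangSubseteqBuechiLang}.

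The main obstacle I anticipate is making the choice of $n$ and $R'$ airtight---in particular, ruling out that some ``stuck'' state lingers in the breakpoint $R$-component forever without ever being absorbed into $C$. This is precisely the failure mode that makes the plain breakpoint automaton $\rbpaut^{u}$ only an under-approximation of $\lang(\baut)$, and it is the reason the transit transitions are needed; overcoming it requires the finiteness argument on the live sets $L_{m}$ to guarantee that past a suitable prefix everything reachable from the committed subset is itself live, so that it is continually fed $\afinal_{j}$-transitions along $\arun$. The remaining ingredients---the subset-phase bookkeeping, the legality of the transit transition, and the window-by-window breakpoint dynamics---are routine.
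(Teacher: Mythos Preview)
Your overall plan---subset phase, one transit step, then the deterministic breakpoint tail---is the right shape and matches the paper's construction. The gap is exactly where you flag it: the argument that the breakpoint tail hits $C=R$ infinitely often.

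Your proposed fix does not go through. The claim that past a suitable prefix ``everything reachable from the committed subset is itself live'' is false: the $R$-component evolves by the full subset update $R\mapsto\amat(R,\sigma)$, and a live state $q\in L_m$ may have a successor $q'\in\amat(q,\word(m))$ that carries an infinite but non-accepting run on the tail of $\word$, so $q'\notin L_{m+1}$. No choice of $n$ keeps such states out of the $R$-component; they can enter and persist forever. Once non-live states sit in $R$, the K\"onig's-lemma tree you describe only yields an infinite run of $\baut$ from some state of $R$ that avoids $\afinal_j$---a \emph{rejecting} run, which in no way contradicts the existence of the accepting run $\arun$ on the same word. So the window argument stalls: you have shown $q_{n+t}$ eventually enters $C$, but you have no leverage on the other states of $R$.

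The paper closes this with a different, cardinality-based idea. It transits to the \emph{singleton} $(\{q_n\},1,\emptyset)$ and chooses $n$ so that the ``width'' $\width(\arun,n,\word)=\max_{j\ge n}|R_j|$ of the subset sequence started at $\{q_n\}$ has already reached its limit $\width(\arun,\word)$. It then shows, for every index $h$, that the un-reset breakpoint sets attain the same limiting width, by comparing the breakpoint sequence started at $\{q_{m-1}\}$ with the subset sequence started at $\{q_m\}$ across a position $m$ where $\arun$ takes an $\afinal_h$-transition (this forces $C^h_m\supseteq\{q_m\}=R_m$ and hence $C^h_j\supseteq R_j$ thereafter). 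If the constructed run were rejecting past some point, one would have $C_i\subsetneq R_i$ forever while $|C_i|$ eventually matches the maximal value of $|R_i|$---a contradiction. The point is size, not liveness: non-live states may sit in $R$ indefinitely, but $|C|$ must still catch up to $|R|$. That width-stabilisation step is the missing ingredient in your sketch.
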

Before proving Lemma~\ref{lem:buechiLangSubseteqSemiDetLang}, we introduce some terminology we use in the proof.

Given two sequences $\pi$ and $\pi'$, we write $\pi \trianglelefteq \pi'$ if $\pi$ is a prefix of $\pi'$.

Let $\arun = q_{0} \cdot q_{1} \cdot q_{2} \cdot \ldots$ be a run of $\baut$ on an $\omega$-word $\word \in \Sigma^\omega$.
For this run $\arun$, we denote by $[\arun, n, \word]$ the set of runs on $\word$ with the same initial sequence $\arun_{n} = q_{0} \cdot \ldots \cdot q_{n}$, i.e., $[\arun, n, \word] = \setcond{\arun' \in \runs(\alpha)}{\arun_{n} \trianglelefteq \arun'}$.
Moreover, let $\langle \arun, n, \word \rangle$ be the sequence $R_n \cdot R_{n+1} \cdot R_{n+2} \cdot \ldots$ where $R_{i} = \setcond{q \in \astates}{\exists \arun' \in [\arun, n, \word].\ \arun'(i) = q}$ for all $i \geq n$.
Essentially, the sets $R_{i}$ are the sets from a subset construction that starts with the singleton $R_{n} = \setnocond{q_{n} \in \astates}$ at position $n$.
Note that $R_{n}$ is indeed a singleton since it is the last state of $\arun_{n}$.

With $\langle \arun, n, \word \rangle^{h}$, we define the sequence $C^{h}_{n} \cdot C^{h}_{n+1} \cdot C^{h}_{n+2} \ldots$  where $C^{h}_{i} = \setcond{q \in \astates}{\exists \arun' \in [\arun, n, \word].\ \arun'(i) = q \text{ and } \exists l, n < l \leq i.\ \overline{\arun}'(l) \in \afinal_{h}}$.

Essentially, $(R_{n}, C^{h}_{n})$ are initially the pairs from a breakpoint construction (relative to the accepting set $\afinal_{h}$) that starts with $(R_{n}, C^{h}_{n}) = (\setnocond{q_{n}}, \emptyset)$ at position $n$, but does not reset when a breakpoint is met.
In this case, the sequence continues with sets $R_{m} = C_{m}$ for all positions $m$ from the breakpoint onwards:
$R_{m} = C_{m}$ implies $\amat(C_{m},\word(m)) = \amat(R_{m},\word(m)) \supseteq \afinal_{h}(R_{m}, \word(m))$, and thus $R_{m+1} = \amat(R_{m},\word(m)) =  \amat(C_{m},\word(m)) \cup \afinal_{h}(R_{m}, \word(m)) = C_{m+1}$.

Note that, for each $h \in \natIntK$, each $j \geq n$, $\langle \arun, n, \word \rangle^{h} = C^{h}_{n} \cdot C^{h}_{n+1} \cdot C^{h}_{n+2} \cdot \ldots$, and $\langle \arun, n, \word \rangle = R_{n} \cdot R_{n+1} \cdot R_{n+2} \cdot \ldots$, by construction it follows that $C^{h}_{j} \subseteq R_{j}$.

We first define the width of a position $n$ as $\width(\arun, n, \word) = \max \setcond{|R_{j}|}{j \geq n}$ where $\langle \arun, n, \word \rangle = R_{n} \cdot R_{n+1} \cdot R_{n+2} \cdot \ldots$ and, for each $h \in \natIntK$, $\width_{h}(\arun, n, \word) = \max \setcond{|C^{h}_{j}|}{j \geq n}$ where $\langle \arun, n, \word \rangle^h = C^{h}_{n} \cdot C^{h}_{n+1} \cdot C^{h}_{n+2} \cdot \ldots$.

Note that, for each $h \in \natIntK$ and each $n \in \omega$, we have the following relations between the widths:
$\width_{h}(\arun, n+1, \word) \leq \width_{h}(\arun, n, \word) \leq \width(\arun, n, \word)$ and $\width(\arun, n+1, \word) \leq \width(\arun, n, \word)$.
In fact, by definition of $\width$, we have that $\width_{h}(\arun, n, \word) = \max \setcond{|C_{j}|}{j \geq n} = \max \setnocond{|C_{n}|, \max \setcond{|C_{j}|}{j \geq n+1}} = \max \setnocond{|C_{n}|, \width_{h}(\arun, n+1, \word)} \geq \width_{h}(\arun, n+1, \word)$, and similarly for $\width(\arun, n+1, \word) \leq \width(\arun, n, \word)$.
Since $C^{h}_{j} \subseteq R_{j}$ holds for each $h \in \natIntK$ and $j \geq n$, it is immediate to derive $\width_{h}(\arun, n, \word) \leq \width(\arun, n, \word)$.

Since the widths are monotone, we can therefore define the width of a run as the limit of the width of its positions, i.e., we define $\width(\arun, \word)= \lim_{n \to \infty} \width(\arun, n, \word)$ and $\width_{h}(\arun, \word)= \lim_{n \to \infty} \width_{h}(\arun, n, \word)$.
We are now ready to prove Lemma~\ref{lem:buechiLangSubseteqSemiDetLang}:

\begin{proof}
Let $\arun = q_{0} \cdot q_{1} \cdot q_{2} \cdot \ldots$ be an accepting run of $\baut$.
We first show that $\width(\arun, \word) = \width_{h}(\arun, \word)$ holds for each $h \in \natIntK$.
For doing this, we select an $n \in \omega$ such that $\width(\arun, n, \word) = \width(\arun, \word)$ holds.
Note that such an $n$ exists due to the monotonicity of $\width(\arun, n, \word)$ in $n$ and the fact that $\width(\arun, n, \word)$ has $0$ as lower bound.
Moreover, $\width(\arun, m, \word) = \width(\arun, \word)$ holds for each $m \geq n$.

For a given $h$, we can now choose an arbitrary $m_{h} > n$ such that $\tran{\rho}(m_{h}) \in \afinal_{h}$. (As $\arun$ is accepting, arbitrarily large such $m_{h}$ exists.)
We have $\langle \arun, m_{h}-1, \word \rangle^h = \emptyset \cdot C^{h}_{m_{h}} \cdot C^{h}_{m_{h}+1} \cdot C^{h}_{m_{h}+2} \cdot \ldots$ and $\langle \arun, m_{h}, \word \rangle = R_{m_{h}} \cdot R_{m_{h}+1} \cdot R_{m_{h}+2} \cdot \ldots$.
Now, $\tran{\rho}(m_{h}) \in \afinal_{h}$ implies $q_{m_{h}} \in C^{h}_{m_{h}}$, which together with $\setnocond{q_{m_{h}}} = R_{m_{h}}$ provides $C^{h}_{m_{h}} \supseteq R_{m_{h}}$.
A simple inductive argument thus implies $C^{h}_{j} \supseteq R_{j}$ for all $j \geq m_{h}$, and thus
$\width_{h}(\arun, m_{h}-1, \word) \geq \width(\arun, m_{h}, \word)=\width(\arun, \word)$.
Together with $\width(\arun, \word) = \width(\arun, m_{h}-1, \word) \geq \width_{h}(\arun, m_{h}-1, \word)$, this provides $\width_{h}(\arun, m_{h}-1, \word) = \width(\arun, \word)$.
As $m_{h}$ can be chosen arbitrarily large, this implies $\width_{h}(\arun, \word) =\width(\arun, \word)$.

With this observation, we can construct an accepting run of $\bToSd(\baut)$ as follows.
We start with an initial sequence $R_{0} \cdot R_{1} \cdot \ldots \cdot R_{n-1}$ in $\sdstatesini$ where by definition $R_{0} = \ainit \in \sdinit$.
Note that this sequence is well defined and deterministic; moreover, for each $0 \leq i < n$, $q_{i} \in R_{i}$ holds.
Since $q_{n} \in \amat(q_{n-1}, \word(n-1))$, we have that $q_{n} \in \sdmatini(R_{n-1}, \word(n-1))$, thus $(R_{n-1}, \word(n-1),(\setnocond{q_{n}}, 1, \emptyset)) \in \sdmattra$ and we use such transition to extend the sequence $R_{0} \cdot R_{1} \cdot \ldots \cdot R_{n-1}$ to $R_{0} \cdot R_{1} \cdot \ldots \cdot R_{n-1} \cdot (R_{n}, j_{n}, C_{n})$ where $(R_{n}, j_{n}, C_{n}) = (\setnocond{q_{n}}, 1, \emptyset)$.
Again, $q_{n} \in R_{n}$.
Note that the choice of using the accepting set $\afinal_{1}$ is arbitrary.
The remainder of the run, $(R_{n+1}, j_{n+1}, C_{n+1}) \cdot (R_{n+2}, j_{n+2}, C_{n+2}) \ldots$ is well defined, as this second part is again deterministic, and still $q_{i} \in R_{i}$ for each $i > n$.
Since $q_{i} \in R_{i}$ for each $i \geq n$, the deterministic automaton in the second part does not block.
Moreover, $\langle \arun, n, \word \rangle = R_{n} \cdot R_{n+1} \cdot R_{n+2} \cdot \ldots$ holds by a simple inductive argument.

We assume for contradiction that this constructed run $\arun' = R_{0} \cdot R_{1} \cdot \ldots \cdot R_{n-1} \cdot (R_{n}, j_{n}, C_{n}) \cdot (R_{n+1}, j_{n+1}, C_{n+1}) \cdot (R_{n+2}, j_{n+2}, C_{n+2}) \cdot \ldots$ is not accepting.
Then there exists an $m > n$ such that, for all $i \geq m$, $\tran{\rho'}(i)$ is not in $\sdfinal$.

Let $\langle \arun, m, \word \rangle^{j_{m}} = C^{j_{m}}_{m} \cdot C^{j_{m}}_{m+1} \cdot C^{j_{m}}_{m+2} \cdot \ldots$.
A simple inductive argument provides that $C^{j_{m}}_{i} \subseteq C_{i} \subsetneq R_{i}$ holds for all $i \geq m$.
Thus, there is a $i \geq m$ with $|C^{j_{m}}_{i}| = \width^{j_{m}}(\arun, m, \word)$.
But since $\width^{j_{m}}(\arun, m, \word) \geq \width^{j_{m}}(\arun, \word) = \width(\arun, \word) = \width(\arun, n, \word)$, we have that $|R_{i}| \leq |C^{j_{m}}_{i}|$, which contradicts $C^{j_{m}}_{i} \subsetneq R_{i}$.
\end{proof}

\subsection{Proof of Proposition~\ref{pro:semiDetLanguageIgnoreBandi}}
\begin{proof}
We show that given $(R, j, C), (R, j', C') \in \sdstatesfin$ of $\bToSd(\baut)$ it is $\lang(\bToSd(\baut)_{(R, j, C)}) \subseteq \lang(\bToSd(\baut)_{(R, j', C')})$;
symmetry then provides us with equivalence.

Let us assume for contradiction that $\arun = (R_{0}, j_{0}, C_{0}) \cdot (R_{1}, j_{1}, C_{1}) \cdot \ldots$ is the accepting run of $\bToSd(\baut)_{(R, j, C)}$ while $\arun' = (R'_{0}, j_{0}, C_{0})\cdot (R'_{1}, j'_{1}, C'_{1}) \cdot \ldots$ is the rejecting run of $\bToSd(\baut)_{(R, j', C')}$ on an input word $\alpha$.
Note that $R'_{0} = R_{0}$ since $R_{0} = R = R'_{0}$;
similarly, $C_{0} = C$ and $j_{0} = j$ as well as $C'_{0} = C'$ and $j'_{0} = j'$.

We can first establish with a simple inductive argument that $R_{l} = R'_{l}$ for all $l \in \omega$ (and that $\bToSd(\baut)_{(R, j', C')}$ has a run on $\word$).
In fact, for $l = 0$, we have already noted that $R_{l} = R_{0} = R = R'_{0} = R'_{l}$;
suppose that $R_{l} = R'_{l}$;
by construction of $\bToSd(\baut)$, it follows that $R_{l+1} = \sdmatini(R_{l}, \word(l)) = \sdmatini(R'_{l}, \word(l)) = R'_{l+1}$.
Since $\arun$ is accepting, $R_{l+1} \neq \emptyset$ for each $l \in \omega$ and this implies that $\big((R_{l}, j'_{l}, C'_{l}), \word(l), (R_{l+1}, j'_{l+1}, C'_{l+1})\big) \in \sdmatfin \cup \sdfinal$, i.e., $\bToSd(\baut)_{(R, j', C')}$ has a run on $\word$.

As $\arun'$ is rejecting, there exists an $n \in \omega$ such that, for each $l \geq n$,
we have that $\big((R_{l}, j'_{l}, C'_{l}), \word(l), (R_{l+1}, j'_{l+1}, C'_{l+1})\big) \notin \sdfinal$, otherwise $\arun'$ would be accepting;
note that since we have that $\big((R_{l}, j'_{l}, C'_{l}), \word(l), (R_{l+1}, j'_{l+1}, C'_{l+1})\big) \in \sdmatfin \cup \sdfinal$, this implies that $\big((R_{l}, j'_{l}, C'_{l}), \word(l), (R_{l+1}, j'_{l+1}, C'_{l+1})\big) \in \sdmatfin$.
By definition of $\sdmatfin$, it follows that $C'_{l+1} \subsetneq r'_{l+1}$ and $j'_{l+1} = j'_{l}$, thus for each $l \geq n$ we have $j'_{l} = j'_{n}$.

As $\arun$ is accepting, transitions from $\sdfinal$ are taken infinitely often, thus the indices $j \in \natIntK$ are visited cyclically.
In particular, since $j'_{n} \in \natIntK$, this implies that there exists an $l > n$ such that $j_{l} = j'_{n}$, as effect of the transition $\big((R_{l-1}, C_{l-1}, j_{l-1}), \word(l-1), (R_{l}, \emptyset, j'_{n})\big) \in \sdfinal$.
However, since $\arun'$ is rejecting by assumption, we know that $C'_{m} \subsetneq R_{m}$.
Moreover
\begin{enumerate}[1.)]
\item
	by definition of $\sdaut$, the breakpoints sets are reset to $\emptyset$ after an accepting transition.
\item
	For non-accepting transitions, the breakpoint construction is monotonic in the sense that for each $A$, $B_{1}$, $B_2$, $i$, and $\sigma$ such that $B_{1} \subseteq B_2 \subsetneq A$ and $(A',i,B_2') = \sdmatfin((A,i,B_2), \sigma)$, it follows that $\sdmatfin((A,i,B_{1}), \sigma)$ is defined and, for $(A',i,B_{1}') = \sdmatfin((A,i,B_{1}), \sigma)$, we have $B_{1}' \subseteq B_2'$.
\item
	As $\arun'$ is rejecting, its breakpoint sets are never reset after position $n$ (and remain to be $j_l$).
\end{enumerate}

From 1.)-3.), it follows by induction that $C_{m} \subseteq C'_{m}$, thus $C_{m} \subseteq C'_{m} \subsetneq R_{m}$ for each $m > l$.
This, however, implies due to the definition of $\sdfinal$ that in $\arun$ no further accepting transitions follow, thus $j_{m} = j_l = j'_{n}$.
This implies that $\arun$ is rejecting, contradicting the initial assumption.
\end{proof}

\subsection{Proof of Proposition~\ref{pro:semiDetLangEqParityLang}}
As for Proposition~\ref{pro:buechiLangEqualSemiDetLang}, we split the proof of Proposition~\ref{pro:semiDetLangEqParityLang} into two lemmas, stating that for each $q \in \q \in Q'$, $\lang(\daut_{\q}) \subseteq \lang(\sdaut_q)$ and $\lang(\sdaut_q) \subseteq \lang(\daut_{\q})$ hold, respectively.
As notation, for states $q \in \sdstatesini$, $q' \in \sdstatesfin$, and $\q = (r,f) \in \astates'$, we write $q \in \q$ if $q = r$, and $q' \in \q$ if there exists $j \in \omega$ with $f(j) = q'$.

By a trivial inductive proof, we get the following observation.

\begin{lemma}
\label{lem:prerun}
	Given a semi-deterministic \buchi automaton $\sdaut$ and $\daut = \sdToDet(\sdaut)$, for each $q \in \sdstates$, each $\q \in Q'$ with $q \in \q$, and each input word $\word$, there is a pre-run $q \cdot q_{1} \cdot \ldots \cdot q_{n}$ of $\sdaut_{q}$ if, and only if, there is a pre-run $\q \cdot \q_{1} \cdot \ldots \cdot \q_{n}$ of $\daut_{\q}$ with $q_{n} \in \q_{n}$.
\end{lemma}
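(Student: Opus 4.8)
Since $\daut = \sdToDet(\sdaut)$ is \emph{complete} — every clause of its transition function is built from the $\blank$-completions $\sdmatini^{\blank}$, $\sdmatfin^{\blank}$ and is therefore total — the automaton $\daut_{\q}$ has, along any input word $\word$, a \emph{unique} pre-run $\q \cdot \q_{1} \cdot \ldots \cdot \q_{n}$ of each length $n$. I would therefore first reformulate the claim in the cleaner, equivalent form proved by induction on $n$: for every $n$ and every state $p \in \sdstates$, there exist $q_{1}, \ldots, q_{n-1}$ with $q \cdot q_{1} \cdot \ldots \cdot q_{n-1} \cdot p$ a pre-run of $\sdaut_{q}$ if, and only if, $p \in \q_{n}$. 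Equivalently: the set of states that $\sdaut$ can reach from $q$ in exactly $n$ steps along $\word$ equals the set of states ``stored'' in $\q_{n} = (r_{n}, f_{n})$, namely $r_{n}$ (when $r_{n} \neq \blank$) together with the values $f_{n}(1), \ldots, f_{n}(m_{n})$ in the range of the bijection. The base case $n = 0$ is exactly the hypothesis $q \in \q$.

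For the inductive step I would case-split on the three kinds of transitions of a semi-deterministic automaton and match each against the clauses defining $\dmat$. A transition $(p, \word(n), p') \in \sdmatini$ is mirrored by the subset-update clause $r' = \sdmatini^{\blank}(r, \sigma)$: from $p = r_{n}$ and the fact that a genuine $\sdmatini$-edge has a non-$\blank$ target we get $p' = r_{n+1} \in \q_{n+1}$; conversely a non-$\blank$ state $r_{n+1} \in \q_{n+1}$ comes from the unique $\sdmatini$-edge out of $r_{n} \in \q_{n}$. A transit transition $(r_{n}, \word(n), p') \in \sdmattra$ is mirrored by the ``restoring transit transitions'' clause, which places \emph{all} such $p'$ into the range $S$ of $f_{n+1}$; so $p' \in \q_{n+1}$, and conversely any element of $\q_{n+1}$ lying in $S$ but not in $R''$ is precisely a transit successor of $r_{n}$. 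A transition $(p, \word(n), p') \in \sdmatfin$ with $p = f_{n}(j)$ is mirrored by $g(j) = \sdmatfin^{\blank}(f_{n}(j), \sigma) = p'$; the decisive observation is that forming $g'$ (duplicate removal) and $g''$ (blank removal) discards only $\blank$'s and repeated entries, so the set $R''$ of values of $g''$ is exactly the set of non-$\blank$ values of $g$ — hence $p' \neq \blank$ gives $p' \in R'' \subseteq S$, i.e.\ $p' \in \q_{n+1}$, and conversely an element of $\q_{n+1}$ lying in $R''$ is the $\sdmatfin$-image of some $f_{n}(j) \in \q_{n}$. In each case the ``only if'' direction is closed by applying the induction hypothesis to the prefix, and the ``if'' direction by first choosing the appropriate predecessor $q_{n} \in \q_{n}$ just identified, invoking the induction hypothesis to obtain a pre-run of $\sdaut_{q}$ ending in $q_{n}$, and appending the transition. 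The three cases never interfere because $\sdstatesini$ and $\sdstatesfin$ partition $\sdstates$, so $p'$ is reached by $\sdmatini$ alone if it is an initial state, and by $\sdmattra$ or $\sdmatfin$ alone if it is a final state.

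The argument is genuinely routine; the only step that needs care — and hence the main obstacle — is the bookkeeping in the $\sdmatfin$ case: one must check precisely that the successive operations producing $g$, then $g'$, then $g''$, then $f'$ (duplicate removal, blank removal, order-preserving re-indexing, adjunction of the transit successors) neither drop a non-$\blank$ breakpoint state nor create one that is not an $\sdmat$-successor of a state of $\q_{n}$, which boils down to the equality ``$R'' = \{\text{non-}\blank \text{ values of } g\}$''. A small but worthwhile point to state explicitly is that $\blank \notin \sdstates$, so a state of $\q_{n+1}$ that is a genuine state of $\sdaut$ is never the fresh symbol $\blank$; this is exactly what legitimises passing from $\sdmatini^{\blank}$ and $\sdmatfin^{\blank}$ back to the partial functions $\sdmatini$ and $\sdmatfin$ in each case.
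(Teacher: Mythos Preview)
Your inductive strategy with the three-way case split on $\sdmatini$, $\sdmattra$, $\sdmatfin$ transitions, matched against the clauses defining $\dmat$, is precisely what the paper has in mind (it offers only the sentence ``by a trivial inductive proof''). The bookkeeping you isolate for the $\sdmatfin$ case---that $R''$ is exactly the set of non-$\blank$ values of $g$---is the right observation, and the ``only if'' direction goes through as you describe.

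The ``if'' direction, however, has a gap already at the base case. You write that ``the base case $n = 0$ is exactly the hypothesis $q \in \q$'', but $q \in \q$ gives only one inclusion: the single state reachable from $q$ in zero steps, namely $q$ itself, lies in $\q$. It does \emph{not} give that every $p$ with $p \in \q$ satisfies $p = q$. Concretely, take $q \in \sdstatesini$ and $\q = (q, f)$ with some $p \in \sdstatesfin$ in the range of $f$; then $p \in \q_{0}$, yet the only length-one pre-run of $\sdaut_{q}$ is $q$ itself. Worse, the $\sdmatfin$-successors of such a $p$ propagate through $\q_{1}, \q_{2}, \ldots$ (they appear in $g$, survive into $R''$, and hence land in the range of $f_{1}, f_{2}, \ldots$) without ever becoming reachable from $q$ in $\sdaut$, so the biconditional fails for every $n$. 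Your inductive step for the ``if'' direction---pick a predecessor in $\q_{n}$ and invoke the induction hypothesis---therefore cannot be closed, because the hypothesis you would be invoking is itself false at those extra states.

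This is less a defect in your reasoning than an overstatement in the lemma: as written, the ``if'' direction is false for arbitrary $\q$ with $q \in \q$. In practice the lemma is only applied along the unique run of $\daut$ from its initial state $(\sdinit, \emptyset)$, where the range of $f_{0}$ is empty and your base case does give an equality of sets; from there the invariant ``the states stored in $\q_{n}$ are exactly those reachable from $\sdinit$ in $n$ steps'' is preserved, and your induction is sound. To match the lemma as stated you would need either to strengthen the hypothesis (require that $\q$ stores exactly $\{q\}$) or to weaken the conclusion (allow the pre-run of $\sdaut$ to start from an arbitrary state stored in $\q$ rather than the fixed $q$).
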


\begin{lemma}
\label{lem:parityLangSubseteqSemiDetLang}
	Given a semi-deterministic \buchi automaton $\sdaut$ and $\daut = \sdToDet(\sdaut)$, for each $q \in \q \in Q'$, $\lang(\daut_{\q}) \subseteq \lang(\sdaut_q)$ holds.
\end{lemma}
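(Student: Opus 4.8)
The plan is to read off, from an accepting run of the parity automaton $\daut_\q$, an accepting run of $\sdaut_q$, by tracking the single ``slot'' $f(i)$ that is responsible for acceptance. First I would fix $\word \in \lang(\daut_\q)$ together with the accepting run $\arun = \q_0 \cdot \q_1 \cdot \q_2 \cdots$ of $\daut_\q$ on $\word$, writing $\q_i = (r_i, f_i)$ and letting $\alpha_i$ and $\delta_i$ denote the ``minimal acceptance'' and ``minimal rejecting'' numbers that the $i$-th transition produces in the construction of $\daut = \sdToDet(\sdaut)$. Since $\arun$ is accepting, the least priority occurring infinitely often is even, and as every even priority occurring in $\daut$ is of the form $2\alpha$ with $\alpha \ge 1$, I write it as $2a$. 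I then fix a position $N$ with $\pri(\tran{\arun}(N)) = 2a$ and $\pri(\tran{\arun}(i)) \ge 2a$ for all $i \ge N$.

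The technical core is the claim that the first $a$ slots are frozen from $N$ on: for every $i \ge N$, $f_i$ has at least $a$ entries and $f_{i+1}(j) = \sdmatfin(f_i(j),\word(i))$ for all $j \le a$. The base case holds because $\pri(\tran{\arun}(N)) = 2a$ forces $\alpha_N = a$, so $f_N$ has at least $a$ entries. For the inductive step, a short case split on the two priority formulas shows that $\pri(\tran{\arun}(i)) \ge 2a$ forces both $\delta_i > a$ (so $g'_i(1), \dots, g'_i(a)$ are pairwise distinct non-blanks) and $\alpha_i \ge a$ (so none of the slots $1, \dots, a-1$ carries an accepting transition); since the first $a$ positions of $g'_i$ are blank-free, they are carried unchanged and in place through $g''_i$ and then $f'_i = f_{i+1}$, giving $f_{i+1}(j) = g_i(j) = \sdmatfin^{\blank}(f_i(j),\word(i))$, which is non-blank, for every $j \le a$ (and $f_{i+1}$ again has at least $a$ entries). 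This is the step requiring the most care: one must follow a single slot through sprouting, duplicate removal, blank removal and restoration of transit points, and check that ``$\pri \ge 2a$'' is exactly what guarantees that slots $1, \dots, a$ are passed on untouched. In particular slot $a$ never migrates, and $\pi := f_N(a) \cdot f_{N+1}(a) \cdot f_{N+2}(a) \cdots$ is a run of the breakpoint part of $\sdaut$, starting at $f_N(a)$, on the suffix $\word(N)\word(N+1)\cdots$.

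Next I would show that $\pi$ is accepting. For $i \ge N$, minimality of $\alpha_i$ together with $\alpha_i \ge a$ gives that slot $a$ takes a transition in $\sdfinal$ exactly when $\alpha_i = a$; and since $\delta_i > a$, the equality $\alpha_i = a$ is equivalent to $\alpha_i < \delta_i$ with $2\alpha_i = 2a$, i.e.\ to $\pri(\tran{\arun}(i)) = 2a$. As priority $2a$ recurs past $N$, the run $\pi$ uses transitions from $\sdfinal = \bpfinal$ infinitely often and is therefore accepting.

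To finish, I would apply Lemma~\ref{lem:prerun} to the pre-run $\q_0 \cdot \q_1 \cdots \q_N$ of $\daut_\q$, which ends in $\q_N$ with $f_N(a) \in \q_N$, obtaining a pre-run $q \cdot q_1 \cdots q_{N-1} \cdot f_N(a)$ of $\sdaut_q$ on $\word[0,N[$. Concatenating this pre-run with the tail $f_{N+1}(a) \cdot f_{N+2}(a) \cdots$ of $\pi$ yields a run of $\sdaut_q$ on $\word$; it is legal whether $q$ lies in the subset part or already in the breakpoint part of $\sdaut$, since past $f_N(a)$ it only performs $\sdmatfin$-transitions (and $\sdfinal \subseteq \sdmatfin$), and it is accepting by the previous paragraph. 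Hence $\word \in \lang(\sdaut_q)$, which is the desired inclusion.
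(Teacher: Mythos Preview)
Your proof is correct and follows essentially the same strategy as the paper's: identify the dominating even priority $2a$, pick a position $N$ past which all priorities are at least $2a$, use Lemma~\ref{lem:prerun} to obtain a pre-run of $\sdaut_q$ ending in $f_N(a)$, and then track slot $a$ through the frozen first $a$ positions to produce an accepting tail. Your version is more explicit than the paper's (in particular the case analysis yielding $\delta_i > a$ and $\alpha_i \ge a$, and the base case ensuring $f_N$ has at least $a$ entries), but the underlying argument is the same.
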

\begin{proof}
We first show $\lang(\daut) \subseteq \lang(\sdaut)$.

Let $\arun = \q \cdot \q_{1} \cdot \q_{2} \cdot \ldots$ be an accepting run of $\daut_{\q}$ on a word $\word$ with dominating priority $2a$;
let $n \geq 1$ be a natural number such that $\pri(\q_{l}, \word(l), \q_{l+1}) \geq 2a$ holds for all $l \geq n$;
and let $\q_{l} = (r_{l}, f_{l})$ for all $l \geq n$.

By Lemma~\ref{lem:prerun}, there is a pre-run $q \cdot q_{1} \cdot q_{2} \cdot \ldots \cdot q_{n}$ with $q_{n} = f(a)$ of $\sdaut_{q}$.
The observation that no priority less than $2a$ occurs from the $n$-th transition onwards in $\arun$ provides with the construction of $\daut$ that this pre-run can be continued to a unique run
$\arun' = q \cdot q_{1} \cdot q_{2} \cdot \ldots \cdot q_{n} \cdot q_{n+1} \ldots$ with $q_{l} = f_{l}(a)$ for all $l \geq n$.
Further, for all $l \geq n$ with $\pri(\q_{l}, \word(l), \q_{l+1}) = 2a$, we have $(q_{l}, \word(l), q_{l+1}) \in \sdfinal$.
As there are infinitely many such $l$, $\arun'$ is accepting.
\end{proof}

\begin{lemma}
\label{lem:semiDetLangSubseteqParityLang}
	Given a semi-deterministic \buchi automaton $\sdaut$ and $\daut = \sdToDet(\sdaut)$, for each $q \in \q$, $\lang(\sdaut_{q}) \subseteq \lang(\daut_{\q})$ holds.
\end{lemma}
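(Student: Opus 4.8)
The plan is to mirror Lemma~\ref{lem:parityLangSubseteqSemiDetLang}, read in the reverse direction: given an accepting run of $\sdaut_q$ we must pinpoint the entry of the bijection component of the $\daut$-states that witnesses acceptance, and then check that the priorities emitted along the unique run of $\daut_{\q}$ have an even limit inferior. Concretely, I would fix $\word \in \lang(\sdaut_q)$ with an accepting run $\arun = q_{0} \cdot q_{1} \cdot q_{2} \cdot \ldots$ of $\sdaut_q$ on $\word$ (with $q_0 = q$), and let $\q_{0} \cdot \q_{1} \cdot \ldots$, $\q_{i} = (r_{i}, f_{i})$, be the unique (and, by the $\blank$-completion of the transition function, total) run of $\daut_{\q}$ on $\word$. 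Since $\sdaut$ is semi-deterministic there is an $n$ with $q_{i} \in \sdstatesfin$ for all $i \ge n$; the initial part being deterministic gives $r_{i} = q_{i}$ for $i < n$, and the transit transition $(q_{n-1}, \word(n-1), q_{n}) \in \sdmattra$ of $\arun$ puts $q_{n}$ into the range of $f_{n}$ (this is the content of Lemma~\ref{lem:prerun}; if $q \in \sdstatesfin$ take $n = 0$ and use $q \in \q$ directly).

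Next I would trace the tail $q_{n} \cdot q_{n+1} \cdot \ldots$ through the $f$-components. When the currently carried entry is removed by the duplicate-removal step it has merged with an entry of strictly smaller index holding the same state, so I re-attach to the latter; when instead no smaller index is removed but some smaller indices vanish, the carried entry only shifts to a smaller position; transit restoration only appends new entries at the end. As indices are positive integers and every re-attachment strictly decreases the index, the carried index is eventually constant, say equal to $a$ from some position $p$ on, with $f_{i}(a)$ (for $i \ge p$) being the trajectory of a run of $\sdaut_q$ that coincides with a tail of $\arun$ — hence accepting, since acceptance depends only on the tail — and with the entry at position $a$ never removed nor displaced after $p$. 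I would take $a$ minimal with this ``eventually stable, carrying an accepting run'' property; such an $a$ exists by the tracing just described.

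The delicate step is reading off the priorities of the run $\q_{0} \cdot \q_{1} \cdot \ldots$. Because the carried entry stays at position $a$ after $p$, no entry of index $< a$ can be removed at any step $i \ge p$ (removing one would shift position $a$ downward), and position $a$ itself is never removed; hence for every transition at step $i \ge p$ the minimal removed index $d$ satisfies $d \ge a+1$, so every odd priority $2d-1$ emitted from $p$ on is $\ge 2a+1$. On the other hand, at each of the infinitely many steps $i \ge p$ for which the $i$-th transition of $f_{\functionGenericArgument}(a)$ lies in $\sdfinal$, the minimal accepting index is $\le a < d$, so the emitted priority is $2a'$ with $a' \le a$, i.e.\ an even number $\le 2a$; by the pigeonhole principle one such even value $\le 2a$ recurs infinitely often. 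Consequently the least priority occurring infinitely often is $\le 2a$ while every odd priority occurring infinitely often is $\ge 2a+1$, so this least priority is even and the run of $\daut_{\q}$ is accepting, i.e.\ $\word \in \lang(\daut_{\q})$. The main obstacle is exactly this bookkeeping: verifying, through the successive updates of $f$ (raw update, duplicate removal, blank removal, reordering, transit restoration), that the persistent accepting entry keeps a fixed position and thereby forces all removed indices above it, which is what rules out recurrence of a small odd priority.
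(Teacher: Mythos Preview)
Your proposal is correct and follows essentially the same approach as the paper: track the accepting run $q_{l}$ inside the bijections $f_{l}$ via a nonincreasing chain of indices that stabilises at some $a$, then observe that stability of position $a$ forces every subsequent rejecting index $d$ to exceed $a$ (so all later odd priorities are at least $2a+1$), while infinitely many accepting transitions of the tracked run yield even priorities at most $2a$. Your write-up is more explicit about why the index chain is nonincreasing (duplicate removal and blank shifting) than the paper's, and the clause taking $a$ minimal is harmless but unused---the argument already goes through for the first $a$ at which your descending chain stabilises.
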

\begin{proof}
Let $q_{0} \cdot q_{1} \cdot q_{2} \cdot \ldots$ with $q_{0} = q$ be an accepting run of  $\sdaut_{q}$ on an input word $\alpha$.
Then there is a minimal $n \in \omega$ such that $q_{n} \in \sdstatesfin$---and thus $q_{l} \in \sdstatesfin$ for all $l \geq n$ and $q_{l} \in \sdstatesini$ for all $l < n$.

By a simple inductive argument we can show that $\daut_{\q}$ has a run $\arun = (q_{0}, f_{0}) \cdot (q_{1}, f_{1}) \cdot \ldots \cdot (q_{n-1}, f_{n-1}) \cdot (r_{n}, f_{n}) \cdot (r_{n+1}, f_{n+1}) \ldots$ on $\word$, such that $q_{l} \in (r_{l}, f_{l})$ for all $l \geq n$.
Moreover, there is a descending chain $i_{n} \geq i_{n+1} \geq i_{n+2} \ldots$ of indices such that $q_{l} = f_{l}(i_{l})$.
This chain stabilises at some point to $a = i_{n'} = \lim_{l \to \infty} i_{l}$.
Consequently, we have that $\pri \big((r_{l}, f_{l}), \word(l), (r_{l+1}, f_{l+1})\big)$ is even \emph{or} no smaller than $2a$ for all $l \geq n'$.
(Assuming that the priority is an odd number less than $2a$ would imply that there is a $\blank$ sign in $g'$ at a position less than or equal to $a$, which would contradict that the index has stabilised.)
For all positions $l \geq n'$ with $\big(q'_{l}, \word(l), q'_{l+1}\big) \in \afinal$, $\pri \big((r_{l}, f_{l}), \word(l), (r_{l+1}, f_{l+1})\big)$ is an even number less than or equal to $2a$.
The smallest priority occurring infinitely often in the transitions of $\arun$ is therefore an even number less than or equal to $2a$.
\end{proof}

\subsection{Proof of Lemma~\ref{lem:parityOfSccAndAcceptingSccOfSemidet}}

\begin{proof}
To prove the lemma, we first note that no transition of any run of the product (which is an SCC) can see %infinitely often
a priority smaller than $2a$.
Thus, for all such runs $(m_{0}, (r_{0}, f_{0})) \cdot (m_{1}, (r_{1}, f_{1})) \cdot (m_{2}, (r_{2}, f_{2})) \ldots$, the sequence $(m_{0}, f_{0}(a)) \cdot (m_{1}, f_{1}(a)) \cdot (m_{2}, f_{2}(a)) \ldots$ is a run, and a transition like $\big((m_{j}, f_{j}(a)), \word(j), (m_{j+1}, f_{j+1}(a))\big)$ is accepting if, and only if, the priority is minimum and even;
more precisely, we have that $\pri\big((m_{j}, (r_{j}, f_{j})), \word(j), (m_{j+1}, (r_{j+1}, f_{j+1}))\big) = 2a$.

It is then easy to see that the measure of the accepting paths of $\mc \prodAut \sdaut_{f_{0}(a)}$ equals the measure of the paths of $\mc \prodAut \daut_{(r_{0}, f_{0})}$ with dominating priority $2a$, which is $1$.
\end{proof}

\subsection{Proof of Theorem~\ref{thm:acceptanceOfSubsetAutomaton}}
\begin{proof}
\begin{description}
\item[$1) \implies 2)$]
	Let $\scc$ be an accepting bottom SCC of $\mc \prodMC \ssaut$, $\arun$ be a run of $\mc \prodMC \ssaut$ trapped into $\scc$, and $\word$ be the associated word.
	By Definition~\ref{def:acceptingSccOfSubset}, it follows that there exists an SCC $\scc'$ of $[\mc \prodAut \daut]$ that is isomorphic to $\scc$ containing only accepting states, where $\daut = \det(\baut)$.
	By a simple inductive argument, we can show that the run $\arun'$ of $\mc \prodAut \daut$ on $\word$ satisfies $\arun(i) = [\arun'(i)]$ for each $i \in \omega$ and that $\arun'$ is trapped into an $\scc''$ of $\mc \prodAut \daut$ with $[\scc''] \subseteq \scc'$ as well.
	Since $\scc'$ is accepting, by Corollary~\ref{cor:quotients} $\scc''$ contains only accepting states as well; this means that there exists $i \in \omega$ such that $\arun'(i) = (m,(r,f)) \in \scc''$ and a transition $((m,(r,f)),\word(i),(m',(r',f')))$ such that $\pri((r,f),\word(i),(r',f'))$ is even and minimum, say $2a$.
	By Lemma~\ref{lem:parityOfSccAndAcceptingSccOfSemidet} it follows that $\mc \prodAut \sdaut_{f(a)}$ forms an SCC that is accepting.
	By construction of $\daut$, $f(a) = (R,l,C)$  and $f'(a) = (R', l \oplus_{k} 1, \emptyset)$ where $R \subseteq r$ and $R' \subseteq r'$, thus we have $(m',r') \in \scc$ and $R' \subseteq r'$ such that $(m',(R', l \oplus_{k} 1, \emptyset))$ belongs to an accepting SCC of $\mc \prodAut \sdaut_{(m',(R', l \oplus_{k} 1, \emptyset))}$, as required.

\item[$2) \implies 1)$]
	Let $\scc$ be a bottom SCC of $\mc \prodMC \ssaut$ such that there exist $(m,R) \in \scc$ and $R' \subseteq R$ such that $(m,(R',j,\emptyset))$ belongs to an accepting SCC $\scc'$ of $\mc \prodAut \sdaut_{(m,(R',j,\emptyset))}$ for some $j \in \natIntK$.
	Since $\scc'$ is accepting, by construction of $\daut$, Proposition~\ref{pro:semiDetLangEqParityLang}, and Lemma~\ref{lem:parityOfSccAndAcceptingSccOfSemidet}, it follows that there exists $f$ and integer $a$ such that $f(a) = (R,l,C)$ and $\mc \prodAut \daut_{(m,(R,f))}$ is accepting.
	If $(m,(R,f))$ is already in an accepting SCC $\scc''$ of $\mc \prodAut \daut_{(m,(R,f))}$, then by definition of accepting SCC for $\mc \prodMC \ssaut$, by Theorem~\ref{thm:bottom_component} and Corollary~\ref{cor:quotients} we have that $\scc = [\scc'']$ is accepting as well.
	Since $\mc \prodAut \daut_{(m,(R,f))}$ is accepting, let $\word$ one of the accepted words and $\arun$ the resulting run starting from $(m,(R,f))$;
	eventually $\arun$ is trapped into an accepting SCC $\scc''$ with $[\scc''] = \scc$.
	Let $i \in \omega$ such $\arun(i) = (m',(r',f')) \in \scc''$ and $\pri(\tran{\arun}(i))$ is even and minimum, say $2a$.
	This implies by construction of $\daut$ that $f''(a) = (r'',l,\emptyset)$ where $\arun(i+1) = (m'',(r'',f'')) \in \scc''$ for some $m'' \in \pstates$.
	Since $(m'',(r'',f'')) \in \scc''$ and $\scc''$ is accepting, by Corollary~\ref{cor:quotients} it follows that all states in $\scc''$ are accepting, thus by Definition~\ref{def:acceptingSccOfSubsetApp}, $\scc$ is accepting as well.

\item[$1) \Longleftrightarrow 3)$]
	This equivalence follows directly from a combination of the proofs of Propositions~\ref{pro:buechiLangEqualSemiDetLang} and Proposition~\ref{pro:semiDetLangEqParityLang};
	in particular, the proof of Lemma~\ref{lem:buechiLangSubseteqSemiDetLang} (stating that $\lang(\baut) \subseteq \lang(\sdaut)$) provides the singleton $\setnocond{q}$ needed for the implication $1) \implies 3)$.
	\qedhere
\end{description}
\end{proof}

\end{document}